\title{\boldmath 
Entropy and Spectrum of Near-Extremal Black Holes: \\
semiclassical brane solutions to non-perturbative problems
}
\author{Sergio Hern\'andez-Cuenca}
\affiliation{Center for Theoretical Physics, Massachusetts Institute of Technology, Cambridge, MA 02139, USA}
\emailAdd{sergiohc@mit.edu}
\preprint{MIT-CTP/5741}
\abstract{\\~\\\indent
The black hole entropy has been observed to generically turn negative at exponentially low temperatures $T\sim e^{-S_0}$ in the extremal Bekenstein-Hawking entropy $S_0$, a seeming pathology often attributed to missing non-perturbative effects. In fact, we show that this negativity must happen for any effective theory of quantum gravity with an ensemble description.
To do so, we identify the usual gravitational entropy as an annealed entropy $S_a$, and prove that this quantity gives $S_0$ at extremality if and only if the ground-state energy is protected by supersymmetry, and diverges negatively otherwise. The actual thermodynamically-behaved quantity is the average or quenched entropy $S_q$, whose calculation is poorly understood in gravity: it involves replica wormholes in a regime where the topological expansion breaks down. Using matrix integrals we find new instanton saddles that dominate gravitational correlators at $T\sim e^{-S_0}$ and are dual to semiclassical wormholes involving dynamical branes. These brane solutions give the leading contribution to any black hole very near extremality, and a duality with matrix ensembles would not make sense without them. 
In the non-BPS case, they are required to make $S_q$ non-negative and also enhance the negativity of $S_a$, both effects consistent with matrix integrals evaluated exactly. Our instanton results are tested against the on-shell action of D3-branes dual to multiply wrapped Wilson loops in $\mathcal{N}=4$ super-YM, and a precise match is found. Our analysis of low-energy random matrix spectra also explains the origin of spectral gaps in supersymmetric theories, not only when there are BPS states at zero energy, but also for purely non-BPS supermultiplets. In the former, our quantitative prediction for the gap in terms of the degeneracy of BPS states agrees with the R-charge scaling in gapped multiplets of $\mathcal{N}=2$ super-JT gravity.
}
\begin{document}

% Title Page

\renewcommand{\baselinestretch}{1.1}\selectfont
\setlength{\parskip}{0.15\baselineskip}%

\maketitle

\renewcommand{\baselinestretch}{1.19}\selectfont
\setlength{\parskip}{0.35\baselineskip}%

\section{Introduction}
\label{sec:intro}

The Euclidean gravitational path integral is generally understood to compute a canonical partition function, an identification that provides a powerful window into black hole thermodynamics \cite{Hawking:1975vcx}.
Given a black hole at inverse temperature $\beta$, the gravitational partition function $\mathcal{Z}(\beta)$ can be used to obtain an entropy via the standard thermodynamic identity
\begin{equation}
\label{eq:slogZ}
	\mathcal{S}(\beta) = (1 - \beta \partial_\beta) \log \mathcal{Z}(\beta).
\end{equation}
The insights that this relation and generalizations thereof have yielded into the quantum gravitational physics of black holes cannot be overstated \cite{Ryu:2006bv,Hubeny:2007xt,Lewkowycz:2013nqa,Faulkner:2013ana,Engelhardt:2014gca,Penington:2019npb,Almheiri:2019psf,Penington:2019kki,Almheiri:2019qdq}.
However, that \cref{eq:slogZ} actually computes the thermal entropy of a standard quantum system is a statement that can only be established reliably via Euclidean methods in field theory, not in gravity. 
Hence that $\mathcal{S}$ always is and should behave as such an entropy is an assumption which may lead to drawing potentially incorrect physical conclusions. For instance, one may expect $\mathcal{S}(\beta)$ to be non-negative for all $\beta$, and argue that some contributions may be missing from $\mathcal{Z}(\beta)$ if $\mathcal{S}(\beta)$ happens to turn negative for some $\beta$. In fact, as first realized by \cite{Engelhardt:2020qpv}, quite generally $\mathcal{S}$ is not and should not be expected to behave like a thermal entropy in effective quantum gravity.

In this paper, we refer to the quantity $\mathcal{S}$ that \cref{eq:slogZ} computes as the gravitational entropy.
As will be explained, generically $\mathcal{S}(\beta)$ must behave drastically different from a thermal entropy at large $\beta$, a regime which gravitationally corresponds to near-extremal black holes. Crucially, this means that to reliably study the statistical physics of black holes near extremality one must revisit \cref{eq:slogZ} in pursuit of two main goals:
\begin{enumerate}
	\item\label{pt:whats} Identify precisely what kind of entropy $\mathcal{S}$ is and what its properties are near extremality.
	\item\label{pt:quens} Understand how to compute thermal entropies in gravity and, in particular, establish whether semiclassical gravity alone can succeed in describing black holes very near extremality.
\end{enumerate}
A brief preview of our findings in addressing these goals is as follows:
\begin{enumerate}
	\item\label{pt:ans1} This $\mathcal{S}$ is an annealed entropy, which provably must diverge negatively in the extremal limit unless supersymmetry protects the ground-state energy. This phenomenon is unrelated to the degeneracy of BPS states, which we quantitatively relate to the size of spectral gaps.
	\item\label{pt:ans2} Thermal entropies which behave non-negatively are computed by a replica trick, which is successfully implemented by novel near-extremal saddle points. Gravitationally, these are identified as semiclassical solutions for dynamical branes which dominate on-shell near extremality and provide a non-perturbative resummation of off-shell topological expansions.
\end{enumerate}

The introductory sections that follow contextualize these issues, explain the basic notions needed to tackle them, and summarize our main results. For the sake of conciseness, we keep the presentation here minimal, and relegate the discussion of additional details to appendices.

\subsection{Effective Gravity Ensembles}

It is expected that a quantum gravitational system, such as a black hole, should be describable from the outside as a standard, unitary quantum system with a discrete spectrum \cite{Strominger:1996sh,Banks:1996vh,Maldacena:1997re,Gubser:1998bc,Witten:1998qj,David:2002wn,Page:2004xp}.
This is well motivated in a complete theory of quantum gravity; however, an effective theory need not adhere to such an expectation.
Indeed, it has been observed that effective quantum gravity does not behave like any single quantum system, but like an ensemble thereof. This correspondence has been realized very explicitly in low-dimensional models of gravity 
\cite{Saad:2019lba,Stanford:2019vob,
Belin:2020hea,Afkhami-Jeddi:2020ezh,Maloney:2020nni,Cotler:2020ugk,Benjamin:2021wzr,Collier:2021rsn,Chandra:2022bqq,Collier:2023fwi,Collier:2024mgv}, and argued for not only in effective gravity \cite{Coleman:1988cy,Giddings:1988cx,Giddings:1988wv,Hawking:1991vs,Bousso:2019ykv,Marolf:2020xie,Bousso:2020kmy,Johnson:2022wsr}, but also in general effective theory \cite{Hernandez-Cuenca:2024pey}.

A simple class of observables which illustrates this phenomenon consists of gravitational correlation functions involving multiple asymptotic boundaries. 
Denote the gravitational path integral by $\mathcal{P}$ and consider imposing thermal boundary conditions $Z(\beta)$ at different $\beta$. Then for an $m$-point function one finds
\begin{equation}
\label{eq:Zmeval}
    \mathcal{P}(Z(\beta_1)Z(\beta_2)\cdots Z(\beta_m)) = \expval{Z(\beta_1)Z(\beta_2)\cdots Z(\beta_m)},
\end{equation}
where on the right-hand side $\expval{\,\cdot\,}$ is an average over an ensemble of non-gravitational quantum theories with canonical partition functions $Z(\beta)$ with the corresponding $\beta$ values.\footnote{\,\label{fn:trivem}No assumption is actually being made in writing \cref{eq:Zmeval}: this expression makes sense even if the ensemble trivially consists of a single representative, in which case the right-hand side would factorize into $\expval{Z(\beta_k)}$ terms.}
For instance, the object $\mathcal{Z}$ appearing in \cref{eq:slogZ} corresponds to
\begin{equation}
\label{eq:Zsinlge}
    \mathcal{Z}(\beta) \equiv \mathcal{P}(Z(\beta)) = \expval{Z(\beta)}.
\end{equation}
The emergence of an ensemble description may be attributed to the breaking of factorization on the gravity side.
This occurs because $\mathcal{P}$ allows for contributions from all topologies with the requisite boundary conditions, including multi-boundary wormholes connecting different boundaries.

Given that the gravitational path integral provides only an effective description of quantum gravity, there is also a simple heuristic reason why such an ensemble description ought to arise: if an effective theory admits multiple microscopic completions, its predictions should be consistent with but maximally ignorant of the microscopics of any one of them \cite{Johnson:2022wsr,deBoer:2023vsm,Hernandez-Cuenca:2024pey}. In other words, as soon as one begins to ignore details about a complete theory, all that an effective theory may have access to is expectation values capturing the statistics across the ensemble of all possible completions of the theory.\footnote{\,In path integral formulations, what causes these non-trivial statistics is strongly non-local correlations that are ubiquitous in effective theory, and which in gravity take the form of geometric wormholes \cite{Hernandez-Cuenca:2024pey}. The possibility of materializing these heuristics top-down for \cref{eq:Zmeval}, however, seems harder to justify than for \cref{eq:Zsinlge}.} Hereon, we thus take the viewpoint that effective quantum gravity behaves like an ensemble of quantum theories in the sense of \cref{eq:Zsinlge}.

Since a canonical partition function is given in terms of the Hamiltonian $H$ of a theory by
\begin{equation}
\label{eq:trbh}
    Z(\beta) \equiv \Tr  e^{-\beta H},
\end{equation}
one can always think of an ensemble of theories as specified by an ensemble of Hamiltonians. In turn, since $Z(\beta)$ actually only depends on the energy spectrum of $H$, one can generally focus on studying ensembles of spectra and think of $\expval{\,\cdot\,}$ as computing expectation values in them.\footnote{\,A natural example consists of having a space of theories parameterized by some set of coupling variables.
Then a probability distribution over these couplings defines an ensemble of theories, which can equivalently be expressed as a probability distribution over the corresponding Hamiltonians. By diagonalizing these into energy eigenvalues, what one ends up with is a probability distribution over the energies across the ensemble of all theories.} All in all, correlations across theories are seen to induce correlations across their energy levels, the statistics of which is what spectral ensembles capture. In these ensembles, the discrete energy levels of each theory combine into probabilistic distributions which are generically continuous. The drastic consequences this phenomenon has on entropic quantities cannot be overlooked, and one of the goals of this paper is to characterize these effects.

\subsection{Annealing and Quenching}
\label{ssec:annque}

It is important to carefully identify which quantities are actually of interest when faced with an ensemble of theories.
Consider an observable which for the partition function $Z$ of a single theory is computed by $F(Z)$.
In an ensemble, there are two natural classes of quantities one may compute:
\begin{itemize}
	\item \textit{Annealed}: These involve averaging $Z$ over theories, then evaluating $F$ on just $\expval{Z}$:
	\begin{equation}
        \label{eq:afZ}
		F_a(Z) \equiv F(\,\expval{Z}\,).
	\end{equation}
	\item \textit{Quenched}: These involve evaluating $F$ on every $Z$, then averaging $F(Z)$ over theories:
	\begin{equation}
        \label{eq:qfZ}
		F_q(Z) \equiv \expval{F(Z)}.
	\end{equation}
\end{itemize}

By definition, $F_q(Z)$ will clearly preserve every property of $F(Z)$ that the linearity of the expectation value respects.
In contrast, nothing guarantees that $F_a(Z)$ should preserve any such property of $F(Z)$, and thus in general it will not.
For instance, if $F(Z)\geq0$ always then clearly $F_q(Z)\geq0$ as well, but one should not be conflicted if $F_a(Z)$ happens to take negative values.

We wish to apply the distinction between annealing and quenching to the thermal entropy. In a single quantum theory with canonical partition function $Z(\beta)$, this is given by\footnote{\,Cf. \cref{eq:slogZ}, where calligraphy was used to highlight objects computed by the gravitational path integral.}
\begin{equation}
\label{eq:sbdef}
    S(\beta) \equiv (1 - \beta \partial_\beta) \log Z(\beta).
\end{equation}
Hence, in an ensemble of theories, \cref{eq:afZ} defines the annealed entropy as
\begin{equation}
\label{eq:annS}
    S_a(\beta) \equiv (1 - \beta \partial_\beta) \log \expval{Z(\beta)},
\end{equation}
whereas \cref{eq:qfZ} defines the quenched entropy as
\begin{equation}
\label{eq:quenS}
    S_q(\beta) \equiv (1 - \beta \partial_\beta) \expval{\log Z(\beta)}.
\end{equation}
Comparing \cref{eq:slogZ} with \cref{eq:annS}, it immediately follows by \cref{eq:Zsinlge} that the gravitational entropy is an annealed entropy \cite{Engelhardt:2020qpv}. We emphasize that the identification of $\mathcal{S}$ with $S_a$ also holds for a single theory, and thus makes no assumption about effective gravity generally having a non-trivial ensemble description (cf. \cref{fn:trivem}). In other words, the gravitational entropy can always be said to be an annealed entropy. This realization is the starting point for addressing goal (\ref{pt:whats}) above, and will allow us to make concrete predictions about how $\mathcal{S}(\beta)$ must behave as a function of $\beta$ in effective quantum gravity. For instance, we prove that it generically diverges negatively as the temperature goes to zero, implying that the gravitational entropy of generic black holes must become negative  near extremality. Regarding goal (\ref{pt:quens}), as the reader may suspect, the entropic quantity with the desired thermodynamic behavior is the quenched entropy.

The discrepancy between $S_a$ and $S_q$ clearly boils down to the difference between annealing and quenching the logarithm function. It will thus at times be useful to consider the free energy,
\begin{equation}
\label{eq:freedef}
    F(\beta) \equiv - \frac{1}{\beta} \log Z(\beta),
\end{equation}
a physical quantity that is linear in $\log Z(\beta)$ and in terms of which \cref{eq:sbdef} simply reads
\begin{equation}
\label{eq:sfree}
    S(\beta) = \beta^2 F'(\beta).
\end{equation}
Clearly $S$ is non-negative if and only if $F$ is monotonically non-decreasing, which means that negativity of the annealed entropy is equivalent to non-monotonicity of the annealed free energy.
In a single theory, the condition $F'(\beta)\geq0$ is equivalent to $\log Z(\beta) \geq \beta \partial_\beta \log Z(\beta)$, which gives the physical requirement $F(\beta)\leq\expval{E}_{\beta}$, where $\expval{E}_{\beta}$ is the statistical average energy of the thermal system. Our results thus imply that this property is violated when annealing.

\subsection{The No-Replica Trick}

Annealed quantities are usually easier to calculate, since their linearity in the partition function means that standard path integral techniques are applicable.
In contrast, quenching is generally non-trivial and has to be addressed case by case depending on the pertinent observable.
This is why annealing is sometimes used as a proxy for quenching. However, the success of such an approximation relies on whether the observable at hand happens to be self-averaging, a property which is \`a priori hard to ascertain. For instance, we will see that annealed entropies reliably approximate quenched ones at high temperatures, but radically disagree at low temperatures.

Given the possibility of discrepancy, in the ever surprising context of quantum gravity it would seem strictly more desirable to always consider quenched quantities.\footnote{\,Note that if the ensemble is trivial then annealing and quenching are obviously identical operations. Hence if uncertain whether one is dealing with an ensemble of theories or whether a quantity is self-averaging, it is always safer to do a quenched calculation, if possible.} 
Unfortunately though, except for the recent attempts of \cite{Engelhardt:2020qpv,Chandrasekaran:2022asa}, the tools for quenching quantities such as the entropy directly within quantum gravity remain completely mysterious.
The difficulties one faces when attempting such a computation are conceptual, technical, and one may worry potentially fundamental. A central corollary of this paper in regard to goal (\ref{pt:quens}) is that, in fact, no fundamental obstruction seems to exist: effective gravity does have the requisite ingredients for calculating quenched entropies.

To understand the difficulties, note that given an ensemble of theories with some measure, the calculation in \cref{eq:quenS} is conceptually clear even if technically involved. In gravity, however, an object like $\mathcal{P}(\log Z(\beta))$ cannot even be directly defined. This is because there is no separate notion of an ensemble of theories and its measure; the gravitational path integral defines them all at once.
This hurdle can be overcome by means of a well-known replica trick for the logarithm function,
\begin{equation}
\label{eq:reptrick}
    \expval{\log Z(\beta)} = \lim_{m\to0}  \frac{\expval{Z(\beta)^m}-1}{m} = \lim_{m\to0} \frac{d}{dm} \expval{Z(\beta)^m},
\end{equation}
the gravitational implementation of which was pioneered by \cite{Engelhardt:2020qpv}. The no-replica limit $m\to0$ in \cref{eq:reptrick} is very different from the limit $m\to1$ in the standard replica trick for the von Neumann entropy, which is already well-understood in gravity \cite{Lewkowycz:2013nqa}.
The two identities in \cref{eq:reptrick} can be used interchangeably, but consistency clearly requires both limits to exist and agree. The intermediate expression emphasizes that $\expval{\,\cdot\,}$ must be a unit-normalized expectation value.

In gravity, the computation of the moments $\expval{Z(\beta)^m}$ in \cref{eq:reptrick} can be performed by the gravitational path integral $\mathcal{P}(Z(\beta)^m)$ with identical boundary conditions $Z(\beta)$ on $m$ disjoint boundaries.
For integer $m\geq1$ these are well-defined $m$-point gravitational correlators like those in \cref{eq:Zmeval}. The replica trick in \cref{eq:reptrick}, however, requires an analytic continuation to real $m>0$. Ambiguities in this continuation posed technical difficulties in \cite{Engelhardt:2020qpv}, and a semiclassical resolution was proposed in \cite{Chandrasekaran:2022asa}. Nonetheless, the need to resort to numerics still rendered the gravitational implementation of this method inconclusive. There thus remained the troubling possibility that effective gravity would not even be able to compute quenched entropies at all.

In this work, we address this question by making use of the duality between theories of Jackiw-Teitelboim (JT) gravity \cite{Jackiw:1984je,Teitelboim:1983ux} and matrix integrals \cite{Saad:2019lba,Stanford:2019vob,Turiaci:2023jfa,Mertens:2022irh}, in which the ensembles of theories that emerge correspond to a random matrix Hamiltonian. In particular, this duality is established at the level of a topological expansion of the gravitational correlators on the left-hand side of \cref{eq:Zmeval} and a corresponding perturbative expansion of a double-scaled matrix integral for the right-hand side. While this correspondence is a precise match to all orders in the topological expansion, the random matrix formulation in general has access to non-perturbative details that may not be accessible from the gravity side.\footnote{\,Actually, even the topological expansion is only an asymptotic series whose convergence also depends on $\beta$. On the other hand, the matrix integral is defined exactly at any $\beta$. Hence the matrix dual not only provides a non-perturbative completion, but also a convergent resummation of the topological expansion of JT for all $\beta$.} Our goal in this paper is to study the replica trick in \cref{eq:reptrick} using the concrete framework of random matrix theory, but making sure to only rely on ingredients which are accessible to the gravitational path integral.\footnote{\,\label{fn:prevwk}\,Explorations of this replica trick using matrix integrals have been previously pursued, but using numerical or non-perturbative tools which do not help elucidate how to implement it in gravity \cite{Johnson:2020mwi,Okuyama:2020mhl,Okuyama:2021pkf,Janssen:2021mek,Johnson:2021rsh,Johnson:2021zuo,Johnson:2022wsr}.}
As a result, we are able to pinpoint the minimal ingredients necessary to succeed in this task, singling out the key role of a specific class of branes which effective gravity can indeed accommodate.

\subsection{Gravitational Entropy}
\label{ssec:gsint}

While the subject of quenched entropies in gravity is very recent and underexplored, extensive work has been devoted to calculating the annealed entropy $\mathcal{S}$ that \cref{eq:slogZ} gives.
It is of particular interest to study $\mathcal{S}(\beta)$ for black holes near extremality, corresponding to the parametric hierarchy $\beta\gg S_0 \gg 1$. This regime can be explored by looking at the large-$\beta$ behavior of $\mathcal{S}(\beta)$ on top of a large-$S_0$ perturbative expansion.
The computation of this near-extremal entropy including quantum gravitational effects is a formidable endeavor which has been pursued by \cite{Sen:2012kpz,Sen:2012cj,Sen:2012dw,Bhattacharyya:2012ye,Castro:2018hsc,Ghosh:2019rcj,Iliesiu:2020qvm,Heydeman:2020hhw,Boruch:2022tno,Iliesiu:2022kny,Iliesiu:2022onk,Banerjee:2023quv,H:2023qko,Turiaci:2023wrh,Rakic:2023vhv,Kapec:2023ruw}. We note that all of their results can be captured by the following general expression:\footnote{A more detailed discussion about this result and its subtleties when $\beta\sim O(e^{S_0})$ is given in \cref{sec:nebh}.}
\begin{equation}
\label{eq:genstt}
    \mathcal{S}(\beta) = S_0 - s\log\beta + \# \, \Delta e^{-\beta \Delta} + \# \beta^{-1} e^{-\beta \Delta} + O(\beta^{-2} e^{-\beta \Delta}),
\end{equation}
where we are using $\#$ to denote $O(1)$ numerical factors whose precise value will not play a role in our discussion. 
Here, $S_0$ is the quantum-corrected Bekenstein-Hawking black hole entropy at extremality,\footnote{\,More explicitly, $S_0 = S_{\smalltext{BH}} + c_{\smalltext{log}} \log S_\smalltext{BH}$, where $S_\smalltext{BH} \equiv \frac{A}{4G_{\smalltext{N}}}$ is the extremal Bekenstein-Hawking entropy given in terms of the horizon area $A$, and $c_{\smalltext{log}}$ is a constant which depends on the field content of the theory and quantifies their $1$-loop contributions to these
leading logarithmic corrections in $S_{\smalltext{BH}}$. See \cref{sec:nebh} for more details.} $s\in\mathbb{Q}$ is an important parameter determined by supersymmetry, and $\Delta\geq0$ is the size of a possible gap in the spectrum above the ground state. In particular, one observes that $s=0$ when the extremal black hole is BPS \cite{Heydeman:2020hhw} and strictly $s>0$ otherwise \cite{Iliesiu:2020qvm}; e.g. theories which are purely bosonic yield $s=3/2$.
As for $\Delta$, in general one finds $\Delta>0$ if and only if the extremal limit is BPS and has an extensive ground-state degeneracy $N_0 \sim O(e^{S_0})$; otherwise, $\Delta=0$ when $N_0 \sim O(1)$.
Summarizing, \cref{eq:genstt} is found to involve:
\begin{equation}
\label{eq:allcases}
    \begin{cases}
        \text{Non-near-BPS:} & s>0,\qquad\qquad\qquad\qquad~ \Delta = 0, \\
        \text{Near-BPS:} & s=0,~ 
        \begin{cases}
            N_0\sim O(1): & \Delta = 0, \\
            N_0\sim O(e^{S_0}): & \Delta > 0,
        \end{cases}
    \end{cases}
\end{equation}
where we are using the term near-BPS to refer to near-extremal black holes which preserve some supersymmetry in the extremal limit. Note that at this point \cref{eq:allcases} is a heuristic classification obtained by inspecting the form of \cref{eq:genstt} for various gravitational results. In this paper, we show that \cref{eq:allcases} holds in full generality, and explain the quantitative origin of both $s$ and $\Delta$.

Focus for now on the differences between the near-BPS ($s=0$) and non-near-BPS ($s>0$) cases. The behavior of $\mathcal{S}(\beta)$ in \cref{eq:genstt} is drastically different at large $\beta$ depending on supersymmetry due to the $-s\log\beta$ term. For near-BPS black holes, the $\beta\to\infty$ limit simply yields a finite extremal entropy given by $S_0$. Generically, this is a large number interpreted as quantifying the ground-state degeneracy by $e^{S_0}$, a satisfying result since one does expect a large number of degenerate BPS states based on symmetry arguments. That this extremal entropy indeed has a statistical interpretation has been verified by direct enumeration of BPS states in string theory \cite{Strominger:1996sh,David:2002wn}.
On the other hand, for black holes which are not near-BPS, we observe $\mathcal{S}$ becoming smaller with $\beta$ and eventually turning negative for $\beta \gtrsim O(e^{S_0})$. The monotonic decrease of $\mathcal{S}$ towards low temperatures has been interpreted positively, considering that in the absence of a symmetry arguments non-BPS black holes should not be highly degenerate \cite{Iliesiu:2020qvm,Iliesiu:2022onk}. As for the eventual negativity of $\mathcal{S}(\beta)$, it has been argued that for such large $\beta \gtrsim O(e^{S_0})$ some other quantum gravitational object must dominate $\mathcal{Z}(\beta)$, giving non-perturbative corrections to \cref{eq:genstt} which would take over and keep $\mathcal{S}(\beta)$ non-negative \cite{Turiaci:2023wrh,Rakic:2023vhv}. Statistically, this would effectively mean that non-near-BPS black holes at such low temperatures do not exist. Our findings in this paper suggest otherwise.

The qualitative features of \cref{eq:genstt} that follow from \cref{eq:allcases} will serve all along this paper as both an example and a consistency check for many results. Let us demonstrate this with a brief preview of the general picture that arises from our spectral analysis. What turns out to characterize near-BPS black holes is the fact that the ground-state energy is protected by supersymmetry across the ensemble, whereas for black holes which are not near-BPS this energy is allowed to vary. As we will show, this simple property of a spectral ensemble alone already implies that the annealed entropy should capture the degeneracy of the ground-state energy when protected, and diverge negatively when not. Hence the behavior \cref{eq:genstt} exhibits is perfectly consistent with $\mathcal{S}$ being an annealed entropy, including the fact that it becomes negatively divergent in the $\beta\to\infty$ limit. In other words, any potential non-perturbative correction to \cref{eq:genstt} should not make $\mathcal{S}$ non-negative, and nonetheless nothing compromises the existence of such non-near-BPS black holes.

Another important fact we will prove is that how degenerate the ground state is actually plays no role on the negativity issue from the spectral point of view; all that matters is whether its energy is protected. 
This is already suggested by \cref{eq:allcases}, where we see that the sufficient condition for non-negativity $s=0$ holds in the near-BPS case regardless of whether $N_0$ is extensive in $e^{S_0}$ or not. 
In particular, we will show that the annealed entropy asymptotes to $\log N_0$ at extremality so long as the ground-state energy is protected, no matter how small or large $N_0$ is. On the other hand, we prove that this entropy becomes negative at low temperatures if the ground-state energy is not protected, regardless of how degenerate it is.\footnote{ 
As a simple toy example, suppose that an ensemble of theories has a ground-state energy uniformly distributed over $[-\delta,0]$ with degeneracy $e^{S_0}$, and a density of states for all higher energies $E>0$ of the form $\rho(E)=e^{S_0} \sqrt{E}$. The average canonical partition function of such an ensemble is
$$
\expval{Z(\beta)} = (e^{\beta\delta}-1) \frac{e^{S_0}}{\beta\delta} + \frac{\sqrt{\pi}e^{S_0}}{2\beta^{3/2}}.
$$
One easily verifies that the strict $\delta\to0$ case, corresponding to a ground-state energy protected to $E=0$ across the ensemble, yields $S_a(\beta)\to S_0$ as $\beta\to\infty$. 
Note that this fact is independent of how large or small $S_0$ is, and that $S_a(\beta)\geq0$ in this case even though there is no gap in the spectrum.
In contrast, for any $\delta>0$ one finds $S_a(\beta)\to-\infty$ as $\beta\to\infty$, no matter how large $e^{S_0}$ is; in this toy model, the negativity of the annealed entropy occurs for $\beta\gtrsim e^{S_0}/\delta$.}
Black holes in supersymmetric theories which break all supersymmetry will nicely illustrate this fact.
Hence what makes the behavior of the gravitational entropy of near-BPS black holes so different is the fact that the ground-state energy is protected, not that it is highly-degenerate.
The characteristic feature that a ground state with high degeneracy does give rise to, as will be explicitly seen, is the appearance of a gap in the spectrum above its energy. In other words, the $\Delta$ gap in \cref{eq:genstt} does arise precisely because the ground state is highly degenerate, and would be zero otherwise (cf. \cref{eq:allcases}).

\subsection{Summary of Results}

As explained in \cref{ssec:annque}, the gravitational entropy $\mathcal{S}$ computed in effective quantum gravity through \cref{eq:slogZ} is an annealed entropy. Using this fact, our results for goal (\ref{pt:whats}) are as follows:
\begin{enumerate}[itemindent=2em,labelsep=.5em,leftmargin=0pt]
    \item The standard gravitational path integral gives a black hole entropy $\mathcal{S}$ which is non-negative at low temperatures if and only if the extremal limit is supersymmetric, and diverges negatively otherwise. This is shown to be an exact, non-perturbative statement for any effective theory of gravity with an ensemble description.
    This result turns the generic negativity of $\mathcal{S}$ from a pathological behavior that ought to be fixed by non-perturbative corrections into a required feature which should be robust against non-perturbative corrections.
    \item If the extremal limit is a BPS black hole, the usual gravitational entropy stays non-negative and asymptotes to the Bekenstein-Hawking entropy. Nonetheless, away from the strict extremal limit, this quantity continues to behave unlike the thermal entropy of any quantum system. We identify the supersymmetry protection of the ground-state energy as the sole mechanism behind the qualitative negativity difference between the BPS and non-BPS limits; whether or not the ground state is highly degenerate also has consequences which are interesting but unrelated.
\end{enumerate}

These results follow primarily from a general analysis of spectral ensembles in \cref{sec:specen}. This provides the framework needed to address the differences between annealed and quenched entropies given an ensemble of theories, and make very general predictions about the behavior of these entropies in effective quantum gravity. These predictions can be compared with explicit gravitational results for near-extremal black holes that we review in \cref{sec:bhspec} and \cref{sec:nebh}. The consequences of supersymmetry depending on whether the ground-state energy is protected across Hamiltonians in the ensemble are stated in \cref{thm:fact,cor:fact}. The discrepancy between annealed and quenched entropies even with supersymmetry, as well as the actual role of large BPS degeneracy, are more explicitly explored later in \cref{ssec:quni,ssec:gbps,sec:gaps}.

As for the quenched entropy, its calculation in gravity requires computing gravitational correlators of $Z(\beta)$ including wormhole contributions to the gravitational path integral. However, the physics of interest occur at $\beta\sim O(e^{S_0})$, a non-perturbative regime often regarded as inaccessible to effective gravity. Our findings using general ensembles of random matrix theory dual to effective gravity strongly suggest otherwise; in fact, already semiclassical gravity suffices to capture these effects. Using matrix integrals to study $\expval{Z(\beta)^m}$, our results for goal (\ref{pt:quens}) are as follows:
\begin{enumerate}[itemindent=2em,labelsep=.5em,leftmargin=0pt]
    \item For a completely general random matrix ensemble, we find that at $\beta\sim O(e^{S_0})$ a new saddle arises which exists and dominates at large $e^{S_0}$. This saddle is a single-eigenvalue instanton with a clear holographic dual: dynamical brane solutions. The saddle-point treatment of the matrix integral is thus understood to be dual to a semiclassical treatment of effective gravity where brane actions are evaluated on-shell. As a non-trivial test of this claim, we show that our instanton results for $\beta\sim O(e^{S_0})$ identically match the on-shell action of gravitational D3-branes holographically dual to $1/2$ BPS Wilson loops in $\mathcal{N}=4$ supersymmetric Yang-Mills with a random matrix description.
    \item These brane solutions provide a controlled resummation that includes contributions of arbitrary genus to the gravitational path integral in a $\beta \sim O(e^{S_0})$ regime where the usual topological expansion breaks down. In addition, our instantons continuously interpolate between small $\beta \ll O(e^{S_0})$ and large $\beta \sim O(e^{S_0})$, suggesting that semiclassical gravity alone should also be able to capture the transition to a brane regime. For near-extremal black holes, this transition suggests that the near-horizon throat gets cut off by a brane at a finite proper distance as soon as $\beta\sim O(e^{S_0})$ already at the semiclassical level.\footnote{\,As mentioned before, the fact that generically $\mathcal{Z}(\beta)\to0$ as $\beta\to\infty$ (i.e., the cause of $\mathcal{S}(\beta)\to-\infty$), has previously been interpreted as implying that near-extremal black holes that are not near-BPS do not exist \cite{Turiaci:2023wrh,Rakic:2023vhv}. Our results suggest that they do, but develop branes in their throat at $\beta\sim O(e^{S_0})$.}
    \item Our instantons are found to be necessary and sufficient to obtain a non-negative quenched entropy via the no-replica trick, motivating its gravitational implementation including brane replica wormholes. In addition, they capture non-perturbative effects which enhance the generic negative divergence of the annealed entropy at large $\beta$. Our analysis of the replica trick resolves a subtlety that has hindered previous attempts even within matrix integrals:\footnote{See e.g. \cite{Johnson:2020mwi}, where a large-$\beta$ expansion followed by a continuation in $m$ led to a problematic $m\to0$ limit.} the large-$\beta$ and $m\to0$ replica limits directly compete and thus do not commute. This dooms any approach that involves approximating integer $m\geq1$ moments at large $\beta$, and then analytically continuing to $m\to0$.
\end{enumerate}

The general framework of random matrix ensembles and its constructs relevant for the analysis of random Hamiltonians with and without supersymmetry is the subject of \cref{sec:rmf}. In particular, as explained in \cref{ssec:ensemmess,ssec:hamran} (see also \cref{sec:susyrmt}),
our matrix integral approach applies to all matrix ensembles relevant to quantum gravity: Wigner-Dyson (no supersymmetry), Altland-Zirnbauer (broken supersymmetry), and their Wishart generalizations (unbroken supersymmetry). Matrix scalings to spectral edges and double scaling limits are also accounted for (see \cref{ssec:mscal}).

These tools allow us to carry out a general exploration of the near-extremal limit in \cref{sec:canemsem}. Once the relevant matrix dynamics are identified to be eigenvalue instantons, an appropriate continuum formulation of the matrix integral for $\expval{Z(\beta)^m}$ is performed in \cref{ssec:split,ssec:insti,ssec:insact}. The consequences of these instantons for entropic quantities are analyzed in \cref{ssec:quni}. A crucial aspect of our treatment throughout \cref{sec:canemsem} is the restriction to employing solely random matrix theory tools with a clear semiclassical holographic dual.\footnote{\,In particular, we use the replica trick (for otherwise there is no sensible gravitational path integral definition of the quenched entropy), and rely on a saddle-point analysis of the matrix integral (for otherwise we would potentially be borrowing non-perturbative tools such as Fredholm determinants which are inaccessible to gravity, cf. \cref{fn:prevwk}).} This is what allows us to elucidate the key gravitational ingredients at play in \cref{ssec:lessons}. Furthermore, by working at a saddle-point level, the lessons we arrive at make sense also in higher-dimensional gravity, where an off-shell gravitational path integral is not even available. Explicit examples of our instanton results for various matrix ensembles are presented in \cref{sec:egs}, including the cases relevant to $1/2$ BPS Wilson loops in \cref{sec:gwd} and JT gravity in \cref{ssec:jtsec}.

It is worth emphasizing that many of our conclusions benefit from the robustness of universality results in random matrix theory, and thus are independent at a semiclassical level from any potentially non-unique choice of non-perturbative completion of the theory.\footnote{\,The semiclassical brane effects we describe are insensitive to non-perturbative details in the sense that e.g. to leading order any non-perturbative completion would give the same non-negative result for the quenched entropy, whereas not including branes would give a negative entropy near extremality. In terms of the spectrum, these branes simply capture the fact that there is an ensemble of spectra which are discrete; there continues to be no single discrete spectrum effective gravity can access, but an ensemble thereof.} More specifically, we find the dynamics of the relevant near-extremal instantons to be governed by edge statistics which are universal to leading order at large $e^{S_0}$. 
By analyzing universal aspects of the low energy spectrum of general matrix ensembles, we are able to make the following general observations and predictions about the near-extremal black hole spectrum in effective quantum gravity:
\begin{enumerate}[itemindent=2em,labelsep=.5em,leftmargin=0pt]
    \item The spectral statistics of Wigner-Dyson (non-supersymmetric) and Altland-Zirnbauer (supersymmetric) ensembles are often associated to soft Airy edges and hard Bessel edges, respectively. In fact, we find that in the more general Wishart construction for supersymmetric ensembles, the lower edge of the spectrum becomes soft and governed by Airy statistics.
    This is a consequence of the natural scaling of Wishart ensembles, which gives rise to a spectral gap above zero energy. This gap pushes the spectrum off the hard Bessel edge at zero, causing positive eigenvalues to have a soft Airy onset above the gap.
    \item By studying random Hamiltonians with at least $\mathcal{N}=2$ supersymmetry, we identify two distinct physical mechanisms explaining the origin of gaps in the spectrum. 
    For supermultiplets with an extensive degeneracy of BPS states at zero energy $N_0\sim\nu \, e^{S_0}$, the spectrum exhibits a gap $\Delta$ to the first non-BPS state which behaves as $\Delta\sim \nu^2$ for small gaps and as $\Delta\sim\nu$ for large gaps.
    This gap arises due to the Vandermonde repulsion exerted by the BPS states at zero energy.
    For purely non-BPS sectors, we observe that what causes the shift of the spectrum to higher energies for higher supermultiplets is a monotonic growth extensive in $e^{S_0}$ in the number of non-BPS states.
\end{enumerate}

The universal gap behavior we observe is consistent with gravitational results. In particular, the quadratic scaling of $\Delta$ with the ratio $N_0/e^{S_0}$ matches the behavior of the gap in supermultiplets of $\mathcal{N}=2$ super-JT, where $\Delta\sim q^2$ when $N_0\sim q \, e^{S_0}$ (see \cref{sec:bhspec,sec:gaps}). When $N_0>0$ but not extensive in $e^{S_0}$, the spectrum becomes gapless and develops a hard edge again even though a discrete ground state protected by supersymmetry still remains.
This is again consistent with gravity results for deformed models of $\mathcal{N}=1$ super-JT with a non-extensive number of BPS states, and a particularly anomalous sector of the theory with $\mathcal{N}=2$ supersymmetry. The argument for spectral gaps in purely non-BPS supermultiplets is first made in \cref{ssec:hamran}, and a random matrix construction is given in \cref{sec:susyrmt}.

\section{Spectral Ensembles}
\label{sec:specen}

We introduce here the general structure in an ensemble of theories that is relevant to the study of annealed and quenched entropies. This allows us to make completely general statements about how these quantities behave and differ from each other. Known gravitational entropy results are then presented and reviewed through the lens of spectral ensembles. 
The tools we employ here will also emerge and be explicitly realized in \cref{sec:rmf} when studying random matrix ensembles.

\subsection{Basics}
\label{ssec:basspecen}

Given a theory with Hamiltonian $H$, the canonical partition function $Z(\beta)$ is given by \cref{eq:trbh}. Let $H$ have a discrete spectrum of eigenvalues $\mathcal{E}\equiv \{E_k\in\mathbb{R}\}$. The spectral density of $H$ is
\begin{equation}
\label{eq:dcomb}
    \rho_\mathcal{E}(E) \equiv \sum_k \delta(E - E_k),
\end{equation}
and $Z(\beta)$ can more explicitly be written
\begin{equation}
\label{eq:zbspd}
    Z(\beta) = \sum_k e^{-\beta E_k} = \int\displaylimits_{\mathbb{R}} dE \, \rho_\mathcal{E}(E) \, e^{-\beta E}.
\end{equation}
As anticipated below \cref{eq:trbh}, given that $Z(\beta)$ only depends on the spectrum of the Hamiltonian, we can generally reduce our theory ensembles to spectral ensembles.
Specifically, an ensemble of theories given by a probability density function (PDF) over some space of Hamiltonians $H$ yields a unique joint PDF over their eigenvalues $\mathcal{E}$ upon diagonalization. This joint PDF defines the measure $p(\mathcal{E})$ of the spectral ensemble.
Equipped with it, the expectation value in the spectral ensemble of any function $f$ of $\mathcal{E}$ is defined by
\begin{equation}
\label{eq:evf}
    \langle f\rangle \equiv \int d\mathcal{E} \, p(\mathcal{E}) \, f(\mathcal{E}).
\end{equation}
For $N$ eigenvalues, one can generally write $p(\mathcal{E})$ in terms of a symmetric function $p_N$ as
\begin{equation}
\label{eq:jpdfsym}
    p(\mathcal{E}) = p_N(E_1,\dots,E_N).
\end{equation}
The joint PDF induced on any number $1\leq m \leq N$ of eigenvalues can be obtained by simply integrating $p_N$ over the other set of $N-m$ eigenvalues,
\begin{equation}
\label{eq:pnPN}
    p_n(E_1,\dots,E_n) \equiv \int\displaylimits_{\mathbb{R}^{N-m}} dE_{m+1}\cdots dE_{N} \; p_N(E_1,\dots,E_N).
\end{equation}
These can be understood as $m$-point eigenvalue correlation functions, and simply related to correlators of the discrete spectral density in \cref{eq:dcomb} by\footnote{\,
One may want to rename the dummy variables $\mathcal{E}$ being integrated over in \cref{eq:evf} to avoid confusion.}
\begin{equation}
\label{eq:specdensc}
    \langle\rho(E_1)\cdots\rho(E_n)\rangle = N^m p_n(E_1,\dots,E_n).
\end{equation}
For $m=1$, this defines the average spectral density in terms of the $1$-point function as
\begin{equation}
\label{eq:densn1}
    \langle \rho(E) \rangle = N p_1(E),
\end{equation}
When clear from context, the average $\langle \rho(E) \rangle$ will be referred to simply as the spectral density.

The definitions of annealed and quenched entropies in \cref{eq:annS,eq:quenS} can now be made more explicit.
For the logarithm function, annealing leads to
\begin{equation}
\label{eq:anlogexp}
    \log \, \langle Z(\beta) \rangle = \log\left(N \int\displaylimits_{\mathbb{R}} dE \, p_1(E) \, e^{-\beta E} \right),
\end{equation}
where we have simply used \cref{eq:zbspd,eq:densn1}.
Hence the annealed logarithm solely depends on the $1$-point function of the spectral ensemble, which is why it is simply expressible in terms of the spectral density. In contrast, when quenching we are stuck with
\begin{equation}
\label{eq:qlogexp}
    \langle \log Z(\beta)\rangle = \int\displaylimits_{\mathbb{R}^N} dE_1\cdots dE_N \, p_N(E_1,\dots,E_N) \, \log(\sum_{k=1}^N e^{-\beta E_k}),
\end{equation}
which makes manifest that the quenched logarithm depends on the full spectral $N$-point function. Clearly, the annealed entropy in \cref{eq:annS} is a much coarser quantity determined by only $1$-point statistics, whereas the quenched entropy in \cref{eq:quenS} is sensitive to all higher-point statistics. 
Needless to say, the quenched entropy is not computable from just knowledge of the average spectral density, unless the $m$-point functions are all trivial.\footnote{\,An $m$-point function is non-trivial if its connected $m$-point part is nonzero, for otherwise it would be determined by $k$-point functions with $k<m$. For instance, a trivial $2$-point function gives $\expval{\rho(E_1)\rho(E_2)} = \expval{\rho(E_1)}\expval{\rho(E_2)}$.} More explicitly, \cref{eq:anlogexp,eq:qlogexp} are equal only in the trivial case in which all $m$-point correlators in \cref{eq:specdensc} factorize as
\begin{equation}
    \langle\rho(E_1)\cdots\rho(E_n)\rangle = \langle\rho(E_1)\rangle\cdots\langle\rho(E_n)\rangle,
\end{equation}
i.e., an uninteresting ensemble of independently and identically distributed random energies.

\subsection{Quenched Expectations}
\label{ssec:qexp}

The logarithm expressions in \cref{eq:anlogexp,eq:qlogexp} imply that annealed and quenched entropies are generally very different quantities in non-trivial ensembles. The only room for agreement between them is if in some regime in $\beta$ the quenched logarithm becomes dominated by $1$-point statistics. Expanding $\log Z(\beta)$ about $\beta=0$, clearly a term of $O(\beta^k)$ can only possibly depend on up to $k$ distinct eigenvalues. Hence at small $\beta$,
\begin{equation}
\label{eq:hight}
    \expval{\log Z(\beta)} = \log N - \beta \expval{E} + O(\beta^2),
\end{equation}
which clearly matches $\log \expval{ Z(\beta)}$ up to $O(\beta)$. Contributions to \cref{eq:hight} from $2$-point functions begin at $O(\beta^2)$, and thus the annealed logarithm begins missing the effect of their connected parts already at this order.
Similarly, $k$-point functions begin contributing to $\expval{\log Z(\beta)}$ at $O(\beta^k)$, and their connected parts cannot possibly be captured by $\log \expval{Z(\beta)}$. 
For any non-trivial spectral ensemble, the annealed logarithm thus unavoidably begins to differ from the quenched one already at $O(\beta^2)$. As a result, we expect the annealed entropy to approach the quenched entropy at high temperatures, but become a worse approximation to it at lower temperatures if the connected parts of $k$-point functions contribute non-negligibly.

Since higher-point functions become important at higher orders in $\beta$, we learn that $\expval{\log Z(\beta)}$ becomes sensitive to the full statistics of the ensemble as $\beta$ increases. At large $\beta$ one obtains
\begin{equation}
\label{eq:lowt}
    \expval{\log Z(\beta)} = \expval{\log N_0} -\beta \expval{E_0} + \expval{O(e^{-\beta\Delta})},
\end{equation}
where $E_0\equiv \min\{E_1,\dots,E_N\}$ should be understood as the ground-state energy, $N_0\geq1$ is its degeneracy, and the exponentially suppressed corrections depend on the statistics of the separation $\Delta$ between $E_0$ and the next higher eigenvalue. Although $N_0$ may sometimes be a fixed constant, the expectation value of $E_0$ does generically depend non-trivially on the full spectral $N$-point function. In other words, the large-$\beta$ behavior of $\expval{\log Z(\beta)}$ can by no means be reproduced by $\log \expval{Z(\beta)}$. 

There is a simple exception to this which turns out to be of much interest.
In particular, it is possible to make annealed and quenched logarithms agree to leading order at large-$\beta$ as follows. Suppose some symmetry of the original ensemble of Hamiltonians makes their ground-state energy a protected quantity.\footnote{\,As one may anticipate, in theories with supersymmetry this is precisely what happens for near-BPS black holes.} In this case the expectation value of the ground-state energy in \cref{eq:lowt} just yields $\expval{E_0} = E_0$. On the other hand, one easily sees that the greatest lower bound on the support of the average spectral density is $\inf\supp\expval{\rho}=E_0$, and that
\begin{equation}
\label{eq:discpro}
    \expval{\rho(E)} \supset \expval{N_0} \delta(E-E_0).   
\end{equation}
Consequently, in this non-generic case the annealed logarithm in \cref{eq:anlogexp} would behave just like the quenched one to leading order at large $\beta$, up to a potential discrepancy between $\expval{\log N_0}$ and $\log{\expval{N_0}}$. In the concrete realizations of this setting that we will study using matrix ensembles, $N_0$ is in fact fixed and thus this constants agree. Even in this case though, discrepancies between annealing and quenching do still arise at large $\beta$ at orders higher than just discussed.

It is illustrative to be more explicit about how $\expval{\rho}$ preserves a discrete ground state as in \cref{eq:discpro} when its energy is protected. The average spectral density can be thought of as the result of smearing the discrete energy levels of each Hamiltonian in the ensemble as $p(\mathcal{E})$ dictates. Hence consider letting $\theta_k$ be the map on $\mathcal{E}$ which yields $1$ if say $E_1$ is the $k^{\text{th}}$ energy level, and zero otherwise. The sum over all these indicator functions $\theta_k$ adds up to $1$, and thus
\begin{equation}
	\rho_k(E) \equiv N \int d\mathcal{E} \, p(\mathcal{E}) \, \theta_k(\mathcal{E}) \, \rho_\mathcal{E}(E),
\end{equation}
gives the probability distribution of values that the $k^{\text{th}}$ energy level takes across the ensemble.
The average spectral density is then suggestively a superposition of these for every energy level,\footnote{\,This representation of the average spectral density was studied in the context of random matrices in \cite{Johnson:2021zuo,Johnson:2022wsr}, which beautifully spelled out its implications for the underlying discreteness of quantum gravitational spectra and the emergence of an ensemble description in effective gravity. Note that generically the support of some $\rho_k$ distributions, and possibly all of them, may overlap on open sets. For instance, some Hamiltonian in the ensemble may have a ground-state energy that is strictly higher than that of the first excited state of another, thus implying an overlap on an open set $\supp\rho_1\cap\supp\rho_2\neq\varnothing$. If instead the ground-state energy $E_0$ is protected across the ensemble, then excited states may at most reach $E_0$ from above, meaning that $\supp\rho_1=\{E_0\}$ and $\inf\supp\rho_k\geq E_0$ for all $k\geq2$.}
\begin{equation}
	\langle\rho(E)\rangle = \sum_{k=1}^N \rho_k(E).
\end{equation}
This expression grants the interpretation of $\langle\rho(E)\rangle$ as capturing the statistical uncertainty in the energy levels of the random Hamiltonian. The non-generic case described above where the ground-state energy $E_0$ is protected corresponds to having no uncertainty in the lowest eigenvalue, i.e.,
\begin{equation}
\label{eq:ann1exp}
    \rho_1(E) = \expval{N_0} \delta(E-E_0).
\end{equation}
By construction, in this case no $\rho_k$ with $k\geq2$ can possibly have a tail with support below $E=E_0$, and thus we indeed verify that $\inf\supp\expval{\rho}=E_0$. 

The observations here about $\log Z(\beta)$ lead us to expect large discrepancies between annealing and quenching when $\beta$ is large, and also point at a non-generic situation in which they may behave similarly. In trying to understand how the gravitational entropy compares to a thermal entropy, we are thus drawn to study how annealed entropies do actually behave at large $\beta$ in these situations. As we will show in \cref{sec:fact}, \cref{eq:ann1exp} turns out to be necessary and sufficient for the annealed entropy to stay non-negative at low temperatures. Before we turn to this though, let us conclude with two brief remarks.

As emphasized above, while $\log \expval{Z(\beta)}$ only captures $1$-point statistics, $\expval{\log Z(\beta)}$ depends on all-point statistics and is thus sensitive to much finer details of the spectral ensemble. The investigation of quantities probing higher-point functions has proven remarkably insightful in  quantum gravity, as most prominently demonstrated by calculations of the spectral form factor \cite{Cotler:2016fpe,Liu:2018hlr,Stanford:2020wkf,Saad:2018bqo,Okuyama:2018gfr,Saad:2019lba,Saad:2019pqd}. However, this object basically involves the second moment $\expval{Z(\beta)^2}$, which only depends on up to $2$-point functions.
The regime in which the spectral form factor becomes interesting corresponds precisely to when $\expval{\log Z(\beta)}$ begins to differ from $\log \expval{Z(\beta)}$.
Being sensitive to even higher-point statistics, one may thus expect quenched entropies to capture even more interesting physics.

Finally, note that our analysis above says that $\expval{\log Z(\beta)}$ becomes sensitive to $m$-point functions with increasingly higher $m$ as $\beta$ becomes large. On the other hand, the replica trick in \cref{eq:reptrick} says that to compute this quantity we should consider $m$-point functions in the $m\to0$ limit.
This is our first hint that $\beta\to\infty$ and $m\to0$ are competing limits whose order matters. Since the latter is strict in the replica trick, this already tells us that doing large-$\beta$ approximations at finite $m$ before taking the $m\to0$ replica limit may be a terrible idea.

\subsection{Negative Entropies}
\label{sec:fact}

Since this section focuses solely on annealed quantities, the notation $\expval{\,\cdot\,}$ will be temporarily dropped for the sake of clarity.
Annealed entropies just depend on the average spectral density, and thus can be studied in terms of the properties of a general such object. For the sake of generality, let this spectral density be given by an arbitrary map
\begin{equation}
    \widetilde{\rho}: \mathbb{R} \to [0,\infty],
\end{equation}
where we allow for distributions by including infinity in the codomain.
We demand $\widetilde{\rho}$ to be locally integrable, and assume its support is bounded from below to define a ground-state energy by
\begin{equation}
    E_0\equiv \inf \supp \widetilde{\rho}.
\end{equation}
The canonical partition function is then given by\footnote{\,As emphasized by the notation $\int_{0^-}^\infty \equiv \lim_{\epsilon\to0^+} \int_{-\epsilon}^\infty$, these integrals capture any point contribution at $0$.}
\begin{equation}
\label{eq:Zbeta}
    Z(\beta) =\int\displaylimits_{\mathbb{R}} dE \, \widetilde{\rho}(E) \, e^{-\beta E} = \int_{0^-}^\infty dE \, \rho(E) \, e^{-\beta E},
\end{equation}
where in the second equality we have introduced a canonically renormalized spectral density,
\begin{equation}
\label{eq:ltdef}
    \rho(E)\equiv e^{-\beta E_0} \widetilde{\rho}(E_0 + E).
\end{equation}
This is useful because now $Z$ takes the form of a Laplace transform.
In what follows, we refer to the canonically shifted $\rho$ whose ground-state energy is zero as the spectral density.

Allowing for $\rho$ to be discontinuous and distributional, the general analysis of its Laplace transform can get cumbersome.
Motivated by \cref{ssec:qexp}, we will assume that $\rho$ admits a decomposition into distributional and continuous parts of the general form
\begin{equation}
\label{eq:rdc}
    \rho(E) = \Delta(E) + g(E), \qquad \Delta(E) \equiv N_0\,\delta(E) + \sum_{k\in I} N_k \, \delta(E-E_k),
\end{equation}
for some set $I\subseteq \mathbb{N}$ (possibly empty or infinite), $N_0\geq0$, $N_k>0$, all $E_{k}>0$ distinct, and where $g$ is now an actual real function, assumed to be non-negative and piecewise continuous.
By the canonical shift in \cref{eq:ltdef}, it follows that either $\inf\supp g =0$, or $N_0>0$, or both. 
Generally $g$ may contain discontinuities and singularities, so when studying it we will generally be interested in keeping track of one-sided limits, and will not require two-sided limits to exist. Our convention will be to allow for limits to exist in the extended non-negative reals $[0,\infty]$.

The Laplace transform for $\rho$ of the general form in \cref{eq:rdc} reads
\begin{equation}
\label{eq:lpgen}
    Z(\beta) = \hat{Z}(\beta) + Z_g(\beta), \qquad    
    \begin{cases}
        \displaystyle \hat{Z}(\beta) \equiv N_0 + \sum_{k\in I} N_k \, e^{-\beta E_k} \\
        \displaystyle Z_g(\beta) \equiv \int_{0^+}^\infty dE \, g(E) \, e^{-\beta E}
    \end{cases}
\end{equation}
We will only be interested in spectral densities $\rho$ whose Laplace transform $Z$ exists, which requires $\hat{Z}$ and $Z_g$ to independently exist. For $\hat{Z}$, if its form in \cref{eq:lpgen} involves an infinite series, we assume it converges for any $\beta>c$ for some finite $c>0$.
As for $Z_g$, we assume $g$ is locally integrable and of exponential type,\footnote{\,This is, $\exists c,M,E^*>0$ such that $|g(E)|\leq M e^{c E}$ for all $E>E^*$.} which guarantees that $\lim_{\beta\to\infty} e^{-\beta E}g(E) = 0$ for any $\beta>c$ for finite $c>0$.
Note that the local integrability condition allows for integrable singularities where $g$ diverges. In particular $g(E)$ may diverge as $E\to0^+$ and, so long as it does so slower than $E^{-1}$, $Z_g$ will still exist.

With a well-defined $Z$, one can then proceed to study the annealed entropy $S_a$ from \cref{eq:annS}.
Note that this object is invariant under shifts of $\rho$ like \cref{eq:ltdef} or, at the level of $Z$, under $Z(\beta) \to e^{-\beta E_0} Z(\beta)$ for any $E_0\in\mathbb{R}$. This demonstrates that our canonical shift to always fix $\inf\supp\rho=0$ was without loss of generality. The main result of this section may now be stated:

\begin{restatable}{nthm}{thermofact}
\label{thm:fact}
    Let $S_a(\beta)$ be the annealed entropy defined in \cref{eq:annS}, with $\expval{Z(\beta)}$ given by the Laplace transform of an average spectral density $\rho(E)$ of the general form in \cref{eq:rdc}. Then:
    $$S_a(\beta) \ge 0 \quad\Longleftrightarrow\quad \rho(E) \supset N_0\, \delta(E) \quad\text{with}\quad N_0\ge1.$$
    If instead $N_0<1$, then $S_a(\beta)<0$, $\forall\beta>\beta^*$ for some $\beta^*>0$, and $\displaystyle \lim_{\beta\to\infty}S(\beta)=-\infty$ for $N_0=0$.\footnote{\,Restoring the $\expval{\,\cdot\,}$ notation, note that $0<\expval{N_0}<1$ is not disallowed: it can be realized by an ensemble where the ground-state energy is protected for only some Hamiltonians, with all others having strictly higher energies.}
\end{restatable}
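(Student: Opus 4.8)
The plan is to reduce everything to two asymptotic facts about $Z(\beta)$ together with the elementary identity $S_a(\beta)=\log Z(\beta)+\beta\expval{E}_\beta$, where $\expval{E}_\beta\equiv-\partial_\beta\log Z(\beta)$ is the thermal energy. From \cref{eq:lpgen} I would write $Z(\beta)=N_0+W(\beta)$ with $W(\beta)\equiv\sum_{k\in I}N_k e^{-\beta E_k}+Z_g(\beta)\ge0$, observe that $W$ is decreasing in $\beta$ and that $W(\beta)\to0$ as $\beta\to\infty$ (dominated convergence, using the assumed convergence of the series for $\beta>c$ and that $g$ is of exponential type), so that $Z(\beta)\to N_0$. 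Since the canonical shift of \cref{eq:ltdef} puts $\supp\rho\subseteq[0,\infty)$, the thermal energy satisfies $\expval{E}_\beta\ge0$. This already disposes of the $(\Leftarrow)$ direction: if $N_0\ge1$ then $Z(\beta)=N_0+W(\beta)\ge N_0\ge1$, hence $\log Z(\beta)\ge0$, and adding the non-negative term $\beta\expval{E}_\beta$ gives $S_a(\beta)\ge0$ for all admissible $\beta$.

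For $(\Rightarrow)$ I would argue the contrapositive via the free energy $F(\beta)=-\beta^{-1}\log Z(\beta)$, for which \cref{eq:sfree} gives $S_a(\beta)=\beta^2 F'(\beta)$. Assume $N_0<1$ but $S_a(\beta)\ge0$ for all $\beta$; then $F$ is non-decreasing, so $F(\beta)\le\lim_{\beta'\to\infty}F(\beta')$ for every $\beta$. The key intermediate claim is that this limit is $0$: the upper bound $Z(\beta)\le N_0+W(1)$ for $\beta\ge1$ gives $\limsup_{\beta\to\infty}\beta^{-1}\log Z(\beta)\le0$, while $\inf\supp\rho=0$ gives, for any $\epsilon>0$, $Z(\beta)\ge(\text{mass of }\rho\text{ on }[0,\epsilon])\,e^{-\beta\epsilon}$ and hence $\liminf_{\beta\to\infty}\beta^{-1}\log Z(\beta)\ge-\epsilon$. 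Thus $F(\beta)\le0$, i.e. $Z(\beta)\ge1$, for all $\beta$; letting $\beta\to\infty$ yields $N_0=\lim Z(\beta)\ge1$, contradicting $N_0<1$. Hence some $\beta_1$ has $S_a(\beta_1)<0$, and to promote this to "for all $\beta>\beta^*$'' I would use that $S_a'(\beta)=-\beta\,\partial_\beta^2\log Z(\beta)=-\beta(\expval{E^2}_\beta-\expval{E}_\beta^2)\le0$, so $S_a$ is non-increasing; take $\beta^*=\beta_1$.

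For the final clause, $N_0=0$ forces $Z(\beta)\to0$, so $\phi(\beta)\equiv\log Z(\beta)\to-\infty$, while the sandwich above still gives $\phi(\beta)/\beta\to0$. Since $\phi$ is convex (the variance $\partial_\beta^2\log Z\ge0$) and $S_a(\beta)=\phi(\beta)-\beta\phi'(\beta)$ is the $y$-intercept of the tangent to $\phi$ at $\beta$, it is non-increasing and therefore converges in $[-\infty,0)$. If the limit were finite, say $-c$, then from $\tfrac{d}{d\beta}\!\big(\phi(\beta)/\beta\big)=-S_a(\beta)/\beta^2=(c+o(1))/\beta^2$, integrating from $\beta$ to $\infty$ and using $\phi(\beta)/\beta\to0$ would give $\phi(\beta)\to-c$ finite, contradicting $\phi(\beta)\to-\infty$. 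Hence $S_a(\beta)\to-\infty$.

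The main obstacle is not the algebra but the two asymptotic inputs: establishing $Z(\beta)\to N_0$ and $\beta^{-1}\log Z(\beta)\to0$ cleanly requires handling the distributional/continuous split of $\rho$ and invoking carefully the exponential-type hypothesis on $g$ and the convergence assumption on the series. A secondary subtlety, relevant only when $N_0=0$, is making precise that the mass of $\rho$ on $[0,\epsilon]$ is strictly positive for every $\epsilon>0$ when $\inf\supp\rho=0$ and $g$ is only piecewise continuous (possibly with an integrable singularity at the origin). Once these facts and the monotonicity $S_a'\le0$ are in hand, the rest of the argument is elementary.
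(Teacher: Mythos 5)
Your proposal is correct, and it reaches the theorem by a genuinely different and more elementary route than the paper. The paper's proof runs through three auxiliary lemmas — decay of $Z_g$ along complex rays (dominated convergence), a stretched-exponential rate lemma for $\partial_x\log(-\log Z_g(e^x))$, and a generalized initial value theorem — and then computes the actual limit $\lim_{\beta\to\infty}S_a(\beta)=\log N_0$ for $N_0>0$, while for $N_0=0$ it writes $S_a=\log Z\,(1-f)$ and shows the decay exponent $\lambda=\lim f$ satisfies $\lambda<1$, so the divergence of $\log Z$ wins. You instead exploit only structural facts: the identity $S_a(\beta)=\log Z(\beta)+\beta\expval{E}_\beta$ with $\expval{E}_\beta\ge0$ from $\supp\rho\subseteq[0,\infty)$ (which makes the $(\Leftarrow)$ direction one line, where the paper only gestures at ``continuity and monotonicity of the Laplace transform''), convexity of $\log Z$ giving $S_a'(\beta)=-\beta\,\mathrm{Var}_\beta(E)\le0$ (which cleanly upgrades a single negative value to negativity for all $\beta>\beta^*$), the sandwich $\beta^{-1}\log Z\to0$ from positive mass of $\rho$ near the canonically shifted edge at $E=0$, and the tangent-line integration of $\tfrac{d}{d\beta}(\phi/\beta)=-S_a/\beta^2$ to force $S_a\to-\infty$ when $N_0=0$. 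What each buys: the paper's machinery yields sharper asymptotic information — in particular the precise limit $\log N_0$ for $N_0>0$ and the rate exponent $\lambda$, which feed into the surrounding discussion — whereas your argument is shorter, avoids the case analysis on $g(0^+)$ and the exponent lemma entirely, and is somewhat more robust in how it handles the spectral edge (you only need positive mass on $[0,\epsilon]$, not the paper's assumption that the point spectrum stays away from $E=0$ when $N_0=0$). The two subtleties you flag (establishing $Z\to N_0$ and $\beta^{-1}\log Z\to0$ under the stated hypotheses, and positivity of the mass near zero) are exactly the points that need the exponential-type and convergence assumptions, and they are handled by the same dominated-convergence considerations as the paper's first lemma, so there is no gap.
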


The proof of this statement is given in \cref{sec:proof}. In words, \cref{thm:fact} says that the necessary and sufficient condition for the annealed entropy to be non-negative is that the average spectral density preserve at least one discrete ground state. This both confirms and explains the general behavior that the gravitational entropy is observed to exhibit in \cref{eq:genstt}. In particular, by \cref{thm:fact} the negativity of $\mathcal{S}(\beta)$ at large $\beta$ when no supersymmetry protects the energy at extremality is not an artifact of the perturbative treatment, but a genuine property that the gravitational entropy must exhibit as an annealed entropy. This negativity can thus be read as a remarkable consistency check of the ensemble description dual to the gravitational path integral.
The physical restatement of \cref{thm:fact} may be phrased as follows:

\begin{ncor}
\label{cor:fact}
	Let $\mathcal{S}(\beta)$ be the gravitational entropy defined in \cref{eq:slogZ} for a partition function $\mathcal{Z}(\beta)$ given by the Euclidean gravitational path integral. Then $\mathcal{S}(\beta)$ is non-negative for all $\beta>0$ if and only if the ground-state energy is protected. Otherwise, $\mathcal{S}(\beta)\to -\infty$ continuously as $\beta\to\infty$.
\end{ncor}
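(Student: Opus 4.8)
The plan is to set $W(\beta) \equiv \log Z(\beta)$, so that directly from the definition $S_a(\beta) = (1-\beta\partial_\beta)\log Z(\beta) = W(\beta) - \beta W'(\beta)$, and to reduce everything to three elementary properties of the Laplace transform of a non-negative density supported on $[0,\infty)$ with $\inf\supp\rho = 0$ (the latter normalization assumed WLOG, by the shift-invariance of $S_a$ noted above). Since $Z$ is real-analytic on the open domain of convergence, differentiation under the integral gives: (i) $Z'(\beta) = -\int E\,\rho(E)\,e^{-\beta E}\,dE \le 0$, hence $W'(\beta) = Z'(\beta)/Z(\beta) \le 0$; (ii) $W''(\beta) = \langle E^2\rangle_\beta - \langle E\rangle_\beta^2 \ge 0$, so $W$ is convex; and (iii) the sandwich $e^{-\beta\epsilon}\int_{0^-}^{\epsilon}\rho(E)\,dE \le Z(\beta) \le Z(\beta_0)$ for $\beta \ge \beta_0$ — the lower bound using that $0\in\supp\rho$ makes $\int_{0^-}^\epsilon\rho > 0$ for every $\epsilon>0$, the upper bound using that $Z$ is non-increasing by (i) — which forces $\lim_{\beta\to\infty}W(\beta)/\beta = 0$. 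The easy implication ($\Leftarrow$) is then immediate: if $N_0\ge1$, then $Z(\beta) = N_0 + \sum_{k\in I}N_k e^{-\beta E_k} + Z_g(\beta) \ge N_0 \ge 1$ gives $W(\beta)\ge0$, while $-\beta W'(\beta) = \beta\langle E\rangle_\beta \ge 0$, so $S_a(\beta)$ is a sum of two non-negative terms.

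For the converse I would prove the contrapositive by studying $\lim_{\beta\to\infty}S_a(\beta)$. The crucial structural point is that $S_a$ is itself monotone non-increasing, since $S_a'(\beta) = -\beta\,W''(\beta) \le 0$ by (ii); hence $L := \lim_{\beta\to\infty}S_a(\beta)$ exists in $[-\infty,\infty)$ and $S_a(\beta)\ge L$ for all $\beta$. Now split on $N_0$. If $0<N_0<1$: dominated convergence gives $Z(\beta)\to N_0$, so $W(\beta)\to\log N_0 < 0$; moreover, writing $Y(\beta)\equiv Z(\beta)-N_0\ge0$ — a non-negative, convex function with $Y(\beta)\to0$ — its convexity gives $-2Y(\beta/2) \le \beta Y'(\beta)\le 0$, so $\beta Y'(\beta)\to0$ and $-\beta W'(\beta) = -\beta Y'(\beta)/Z(\beta)\to 0$; therefore $S_a(\beta)\to\log N_0 < 0$. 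If $N_0 = 0$: dominated convergence instead gives $Z(\beta)\to 0$, and I would rule out a finite $L$ as follows. From $S_a(\beta) = -\beta^2\,\frac{d}{d\beta}\!\left(W(\beta)/\beta\right) \ge L$ one gets $\frac{d}{d\beta}(W/\beta)\le -L/\beta^2$; integrating from $\beta$ to $\infty$ and using $W(\beta)/\beta\to0$ yields $W(\beta)\ge L$, i.e. $Z(\beta)\ge e^L>0$ for all $\beta$, contradicting $Z(\beta)\to0$. Hence $L = -\infty$. In both sub-cases, $S_a$ is continuous, non-increasing, with a strictly negative (possibly $-\infty$) limit, so $S_a(\beta)<0$ for all $\beta$ past some $\beta^*>0$, and $S_a(\beta)\to-\infty$ precisely when $N_0=0$.

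The step I expect to be the main obstacle is the $N_0=0$ case. There the naive bound "$S_a = W-\beta W'$ with $W\to-\infty$" is inconclusive, because once $Z(\beta)\to0$ the non-negative correction $-\beta W'(\beta) = \beta\langle E\rangle_\beta$ is a $0/0$ form and could a priori diverge positively, potentially compensating $W\to-\infty$. What breaks the deadlock is the joint use of the monotonicity of $S_a$ (so "eventually negative" becomes a claim about the single number $L$) together with the asymptotics $W(\beta)/\beta\to0$ which, via the exact identity $W(\beta)/\beta = W(\beta_0)/\beta_0 - \int_{\beta_0}^{\beta}S_a(t)\,t^{-2}\,dt$, converts a finite $L$ into a uniform positive lower bound on $Z$. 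Everything else — the dominated-convergence limits of $Z$, convexity of Laplace transforms, and analyticity justifying differentiation under the integral — is routine.
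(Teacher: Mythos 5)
Your proof is correct, but it takes a genuinely different route from the paper's. The paper proves the underlying statement (\cref{thm:fact}) via three auxiliary lemmas: a dominated-convergence limit $Z_g(\beta)\to0$ (\cref{lem:Ztozero}), a stretched-exponential sandwich controlled by the limit $\lambda$ of $\partial_x\log(-\log Z(e^x))$ (\cref{lem:supperexp}), and a generalized initial value theorem (\cref{nlem:inival}); for $N_0>0$ it uses the latter to get $\beta Z_g'(\beta)\to0$, and for $N_0=0$ it argues that only $\lambda<1$ is consistent with $\inf\supp g=0$, whence $S_a=\log Z\,(1-f)\to-\infty$. You instead exploit convexity of $\log\expval{Z}$ to get monotonicity of $S_a$ (so the limit $L$ automatically exists), the free-energy identity $S_a=-\beta^2\,\partial_\beta(W/\beta)$ (i.e.\ \cref{eq:sfree}), the elementary sandwich giving $W(\beta)/\beta\to0$ from $\inf\supp\rho=0$, and a supporting-line bound $-2Y(\beta/2)\le\beta Y'(\beta)\le0$ to show $\beta W'\to0$ when $N_0>0$; the $N_0=0$ case is then killed by the clean contradiction "finite $L$ forces $Z\ge e^{L}>0$, but $Z\to0$." Your route has some genuine advantages: it does not presuppose that the limit $\lambda$ exists (the paper's $N_0=0$ argument implicitly assumes this), it replaces \cref{nlem:inival} by an elementary convexity estimate that treats the discrete and continuous parts of $\rho$ in \cref{eq:rdc} uniformly, and the monotonicity of $S_a$ makes explicit the "continuity and monotonicity" step that the paper's final paragraph only gestures at. What the paper's approach buys in exchange is quantitative asymptotic information — the decay exponent $\lambda$ of $Z$, and the rate $\Gamma(\nu)\hat g(0^+)\beta^{-\nu}$ tied to the edge behavior of the spectral density — which is reused in the physical discussion elsewhere, whereas your argument establishes only the sign and divergence statements actually asserted in the corollary. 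The routine steps you flag (differentiation under the integral, dominated convergence, finiteness of second moments) are all justified under the paper's standing hypotheses of local integrability and exponential type, so there is no gap.
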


\subsection{Black Hole Spectra}
\label{sec:bhspec}

The spectral densities that realize the near-extremal gravitational entropies in \cref{eq:genstt} nicely illustrate these results. The $\beta$ dependence relevant to our discussion comes from strongly coupled gravitational dynamics that arise in the throat of black holes near extremality. These are universally governed by a quantum mechanical model known as the Schwarzian theory and its supersymmetric generalizations \cite{Jensen:2016pah,Maldacena:2016upp,Engelsoy:2016xyb,Ghosh:2019rcj,Iliesiu:2020qvm,Heydeman:2020hhw,Iliesiu:2022onk,Boruch:2022tno,Banerjee:2023quv,H:2023qko,Rakic:2023vhv,Kapec:2023ruw} (see \cref{sec:nebh} for more details).
The path integral of these Schwarzian theories has been shown to be $1$-loop exact and fully solved, thus providing a robust framework for the study of near-extremal black hole spectra \cite{Bagrets:2016cdf,Cotler:2016fpe,Stanford:2017thb,Belokurov:2017eit,Mertens:2017mtv,Kitaev:2018wpr,Yang:2018gdb,Heydeman:2020hhw}.

Here, we review the gravitational spectral densities that arise from these Schwarzian theories.
All cases, near-BPS or not, and gapped or not, can be concisely captured starting from an average spectral density which takes the simple form\footnote{\,Some $O(1)$ constants which are irrelevant to our discussion are for clarity being ignored against $S_0\gg1$.} \cite{Heydeman:2020hhw,Boruch:2022tno,Iliesiu:2022onk,Iliesiu:2022kny,Turiaci:2023wrh,Banerjee:2023quv,H:2023qko,Rakic:2023vhv,Kapec:2023ruw}
\begin{equation}
\label{eq:egdelta}
    \rho_c(E) = e^{S_0} \delta(E) \,\delta_{p\leq0} + 
    \frac{e^{S_0}}{\sqrt{2\pi}} \left(\frac{\sqrt{c E}}{2\pi c}\right)^{p-1} I_{p-1}\left(2\pi\sqrt{c E}\right),
\end{equation}
where $I_n$ is the modified Bessel function of the first kind. The parameter $p\in\mathbb{Q}$ matches $s=p$ for the non-near-BPS case of \cref{eq:genstt}, and takes values $p\leq0$ for near-BPS black holes. In consistency with \cref{thm:fact}, the annealed entropy stays non-negative precisely only when the discreteness of the ground-state energy is protected by supersymmetry. 

For non-near-BPS black holes, one may choose units such that $c=1$ and \cref{eq:egdelta} captures the complete spectral density of states. For near-BPS black holes, one can obtain the full spectral density by adding up the contribution of all possible saddles that supersymmetric localization gives \cite{Stanford:2017thb,Heydeman:2020hhw}. 
Each such saddle gives a spectral density of the form of \cref{eq:egdelta} with a corresponding value of $c$, which are to be summed over and appropriately weighted.
This series can be reorganized into a sum over supermultiplets with different R-charges. Supermultiplets are statistically independent, and thus their spectral densities are the ones of physical interest \cite{Turiaci:2023jfa}.

\begin{figure}
    \centering
    \includegraphics[width=0.99\textwidth]{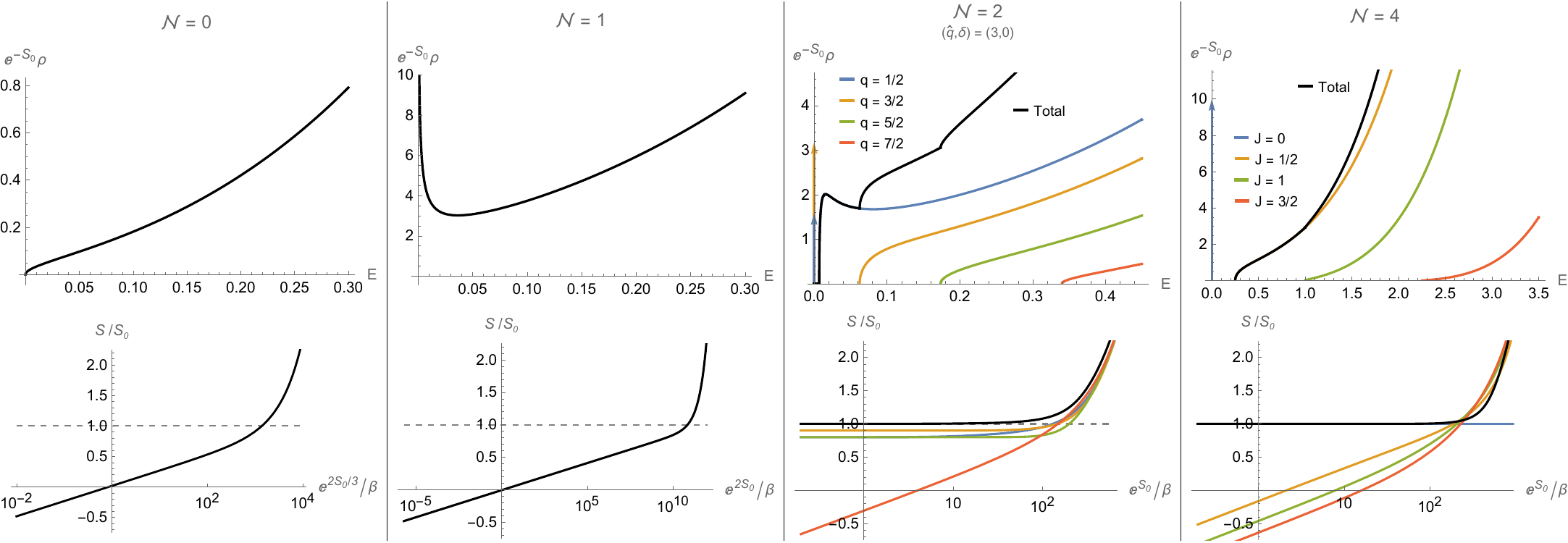}
    \caption{Black hole spectra (top) and gravitational entropies (bottom) from each $\mathcal{N}=0,1,2,4$ super-Schwarzian theory (left to right). Spectra are respectively given by \cref{eq:bos,eq:susynon,eq:susy2,eq:rq4}, and the corresponding entropies are obtained by Laplace transform and evaluation of \cref{eq:sbdef}. For the $\mathcal{N}=2$ theory we have chosen $\hat{q}=3$ and $\delta=0$. The plots for $\mathcal{N}=2,4$ show curves corresponding to different supermultiplets separately, and also for the total spectral density of the theory (black). Discrete ground-state contributions are symbolized by an arrow at zero of $10$ times the degeneracy value for ease of visualization. These BPS states are accounted for when computing the entropy for the total spectral density, which clearly is not just a sum over supermultiplet entropies. In particular, note that the entropies of supermultiplets with no BPS states ($q=7/2$ for $\mathcal{N}=2$ and all $J>0$ for $\mathcal{N}=4$) all diverge negatively at large $\beta$, just like the entropies for the $\mathcal{N}=0,1$ theories.
    Note the logarithmic scale of the horizontal axis.}
    \label{fig:bhspec}
\end{figure}

We now evaluate \cref{eq:egdelta} for the specific values of $p$ found in gravitational calculations, and illustrate the results in \cref{fig:bhspec}. In purely bosonic theories of gravity with no supersymmetry ($\mathcal{N}=0$), as mentioned below \cref{eq:genstt}, one finds $p=3/2$ and \cref{eq:egdelta} yields
\cite{Stanford:2017thb,Mertens:2017mtv,Stanford:2019vob,Iliesiu:2020qvm,Iliesiu:2022onk,Banerjee:2023quv,Rakic:2023vhv,Kapec:2023ruw}
\begin{equation}
\label{eq:bos}
    \rho^{\mathcal{N}=0}(E) = \frac{e^{S_0} \sinh (2 \pi \sqrt{E} )}{2\pi^2}, \qquad Z^{\mathcal{N}=0}(\beta) = \frac{e^{S_0+\pi^2/\beta }}{2\sqrt{\pi} \beta ^{3/2}}.
\end{equation}
Note that $\rho^{\mathcal{N}=0}(E) \sim \sqrt{E}$ near $E=0^+$.
Certain supersymmetric theories contain black holes which break all supersymmetry at extremality and whose near-extremal physics is governed by the $\mathcal{N}=1$ super-Schwarzian. In this case one finds $p=1/2$ and \cref{eq:egdelta} yields
\cite{Stanford:2017thb,Mertens:2017mtv,Stanford:2019vob,Johnson:2022wsr}
\begin{equation}
\label{eq:susynon}
    \rho^{\mathcal{N}=1}(E) = \frac{e^{S_0} \cosh(2\pi \sqrt{E})}{\pi \sqrt{E}}, \qquad Z^{\mathcal{N}=1}(\beta) = \frac{e^{S_0 + \pi^2/\beta}}{\sqrt{\pi}\beta^{1/2}}.
\end{equation}
Note that $\rho^{\mathcal{N}=1}(E) \sim 1/\sqrt{E}$ near $E=0^+$.
The near-BPS case requires the emergence of a theory with more supersymmetry near extremality.
For the $\mathcal{N}=2$ super-Schwarzian one finds $p=0$, and the saddle-point contributions labelled by $c$ in \cref{eq:egdelta} read
\cite{Stanford:2017thb,Mertens:2017mtv,Heydeman:2020hhw,Iliesiu:2022kny,Boruch:2022tno,Turiaci:2023jfa,Johnson:2023ofr}
\begin{equation}
\label{eq:susypro}
    \rho_c^{\mathcal{N}=2}(E) = e^{S_0} \delta(E) + e^{S_0} \sqrt{2\pi c} \, \frac{I_1\left( 2\pi \sqrt{cE}\right)}{\sqrt{E}}.
\end{equation}
These contributions can be reorganized into spectral densities for supermultiplets of the theory. Here we quote the results and defer a more detailed discussion to \cref{ssec:hamran}.
The general $\mathcal{N}=2$ theory can be specified by an odd integer $\hat{q}>0$ and a constant $\delta\in[0,1)$, and its supermultiplets labelled by a parameter $q\in\mathbb{Z}+\delta-1/2$.\footnote{\,Here $\pm\hat{q}+\delta$ are the R-charges of the supercharges, $\delta$ is a parameter accounting for a possible anomaly, and $q = k + \hat{q}/2$ is the average of the R-charges $(k,k+\hat{q})$ of the states that make up the supermultiplet \cite{Turiaci:2023jfa}.} The spectral density of the $q$ supermultiplet is \cite{Turiaci:2023jfa}
\begin{equation}
\label{eq:susy2}
    \rho_q^{\mathcal{N}=2}(E) = \frac{e^{S_0} \sin(\pi q/ \hat{q})}{2} \, \delta(E) \, \delta_{\abs{q}<\hat{q}}+\frac{e^{S_0} \sinh(2\pi \sqrt{E - \Delta_q})}{4\pi E} \theta(E - \Delta_q), \qquad \Delta_q \equiv \frac{q^2}{4\hat{q}^2},
\end{equation}
where $\theta$ is the Heaviside step function. 
Only supermultiplets with $0<\abs{q}<\hat{q}$ contain BPS states, which give rise to a discrete ground state with degeneracy proportional to $e^{S_0}$.
The continuous part of the spectrum is only supported on energies $E=\Delta_q\geq 0$ above the ground state. 
The gap size $\Delta_q$ depends on the supermultiplet and is generically nonzero.
We will show in \cref{sec:gaps} that the emergence of such a positive gap can be explained from the perspective of matrix integrals not only when there is a high degeneracy of BPS states at zero energy, but also for purely non-BPS supermultiplets (see also \cref{ssec:hamran,sec:susyrmt}). In addition, that the growth of $\Delta_q$ is quadratic in $q$ for supermultiplets with BPS states will be shown to be universally predicted by random matrix theory. The smallest such $\Delta_q$ determines the onset of the continuous density and thus the overall gap $\Delta\equiv \min_q \Delta_q$ for the full spectrum. 
Generically $\Delta>0$, except for the specific anomalous case with $\delta=1/2$, which is the only one that allows for a gapless $q=0$ supermultiplet \cite{Stanford:2017thb,Heydeman:2020hhw,Lin:2022zxd}.
With this anomaly, the ground state can become completely depopulated if $\hat{q}=1$; otherwise, there are always supermultiplets with $O(e^{S_0})$ BPS states.
Note that $\rho_q^{\mathcal{N}=2}(E-\Delta_q) \sim \sqrt{E-\Delta_q}$ near $E=\Delta_q^+$ for the generic $q\neq0$ case (cf. \cref{eq:bos}), whereas for the anomalous $q=0$ case $\rho_0^{\mathcal{N}=2}(E) \sim 1/\sqrt{E}$ near $E=0^+$ (cf. \cref{eq:susynon}).

Finally, for the $\mathcal{N}=4$ super-Schwarzian one finds $p=-1$ and the saddle-point contributions from \cref{eq:egdelta} read \cite{Heydeman:2020hhw,Iliesiu:2021are,Iliesiu:2022kny}
\begin{equation}
\label{eq:susypro2}
    \rho_c^{\mathcal{N}=4}(E) = e^{S_0} \delta(E) + e^{S_0} (2\pi)^{3/2} c \, \frac{I_2\left( 2\pi \sqrt{cE}\right)}{E}.
\end{equation}
The $\mathcal{N}=4$ theory is unique, and its supermultiplets labelled by a half-integer spin $J$ have \cite{Turiaci:2023jfa}
\begin{equation}
\label{eq:rq4}
    \rho_J^{\mathcal{N}=4}(E) = e^{S_0} \delta_{J,0} + \frac{e^{S_0} J \sinh(2\pi\sqrt{E-\Delta_J})}{2\pi^2 E^2} \theta(E - \Delta_J), \qquad \Delta_J \equiv J^2.
\end{equation}
In this case BPS states are all spin zero, and non-BPS states have $J\geq 1/2$.
In contrast with $\mathcal{N}=2$, the spectrum for $\mathcal{N}=4$ does always exhibit a positive gap $\Delta=1/4$, and note that $\rho_J^{\mathcal{N}=4}(E-\Delta_J^+) \sim \sqrt{E-\Delta_J}$ near $E=\Delta_J^+$ always in this case \cite{Turiaci:2023jfa}.\footnote{\,The canonical partition function of the theory would not even exist if $\Delta_J=0$ were allowed, since the Laplace transform of \cref{eq:rq4} would have a non-integrable singularity at zero in this case.}
The gapped spectral densities for both $\mathcal{N}=2,4$ give rise to gravitational entropies of the form in \cref{eq:genstt}. When gapless, this reduces to the behavior in \cref{eq:genstt} with $s=0$.

The intermediate case in \cref{eq:susynon} of a supersymmetric theory that arises for non-near-BPS black holes is  physically interesting. In particular, this spectral density diverges as $1/\sqrt{E}$ in the $E\to0^+$ limit, but the gravitational entropy nonetheless becomes negative at large $\beta$ in consistency with \cref{thm:fact}. The same happens in the anomalous $\mathcal{N}=2$ theory that allows for a gapless supermultiplet. This divergence at the ground-state energy may be understood as a supersymmetric transition from a non-degenerate to a highly-degenerate ground state. Intuitively, it captures how the degeneracy of the near-BPS cases gets lifted as soon as supersymmetry is broken, leaving ground-state energies highly populated but unprotected to values above zero.
In addition, $\mathcal{N}=1$ also exhibits a dip between its divergent ground state and its exponential growth at large $E$, anticipating the formation of a spectral gap with supersymmetry.

Besides the obvious fact that \cref{eq:bos,eq:susynon,eq:susypro,eq:susy2,eq:susypro2,eq:rq4} have continuous parts, we have thus far said nothing about how these gravitational spectra are associated to spectral ensembles.
The realization of these as average spectral densities is more explicitly established by dimensional reduction to a universal AdS$_2$ sector of the geometry that emerges in the throat of near-extremal black holes. This reduction allows for the definition of $2$-dimensional theories of JT gravity where the corresponding Schwarzian governs large diffeomorphisms on the AdS$_2$ geometry. Formulating JT as a gravitational path integral with a sum over topologies leads to a non-perturbative topological expansion in $e^{-S_0}$ for observables like those in \cref{eq:Zmeval}. Through topological recursion relations, these have been shown to precisely match double-scaled matrix integrals with expansion parameter $e^{-S_0}$ to all orders \cite{Saad:2019lba,Stanford:2019vob,Turiaci:2023jfa,Mertens:2022irh}. This result is what leads to the identification of \cref{eq:bos,eq:susynon,eq:susypro,eq:susy2,eq:susypro2,eq:rq4} as capturing the statistics of spectral ensembles in the sense of \cref{ssec:basspecen}.

In the context of JT gravity and random matrices, the above results are all exact perturbatively at large $S_0$, but only hold to leading non-perturbative order at large $e^{S_0}$. 
The non-perturbative effects that the genuinely quantum gravitational sum over topologies gives rise to appear suppressed by factors of $e^{-S_0}$. These non-perturbative corrections are what makes the duality between JT gravity and matrix ensembles very non-trivial. 
In this paper, however, we mostly work to leading order at large $e^{S_0}$, where the emergence of an ensemble description is not as obvious. 
Namely, oftentimes one interprets higher-order wormhole contributions to gravitational correlators and their consequent lack of factorization as the root cause for such an ensemble description to arise.
It is thus very remarkable that to leading order at large $e^{S_0}$, Schwarzian theories alone without wormhole contributions give the spectral densities in \cref{eq:bos,eq:susynon,eq:susypro,eq:susy2,eq:susypro2,eq:rq4} all in perfect consistency with universal expectations from random matrix ensembles. Let us briefly comment on these universal features in anticipation of the results in \cref{sec:rmf}.

With regard to \cref{cor:fact}, it is worth emphasizing that in random matrix integrals the generic negativity of the annealed entropy is an exact statement, i.e., exact to all perturbative and non-perturbative orders in $S_0$. As understood in the previous sections, this negativity is caused by the ensemble generically having a non-trivial distribution of ground-state energies, unlike the non-generic case in \cref{eq:ann1exp}. Hence the fact that the Schwarzian theories with $\mathcal{N}=0,1$ lead to negative gravitational entropies at low temperatures while the $\mathcal{N}=2,4$ super-Schwarzians do not may be understood as a remarkable consistency check of the ensemble description of effective gravity already to leading non-perturbative order. In particular, in this sense the Schwarzian already knows about the ensemble nature of the gravitational path integral, even though to leading order at large $e^{S_0}$ we are not even including wormhole contributions.
This point of view should be contrasted with the discussion in \cref{sec:nebh}, where we review the traditional interpretation of these negativities as a pathology caused by the neglect of non-perturbative corrections that arise at $\beta\sim O(e^{S_0})$. 
While we agree that such corrections indeed become important at non-perturbatively low temperatures, the random matrix framework says that the qualitative negativity of the gravitational entropy must remain at non-perturbative orders.\footnote{\,This random matrix statement holds for the non-perturbative effects coming from $2$-dimensional higher topologies of the JT theory of quantum gravity obtained upon dimensional reduction in the throat. All we can say about non-perturbative effects coming from topological fluctuations of the higher-dimensional black hole spacetime is that, if the gravity theory continues to be effective and dual to an ensemble, then the same conclusion holds.}

The consistency between \cref{eq:bos,eq:susynon,eq:susypro,eq:susy2,eq:susypro2,eq:rq4} and other well-known features of the spectral statistics of matrix ensembles is also noteworthy. In all cases, the continuous part of the spectrum behaves in one of two characteristic ways near its lower edge. The spectral densities for $\mathcal{N}=0$, $\mathcal{N}=2$ with $\Delta_q>0$, and $\mathcal{N}=4$, all go to zero as $\rho \sim \sqrt{E}$, whereas those for $\mathcal{N}=1$, and $\mathcal{N}=2$ with $\Delta_q=0$, diverge as $\rho \sim 1/\sqrt{E}$. These are precisely the two universal behaviors that arise in the analysis of edge statistics in random matrix theory, as we review in \cref{sec:rmf}. Another general fact about matrix integrals we will address is that a gap in the spectrum arises if and only if the degeneracy of the zero eigenvalue scales with the size of the matrix, and that the size of the gap is quadratic in their ratio. The perfect consistency between this phenomenon and what is observed in \cref{eq:susy2,eq:rq4} is rather remarkable. In both cases, supermultiplets with a positive gap $\Delta_q>0$ always involve a discrete ground state with a degeneracy that is extensive in $e^{S_0}$, while those $\mathcal{N}=2$ supermultiplets with $\Delta_q=0$ do not have such a ground state \cite{Heydeman:2020hhw,Turiaci:2023jfa}.\footnote{\,Deformations of JT gravity with a non-extensive ground-state degeneracy continue to behave consistently \cite{Iliesiu:2021are}.}

\section{Random Matrices}
\label{sec:rmf}

Our main motivation for studying random matrix ensembles is to understand whether the replica trick in \cref{eq:reptrick} can be implemented in effective gravity and, if so, to what perturbative or non-perturbative order in $S_0$ one has to work to succeed.\footnote{\,Previous attempts in gravity have been inconclusive \cite{Engelhardt:2020qpv,Chandrasekaran:2022asa}, and explorations using the random matrix framework have made use of doubly non-perturbative tools of $O(e^{-e^{S_0}})$ which are inaccessible to gravity \cite{Johnson:2020mwi,Johnson:2021rsh}. The former cannot answer if gravity is sufficient, while the latter cannot answer if $O(e^{-e^{S_0}})$ physics is necessary.}
To pursue this goal in broad generality, we consider integrals over arbitrary matrix ensembles, thus making our predictions apply to any quantum gravity theory with such a dual description.
This section reviews the basics of the framework necessary for our discussion, including how Hamiltonians are constructed out of the random matrix in different ensembles, the manifestation of supersymmetry, the matrix scalings relevant to gravity, and the relation between spectral gaps and zero-eigenvalue degeneracy.\footnote{\,See e.g. \cite{Ginsparg:1993is,Eynard:2015aea,Anninos:2020ccj} for more details on random matrix theory and its applications to physics relevant to this paper.}
The application of this technology to the replica trick from \cref{eq:reptrick} is the subject of \cref{sec:canemsem}.

\subsection{Probability Densities}

Let $\mathfrak{M}\subseteq \mathbb{C}^{N\times N}$ be some space of square matrices with real eigenvalues.
An ensemble of matrices drawn from $\mathfrak{M}$ can be defined by a PDF on this space, $P : \mathfrak{M} \to \mathbb{R}_{\geq0}$, normalized via
\begin{equation}
\label{eq:normp}
    \int\displaylimits_{\mathfrak{M}} dM \, P(M) = 1,
\end{equation}
where $dM$ is a Lebesgue measure on the algebraically independent entries of $M$.
A standard way of expressing $P$, up to a normalizing factor that ensures \cref{eq:normp}, is as follows:
\begin{equation}
\label{eq:matrixZ}
    P(M) \propto 
    e^{- N \Tr V(M)},
\end{equation}
where $V$ is known as the matrix potential,\footnote{\,Quite generally $V$ is taken to be an analytic function, thus expressible as a power series $V(x) = \sum_{k=1}^\infty v_k x^k$. When lifted to matrices, powers are defined by matrix multiplication.} and the prefactor of $N$ is a conventional scaling (see \cref{ssec:mscal}). Writing the diagonal matrix of eigenvalues of $M$ as $\Lambda\equiv\diag\{\lambda_i\}_{i=1}^N$, we have
\begin{equation}
\label{eq:trsum}
    \Tr V(M) = \Tr V(\Lambda) = \sum_{i=1}^N V(\lambda_i).
\end{equation}
One may understand $P(M)$ as a joint PDF for all entries of $M$ under the measure $dM$. However, we are only interested in the spectral statistics $M$. To obtain the induced distribution on the space of eigenvalues, one simply has to integrate out all other degrees of freedom. This amounts to computing a Jacobian for whichever transformation diagonalizes $M$. The upshot is $p(\Lambda)$, a joint PDF for the $N$ eigenvalues of $M$. Denoting the Jacobian matrix of the diagonalization of $M$ by $J(\Lambda)$, this joint PDF against the Lebesgue measure on the eigenvalues may be written
\begin{equation}
\label{eq:canonicalrm}
    p(\Lambda) \equiv \frac{1}{\mathcal{Z}} e^{-N \Tr V(\Lambda)} \abs{\det J(\Lambda)}, \qquad \mathcal{Z} \equiv \int\displaylimits_{\mathbb{R}^N} d\Lambda \, e^{-N \Tr V(\Lambda)} \abs{\det J(\Lambda)},
\end{equation}
where $d\Lambda\equiv\prod_{i=1}^N d\lambda_i$, and $\mathcal{Z}$ is the partition function of the matrix ensemble, which ensures $p$ is normalized to unity.
In general, $p(\Lambda)$ is a full permutation-symmetric function of all eigenvalues.

This is how random matrices realize the general structure of spectral ensembles in \cref{ssec:basspecen}. In terms of $p(\Lambda)$, the definition of expectation values $\expval{\,\cdot\,}$, $n$-point functions $p_n$, and in particular the average spectral density $\expval{\rho}$, just parallel \cref{eq:evf,eq:pnPN,eq:densn1}, respectively.
For now, however, $p(\Lambda)$ here captures the spectral statistics of an arbitrary random matrix $M$, not necessarily the Hamiltonian. To construct physically meaningful Hamiltonians out of $M$, as we do in \cref{ssec:hamran}, we first review in \cref{ssec:ensemmess} the spaces of matrices $\mathfrak{M}$ of interest to physics.

\subsection{Ensemble Measures}
\label{ssec:ensemmess}

Some important ensembles in random matrix theory are those where the space $\mathfrak{M}$ is diagonalizable under the action of some symmetry group, which also leaves the distribution $P$ invariant. These are referred to as invariant ensembles, and those exhausting the ten discrete symmetry classes of topological invariants are of particular physical relevance to us \cite{Altland:1997zz,Zirnbauer:1996zz,Kitaev:2009mg}. Three of them are the well-known Wigner-Dyson (WD) ensembles \cite{Wigner1955,Dyson:1962es,Dyson:1962eeu}, and the other seven will be referred to here as the Altland-Zirnbauer (AZ) ensembles \cite{Altland:1997zz,Zirnbauer:1996zz,zirnbauer2010symmetry}. In contrast with more general matrix constructs, for invariant ensembles symmetry allows one to work out Jacobian measures explicitly.

The three WD ensembles \cite{Wigner1955,Dyson:1962es,Dyson:1962eeu} are characterized by invariance under the adjoint action of one of the classical Lie groups. They are known as the orthogonal ensemble (OE), unitary ensemble (UE), and symplectic ensemble (SE). Respectively, these describe real symmetric matrices with a distribution invariant under $O(N)$, complex Hermitian matrices with a distribution invariant under $U(N)$, and quaternionic anti-Hermitian matrices with a distribution invariant under $Sp(N)$.\footnote{\,In the symplectic case $N$ has to be even and matrix eigenvalues come in pairs. It is thus convenient to instead consider $Sp(2N)$ such that $N$ corresponds to the number of independent eigenvalues.
This is just an example of a plethora of subtleties stemming from the parity of $N$ in the study of invariant matrix ensembles. These will not affect our general discussion and we refer the reader to \cite{Stanford:2019vob} for a comprehensive account.}
For these three WD ensembles, the Jacobian determinant gives the eigenvalue measure
\begin{equation}
\label{eq:WDmeasure}
    \abs{\det J(\Lambda)} \reprel{WD}{=}  \abs{\Delta(\Lambda)}^\upbeta,
\end{equation}
where $\Delta(\Lambda)$ is the Vandermonde determinant
\begin{equation}
\label{eq:vandermonde}    
    \Delta(\Lambda) \equiv \det \{\lambda_j^{i-1}\}_{i,j=1}^N = \prod_{i<j} (\lambda_j-\lambda_i),
\end{equation}
and the parameter $\upbeta=1,2,4$ respectively for the OE, UE, and SE. 
The other seven AZ ensembles \cite{Zirnbauer:1996zz,Altland:1997zz,zirnbauer2010symmetry} are characterized by additional discrete symmetries: charge conjugation $C$, spatial reflections $R$, and time reversal $T$.\footnote{\,The WD ensembles correspond to matrices with either no $T$ symmetry (U), $T^2=1$ (O), or $T^2=-1$ (S). Respectively, these give the symmetry classes A, AI, and AII that make up Dyson's Threefold Way \cite{Dyson:1962eeu,Chiu_2016}.} The eigenvalue measure these give takes the form
\begin{equation}
\label{eq:AZmeasure}
    \abs{\det J(\Lambda)} \reprel{AZ}{=} \prod_{j<i} \abs{\lambda_i^2-\lambda_j^2}^\upbeta \prod_k \abs{\lambda_k}^\upalpha.
\end{equation}
The specific values of the pair $(\upalpha,\upbeta)$ for which \cref{eq:AZmeasure} descends from a random matrix integral with specific discrete symmetries can be found in \cite{Ivanov2001RandomMatrixEI}.\footnote{\,The measures in \cref{eq:WDmeasure,eq:AZmeasure2} can also be studied for general values of their parameters irrespective of any underlying random matrix construct \cite{Desrosiers_2006}. In the literature, these are often referred to as generalized Hermite or Laguerre $\upbeta$-ensembles \cite{Dumitriu_2002}, respectively, for reasons that will become apparent in \cref{sssec:univ} (see also \cref{ssec:ortho}). Although physically our interest is in matrix ensembles, our results apply to these generalizations just as well.}
Since \cref{eq:AZmeasure} is invariant under sign changes, the canonical integral over eigenvalues can be taken to be non-negative and \cref{eq:AZmeasure} turned into a measure over squared eigenvalues $\omega_i \equiv \lambda_i^2$ for $\Omega \equiv \Lambda^2$,
\begin{equation}
\label{eq:AZmeasure2}
    \abs{\det J(\Omega)} \reprel{AZ}{=} 2^{-N} |\Delta(\Omega)|^\upbeta \prod_k \abs{\omega_k}^{\frac{\upalpha-1}{2}}.
\end{equation}
Although for AZ ensembles with $\upalpha=1$ this reduces to \cref{eq:WDmeasure}, one should not be misguided into thinking that such AZ ensembles will give equivalent results to WD ensembles. In the latter we are interested in the eigenvalues of the random matrix, which lie unconstrained on $\mathbb{R}$, whereas for the former \cref{eq:AZmeasure2} only arises for squared eigenvalues, which are non-negative (cf. \cref{sec:linazens}).

Some of these AZ ensembles admit invariant generalizations of much interest. In four of them, $M$ is a rank-$2$ tensor of some type, with the continuous symmetry $G(N)$ being one of the classical Lie groups. The other three AZ ensembles describe a random $N\times N$ matrix $M$ which transforms as a bifundamental of the direct product $G(N)\times G(N)$, i.e., each index of $M$ is separately acted upon by one copy of $G(N)$ in the fundamental representation.

The latter three can very naturally be generalized to ensembles of rectangular matrices by considering products of symmetry groups of different rank. In particular, letting the symmetry group be $G(N')\times G(N)$ allows one to describe a rectangular random matrix $M$ of size $N'\times N$.
As will be seen shortly in \cref{ssec:hamran}, this construction precisely reproduces the requisite structure to describe the supermultiplet spectra from \cref{eq:susy2,eq:rq4}, with the number of BPS states given by $\bar{\upnu} \equiv \abs{N'-N}$.\footnote{\,Note that in some supermultiplets $\bar{\upnu}$ actually acquires a different meaning; see \cref{ssec:hamran,sec:susyrmt}.} For this reason, the generalization of standard AZ ensembles to rectangular matrices with $\bar{\upnu}>0$ will be referred to here as BPS ensembles.

The spectrum of a rectangular matrix $M$ is given by its singular values, which by definition are real and non-negative for any complex matrix. For the construction above, singular value decomposition gives a measure of the AZ form in \cref{eq:AZmeasure}, but with the replacement \cite{Turiaci:2023jfa}
\begin{equation}
\label{eq:alphanu}
    \upalpha \to \upalpha + \upbeta \bar{\upnu} = \upbeta (\bar{\upnu} + 1) - 1,
\end{equation}
and $\upbeta$ specified as usual by the corresponding simple group. On the right, we used that $\upalpha=\upbeta-1$ for the AZ ensembles with bifundamental symmetry that generalize to BPS ensembles.

Without loss of generality, let the rectangular matrix $M$ in our BPS ensembles be $(N+\bar{\upnu})\times N$ with $\bar{\upnu}>0$.
As a random matrix, $M$ has rank $N$ with unit probability, so consider the generic case in which the $N$ singular values of $M$ are all positive.
Using $M$, one can construct positive semidefinite square matrices $MM^\dagger$ and $M^\dagger M$ of sizes $N+\bar{\upnu}$ and $N$, respectively.
Here $M^\dagger$ denotes the Hermitian conjugate, transpose, or quaternionic conjugate of $M$, depending respectively on whether one is interested in unitary, orthogonal, or symplectic ensembles. 
In either case, the rank of both of the resulting square matrices is just $N$ as inherited from $M$, implying that only $M^\dagger M$ is generically full-rank, whereas $MM^\dagger$ is always singular and deterministically has $\bar{\upnu}$ zero eigenvalues.
Despite this qualitative difference, the positive eigenvalues of both $MM^\dagger$ and $M^\dagger M$ are simply the squares of the singular values of $M$, and thus identically distributed. 
In particular, their measure is the same and given by \cref{eq:AZmeasure2} with the appropriate value of $\upbeta$ and $\upalpha$ as specified by \cref{eq:alphanu}.
When addressing black hole spectra, this fact will help explain why the onset of the continuous part of the spectrum in \cref{eq:susy2,eq:rq4} is separated from zero by a positive gap, not only for supermultiplets with BPS states, but also for purely non-BPS ones.

Because the measures of all matrix ensembles of interest end up looking like \cref{eq:WDmeasure,eq:AZmeasure2}, we will henceforth take the Jacobian determinant to have the following general form:\footnote{\,Overall constants can always be reabsorbed into a redefinition of $\mathcal{Z}$ and will thus be ignored (cf. \cref{eq:AZmeasure2}).}
\begin{equation}
\label{eq:genjac}
    \abs{\det J(\Lambda)} = \abs{\Delta(\Lambda)}^\upbeta \, \prod_k |\lambda_k|^\upnu.
\end{equation}
This reproduces all invariant ensemble measures of interest as follows:
\begin{equation}
\label{eq:params}
    \begin{cases}
        \text{WD:} \quad & \upnu = 0,\\
        \text{AZ:} \quad & \upnu = \frac{\upalpha-1}{2},\\
        \text{BPS:} \quad & \upnu = \frac{\upbeta}{2} (\bar{\upnu}+1)-1,\\
    \end{cases}
\end{equation}
where $(\upalpha,\upbeta)$ are the standard parameters determined respectively by the discrete and continuous symmetry groups of the corresponding WD or AZ ensemble, and $\bar{\upnu}$ is the parameter specifying the difference in rank between groups in the bifundamental of the corresponding BPS ensemble. Despite all WD, AZ, and BPS ensembles being accounted for by \cref{eq:genjac}, their spectral statistics end up being qualitatively very different and rich in structure.

Let us conclude this section by noting that the construction of square matrices above in terms of a random rectangular matrix $M$ is a well-studied one in random matrix theory.
In particular, the full-rank $M^\dagger M$ defines what is traditionally known as a Wishart matrix \cite{wishart1928generalised,pastur2011eigenvalue}, and $M M^\dagger$ is often referred to as a singular Wishart matrix \cite{singwishart}. For this reason, BPS ensembles may be understood as generalizing certain AZ ensembles by a Wishart construction. Note, however, that the literature on Wishart ensembles usually defines $M$ by making its entries follow Gaussian distributions. In contrast, our BPS ensembles allow for arbitrary non-Gaussian matrix potentials.

\subsection{Random Hamiltonians}
\label{ssec:hamran}

The physical motivation for considering matrix ensembles is to study the spectrum of random Hamiltonians. 
Non-supersymmetric theories can be described by WD ensembles where the random matrix $M$ itself is identified as the Hamiltonian,
\begin{equation}
\label{eq:WDH}
    H \equiv M.
\end{equation}
Correspondingly, the eigenvalue measure for such Hamiltonians is given by \cref{eq:genjac} with $\upnu=0$.

Supersymmetric theories require the richer structure of AZ and BPS ensembles. 
To make the discussion of supersymmetric Hamiltonians as general as possible, it is thus convenient to allow for the random matrix $M$ to be $(N+\bar{\upnu})\times N$ for any integer $\bar{\upnu}\geq0$. This way, both the square matrices of AZ ensembles $(\bar{\upnu}=0)$ and the rectangular matrices of BPS ensembles $(\bar{\upnu}>0)$ are accounted for.
The algebraic structure of supersymmetry makes supercharges the natural objects to be actually identified with $M$. 
As usual in supersymmetric theories, the Hamiltonian is then constructed from anti-commutators of conjugate supercharges which here involve $M$ and $M^\dagger$ matrices.
The resulting random Hamiltonian naturally consists of the Wishart combinations $M M^\dagger$ and $M^\dagger M$ described in the previous section. More specifically, the Hamiltonian generally decomposes into statistically independent supermultiplets, each described by its own ensemble. Hence the overall spectrum of the theory can be obtained by combining Hamiltonians acting on supermultiplets of the form
\begin{equation}
\label{eq:hcool}
    H =
    \begin{pmatrix}
         MM^\dagger & 0 \\
         0 & M^\dagger M
    \end{pmatrix}.
\end{equation}
A detailed discussion of how this structure arises in supersymmetric random matrix theory is provided in \cref{sec:susyrmt}. For both of the positive semidefinite sub-matrices in \cref{eq:hcool}, the spectrum of positive eigenvalues is governed by the measure in \cref{eq:genjac} with the corresponding AZ and BPS parameters given by \cref{eq:params}. The difference between the two types of ensembles is that only the latter are able to actually make $MM^\dagger$ singular with a zero eigenvalue of arbitrarily high degeneracy $\bar{\upnu}$. Interestingly, this degenerate zero eigenvalue does not always capture zero-energy BPS states. As we show in \cref{sec:susyrmt} and explain below, the parameter $\bar{\upnu}$ acquires two possible meanings in supersymmetric random matrix theory.

With $\mathcal{N}=1$ supersymmetry, \cref{eq:hcool} captures the full Hamiltonian of the theory and thus determine its spectrum. As a result, in this case the $\bar{\upnu}$ zero eigenvalues that $M M^\dagger$ has genuinely correspond to BPS ground states. This is the random matrix mechanism behind the non-generic spectral ensembles alluded to in \cref{ssec:qexp} and giving rise to a discrete ground state in \cref{eq:ann1exp}. Indeed, this rectangular matrix construction results in a singularly distributed ground-state energy which in the spectral density becomes $\bar{\upnu} \, \delta(E)$ (cf. the near-BPS spectra in \cref{sec:bhspec}). The other $N$ positive eigenvalues of $M M^\dagger$ are equal to those of $M^\dagger M$, thus giving a $2$-fold degenerate spectrum of non-BPS states. This degeneracy is the result of pairs of non-BPS states of the theory forming irreducible doublets. In other words, the $2N$ positive eigenvalues correspond to a (reducible) supermultiplet of $N$ doublets consisting of pairs of non-BPS states with the same energy (see \cref{sec:susyrmt} for more details). One may think of the full Hilbert space of non-BPS states of the $\mathcal{N}=1$ as forming a single supermultiplet.

With $\mathcal{N}=2$ supersymmetry and higher, the situation is different. In particular, the supersymmetry algebra now has a non-trivial R-symmetry which allows one to organize the Hilbert space into different representations. 
Let us for simplicity here assume that states have R-charge $k\in\mathbb{Z}$ and the supercharges $Q$ and $Q^\dagger$ have $\pm1$ R-charge.\footnote{\,The general case is addressed in \cref{sec:susyrmt}; our simplified discussion here corresponds to $\hat{q}=1$ and $\delta=0$.}
Then the Hilbert space reads $\mathcal{H}=\bigoplus_k \mathcal{H}_k$, where $\mathcal{H}_k$ is the subspace of states with R-charge $k$. The restriction of the supercharges to these subspaces gives $Q_k : \mathcal{H}_k \to \mathcal{H}_{k+1}$ and $Q_k^\dagger : \mathcal{H}_{k+1} \to \mathcal{H}_{k}$. This way, the supercharges decompose into $Q=\sum_k Q_k$ and $Q^\dagger=\sum_k Q_k^\dagger$, and the Hamiltonian $H=\{Q,Q^\dagger\}$ becomes $H = \sum_k H_k$ with $Q_kQ_k^\dagger + Q_k^\dagger Q_k$. As it turns out, every $Q_k$ describes the spectrum of a statistically independent supermultiplet involving states with R-charge values $(k,k+1)$. What \cref{eq:hcool} captures in this setting is the spectrum of a single supermultiplet, not of the full Hamiltonian. In other words, every $H_k$ term in $H$ is described by a different ensemble realization of \cref{eq:hcool} with the supercharge operator $Q_k$ identified as $Q_k\sim M$ in terms of the random matrix $M$.

The important consequence this has on the meaning of $\bar{\upnu}$ can be easily described. Consider some state $\psi_{k+1}\in\mathcal{H}_{k+1}$.\footnote{\,With hindsight, we are just choosing $\psi_{k+1}$ rather than $\psi_k$ in order to keep the convention that $M$ has $\bar{\upnu}\geq0$.} If $\psi_{k+1}$ is a BPS state, then it is annihilated by both supercharges and, in particular, $Q_k^\dagger \psi_{k+1} = 0$. In terms of the random matrix $M\sim Q_k$, this means that $\psi_k$ is in the kernel of $M M^\dagger$ and thus contributes to a positive value $\bar{\upnu}>0$. This is the familiar case we already encountered where $\bar{\upnu}$ is naturally associated to the number of BPS states. However, suppose now that $\psi_{k+1}$ is a non-BPS state, and recall that supercharges are exact operators satisfying $Q^2={Q^\dagger}^2=0$. This supersymmetry condition implies that $\psi_{k+1}$ must be annihilated by at least one supercharge,\footnote{\,If $Q^\dagger\psi_{k+1} = \lambda \psi_k$ with $\lambda\neq0$ then $\lambda Q\psi_k = \psi_{k+1}$, so $Q\psi_{k+1} = 0$; similarly, $Q\psi_{k+1} \neq 0$ would imply $Q^\dagger\psi_{k+1} =0$.} while the assumption that it is non-BPS means that $\psi_k$ must be annihilated by only one supercharge.
Out of the two cases, consider first $Q\psi_{k+1}=0$, which in particular means $Q_k^\dagger\psi_{k+1} \neq 0$. For $M\sim Q_k$ this implies $\psi_{k+1}$ is not in the kernel of $M M^\dagger$, and thus we learn nothing about $\bar{\upnu}$ and its meaning. Hence consider now the case in which $Q^\dagger\psi_{k+1}=0$, such that we have $Q_k^\dagger\psi_{k+1} = 0$. Then $M\sim Q_k$ actually does make the kernel of $M M^\dagger$ non-trivial and $\bar{\upnu}>0$. In other words, here we see the non-BPS state $\psi_{k+1}$ contributing to $\bar{\upnu}$ simply because it is annihilated by $Q_k$, even though it is not BPS.\footnote{\,There is actually a subtle interplay between actual random matrix degrees of freedom and the supersymmetry constraints which make not every such non-BPS state contribute to $\bar{\upnu}$ \cite{Turiaci:2023jfa}. As we explain in \cref{sec:susyrmt}, the total contribution of non-BPS states to $\bar{\upnu}$ for $Q_k\sim M$ ends up depending on the number of non-BPS states with R-charge $k$ that are in $(k-1,k)$ doublets, as well as on the number of non-BPS states with R-charge $k+1$ that are in $(k+1,k+2)$ doublets. Namely, in both cases, these are non-BPS states which could in principle be in $(k,k+1)$, but are not. The minimum number out of the two classes is what determines $\bar{\upnu}$ (see \cref{sec:susyrmt} for more details).} Indeed, $\psi_{k+1}$ gives a zero eigenvalue for the supermultiplet Hamiltonian $H_k$, but it would give a non-zero eigenvalue for $H_{k+1}$. The matrix ensemble description of $H_k$ hence feels a zero eigenvalue from $\psi_{k+1}$, even though there is no zero-energy BPS states associated to it in the theory.

In summary, we see that in supersymmetric random matrix theory \cref{eq:hcool} generally captures the spectrum of single supermultiplets, and that its $\bar{\upnu}$ zero eigenvalues may or may not be associated to BPS states. In the thorougher discussion from \cref{sec:susyrmt}, we find that $\bar{\upnu}$ is always associated to either one or the other. In particular, when there are BPS states associated to the supermultiplet R-charges, $\bar{\upnu}$ is precisely the number of such BPS states. Otherwise, $\bar{\upnu}$ can generally be associated to the varying number of states in non-BPS supermultiplets. More explicitly, if $N_k^+$ is the number of non-BPS states in $(k,k+1)$ doublets and say $N_{k-1}^+ < N_k^+ < N_{k+1}^+$, then the random matrix description for $Q_k\sim M$ turns out to involve  $\bar{\upnu}=N_{k-1}^+$.

Regarding the black hole spectra from \cref{sec:bhspec}, the relation to random matrices is already rather manifest.
The random Hamiltonians of WD ensembles describe non-supersymmetric theories like the bosonic Schwarzian theory.
The seven standard AZ ensembles can describe supersymmetric Hamiltonians of theories with no unbroken supersymmetry, and thus are particularly suited for the $\mathcal{N}=1$ super-Schwarzian theory, or the specially anomalous $q=0$ gapless supermultiplet of the $\mathcal{N}=2$ super-Schwarzian theory with $\delta=1/2$ and $\hat{q}=1$. Finally, the richer Wishart matrix structure of BPS ensembles allows to describe the more general cases of the $\mathcal{N}=2,4$ super-Schwarzian theories, both for supermultiplets with and without BPS states. In particular, the $\bar{\upnu}$ parameter of the latter will turn out to be crucial to explain not only the spectral gap when there are BPS states, but also for purely non-BPS sectors where the onset of the continuous spectrum continues to rise to higher energies for higher supermultiplets (see \cref{sec:gaps}).

\subsection{Equilibrium Conditions}

It is often convenient to raise the Jacobian determinant in \cref{eq:canonicalrm} into the exponential so as to write the matrix partition function as an action integral,
\begin{equation}
\label{eq:Iint}
    \mathcal{Z} = \int\displaylimits_{\mathbb{R}^N} d\Lambda \, e^{-N I(\Lambda)}, \qquad I(\Lambda) \equiv \Tr \left( V(\Lambda) - \frac{1}{N} \log\abs{ J(\Lambda) } \right).
\end{equation}
where we have used the matrix identity $\log\det M = \Tr\log M$ to rewrite the  $J(\Lambda)$ contribution. Correspondingly, the joint eigenvalue PDF in terms of this action reads
\begin{equation}
\label{eq:pIZ}
    p(\Lambda) = \frac{e^{-N I(\Lambda)}}{\mathcal{Z}}.
\end{equation}
For the general Jacobian in \cref{eq:genjac}, the matrix action can be written out to be
\begin{equation}
\label{eq:Iout}
    I(\Lambda) = \sum_{i=1}^N V_\upnu(\lambda_i) - \frac{\upbeta}{N} \sum_{\underset{j>i}{i,j=1}}^N \log \, \abs{\lambda_j - \lambda_i},
\end{equation}
where we have absorbed the $\upnu$-dependent term into a redefinition of the matrix potential to
\begin{equation}
\label{eq:redef}
    V_\upnu(\lambda) \equiv V(\lambda) - \frac{\upnu}{N} \log|\lambda|.
\end{equation}
Minima of $I(\Lambda)$ are known as Fekete points and obey the equilibrium conditions
\begin{equation}
\label{eq:fekete}
    \evalat{\frac{\partial I(\Lambda)}{\partial \lambda_k} }{\Lambda=\Lambda_*} = 0, \qquad \forall k = 1,\dots,N.
\end{equation}
These points, which dominate the matrix integral at large $N$, motivate the notion of an effective potential characterizing the interaction of each eigenvalue with all others. Such an effective potential $\widehat{V}$ for the eigenvalue $\lambda_k$ can be easily constructed by simply keeping those terms in $I(\Lambda)$ which depend on $\lambda_k$. Applying this to \cref{eq:Iout},
\begin{equation}
\label{eq:fpotwi}
    \widehat{V}_\Lambda(\lambda_k) \equiv V_\upnu(\lambda_k) - \frac{\upbeta}{N} \sum_{k \neq i = 1}^N \log \, \abs{\lambda_k - \lambda_i},
\end{equation}
where the $\Lambda$ subscript emphasizes the dependence of $\widehat{V}_\Lambda$ on all eigenvalues. Upon differentiation,
\begin{equation}
\label{eq:effprime}
    \widehat{V}_\Lambda'(\lambda_k) = V_\upnu'(\lambda_k) - \frac{\upbeta}{N} \sum_{k \neq i = 1}^N \frac{1}{\lambda_k - \lambda_i}.
\end{equation}
By construction, this gives the same equilibrium conditions as \cref{eq:fekete} and thus
\begin{equation}
\label{eq:feketeV}
    \evalat{\widehat{V}_{\Lambda}'(\lambda_k)}{\Lambda=\Lambda_*} = 0, \qquad \forall k =1,\dots,N,
\end{equation}
at Fekete points.
A crucial property of the effective potential is that it exists even when the action does not, as happens in \cref{ssec:doubscal} for the double scaling limit relevant to gravity.\footnote{\,\label{fn:effV}Note that $I(\Lambda) \neq \sum_{k=1}^N \widehat{V}_\Lambda(\lambda_k)$, so the action is not just a sum over the effective potential for every eigenvalue. Basically $\widehat{V}_\Lambda(\lambda_k)$ is only specified by the Fekete condition up to an arbitrary function of all eigenvalues but $\lambda_k$.}

Of course, the system of $N$ saddle-point equations that \cref{eq:feketeV} defines for a general matrix ensemble is highly non-trivial. A standard procedure in random matrix theory to evaluate such large-$N$ matrix integrals, whether it is at a saddle-point level or not, is to first take a continuum limit. In the continuum, the Fekete conditions in \cref{eq:feketeV} for effective potentials like \cref{eq:fpotwi} become well-studied differential equations which can be solved exactly.

\subsection{Continuum Formalism}
\label{ssec:contl}

The continuum limit can be easily obtained by first turning summations into integrals via
\begin{equation}
\label{eq:discon}
    \sum_{k=1}^N (\,\cdot\,) = N \int\displaylimits_{\mathbb{R}} dx \, \hat{\rho}_\Lambda(x)\, (\,\cdot\,), \qquad \hat{\rho}_\Lambda(x) \equiv \frac{1}{N} \sum_{k=1}^N \delta(x - \lambda_k).
\end{equation}
In the continuum, the discrete measure $\hat{\rho}_\Lambda$ is replaced by a continuous measure $\hat{\rho}$, in terms of which the discrete $\lambda_k$ eigenvalues are now captured by a continuous $x$ variable.
Doing so, $\hat{\rho}$ takes on the dynamical role of the variables $\Lambda$, thereby turning the matrix integral from an eigenvalue integral to a functional integral over unit-normalized measures,
\begin{equation}
\label{eq:gencon}
    \int\displaylimits_{\mathbb{R}^N} d\Lambda \quad \longrightarrow \quad  N \int \mathcal{D}\hat{\rho}.
\end{equation}
Applying this to \cref{eq:Iint} leads to the continuum limit
\begin{equation}
\label{eq:contZ}
    \mathcal{Z} =  N \int \mathcal{D} \hat{\rho} \, e^{- N^2 \widehat{I}[\hat{\rho}]},
\end{equation}
where the action $I(\Lambda)$ in \cref{eq:Iout} has been replaced by a functional $N \widehat{I}[\hat{\rho}]$ given by
\begin{equation}
\label{eq:contIm}
    \widehat{I}[\hat{\rho}] \equiv \int\displaylimits_{\mathbb{R}} dx \, \hat{\rho}(x) \, V_\upnu(x) - \frac{\upbeta}{2} \int\displaylimits_{\mathbb{R}} dx \, dy \, \hat{\rho}(x) \, \hat{\rho}(y) \, \log|x-y|.
\end{equation}
Similarly, in the continuum the effective potential in \cref{eq:fpotwi} becomes
\begin{equation}
\label{eq:effwd}
    \widehat{V}[\hat{\rho};x] = V_\upnu(x) - \upbeta \int\displaylimits_{\mathbb{R}} dy \, \hat{\rho}(y) \log|x-y|,
\end{equation}
where the notation emphasizes that the effective potential $\widehat{V}$ itself depends not only on the eigenvalue $x$, but also on all the other eigenvalues through the $\hat{\rho}$ measure (cf. $\widehat{V}_\Lambda$ in the discrete case).
Differentiating \cref{eq:effwd} we also obtain the continuum version of \cref{eq:effprime},
\begin{equation}
\label{eq:dveffr}
    \partial_x \widehat{V}[\hat{\rho};x] = V_\upnu'(x) - \upbeta \fint\displaylimits_{\mathbb{R}} dy \, \frac{\hat{\rho}(y)}{x-y},
\end{equation}
where $\fint$ denotes a principal value integral. 
More abstractly, in general $\widehat{V}$ is defined in the continuum by functional variation as
\begin{equation}
\label{eq:effV}
    \widehat{V}[\hat{\rho};x] \equiv \frac{\delta \widehat{I}[\hat{\rho}]}{\delta \hat{\rho}(x)}.
\end{equation}
The extremization of the action for a saddle-point approximation of the integral in \cref{eq:contZ} thus clearly involves the effective potential. Since such an extremization is over measures satisfying the normalization condition $\int_{\mathbb{R}} dx\,\hat{\rho}(x) = 1$, the variational problem is better solved by introducing a Lagrange multiplier for this constraint. The upshot is that extremality of $\widehat{I}[\hat{\rho}]$ against functional variations requires $\widehat{V}$ to be constant along the support of $\hat{\rho}$ or, equivalently,
\begin{equation}
\label{eq:deffV0}
    \partial_x \widehat{V}[\hat{\rho}_*;x] = 0, \qquad \forall x\in\supp\hat{\rho}_*,
\end{equation}
for $\hat{\rho}=\hat{\rho}_*$ a unit-normalized saddle point of $\widehat{I}[\hat{\rho}]$.
This is the continuum form of the Fekete conditions in \cref{eq:feketeV}. The equilibrium measure $\hat{\rho}_*$ that \cref{eq:deffV0} defines is recognized as the leading average spectral density of the matrix ensemble at large $N$, 
\begin{equation}
\label{eq:rhoslim}
    \hat{\rho}_*(x) \equiv \lim_{N\to\infty} \frac{\langle \rho (x) \rangle}{N}.
\end{equation}
The equilibrium condition in \cref{eq:deffV0} can be made more explicit using \cref{eq:dveffr},
\begin{equation}
\label{eq:expcond}
    \fint\displaylimits_{\mathbb{R}} dy \, \frac{\hat{\rho}_*(y)}{x-y} = \frac{V_\upnu'(x)}{\upbeta}, \qquad \forall x \in \supp\hat{\rho}_*.
\end{equation}
Obtaining the leading spectral density thus amounts to finding a function $\hat{\rho}_*$ that satisfies \cref{eq:expcond}. This is a type of singular integral equation whose solutions remarkably admit a closed-form expression in terms of a general matrix potential.
The derivation of such a general solution for $\hat{\rho}_*$ is reproduced in \cref{sec:gensol}. The final result is
\begin{equation}
\label{eq:rho0main}
    \hat{\rho}_*(x) = \frac{\sqrt[+]{\eta(x)}}{\upbeta \, \pi^2} \fint\displaylimits_\Sigma \frac{dy}{y-x} \frac{V_\upnu'(y)}{\sqrt[+]{\eta(y)}}, \qquad x\in \Sigma,
\end{equation}
where $\Sigma=\supp\hat{\rho}_*$ is specified by \cref{eq:support} and the conditions in \cref{eq:endpcons,eq:endp2}, and the function $\eta$ is given by \cref{eq:sigmawd,eq:sigmaaz} for WD and AZ ensembles, respectively. The notation $\sqrt[+]{\cdot}$ instructs one to approach the square-root branch cut from above as in \cref{eq:bcutapp}.

Having $\hat{\rho}_*$, it is useful to define the leading effective potential that appears in \cref{eq:deffV0} by
\begin{equation}
\label{eq:leadeff}
    \widehat{V}_*(x) \equiv \widehat{V}[\hat{\rho}_*;x].
\end{equation}
Besides vanishing along $\supp\hat{\rho}_*$ by \cref{eq:deffV0}, the gradient of $\widehat{V}_*$ also defines an important object upon analytic continuation to $\mathbb{C}$ off $\supp\hat{\rho}_*$ known as the spectral curve,
\begin{equation}
\label{eq:speyc}
    y(z) = -\frac{\widehat{V}_*'(z)}{\upbeta}, \qquad z\in\mathbb{C}\smallsetminus\supp\hat{\rho}_*.
\end{equation}
As we show in \cref{ssec:speccur}, $y$ has important analytic properties and is intimately related to $\hat{\rho}_*$ itself. In particular, denoting the analytic continuation of $\hat{\rho}_*$ off its support to $\mathbb{C}$ by $\hat{\rho}_*^c$,
\begin{equation}
\label{eq:rhotoy}
    y(z) = - i\pi \hat{\rho}_*^c(z), \qquad z \in \mathbb{C} \smallsetminus \supp\hat{\rho}_*.
\end{equation}
Furthermore, approaching the branch cut of $y$ along $\supp\hat{\rho}_*$ from opposite sides leads to
\begin{equation}
\label{eq:ytorho}
    \hat{\rho}_*(x) = \pm \frac{i}{\pi} \lim_{\epsilon\to0^{+}} y(x \pm i\epsilon), \qquad x\in\supp\hat{\rho}_*.
\end{equation}
In words, \cref{eq:rhotoy} says that $y$ is fully determined by $\hat{\rho}_*$, while \cref{eq:ytorho} says that $\hat{\rho}_*$ is fully determined by $y$.
The fact that both objects contain equivalent information about the spectral ensemble will turn out to be very useful for us.

\subsection{Matrix Scalings}
\label{ssec:mscal}

Matrix scalings refer to large-$N$ limits which amplify certain features of the spectral statistics of a matrix ensemble. These scalings may be single $N\to\infty$ limits that zoom into the bulk \cite{erdos2009bulk} or the edge \cite{BOWICK199121} of the spectrum, or double limits in which the parameters of the matrix integral are simultaneously tuned towards critical values as $N\to\infty$ \cite{Gross:1989vs,Brezin:1990rb,Douglas:1989ve}.

As part of these procedures, one generally introduces a zooming variable $\bar{x}$ that replaces the continuum matrix eigenvalue variable $x$. This is done by setting $x \equiv f_N(\bar{x})$ and keeping $\bar{x}\sim O(1)$ as $N\to\infty$, where the choice of scaling function $f_N$ determines which spectral features the zooming variable captures in the limit. The resulting construct describes a spectral ensemble in its own right, capturing the statistics of the effective eigenvalue variable $\bar{x}$ in the original matrix ensemble at large $N$. Certain matrix ensembles are actually more naturally defined in their scaled version and suitable parameters therein, with no allusion to finite matrices.

When constructing Hamiltonians out of random matrices as in \cref{ssec:hamran}, we are faced with a choice to either interpret $x$ or $\bar{x}$ as the random variable associated to eigenvalues of the Hamiltonian. If the Hamiltonian is naturally identified with the random matrix before any scaling, then $x$ is the desired variable. For instance, if the Hamiltonian is the random matrix itself, a Boltzmann factor would read $e^{-\beta x}$.
However, if we are interested in the statistics of a specific zooming variable, then applying the scaling after would lead to $e^{-\beta f_N(\bar{x})}$. This would defeat the purpose, since such a Boltzmann factor would remain sensitive to the statistics of $x$ rather than of $\bar{x}$ in the limit.
Instead, we will be addressing situations where the Hamiltonian spectrum is naturally associated to whichever zooming variables are relevant to the problem, and thus particularly sensitive to the spectral statistics that these highlight. In other words, we take scaled matrix ensembles as the starting point when constructing spectral ensembles, and take $\bar{x}$ as the only eigenvalue variable that matters. This way, if e.g. the Hamiltonian is the random matrix itself, a Boltzmann factor would read $e^{-\beta \bar{x}}$ and thus successfully capture the statistics of $\bar{x}$.

The above implies that in general there will be no simple mapping between quantities involving Hamiltonians for matrix ensembles before and after scaling, or between different scalings. Namely, Hamiltonian spectra defined in terms of the inequivalent zooming variables $x$ and $\bar{x}$ will give rise to different large-$N$ physics which we will not be interested in relating. Since matrix integrals in different scaling limits will generally be treated as different constructs, it will be simpler to think of matrix scalings as replacements of the form $x\to f_N(x)$ before taking the $N\to\infty$ limit. This way it will not be necessary to deal with the notation $\bar{x}$ for zooming variables, and we can unambiguously use $x$ to refer to whichever eigenvalue variable remains $O(1)$ in the large-$N$ limit. 
Let us now describe in more detail the matrix scalings relevant to this paper.

\subsubsection{Single Limits}
\label{sssec:zoom}

We begin by addressing single limits in which all parameters of the matrix integral are kept fixed as $N\to\infty$.
Already the factor of $N$ in \cref{eq:matrixZ} was introduced with hindsight as a form of matrix scaling. Namely, this factor guarantees that for potentials with a quadratic term the typical smallest and largest eigenvalues are of $O(1)$ size in $x$. In other words, had we not introduced this factor of $N$, the size of these eigenvalues would be extensive and $O(N^{1/2})$.
Keeping $x\sim O(1)$ at large $N$ guarantees that the leading spectral density given by \cref{eq:rhoslim} converges to a bounded support of finite size in $x$ in the limit. For instance, this large-$N$ limit for Gaussian WD matrices leads to the well-known Wigner semicircle law,
\begin{equation}
\label{eq:wdscal}
    \hat{\rho}_*(x) \reprel{GWD}{=} \frac{1}{\pi\upbeta} \sqrt{2\upbeta - x^2}, \qquad -2\upbeta < x < 2\upbeta,
\end{equation}
which is illustrated in \cref{fig:glue}.
Given that there are $N$ eigenvalues in total, a consequence of such a scaling is that in the $x$ variable the typical spacing between eigenvalues is $O(N^{-1})$. Since this goes to zero at large $N$, the variable $x$ is clearly not suitable for capturing eigenvalue fluctuations. A better zooming variable would be obtained by replacing $x\to x/N$, in which case the typical separation between eigenvalues would be expected to remain $O(1)$ at large $N$. This heuristic argument gives the provably correct scaling one should perform in order to study the statistics of eigenvalues in the bulk of the spectrum.

When studying random Hamiltonians, such a bulk scaling would focus on the statistics of the energies of generic excited states. However, by the observations in \cref{sec:specen}, we are actually more interested in understanding the spectral statistics of energies near the ground state. 
This corresponds to studying the tail of eigenvalues near the lower edge $\inf\supp\hat{\rho}_*$ of the leading spectrum.
Interestingly, as it turns out the eigenvalue fluctuations near the edge are often larger than $O(N^{-1})$. For e.g. the WD ensembles the typical fluctuations around the $x=-\sqrt{2\upbeta}$ edge are $O(N^{-2/3})$, and would thus blow up in the large-$N$ limit that focuses on bulk statistics. As a result, studying edge statistics generally requires both a precise shift to the desired edge and a subtler scaling in $N$ (see \cref{ssec:ortho} for explicit examples). 

Edge scalings generally result in non-normalizable spectral functions in the $N\to\infty$ limit. When zooming on spectral edges it is thus useful to introduce an eigenvalue density parameter $e^{S_0}$ that effectively replaces the diverging total eigenvalue number $N$. This can be done by simply using $N e^{-S_0}$ instead of just $N$ in the appropriate zooming variable.
For the GUE this procedure defines the Airy matrix model, whose exact spectral density in $e^{S_0}$ is \cite{Tracy:1992kc}\footnote{\,\label{fn:ncorr}In the random matrix literature, this is a large-$N$ result for the GUE, which of course receives corrections away from the strict limit that are subleading in $N$. In certain contexts in physics, however, this leading result at large $N$ is interpreted as an exact spectral density in the new parameter $e^{S_0}$, which can then be broken down into perturbative and non-perturbative corrections at large $e^{S_0}$. See \cref{ssec:doubscal} and \cref{ssec:ortho} for more details.}
\begin{equation}
\label{eq:airyf}
    \hat{\rho}^{\smalltext{Airy}}(x) = e^{-S_0/3} \left( \text{Ai}'\left( - e^{2S_0/3}x \right)^2 + e^{2S_0/3}x \, \text{Ai}\left( - e^{2S_0/3}x \right)^2 \right), \qquad x\in\mathbb{R}.
\end{equation}
At large $e^{S_0}$ one obtains the leading spectral density of the Airy model,
\begin{equation}
\label{eq:airedge}
    \hat{\rho}^{\smalltext{Airy}}_*(x) \equiv \lim_{S_0\to\infty} \hat{\rho}^{\smalltext{Airy}}(x) = \frac{\sqrt{x}}{\pi}, \qquad x>0,
\end{equation}
which is illustrated against \cref{eq:airyf} in \cref{fig:glue}.
This precisely matches the behavior of the leading spectral density of the GUE, given by \cref{eq:wdscal} for $\upbeta=2$, near the $x=-2$ edge of the spectrum.

\begin{figure}
    \centering
    \includegraphics[width=0.45\textwidth]{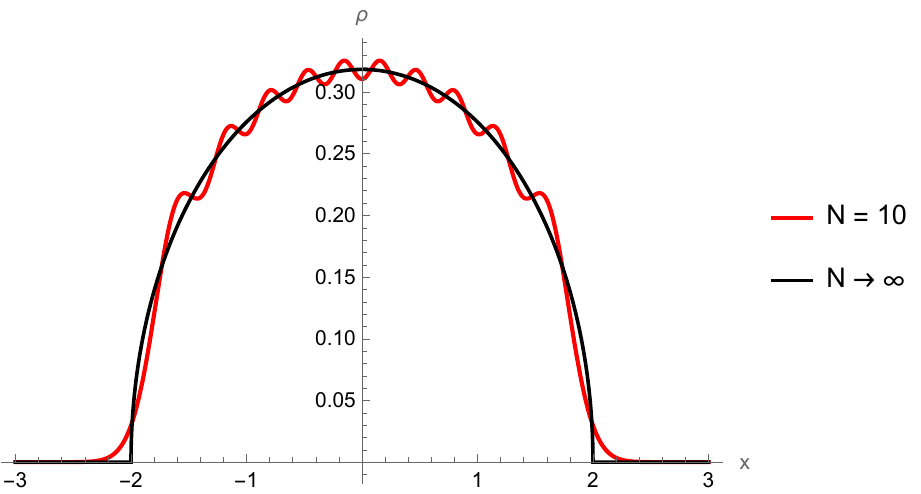}
    ~~
    \includegraphics[width=0.45\textwidth]{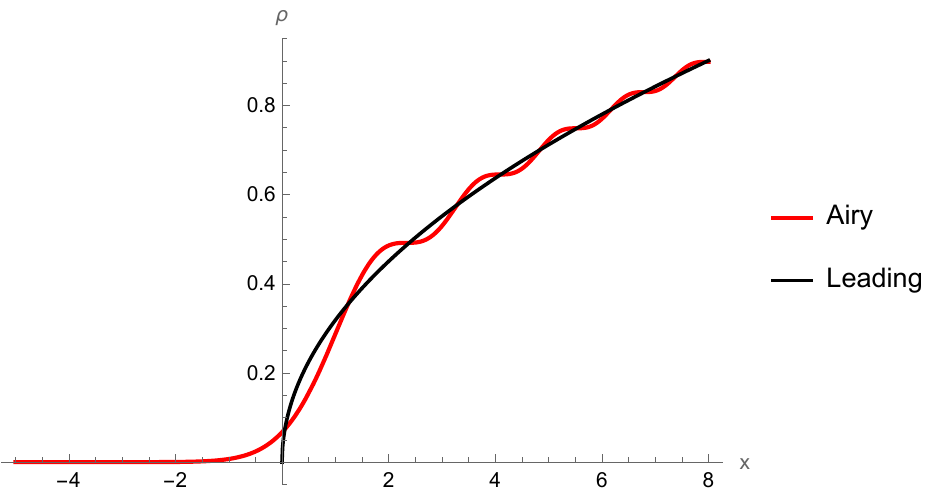}
    \caption{Illustration of spectral densities for the Gaussian WD ensemble with $\upbeta=2$, i.e., the GUE. Left: scaling to a compact distribution, with the $N=10$ result (red) obtained via the orthogonal polynomials techniques from \cref{ssec:ortho}, and the $N\to\infty$ limit (black) corresponding to the Wigner semicircle law from \cref{eq:wdscal}. Right: scaling to the lower edge, with the exact $N\to\infty$ limit (red) given by the Airy model from \cref{eq:airyf}, and its leading soft edge (black) given by \cref{eq:airedge}.
    }
    \label{fig:glue}
\end{figure}

These scalings for Gaussian potentials are presented in more detail in \cref{ssec:ortho}, for all WD, AZ, and BPS ensembles. For the AZ case, the counterpart of \cref{eq:wdscal} is the singular Mar\v{c}enko-Pastur law \cite{Pastur:1967zca},
\begin{equation}
\label{eq:singaz}
    \hat{\rho}_*(x) \reprel{GAZ}{=} \frac{1}{2\pi\upbeta}\sqrt{\frac{4\upbeta-x}{x}}, \qquad 0<x<4\upbeta,
\end{equation}
which diverges as $x\to0^+$ (see \cref{fig:mpsing}).
This is in stark contrast with the regularity near the lower edge of the spectrum in the WD case.
Spectral edges where the spectral density goes to zero (or some finite value)  like in \cref{eq:wdscal} are often referred to as soft edges, and those where it diverges like in \cref{eq:singaz} as hard edges. As we explain in \cref{sssec:univ}, this turns out to give a universal characterization of the edge statistics of spectral ensembles.

Notice how \cref{eq:singaz} in this single large-$N$ limit has no dependence on the $\upnu$ parameter, even though it is typically nonzero for AZ ensembles. 
In the standard AZ ensembles $\upnu$ takes fixed $O(1)$ values and the leading result in \cref{eq:singaz} is unavoidable; in the BPS ensembles, however, an interestingly different story arises. The BPS case is explained in \cref{sec:gaps}.

The scaling that zooms onto the $x=0$ edge of the spectrum in \cref{eq:singaz} for $\upbeta=2$ defines a family of Bessel matrix models, whose exact spectral density in $e^{S_0}$ is \cite{Tracy:1993xj}
\begin{equation}
\label{eq:besself}
    \hat{\rho}^{\smalltext{Bessel}}_{\upnu}(x) = 
    \frac{e^{S_0}}{2} \left(J_{\upnu }\left(e^{S_0}\sqrt{x}\right){}^2-J_{\upnu +1}\left(e^{S_0}\sqrt{x}\right) J_{\upnu -1}\left(e^{S_0}\sqrt{x}\right)\right), \qquad x>0.
\end{equation}
At large $e^{S_0}$, this reproduces the hard edge of the leading spectral density in \cref{eq:singaz},
\begin{equation}
\label{eq:hardw}
    \hat{\rho}_*^{\smalltext{Bessel}}(x) = \lim_{S_0\to\infty} \hat{\rho}_{\upnu}^{\smalltext{Bessel}}(x) = \frac{1}{\pi\sqrt{x}}, \qquad x>0.
\end{equation}
Note, however, that this singular behavior only arises in the limit of \cref{eq:hardw}; in fact, \cref{eq:besself} gives a perfectly regular spectral density as $x\to0^+$ for any $\upnu\geq0$.
It is only for $\upnu=-1/2$, a specific case allowed by \cref{eq:params} for AZ ensembles with $\upalpha=0$, that \cref{eq:besself} actually is singular.
Explicitly, near $x=0$ one finds $\hat{\rho}^{\smalltext{Bessel}}_{\upnu}(x) \sim x^{\upnu}$, which vanishes in the $x\to0^+$ limit for $\upnu>0$, gives the finite value $e^{S_0}/2$ for $\upnu=0$, and diverges as $1/\sqrt{x}$ for $\upnu=-1/2$. In other words, the exact spectral density of the Bessel model generically does not actually diverge as $x\to0^+$, despite the hard edge that the leading result in \cref{eq:hardw} exhibits. The behavior $\hat{\rho}^{\smalltext{Bessel}}_{\upnu}(x) \sim x^{\upnu}$ actually suggests that the growth of these models off $x=0$ can be made arbitrarily slow by increasing the $\upnu$ parameter, which is certainly possible in BPS ensembles. Indeed, as we describe next, $\upnu$ turns out to soften the hard edge of these ensembles by pushing their leading spectrum off $x=0$ and creating a spectral gap $\Delta>0$.

\begin{figure}
    \centering
    \includegraphics[width=0.45\textwidth]{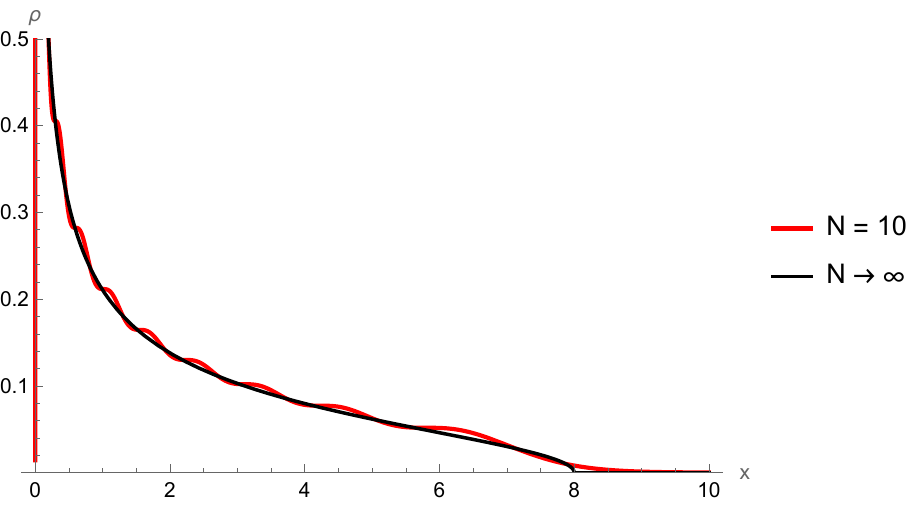}
    ~~
    \includegraphics[width=0.45\textwidth]{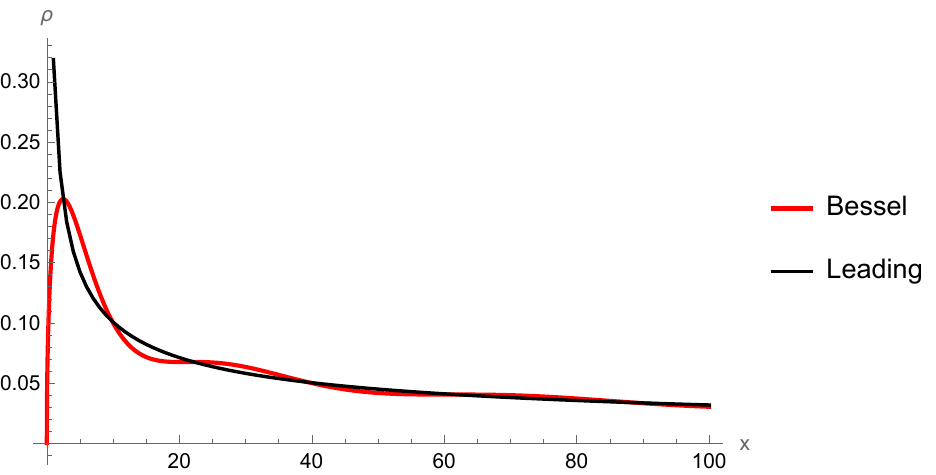}
    \caption{Illustration of spectral densities for the Gaussian AZ ensemble with $(\upalpha,\upbeta)=(2,2)$. Left: scaling to a compact distribution, with the $N=10$ result (red) obtained via the orthogonal polynomials techniques from \cref{ssec:ortho}, and the $N\to\infty$ limit (black) corresponding to the singular Mar\v{c}enko-Pastur law from \cref{eq:singaz}. Right: scaling to the lower edge, with the exact $N\to\infty$ limit (red) given by the Bessel model from \cref{eq:besself}, and its leading hard edge (black) given by \cref{eq:hardw}.
    }
    \label{fig:mpsing}
\end{figure}

\subsubsection{Spectral Gaps}
\label{sec:gaps}

Recall that by \cref{eq:params} in BPS ensembles $\upnu$ is linearly related to the free parameter $\bar{\upnu}\in\mathbb{N}$ which determines the size of the rectangular $(N+\bar{\upnu})\times N$ random matrix $M$. A natural large-$N$ scaling of rectangular matrices would preserve their aspect ratio $(N+\bar{\upnu})/N$, which requires $\bar{\upnu}$ to grow linear in $N$. For the Wishart matrices constructed out of $M$ as described in \cref{ssec:ensemmess}, this is indeed the standard scaling in the random matrix literature \cite{pastur2011eigenvalue,Perret_2016}.
Letting $\bar{\upnu}\sim O(N)$ makes $\upnu$ and $\bar{\upnu}$ coincide up to factors of $2$ from $\upbeta$ to leading order at large $N$. It is thus of interest for BPS ensembles to generally consider a scaling where $\upnu = \nu N$ with $\nu\sim O(1)$. 
Given that $\upnu\sim\bar{\upnu}$ at large $N$, for these ensembles we may at times refer to $\upnu$ itself as the degeneracy of the zero eigenvalue.

With this scaling, the potential $V_\upnu$ from \cref{eq:redef} does not trivialize to just $V$ at large $N$, but preserves a logarithmic pole at zero in the continuum,\footnote{\,The relevance of this pole for the emergence of a gap in the spectrum was already noted by \cite{Turiaci:2023jfa}. Here we additionally explore this gap quantitatively for general Wishart matrices, both singular and non-singular.}
\begin{equation}
\label{eq:vgapcon}
    V_\upnu(x) = V(x) - \nu \log|x|.
\end{equation}
Remarkably, this turns \cref{eq:singaz} into a regular Mar\v{c}enko-Pastur law,
\begin{equation}
\label{eq:gaznu}
    \hat{\rho}_*(x) \reprel{\!GBPS\!}{=} \frac{\sqrt{(a_+ - x)(x - a_-)}}{2\pi\upbeta x}, \qquad a_-<x< a_+,
\end{equation}
where the endpoints of the support interval are given by
\begin{equation}
\label{eq:endpoints}
    a_{\pm} = \upbeta \left(1 \pm \sqrt{1 + \frac{2 \nu }{\upbeta }}\right)^2.
\end{equation}
This is illustrated in \cref{fig:gaps}.
\begin{figure}
    \centering
    \includegraphics[width=0.45\textwidth]{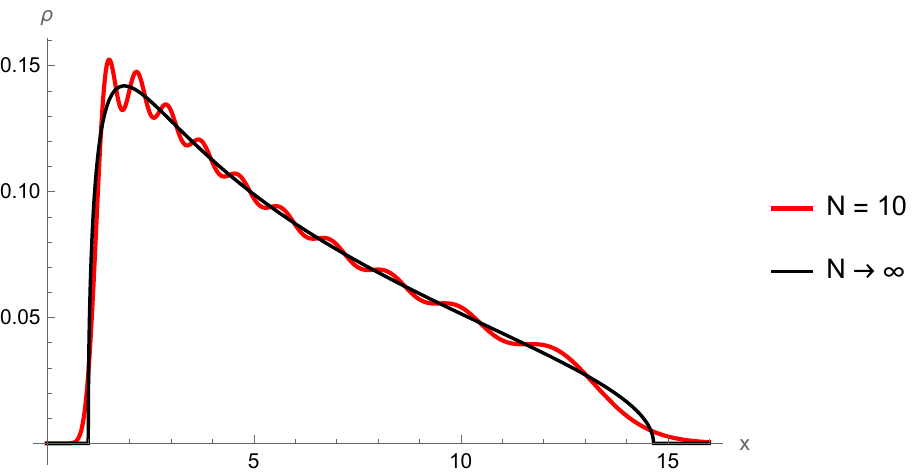}
    ~~
    \includegraphics[width=0.45\textwidth]{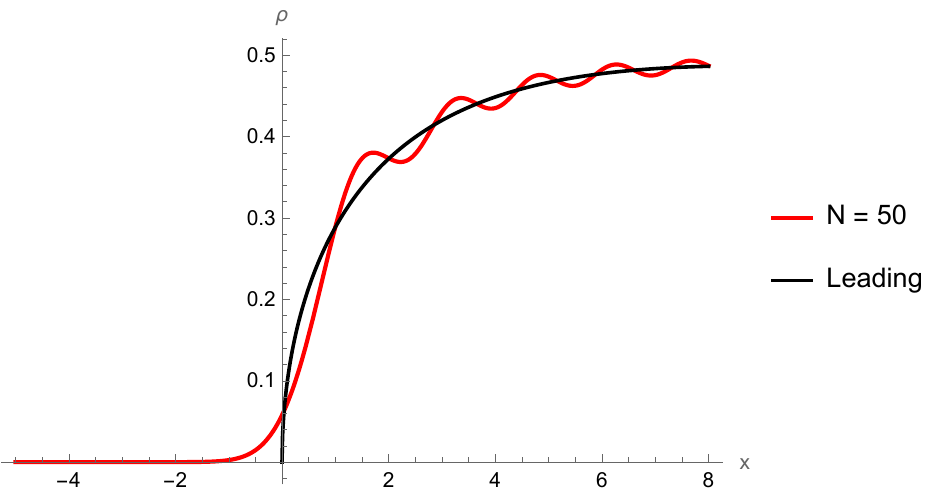}
    \caption{Illustration of spectral densities for the Gaussian BPS ensemble with $\upbeta=2$ and $\nu = \frac{1}{2}+\sqrt{2}$ such that \cref{eq:deltaam} gives gap size $\Delta=1$. Left: scaling to a compact distribution, with the $N=10$ result (red) obtained via the orthogonal polynomials techniques from \cref{ssec:ortho}, and the $N\to\infty$ limit (black) corresponding to the regular Mar\v{c}enko-Pastur law from \cref{eq:gaznu}. Right: scaling to the lower edge, with the exact $N=50$ result (red) approaching the Airy model from \cref{eq:airyf} as $N\to\infty$, and its leading soft edge (black) approaching \cref{eq:airedge} as $N\to\infty$ (cf. the strict Airy limit in \cref{fig:glue}).
    }
    \label{fig:gaps}
\end{figure}
Since \cref{eq:singaz} is recovered in the limit $\nu\to0$, we may generally refer to \cref{eq:gaznu} also when addressing standard AZ ensembles. In contrast with \cref{eq:singaz}, we see that \cref{eq:gaznu} is now non-singular and has no support on an open interval $(0,a_-)$ above zero for $\nu>0$.
Since zero is the smallest eigenvalue possible for the positive semidefinite random Hamiltonians in these ensembles, a nonempty interval of this kind constitutes a spectral gap.
This gap in the spectrum separates the discrete ground state with zero energy from all other higher-energy states. 
Its size is
\begin{equation}
\label{eq:deltaam}
    \Delta \equiv a_- = 
    \begin{cases}
        \frac{\nu^2}{\upbeta} + O(\nu^3), & \nu\ll1\\
        2\nu + O(1), & \nu\gg1\\
    \end{cases}
\end{equation}
where we have written out the scaling of $\Delta$ at small and large $\nu = \upnu/N$. In particular, note that the gap is never extensive in $N$, but determined by the ratio between the degeneracy of the zero eigenvalue and the number of excited states. The gap can be numerically seen to be populated by $O(1)$ eigenvalues in total as $N\to\infty$, thus becoming relatively depleted in the limit.
When the relative number of zero eigenvalues $\nu\to0$ at large $N$, corresponding to a non-extensive degeneracy, the spectrum becomes gapless.

In supersymmetric terms, this means that there arises a gap between the zero-energy ground state and the energy of the lightest non-BPS states if and only if $\bar{\upnu}$ is extensive in $N$. In supermultiplets where $\bar{\upnu}$ is the number of BPS states, this explains the origin of the spectral gap and predicts its size in terms of the BPS ground-state degeneracy.\footnote{We thank Misha Usatyuk for early conversations on this idea; a separate discussion just appeared on \cite{Johnson:2024tgg}.} 
In consistency with our observations here, we see that the $\mathcal{N}=2,4$ super-Schwarzians that arise in near-BPS black holes indeed exhibit an extensive degeneracy of BPS states given by $e^{S_0}$ when their spectrum is gapped. In particular, our results precisely reproduce the behavior of $\mathcal{N}=2$ supermultiplets in \cref{eq:susy2}: at small $q\ll \hat{q}$, the BPS degeneracy gives a ratio $\bar{\upnu}/e^{S_0} \approx q/\hat{q}$, and correspondingly the gap is seen to be quadratic in $q/\hat{q}$ (the identification $N\sim e^{S_0}$ will be explained shortly). 

Remarkably though, $\bar{\upnu}$ clearly must also be playing an important role even in purely non-BPS supermultiplets. Namely, the onset of the continuum in \cref{eq:susy2,eq:rq4} keeps growing for higher supermultiplets, even though these are unrelated to BPS states. This suggests that we should match the quadratic growth of $\Delta_q$ (or $\Delta_J$) in the R-charge $q$ (or $J$) to the large-gap limit of \cref{eq:deltaam}. This gives the identification $\nu\sim q^2$ asymptotically, and thus $\bar{\upnu} \sim q^2 e^{S_0}$. The interpretation of this extensive value of $\bar{\upnu}$ requires in this non-BPS case following the discussion in \cref{ssec:hamran} and results of \cref{sec:susyrmt}. In particular, the quadratic growth of $\bar{\upnu}$ in $q$ must be related to a similar growth in the number of non-BPS states in higher-$q$ supermultiplets. Letting $L_q e^{S_0}$ be the number of non-BPS states in a $q$-multiplet, explicitly we would expect $L_q \sim q^2$ (see comments before \cref{eq:n4ham}).

Another important consequence of the $\upnu\sim O(N)$ scaling is that now the leading spectral density in \cref{eq:gaznu} behaves regularly near both edges of the spectrum. In particular, \cref{eq:gaznu} has a soft edge at $x=\Delta$ where $\hat{\rho}_*$ just goes to zero. The hard edge of the AZ case in \cref{eq:singaz} can be recovered in the $\nu\to0$ limit. More generally, in terms of the original $\upnu$ parameter, the hard edge arises if $\upnu\sim O(1)$ or potentially if its growth with $N$ is slower than linear in the large-$N$ limit.\footnote{\,There are more sophisticated scalings of $\upnu$ one could consider which also allow for a soft edge with no gap. We do not consider these here, since the scaling limits for the most general soft-edge case remain open \cite{Ram_rez_2011}.} Summarizing, the BPS ensembles give rise to two qualitatively distinct scenarios: a gapless spectrum with a hard edge if $\upnu\sim O(1)$, and gapped spectrum with a soft edge if $\upnu\sim O(N)$,

The scaling limit that zooms onto the hard edge in the gapless case again gives rise to the Bessel model in \cref{eq:besself}. The gapped case requires first a shift of the zooming variable to center $x=\Delta$, and an inequivalent scaling in $N$ to capture the statistics of the soft edge.
Remarkably though, the appropriate scaling here turns out to be of the same form as in the WD ensembles, and yields the same Airy model from \cref{eq:airyf} in the large-$N$ limit \cite{Baker_1998},\footnote{\,Large-$N$ corrections here differ from WD ones, but this does not matter for us (cf. \cref{fn:ncorr}) \cite{Perret_2016,Forrester_2018,Forrester_2019}.} as illustrated in \cref{fig:gaps}. As a result, for BPS ensembles with a positive gap, the spectral statistics of the lower edge are in fact not governed by the Bessel models in \cref{eq:besself}, but by the Airy model in \cref{eq:airyf}. The surprising return of the BPS ensemble to Airy statistics is not a coincidence, but explainable by universality results in random matrix theory.

\subsubsection{Edge Universality}
\label{sssec:univ}

Scaling limits were described in previous sections referencing \cref{eq:wdscal,eq:singaz,eq:gaznu}, which are leading spectral densities for our invariant ensembles whose precise form assumes a Gaussian potential.
In fact, universal properties of random matrix statistics imply that upon edge scalings, the results in \cref{eq:airyf,eq:besself} actually arise much more broadly for general matrix ensembles and potentials.\footnote{\,The usual assumptions on the matrix potential $V$ are most merely needed for convergence of the matrix integral, together with mild regularity conditions which do not require $V$ to be analytic or even arbitrarily differentiable \cite{Erd_s_2011}. In particular, $V$ need not include a quadratic term characteristic of Gaussian potentials.}
Indeed, universality results hold beyond just invariant ensembles, and take the form of local spectral statistics which in specific scaling limits are fully determined by the eigenvalue measure and independent of details of the matrix potential \cite{tao2012random}.

Universality applies to bulk statistics \cite{pastur1997universality,erdos2009bulk} and to edge statistics \cite{BOWICK199121,Deift_2006,Erd_s_2011,Bourgade_2014}.
Given our interest in understanding low-energy fluctuations, we are in particular interested in edge universality. Our spectral ensembles give rise to edge statistics that fall into two universality classes: Airy and Bessel \cite{Dumitriu_2006,Forrester_2012,Bertola_2017}, shown in \cref{fig:glue,fig:mpsing}.
We emphasize however that these two classes are not in one to one correspondence with the distinction between supersymmetric and non-supersymmetric ensembles. 
Rather, the universality class is determined by where the relevant edge of the spectrum is soft or hard, which as explained in \cref{sec:gaps} depends on the number of BPS states in the supersymmetric case.

For WD ensembles, edge statistics fall under the universality class of Gaussian WD matrices. The eigenvalues of such matrices famously asymptote to \cref{eq:wdscal}, and exhibit the soft edge in \cref{eq:airedge} by which the spectral density goes to zero as $\sqrt{x}$.
The spectral statistics of ensembles in this universality class are governed by Hermite polynomials, which upon the large-$N$ scaling that focuses on this soft edge give rise to Airy processes like in \cref{fig:glue} \cite{BOWICK199121,Forrester:1993vtx,Tracy:1992kc} (cf. the near-extremal black holes spectrum in \cref{eq:bos} from the $\mathcal{N}=0$ Schwarzian).
For the standard AZ ensembles, the eigenvalues in the Gaussian case asymptote at large $N$ to \cref{eq:singaz}, and exhibit the hard edge in \cref{eq:hardw} by which the spectral density diverges as $1/\sqrt{x}$. The spectral statistics of ensembles in this universality class are governed by Laguerre polynomials, which upon the large-$N$ scaling that focuses on this hard edge give rise to Bessel processes like in \cref{fig:mpsing} \cite{Bronk1965ExponentialEF,nagao91,nagao93,Tracy:1993xj,Forrester:1993vtx} (cf. the near-non-BPS black holes spectrum in \cref{eq:susynon} from the $\mathcal{N}=1$ super-Schwarzian).

By our understanding from \cref{sec:gaps} about spectral gaps in the BPS ensembles, we learn that the Wishart generalization of the AZ ensembles is actually able to accommodate both Bessel and Airy statistics \cite{Dumitriu_2006,Ram_rez_2011,Forrester_2012,Bertola_2017}. The gapless BPS spectrum falls under Bessel universality, with a characteristic $1/\sqrt{x}$ hard edge just like in the standard AZ ensembles (cf. the near-BPS black hole spectrum in \cref{eq:susy2} from the $\mathcal{N}=2$ super-Schwarzian for gapless supermultiplets).
In contrast, the gapped BPS spectrum falls back under Airy universality, recovering the characteristic $\sqrt{x}$ soft edge of WD ensembles as illustrated in \cref{fig:gaps} (cf. the near-BPS black hole spectrum in \cref{eq:susy2,eq:rq4} from the $\mathcal{N}=2,4$ super-Schwarzians for gapped supermultiplets).

\subsubsection{Double Scaling Limits}
\label{ssec:doubscal}

The large-$N$ scalings with a finite $e^{S_0}$ parameter that gave rise to the Airy and Bessel models above are somewhat trivial examples of a more sophisticated type of matrix scalings that go under the name of double scaling limits (DSL) \cite{Gross:1989vs,Brezin:1990rb,Douglas:1989ve,Ginsparg:1993is}. 
Such limits consist of zooming on the edge of the spectrum while simultaneously tuning the coefficients in the matrix potential to certain critical values as $N$ becomes large. The motivation for doing so is the emergence of the structure of topological surfaces which connect matrix integrals to gravity in the continuum $N\to\infty$ limit.

A perturbative large-$N$ expansion of the matrix partition function $\mathcal{Z}$ can be represented by double line matrix diagrams which come with factors of $N^{\chi_g}$, where $\chi_g=2-2g$ is the Euler character of the corresponding graph, and $g$ its genus. Organizing diagrams this way gives an expansion for $\mathcal{Z}$ itself,
\begin{equation}
\label{eq:genusex}
    \mathcal{Z} = \sum_{g=0}^\infty N^{\chi_g} \mathcal{Z}_g,
\end{equation}
where $ \mathcal{Z}_g$ contains all contributions from genus-$g$ diagrams. The single large-$N$ limits considered above thus suppress all terms but $\mathcal{Z}_0$, meaning that only planar diagrams survive. When the couplings $\gamma$ in the matrix potential are allowed to vary, however, contributions from higher genus diagrams can be preserved at large $N$.
As it turns out, the $\mathcal{Z}_g$ functions in the genus expansion all exhibit divergences at the same critical values $\gamma=\gamma_c$ of the couplings, near which they behave as
\begin{equation}
    \mathcal{Z}_g \approx \mathcal{Z}_g^c (\gamma - \gamma_c)^{h \chi_g}, \qquad h >0,
\end{equation}
for some finite constant $\mathcal{Z}_g^c$ independent of $\gamma$. Near such critical points, the expansion in \cref{eq:genusex} can thus be rewritten as
\begin{equation}
\label{eq:dsl}
    \mathcal{Z} \approx \sum_{g=0}^\infty e^{\chi_g S_0} \, \mathcal{Z}_g^c, \qquad e^{S_0} \equiv N (\gamma - \gamma_c)^{h}.
\end{equation}
This expression realizes the possibility of a taking $N\to\infty$ strictly while preserving higher genus contributions by simply demanding that the new effective expansion parameter $e^{S_0}$ stays finite in the limit. As defined, this clearly requires that $\gamma\to\gamma_c$ as $N\to\infty$ in the precise way specified by \cref{eq:dsl}. By coordinating this coupling limit with the appropriate edge zooming variable as $N\to\infty$, the DSL succeeds in capturing the edge statistics exactly in $e^{S_0}$ of large-$N$ matrix models where diagrams of arbitrary topology contribute. Furthermore, there is a precise sense in which the diagrammatic contributions may be interpreted as actual topological surfaces in the DSL. Note that the number of vertices $n$ in a given diagram appears as the power of the coupling $\gamma$, so the expectation value of vertex number at fixed $g$ is computed by
\begin{equation}
    \expval{n}_g \equiv \gamma \frac{\partial}{\partial \gamma} \log \mathcal{Z}_g \approx \frac{\gamma \, h\chi_g}{\gamma-\gamma_c}.
\end{equation}
Since $\expval{n}_g\to\infty$ in the DSL, we see that diagrams may be understood as approaching a continuum limit of surfaces given by graphs with infinitely many vertices.
In this sense, the partition function of a matrix ensemble in the DSL takes the form of a sum over surfaces of arbitrary topology of very much the same flavor as quantum gravity. 
In particular, the tuning of matrix couplings to different critical values in the DSL nicely corresponds to introducing different choices of matter content coupled to gravity.
Of particular interest for us are the DSL matrix ensembles dual to the models of JT quantum gravity describing the physics of near-extremal black holes \cite{Saad:2019lba,Stanford:2019vob,Turiaci:2023jfa,Mertens:2022irh}.

Crucially, it can be shown that for matrix integrals in the DSL, the expansion of the spectral density to all orders in $e^{-S_0}$ is fully determined by its leading from at large $e^{S_0}$, together with a discrete choice out of the ten invariant matrix ensembles.\footnote{\,\label{fn:ordjt}
For example, in ordinary JT gravity one only considers orientable surfaces, which correspond to choosing a WD ensemble with $\upbeta=2$. The other two classical WD ensembles are realized by allowing for non-orientable surfaces as well. The $\upbeta=1,4$ cases differ by whether or not one includes a factor of $(-1)^{n_c}$ in the sum over topologies, where $n_c$ is the number of crosscaps. These are just the purely bosonic cases; theories with fermions, with or without supersymmetry, involve the symmetry structure of the AZ ensembles \cite{Stanford:2019vob,Turiaci:2023jfa}.} This is accomplished through topological recursion relations which can be extracted from the loop equations \cite{Eynard:2004mh,Eynard:2007kz,Stanford:2019vob} (see \cite{Eynard:2014zxa} for an overview on topological recursion). As mentioned in \cref{sec:bhspec}, in JT the leading spectral density at large $e^{S_0}$ is computed on a trivial topology, where the dynamics reduce to a Schwarzian theory governing the fluctuations of the boundary of a hyperbolic disk.
In other words, the near-extremal black hole spectral densities in \cref{eq:bos,eq:susynon,eq:susypro,eq:susy2,eq:susypro2,eq:rq4} coming from Schwarzian theories precisely correspond to the leading spectral densities of the dual DSL matrix ensembles. More explicitly, with our choice of ``normalization'' in \cref{sssec:zoom} (cf. \cref{eq:rhoslim}),
\begin{equation}
\label{eq:noscal}
    \hat{\rho}_*^{\mathcal{N}}(E) \equiv e^{-S_0} \, \rho^{\mathcal{N}}(E).
\end{equation}
with the quotation marks emphasizing that actually $\hat{\rho}_*$ is not even normalizable, and all we are doing is stripping off the extensive factor of $e^{S_0}$ from $\rho^{\mathcal{N}}$.
With this convention, one easily verifies that near the lower edge of the continuous parts of their spectra, $\hat{\rho}_*^{\mathcal{N}}$ exhibits exactly the soft edge in \cref{eq:airedge} for $\mathcal{N}=0$ and gapped $\mathcal{N}=2,4$, and the hard edge in \cref{eq:hardw} for $\mathcal{N}=1$ and gapless $\mathcal{N}=2$. This is how edge universality manifests itself in the DSL, i.e., in the near-edge behavior of the DSL spectral density, which itself is a near-edge spectral density. In particular, we expect the exact DSL spectral densities to be governed by Airy statistics near soft edges, and by Bessel statistics near hard edges.

\section{Near-Extremal Saddles}
\label{sec:canemsem}

The goal of this section is to investigate the replica trick that quenches the thermal entropy in a way that elucidates its workings in quantum gravity.
To do so we pursue the calculation in the near-extremal regime employing only those tools from random matrix theory with a clear gravitational dual description.
Our results show that a treatment strictly to leading order suffices, the crucial realization being that a new saddle point arises at large $\beta\gtrsim O(e^{S_0})$ and becomes dominant.
For the replica trick in \cref{eq:reptrick}, the behavior of this saddle point is shown to be determined by spectral edge statistics, implying that to leading order the quenched entropy behaves universally and is completely independent of non-perturbative details.
On the gravity side, this saddle point is seen to be realized by a particular type of branes associated to energy eigenvalues.
Because these branes are needed at leading order, our findings imply that their inclusion on the gravity side is non-optional for a genuine duality between JT gravity and matrix integrals.\footnote{\,We thank Netta Engelhardt for emphasizing this point of view.}

To implement \cref{eq:reptrick} we need to study the moments of $Z(\beta)$ in a matrix ensemble,
\begin{equation}
\label{eq:Zmom}
    \expval{ Z(\beta)^m} = \int\displaylimits_{\mathbb{R}^N} d\Lambda \, p(\Lambda) \left( \Tr e^{-\beta \Lambda }\right)^m,
\end{equation}
where $p$ is given in \cref{eq:pIZ} and $\Lambda$ denotes the diagonal matrix of $N$ eigenvalues of the random Hamiltonian.
As in \cref{eq:Iint}, it is again convenient to express this as an action integral,
\begin{equation}
\label{eq:momentint}
    \expval{Z(\beta)^m} = \frac{1}{\mathcal{Z}} \int\displaylimits_{\mathbb{R}^N} d\Lambda \, e^{-N I_m(\Lambda)}, \qquad I_m(\Lambda) \equiv I(\Lambda) + \frac{m}{N} \log \Tr e^{-\beta \Lambda },
\end{equation}
with the matrix action $I$ given by \cref{eq:Iout}.
This integral admits a saddle-point approximation at large $N$. 
One could follow the same strategy as in \cref{ssec:contl} to study this integral in the continuum limit to leading order. 
If done na\"ively, however, the resulting functional integral and variational problem would in fact be ill-posed for any $m>0$. This is because the $Z(\beta)$ continuum one would get by replacing a discrete measure $\hat{\rho}_\Lambda$ by a continuous one $\hat{\rho}$ (cf. \cref{eq:discon,eq:gencon}),
\begin{equation}
\label{eq:naive}
    \int\displaylimits_{\mathbb{R}} dx \, \hat{\rho}_\Lambda(x) \, e^{-\beta x} \quad\longrightarrow\quad \int\displaylimits_{\mathbb{R}} dx \, \hat{\rho}(x) \, e^{-\beta x},
\end{equation}
would generally cause the action to be unbounded from below. This fundamental issue would be caused by shallow tails of the continuum measure $\hat{\rho}$ extending to arbitrarily negative values.\footnote{\,\label{fn:unbound}This issue can be easily exemplified. The na\"ive continuum limit for e.g. the GUE yields (cf. \cref{eq:contIm})
$$
    \widehat{I}_m[\hat{\rho}] \reprel{!}{=} \frac{1}{2} \int\displaylimits_\mathbb{R} dx \, \hat{\rho}(x) \, x^2 - \int\displaylimits_\mathbb{R} dx \, dy \, \hat{\rho}(x) \hat{\rho}(y) \log|x-y| - \frac{m}{N^2} \log \left( N  \int\displaylimits_\mathbb{R} dx \, \hat{\rho}(x) \, e^{-\beta x} \right).
$$
Consider letting $\hat{\rho}(x)$ decay like $e^{-\frac{1}{\tau} \left( \frac{|x|}{a} \right)^\tau}$ as $x\to-\infty$ for $1<\tau<2$. The lower bound guarantees that the integral in the $\beta$-dependent term converges for any $\beta$, while the upper bound implies that this term will grow negative faster in $a$ than the potential term grows positive. Then one finds that $\widehat{I}_m$ behaves as $-\frac{\tau-1}{\tau} m (a\beta)^{\frac{\tau}{\tau-1}}$ at large $a$, which implies the action is unbounded from below since $\widehat{I}_m$ can be made arbitrarily negative by taking $a\to\infty$.} But such tails would clearly not make sense from the discretum: there must be a lowest eigenvalue which, in a saddle-point approximation, would lie at some finite value. 
The effect of $Z(\beta)$ in \cref{eq:momentint} thus has to be better understood before going to the large-$N$ continuum.

Valuable intuition can be gained from a finite-$N$ evaluation of the object whose large-$N$ limit gives the saddle point we are after. Namely, thinking of \cref{eq:momentint} altogether as just another matrix integral, we may define its joint eigenvalue PDF by (cf. \cref{eq:Iint,eq:pIZ})
\begin{equation}
    p_\beta^m(\Lambda) = \frac{e^{-N I_m(\Lambda)}}{\mathcal{Z}\expval{Z(\beta)^m}},
\end{equation}
and correspondingly by \cref{eq:pnPN,eq:densn1} obtain an average spectral density via
\begin{equation}
\label{eq:specbetam}
    \expval{\rho^m_{\beta}(\lambda)} \equiv N \int\displaylimits_{\mathbb{R}^{N-1}} d\lambda_2\cdots d\lambda_N \, p_\beta^m(\lambda,\lambda_2,\dots,\lambda_N).
\end{equation}
This is precisely the object we need to understand in the large-$N$ limit. In particular, by \cref{eq:rhoslim},
\begin{equation}
\label{eq:betames}
    {\hat{\rho}^m_{\beta}}{}_*(x) \equiv \lim_{N\to\infty} \frac{\expval{\rho^m_{\beta}(x)}}{N},
\end{equation}
is the equilibrium measure that a saddle-point evaluation of \cref{eq:momentint} must give if the continuum limit is taken correctly. As an example, we integrate \cref{eq:specbetam} exactly for the GUE at finite $N$ and show the result of varying $m$ and $\beta$ in \cref{fig:instanton}.
The observed behavior of $\expval{\rho^m_{\beta}(x)}$ suggests that in a large-$N$ limit where $\beta\sim O(N)$ the lowest eigenvalue naturally behaves in a discrete manner, explaining why \cref{eq:momentint} does not admit a na\"ive continuum limit. With the intuition that \cref{fig:instanton} provides, we now derive a general expression for \cref{eq:momentint} that actually admits such a limit.

\begin{figure}
    \centering
    \includegraphics[width=0.8\textwidth]{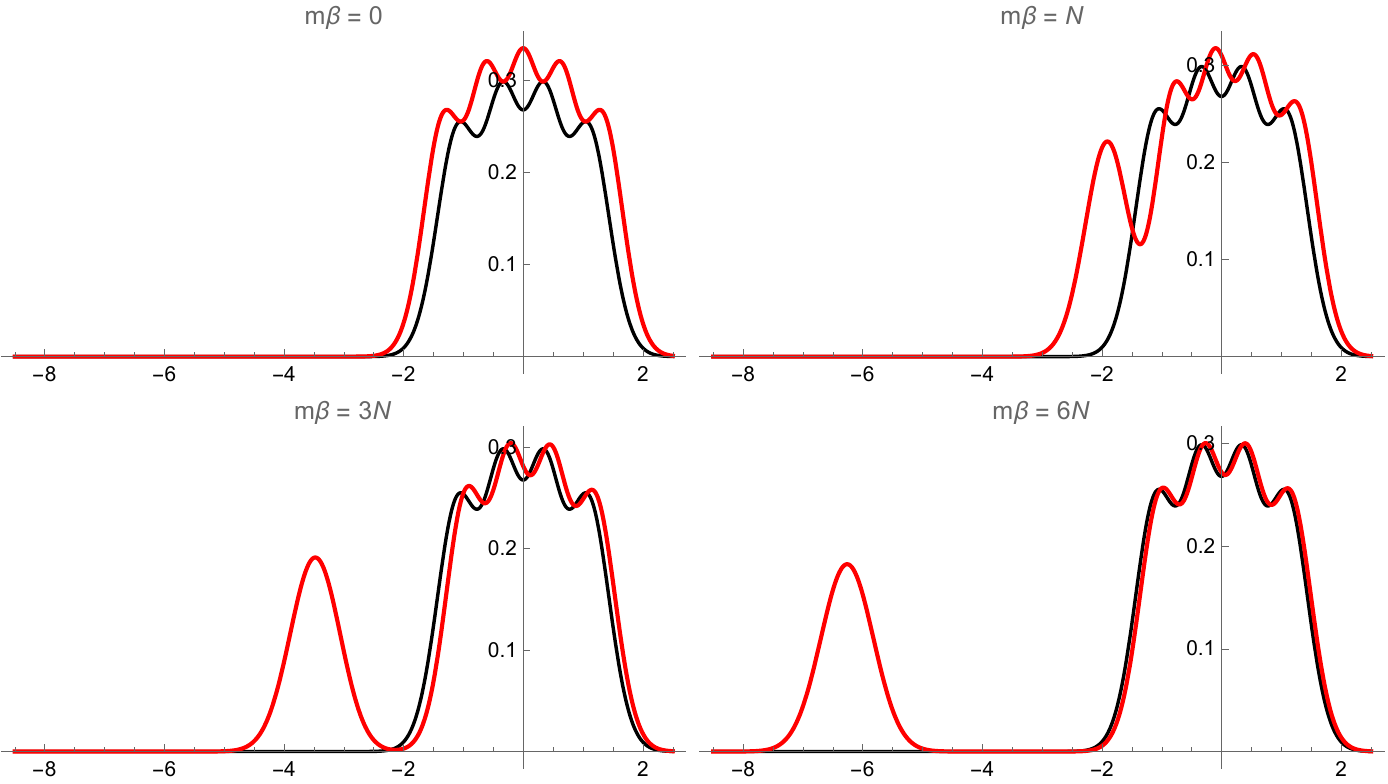}
    \caption{Illustration of $\langle{\rho^m_{\beta}(x)}\rangle$ by exact evaluation of the integral in \cref{eq:specbetam} for the GUE at finite $N$. The red curve corresponds to $N=5$ and is plotted for the values of $m\beta$ shown; here $m=2$, but the relevant qualitative features are seen to only depend on the ratio $m\beta/N$. The black curve corresponds to $N=4$ with $m=0$ fixed, and serves as a reference for the behavior of the red curve. As $m\beta/N$ increases, the peak associated to the lowest eigenvalue of the $N=5$ curve separates from the other $N-1$ eigenvalues, which converge down to the $N=4$ curve. In other words, only one eigenvalue depends significantly on $\beta$, while the others are mostly insensitive to the $Z(\beta)$ insertion. The single dynamical eigenvalue can be analytically shown to asymptote with $m\beta/N$ to a Gaussian of variance $1/N$ around $-m\beta/N$. Importantly, this peak shrinks as $N\to\infty$, giving no overlap with the bulk of eigenvalues at large $N$ for any constant $m\beta/N>0$.}
    \label{fig:instanton}
\end{figure}

\subsection{Low Temperatures}

The form of $\log Z(\beta)$ at small and large $\beta$ can be read off from \cref{eq:hight,eq:lowt}, respectively. At small $\beta$, this goes as $\log Z(\beta) \sim \log N$, whereas at large $\beta$ it behaves as $\log Z(\beta) \sim -\beta\min\Lambda$. Hence at large $N$, if $\beta$ is kept fixed, the contribution from the $\beta$-dependent term in \cref{eq:momentint} gets suppressed by an extra factor of $N$ relative to $I(\Lambda)$. In a large-$N$ treatment of this expectation value integral, the saddle points would thus be solely determined by the original matrix integral and completely unaffected by the insertion of $Z(\beta)^m$.

The more interesting regime arises at large $\beta$, when $\log Z(\beta) \sim -\beta\min\Lambda$. In this case, by making $\beta\sim O(N)$, the two action terms in \cref{eq:momentint} become of the same order in the large-$N$ limit. This must lead to a saddle point for the expectation value integral strictly different from that of the original matrix integral due to the effect of the $Z(\beta)^m$ insertion. The new saddle point clearly appears for any $m>0$ and gives a distinct role to the lowest eigenvalue $\min\Lambda$ in the integral, as observed also in \cref{fig:instanton}. 
In particular, we expect this lowest eigenvalue to be particularly sensitive to $\beta$, and thus capture the dependence of $\expval{Z(\beta)^m}$ on it in the large-$\beta$ regime we are interested in.
Since $\beta\sim O(N)$ translates into $\beta\sim O(e^{S_0})$ in the DSL, note that this is precisely the near-extremal regime where the gravitational entropy generically begins to turn negative, as discussed in \cref{ssec:gsint}. In other words, the appearance of this new saddle point is perfectly consistent with the expectation that annealed and quenched entropies must depart from each other near extremality.

To track the behavior of the lowest eigenvalue, consider splitting $\Lambda$ into\footnote{\,The integral in \cref{eq:momentint} treats all eigenvalues symmetrically, so the $\lambda_k$ labelling is unrelated to any ordering.}
\begin{equation}
\label{eq:lambdasplit}
    \hat{\lambda} \equiv \min \Lambda, \qquad \widehat{\Lambda} \equiv \diag\{\lambda_k \neq \hat{\lambda} \}_{k=1}^N.
\end{equation}
Separating eigenvalues this way, the canonical partition function may equivalently be written
\begin{equation}
\label{eq:zbetasplit}
    \log Z(\beta) = - \beta\hat{\lambda} + \log \left( 1 + \sum_{\lambda\in\widehat{\Lambda}} e^{-\beta (\lambda-\hat{\lambda})} \right),
\end{equation}
an expression which is natural for large $\beta$ but holds for any $\beta$. 
Any global eigenvalue degeneracies coming from the symmetries of the ensemble can be factored out from this treatment.
As for dynamical degeneracies, recall that the Vandermonde determinant exerts a repulsion across all eigenvalues which forbids them. In particular, this means that $\lambda>\hat{\lambda}$ can be assumed strictly for all $\lambda \in \widehat{\Lambda}$. 
In general though, every eigenvalue $\lambda \in \widehat{\Lambda}$ can get arbitrarily close to $\hat{\lambda}$, so it is not \`a priori clear whether the sum in \cref{eq:zbetasplit} can be assumed to be exponentially suppressed in $\beta$. The key observation is that the typical separation between eigenvalues that the matrix integral induces is no smaller than $O(N^{-1})$, and often larger than that near spectral edges (cf. \cref{sssec:zoom}). Since we will be interested in scaling $\beta$ linearly in $N$, this statistical separation between eigenvalues will always be counteracted by $\beta$ in the exponent. Hence we expect the logarithmic term in \cref{eq:zbetasplit} to statistically be at most $O(\log N)$, and thus subleading relative to the $\beta\sim O(N)$ term. 
As a result, the large-$\beta$ approximation
\begin{equation}
\label{eq:appzbeta}
    \log Z(\beta) \approx - \beta\hat{\lambda},
\end{equation}
holds generically across the ensemble.
In fact, notice that this $-\beta\hat{\lambda}$ term will always be favoring strictly smaller values for $\hat{\lambda}$ in the matrix integral than for any $\lambda \in \widehat{\Lambda}$. Therefore we actually expect the typical difference $\lambda-\hat{\lambda} \gtrsim O(N^{-1})$, thereby enhancing the approximation in \cref{eq:appzbeta} to be accurate up to at most $O(1)$ corrections, as our results turn out to confirm (cf. \cref{fig:instanton}).
The strategy in what follows is to use the large-$\beta$ behavior of $Z(\beta)$ in \cref{eq:appzbeta} to break down $\expval{ Z(\beta)^m}$ into matrix integrals which are computable using a saddle-point approximation.

\subsection{Eigenvalue Splitting}
\label{ssec:split}

Given the special status the lowest eigenvalues acquires, it will be useful to identify it inside the matrix integral. As expressed throughout, spectral integrals run over all of $\mathbb{R}$ for every eigenvalue in $\Lambda$, so any one of them can be the smallest in some integration domain. However, all integrals are fully symmetric under permutations of the eigenvalues. This allows one to reorder them so as to always keep track of which eigenvalue is the lowest. Using the notation in \cref{eq:lambdasplit}, one easily arrives at the following identity
\begin{equation}
\label{eq:intsep1}
    \int\displaylimits_{\mathbb{R}^N} d\Lambda \, (\,\cdot\,) = N \int\displaylimits_{\mathbb{R}^{N-1}} d\widehat{\Lambda} \int\displaylimits_{-\infty}^{\min\widehat{\Lambda}} d\hat{\lambda}  \, (\,\cdot\,).
\end{equation}
This is useful because under this reorganization of the eigenvalue integral we can identify $\hat{\lambda}$ as a specific integration variable corresponding to $\min\Lambda$, and thus study what happens to the lowest eigenvalue inside the full matrix integral.
The representation in \cref{eq:intsep1} can be read as upper bounding the $\hat{\lambda}$ eigenvalue by all other $\widehat{\Lambda}$ eigenvalues. However, one could also think of all eigenvalues $\widehat{\Lambda}$ as being upper bounded by $\hat{\lambda}$, which gives the alternative representation
\begin{equation}
\label{eq:intsep2}
    \int\displaylimits_{\mathbb{R}^N} d\Lambda \, (\,\cdot\,) = N \int\displaylimits_{\mathbb{R}} d\hat{\lambda} \int\displaylimits_{
        \mathbb{R}_{\geq \hat{\lambda}}^{N-1}
        } d\widehat{\Lambda} \, (\,\cdot\,).
\end{equation}
Both \cref{eq:intsep1,eq:intsep2} become useful and interchangeable at the saddle-point level if one trades between an upper bound constraint for $\hat{\lambda}$ and a lower bound constraint for $\widehat{\Lambda}$.
In particular, \cref{eq:intsep2} allows one to define an induced PDF on the lowest eigenvalue. Namely, applying \cref{eq:intsep2} to the joint PDF of eigenvalues $p(\Lambda) = p(\hat{\lambda},\widehat{\Lambda})$ gives the PDF for the lowest eigenvalue,\footnote{\,For the WD ensembles, with the right large-$N$ edge scaling this would give the Tracy-Widom distribution \cite{Tracy:1992kc}.}
\begin{equation}
\label{eq:twdis}
    \mathcal{F}(\hat{\lambda}) \equiv N \int\displaylimits_{
    \mathbb{R}_{\geq \hat{\lambda}}^{N-1}} d\widehat{\Lambda} \, p(\hat{\lambda},\widehat{\Lambda}).
\end{equation}
Applying this splitting to \cref{eq:Zmom} gives the PDF induced on $\hat{\lambda}$ under the insertion of $Z(\beta)^m$,
\begin{equation}
\label{eq:fbmn}
    \mathcal{F}_\beta^m(\hat{\lambda}) \equiv \frac{N}{\expval{Z(\beta)^m}} \int\displaylimits_{
    \mathbb{R}_{\geq \hat{\lambda}}^{N-1}} d\widehat{\Lambda} \, p(\hat{\lambda},\widehat{\Lambda}) \left( e^{-\beta\hat{\lambda}} + \Tr e^{-\beta \widehat{\Lambda}} \right)^m.
\end{equation}
An illustration of this PDF for finite $N$ is shown in \cref{fig:lowest}.
In consistency with \cref{eq:appzbeta}, we observe that $\mathcal{F}(\hat{\lambda}) \, e^{-m\beta\hat{\lambda}}$ approximates $\mathcal{F}_\beta^m(\hat{\lambda})$ very well as $\beta\to\infty$. However, making such a large-$\beta$ approximation from the onset would be too drastic and not very illuminating. Instead, our strategy will be to take a continuum limit, use saddle-point methods to simplify the integral, and only make large $N$ and $\beta$ at the controlled level of the equations of motion.\footnote{\,Even if we wanted to work with \cref{eq:appzbeta,eq:twdis}, a saddle-point analysis is unavoidable for all practical purposes. The standard definition of $\mathcal{F}$ in terms of Fredholm determinants of the self-reproducing kernel in \cref{eq:cdkernel} becomes an infinite dimensional operator at large $N$ which is extremely hard to construct, even numerically \cite{Johnson:2021zuo}.}

\begin{figure}
    \centering
    \includegraphics[width=0.9\textwidth]{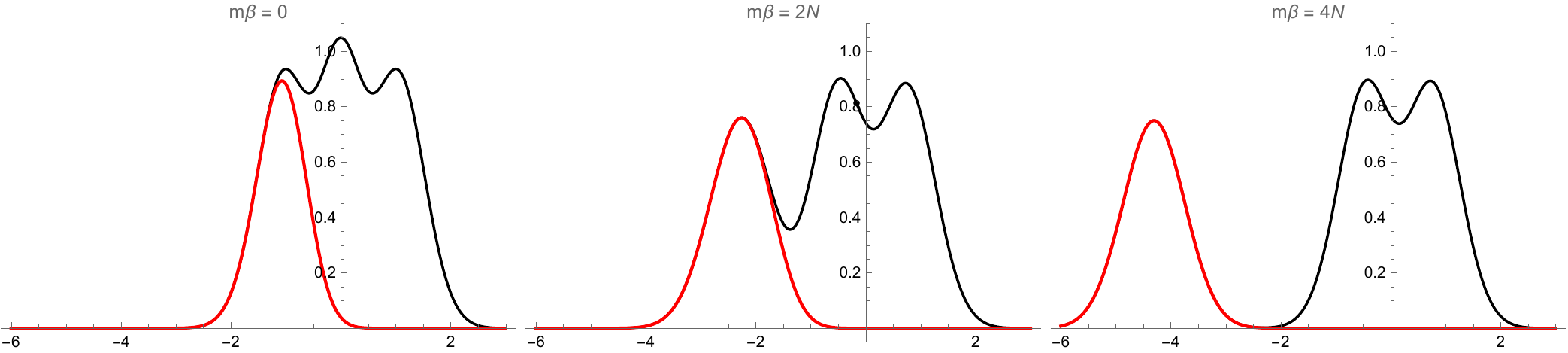}
    \caption{Illustration of $\mathcal{F}_\beta^m$ (red) against $\langle{\rho^m_{\beta}}\rangle$ (black) by direct integration of \cref{eq:specbetam,eq:fbmn}, respectively, for the GUE at $N=3$ (cf. \cref{fig:instanton}). Here $\beta$ is being varied at fixed $m=6$, but the qualitative features are seen to only be sensitive to $m\beta/N$. As desired, the eigenvalue splitting from \cref{eq:intsep2} allows for $\mathcal{F}_\beta^m$ to isolate the PDF of just the statistically lowest eigenvalue within the total spectral density $\langle{\rho^m_{\beta}}\rangle$. At large $\beta$, $\mathcal{F}_\beta^m(\hat{\lambda})$ tends to simply $\mathcal{F}(\hat{\lambda}) \, e^{-m\beta\hat{\lambda}}$, while the other eigenvalues converge to the $N=2$ spectrum.}
    \label{fig:lowest}
\end{figure}

With this strategy in mind, consider breaking up the integral in \cref{eq:momentint} in terms of the decomposition of $\Lambda$ into $\hat{\lambda}$ and $\widehat{\Lambda}$. Firstly, writing out $Z(\beta)$, the total action is
\begin{equation}
\label{eq:imin}
    I_m(\Lambda) = I(\Lambda) - \frac{m}{N} \log( e^{-\beta\hat{\lambda}} + \Tr e^{-\beta \widehat{\Lambda}} ).
\end{equation}
The original action $I$ splits into
\begin{equation}
\label{eq:actiondec}
    I(\Lambda) = I(\widehat{\Lambda}) + V_\upnu(\hat{\lambda}) - \frac{\upbeta}{N} \log | {\det} ( \widehat{\Lambda} - \hat{\lambda} ) |,
\end{equation}
where the interaction is the Vandermonde determinant. Recalling \cref{eq:fpotwi}, we recognize the $\hat{\lambda}$-dependent terms in \cref{eq:actiondec} as precisely giving the effective potential for the $\hat{\lambda}$ eigenvalue,
\begin{equation}
    \widehat{V}(\hat{\lambda}) =  V_\upnu(\hat{\lambda}) - \frac{\upbeta}{N} \log | {\det} ( \widehat{\Lambda} - \hat{\lambda} ) |.
\end{equation}
Putting this altogether in \cref{eq:momentint} and using the integral representation from \cref{eq:intsep2} gives
\begin{equation}
\label{eq:beast}
    \expval{Z(\beta)^m} = \frac{N}{\mathcal{Z}} 
    \int\displaylimits_{\mathbb{R}} d\hat{\lambda} 
    \,
    e^{-N V_\upnu(\hat{\lambda})}
    \int\displaylimits_{
    \mathbb{R}_{\geq \hat{\lambda}}^{N-1}
    } d\widehat{\Lambda} 
    \, e^{- N I(\widehat{\Lambda})}
    \, 
    | {\det} ( \widehat{\Lambda} - \hat{\lambda} ) |^\upbeta
    \left( e^{-\beta\hat{\lambda}} + \Tr e^{-\beta \widehat{\Lambda}} \right)^m.
\end{equation}
This matrix integral is now in a form that is amenable to a continuum treatment. Crucially, in contrast with \cref{eq:momentint}, the distinct behavior of the $\hat{\lambda}$ eigenvalue has now been isolated such that a continuum limit can be taken for the $\widehat{\Lambda}$ integral alone (cf. \cref{fn:unbound}).
The reader may recognize the determinant in \cref{eq:beast} as the first hint that branes will be important in the gravitational avatar of this calculation.

\subsection{Single-Eigenvalue Instanton}
\label{ssec:insti}

As discussed so far, inserting $Z(\beta)^m$ inside the matrix integral and making $\beta\sim O(N)$ causes the lowest eigenvalue $\hat{\lambda}$ to behave differently from the rest. This single eigenvalue behaves in such a way that it dynamically separates from the others at large $\beta$, a phenomenon which we already illustrated for the GUE in \cref{fig:instanton}.
In more general matrix integrals this may be recognized as the characteristic behavior of a single-eigenvalue instanton (see e.g. \cite{Okuyama:2018gfr}). 
The purpose of this section is to work out the large-$N$ instanton that characterizes $\expval{ Z(\beta)^m}$ at large $\beta$.\footnote{We thank Luca Iliesiu for insightful conversations which inspired the study of this low-temperature instanton.}

To tackle the integral in \cref{eq:beast} by saddle-point approximation, it becomes relevant to take a continuum large-$N$ limit first. Although at large $N$ the distinction between $N$ and $N-1$ becomes immaterial to leading order, it will be useful to distinguish such terms in order to more reliably connect results back to finite $N$ later on. Following \cref{ssec:contl}, in the continuum the integral over $N-1$ eigenvalues $\widehat{\Lambda}$ bounded from below by $\hat{\lambda}$ becomes a functional integral over spectral densities $(N-1) \int \mathcal{D}\hat{\sigma}$ where $\hat{\sigma}$ is unit-normalized and restricted to have $\supp\hat{\sigma} \subseteq [\hat{\lambda},\infty)$. Correspondingly, the action $I(\widehat{\Lambda})$ becomes a functional $(N-1) \widehat{I}[\hat{\sigma}]$ where the summations over $N-1$ eigenvalues are replaced by integrals $(N-1) \int_{\mathbb{R}} dx \, \hat{\sigma}(x)$. The upshot is\footnote{\,This is an $(N-1)$-eigenvalue action obtained as part of an originally $N$-eigenvalue integral, hence the ratio in front of the term that is nonlinear in $\hat{\sigma}$ relative to the expression for $\widehat{I}[\hat{\rho}]$ in \cref{eq:contIm}.}
\begin{equation}
\label{eq:contIm1}
    \widehat{I}[\hat{\sigma}] = \int\displaylimits_{\mathbb{R}} dx \, \hat{\sigma}(x) \, V_\upnu(x) - \frac{N-1}{N} \frac{\upbeta}{2} \int\displaylimits_{\mathbb{R}} dx \, dy \, \hat{\sigma}(x) \hat{\sigma}(y) \log|x-y|,
\end{equation}
where we could have equivalently written the integrals over just $[\hat{\lambda},\infty)$ given the constrained support of $\hat{\sigma}$. 
By \cref{eq:effV}, functionally varying $\widehat{I}$ in $\hat{\sigma}$ gives the effective potential
\begin{equation}
\label{eq:effVv}
    \widehat{V}[\hat{\sigma};x] = V_\upnu(x) - \frac{N-1}{N} \upbeta \int\displaylimits_{\mathbb{R}} dy \, \hat{\sigma}(y) \log|x-y|.
\end{equation}
As for the determinant in \cref{eq:beast}, using the identity $\log\det M = \Tr\log M$, in the continuum
\begin{equation}
\label{eq:contdet}
    | {\det} ( \widehat{\Lambda} - \hat{\lambda} ) |^\upbeta \quad\longrightarrow\quad \Psi[\hat{\sigma};\hat{\lambda}] \equiv \exp[ (N-1) \, \upbeta \int\displaylimits_{\mathbb{R}} dx \, \hat{\sigma}(x) \log | x - \hat{\lambda} |].
\end{equation}
which can also be written in terms of the effective potential in \cref{eq:effVv} as
\begin{equation}
\label{eq:deteff}
    \Psi[\hat{\sigma};x] \, e^{- N V_\upnu(x)}  = e^{- N \widehat{V}[\hat{\sigma};x]}.
\end{equation}
The choice of notation here is suggestive of the brane operators of \cite{Saad:2019lba}, which correspond to the full left-hand side of \cref{eq:deteff}.
Finally, in the continuum $Z(\beta)$ as written in \cref{eq:beast} becomes
\begin{equation}
\label{eq:Zcont}
    Z(\beta) \quad\longrightarrow\quad Z[\hat{\sigma};\hat{\lambda}] \equiv e^{-\beta\hat{\lambda}} + (N-1) \int\displaylimits_{\mathbb{R}} dx \, \hat{\sigma}(x) \, e^{-\beta x},
\end{equation}
where the $\beta$-dependence is left implicit. This continuum limit of $Z$ should be contrasted with the na\"ive one in \cref{eq:naive}.
The complete continuum form of \cref{eq:beast} thus reads
\begin{equation}
\label{eq:contzbm}
    \expval{ Z(\beta)^m} = \frac{N(N-1)}{\mathcal{Z}} 
    \int\displaylimits_{\mathbb{R}} d\hat{\lambda} 
    \,
    e^{-N V_\upnu(\hat{\lambda})}
    \int\displaylimits \mathcal{D}\hat{\sigma} 
    \, e^{- N(N-1) \widehat{I}[\hat{\sigma}]}
    \, 
    \Psi[\hat{\sigma};\hat{\lambda}]
    \,
    Z[\hat{\sigma};\hat{\lambda}]^m.
\end{equation}
In the large-$N$ limit, note how the action term above scales as $N^2$, whereas the $\Psi$ insertion only goes as $N$, as can be seen from \cref{eq:contdet}. This means that, to leading order, the saddle point $\hat{\sigma}=\hat{\sigma}_*$ that dominates the functional integral in \cref{eq:contzbm} is insensitive to $\Psi$. Intuitively, the determinant insertion of a single eigenvalue that $\Psi$ encodes has a subleading effect on the dynamics of the $O(N)$ eigenvalues that participate in $\widehat{I}$; the effect of the latter on the former will however not be negligible. 
Additionally, from the discussion below \cref{eq:zbetasplit} we know that $Z$ is dominated at large $\beta$ by the lowest eigenvalue $\hat{\lambda}$ alone with $\hat{\sigma}$ only contributing exponentially suppressed corrections, so $\hat{\sigma}_*$ will also be insensitive to $\beta$ at leading order.
Altogether, the extremization problem that defines $\hat{\sigma}_*$ does not depend on $\hat{\lambda}$ at all to leading order at large $N$ and $\beta$, in perfect consistency with the examples in \cref{fig:instanton,fig:lowest}.

Extremizing \cref{eq:contzbm} thus amounts to applying the equilibrium condition, \cref{eq:deffV0}, to the effective potential in \cref{eq:effVv}. As a result, the saddle point $\hat{\sigma}_*$ corresponds to the unit-normalized measure that solves the singular integral equation
\begin{equation}
\label{eq:singint}
    \fint\displaylimits_{\mathbb{R}} dy \, \frac{\hat{\sigma}_*(y)}{x-y} = \frac{N}{N-1} \frac{V_\upnu'(x)}{\upbeta}, \qquad \forall x\in\supp\hat{\sigma}_*.
\end{equation}
Just like \cref{eq:expcond}, this equation can be solved exactly for $\hat{\sigma}_*$, with the general solution taking a form analogous to \cref{eq:rho0main}.
When approximating the $\hat{\sigma}$ integral in \cref{eq:contzbm} by its saddle point $\hat{\sigma}_*$, consistency with the unit-normalization of the expectation value at $m=0$ requires to also evaluate $\mathcal{Z}$ on the same saddle. The saddle-point approximation $\mathcal{Z}_*$ for $\mathcal{Z}$ reads
\begin{equation}
\label{eq:Zstar}
    \mathcal{Z}_* \equiv N(N-1) e^{- N(N-1) \widehat{I}[\hat{\sigma}_*]} \, \widehat{\mathcal{Z}}_*, \qquad \widehat{\mathcal{Z}}_* \equiv
    \int\displaylimits_{-\infty}^{\hat{\lambda}_0} d\hat{\lambda} \, \Psi[\hat{\sigma}_*;\hat{\lambda}] \, e^{- N V_\upnu(\hat{\lambda})} = \int\displaylimits_{-\infty}^{\hat{\lambda}_0} d\hat{\lambda} \, e^{- N \widehat{V}_*(\hat{\lambda})},
\end{equation}
where we have used \cref{eq:deteff}, the notation $V_*$ refers to the leading effective potential with $\hat{\sigma} = \hat{\sigma}_*$ (cf. \cref{eq:leadeff}), and we have defined the lower edge of the leading spectrum
\begin{equation}
\label{eq:groundinf}
    \hat{\lambda}_0 \equiv \inf\supp\hat{\sigma}_*.
\end{equation}
The new constant $\widehat{\mathcal{Z}}_*$ is convenient because after evaluating the saddle-point approximation for \cref{eq:contzbm}, all other constants cancel out and one is left with
\begin{equation}
\label{eq:semilarge}
    \expval{ Z(\beta)^m}_*
    =
    \frac{1}{\widehat{\mathcal{Z}}_*} 
    \int\displaylimits_{-\infty}^{\hat{\lambda}_0} d\hat{\lambda} 
    \,
    e^{- N \widehat{V}_*(\hat{\lambda})}
    \,
    Z[\hat{\sigma}_*;\hat{\lambda}]^m.
\end{equation}
Note that $\expval{ Z(\beta)^m}_*$ is indeed consistently normalized to unity for $m=0$. The integrand above provides a large-$N$ approximation to the lowest eigenvalue PDF we encountered in \cref{eq:fbmn},
\begin{equation}
\label{eq:saddledist}
    \mathcal{F}_\beta^m{}_*(\hat{\lambda}) = \frac{1}{\widehat{\mathcal{Z}}_* \expval{ Z(\beta)^m}_*} \, e^{- N \widehat{V}_*(\hat{\lambda})} Z[\hat{\sigma}_*;\hat{\lambda}]^m, \qquad \hat{\lambda}\leq\hat{\lambda}_0.
\end{equation}
This PDF is illustrated for the GUE in \cref{fig:halfhalf}.
Physically, \cref{eq:saddledist} captures the statistics of the ground-state energy $\hat{\lambda}$ in the presence of a classical thermal gas of $O(N)$ higher energies that localize in the $\hat{\sigma}_*$ saddle. 
Importantly, $\hat{\lambda}$ must by construction lie below the lower edge of $\hat{\sigma}_*$, sometimes referred to as the classically forbidden region. However, we emphasize that there is nothing non-classical or forbidden about this instanton: it is a leading saddle (i.e., classical), and its location is perfectly allowed due to the large-$\beta$ back-reaction that $Z(\beta)$ sources.

\begin{figure}
    \centering
    \includegraphics[width=0.9\textwidth]{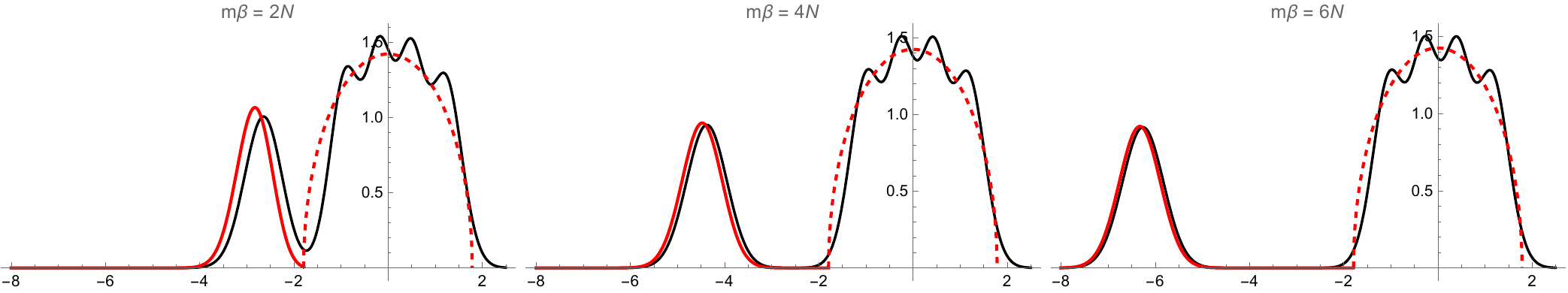}
    \caption{
    Illustration of ${\mathcal{F}_\beta^m}{}_*$ (red, solid) given by \cref{eq:saddledist} evaluated on the continuous $\hat{\sigma}_*$ saddle that solves \cref{eq:singint} for the GUE at $N=5$ (dashed). Together, ${\mathcal{F}_\beta^m}{}_*(x) + (N-1) \hat{\sigma}_*(x)$ is an approximation to $\langle{\rho_\beta^m}(x)\rangle$ in which the lowest eigenvalue is only sensitive to the saddle form of the higher eigenvalues. 
    The black curve corresponds to the exact result for $\langle{\rho_\beta^m}\rangle$ using \cref{eq:fbmn} (cf. \cref{fig:instanton}), and which also captures ${\mathcal{F}_\beta^m}(x)$ from \cref{eq:fbmn} exactly (cf. \cref{fig:lowest}). Here $m=1$, but once again, qualitative features mostly depend on $m\beta/N$. Already at such small $N=5$, we see that ${\mathcal{F}_\beta^m}{}_*$ rapidly converges to ${\mathcal{F}_\beta^m}$ at large $\beta$, even though $\hat{\sigma}_*$ is a very coarse approximation to the $N=4$ spectral density that the higher eigenvalues asymptote to.
    }
    \label{fig:halfhalf}
\end{figure}

If the result for the integrand in \cref{eq:semilarge} is sufficiently simple, one could now attempt to directly evaluate the integral over $\hat{\lambda}$ to obtain a final answer for $\expval{ Z(\beta)^m}_*$. However, since \cref{eq:semilarge} itself already is the result of a large-$N$ saddle-point approximation, it is in fact more natural to also approximate the $\hat{\lambda}$ integral by its saddle-point value. Indeed, this is a very reasonable approximation at large $N$, since the associated PDF for the lowest eigenvalue in \cref{eq:saddledist} clearly has a characteristic width which shrinks as $1/N$ at large $N$ (cf. \cref{fig:instanton}). By reducing this PDF to the dynamics of where it peaks, we will be treating the lowest eigenvalue as a point particle and understanding its motion as a function of $\beta$. Recalling that we are in a regime in which $\beta$ and $N$ are simultaneously large, the extremization problem we are after will be sensitive to the presence of the $Z$ factor. Fortunately, from \cref{eq:zbetasplit} we see that $Z$ at large $\beta$ will only depend on $\hat{\lambda}$ up to exponentially suppressed corrections from higher eigenvalues. Hence using the approximation in \cref{eq:appzbeta} and extremizing, the saddle point $\hat{\lambda}=\hat{\lambda}_*$ we are after is the solution to\footnote{\,\label{fn:compsad}We are assuming $\hat{\lambda}_*\in\mathbb{R}$. It is in principle possible for this integral to be dominated by stationary points along a Lefschetz thimble off $\mathbb{R}$. In this case the integral representation in \cref{eq:intsep1} is preferable. The same stationary conditions would apply, but with $\hat{\lambda}_*$ allowed to be complex and lying on the appropriate contour of steepest descent.}
\begin{equation}
    \evalat{\frac{d}{d\hat{\lambda}}}{\hat{\lambda} = \hat{\lambda}_*} e^{- N \widehat{V}_*(\hat{\lambda}) - m \beta \hat{\lambda}} = 0, \qquad \hat{\lambda}_* \leq \hat{\lambda}_0,
\end{equation}
This reduces to a simple equation for the location of $\hat{\lambda}_*$ in terms of the effective potential it feels,
\begin{equation}
\label{eq:effintr}
    \widehat{V}_*'(\hat{\lambda}_*) =  - \frac{m \beta}{N}, \qquad \hat{\lambda}_* \leq \hat{\lambda}_0,
\end{equation}
where recall that, on the support of $\hat{\sigma}_*$, the equilibrium condition from \cref{eq:deffV0} says
\begin{equation}
\label{eq:equieffv}
    \widehat{V}_*'(\hat{\lambda}) = 0, \qquad \forall\hat{\lambda}\in\supp\hat{\sigma}_*.
\end{equation}
Intuitively, this means that the lowest eigenvalue feels a force proportional to $m\beta/N$ pushing it to smaller values away from the rest, which stay at equilibrium on $\supp\hat{\sigma}_*$ feeling no force.
Written out, \cref{eq:effintr} gives the equation of motion for the instanton that we were after:
\begin{equation} 
\label{eq:eigenwan}
    V_\upnu'(\hat{\lambda}_*) +
    \frac{N-1}{N} \, \upbeta \int\displaylimits_{\mathbb{R}} dx \, \frac{\hat{\sigma}_*(x)}{x - \hat{\lambda}_*} = - \frac{m \beta}{N}, \qquad \hat{\lambda}_* \leq \hat{\lambda}_0.
\end{equation}
It is not necessary to write the integral as a principal value now because $\hat{\lambda}_*$ is required to lie off the support of $\hat{\sigma}_*$ anyway.
This general analysis explains the behavior exemplified in \cref{fig:instanton,fig:lowest,fig:halfhalf}, and realizes \cref{eq:betames} as
\begin{equation}
    {\hat{\rho}^m_\beta}{}_*(x) = \delta(x - \hat{\lambda}_*) + (N-1) \, \hat{\sigma}_*(x).
\end{equation}
But recall that off $\supp\hat{\sigma}_*$, the left-hand side of \cref{eq:effintr} is precisely what defines the spectral curve in \cref{eq:speyc}. Hence in terms of $y$, \cref{eq:effintr} reduces to simply\footnote{\,Remember that $\upbeta$ and $\beta$ have nothing to do with each other: the former is the matrix ensemble parameter specifying a symmetry group, whereas the latter is the inverse temperature of the canonical partition function.}
\begin{equation}
\label{eq:specintr}
    y(\hat{\lambda}_*) = \frac{m \beta}{\upbeta N}, \qquad \hat{\lambda}_* \leq \hat{\lambda}_0.
\end{equation}
This final expression for the location of $\hat{\lambda}_*$ is particularly powerful for several reasons. We were after a saddle-point evaluation of \cref{eq:contzbm}, which required us to find saddle points $\hat{\lambda}_*$ and  $\hat{\sigma}_*$ for each integral. The equation of motion we found for $\hat{\sigma}_*$ is \cref{eq:singint}, which at large $N$ describes nothing but the usual leading spectral density of the ensemble and is explicitly solved by \cref{eq:rho0main}. As for $\hat{\lambda}_*$, what we found is the algebraic equation in \cref{eq:specintr}, where $y$ follows directly from analytically continuing $\hat{\sigma}_*$ as in \cref{eq:rhotoy}. All in all, this means that our single-eigenvalue instanton for $\expval{Z(\beta)^m}$ is fully characterized just from knowledge of the leading spectral density of the ensemble.

Very importantly, our final results do not depend on objects like the matrix action and potential, which are not defined in scaling limits. Instead, all we need is access to the leading spectral density of the matrix ensemble, which is well-defined under matrix scaling. This obviously includes matrix ensembles in the DSL, as relevant for gravity. For these, not only are the requisite ingredients well-defined on the matrix side, but also accessible from the gravity side. Indeed, as we elaborate on in \cref{ssec:lessons}, all one needs is to account for a particular type of brane effects that the theory requires.

\subsection{Instanton Action}
\label{ssec:insact}

We would like to finally obtain an expression for \cref{eq:momentint}, the matrix integral that computes moments of $Z(\beta)$. At the saddle-point level, this amounts to evaluating the action of the single-eigenvalue instanton found in \cref{ssec:insti}. This instanton corresponds to the full saddle-point approximation of both $\hat{\sigma}$ and $\hat{\lambda}$ applied to \cref{eq:beast}. Recalling the saddle-point result for $\hat{\sigma}_*$ from \cref{eq:semilarge} and further extremizing in $\hat{\lambda}$ gives
\begin{equation}
\label{eq:finalZ}
    \expval{ Z(\beta)^m}_*
    =
    \frac{
    e^{- N \widehat{V}_*(\hat{\lambda}_*)}
    }{
    e^{- N \widehat{V}_*(\hat{\lambda}_0)}}
    Z[\hat{\sigma}_*;\hat{\lambda}_*]^m,
\end{equation}
where in evaluating $\widehat{\mathcal{Z}}_*$ we have to approximate \cref{eq:Zstar} by a saddle point $\hat{\lambda}_0$ for $\hat{\lambda}$ corresponding to $m=0$. This is the solution $\hat{\lambda}_*=\hat{\lambda}_0$ to \cref{eq:specintr} for $m=0$, which is given by \cref{eq:groundinf} as can be easily seen by comparing \cref{eq:effintr,eq:equieffv}.
For $m=0$ we thus obtain unity, in consistency with the proper normalization of the saddle-point approximation. 
Using now \cref{eq:deteff}, we reduce \cref{eq:finalZ} to the final expression
\begin{equation}
\label{eq:instZbeta}
    \expval{ Z(\beta)^m}_* = e^{- N I_*(m)},
\end{equation}
where we have defined the instanton action,
\begin{equation}
\label{eq:instI}
    I_*(m) \equiv \widehat{V}_*(\hat{\lambda}_*) - \widehat{V}_*(\hat{\lambda}_0) - \frac{m}{N} \log Z[\hat{\sigma}_*;\hat{\lambda}_*].
\end{equation}
Here, recall that $\hat{\sigma}_*$ and $\hat{\lambda}_*$ are the solutions to \cref{eq:singint,eq:effintr}, respectively, and that $\hat{\lambda}_*$ implicitly depends on $m$, with $\hat{\lambda}_*=\hat{\lambda}_0$ for $m=0$. This is our general result for the saddle-point approximation to $\expval{ Z(\beta)^m}$ at large $\beta$ and $N$ for any value of $m$. Written out in full glory,
\begin{equation}
\label{eq:zgenmom}
     \expval{ Z(\beta)^m}_* = e^{-N(\widehat{V}_*(\hat{\lambda}_*) - \widehat{V}_*(\hat{\lambda}_0))}
    \left(
    e^{-\beta\hat{\lambda}_*} + (N-1) \int\displaylimits_{\mathbb{R}} dx \, \hat{\sigma}_*(x) \, e^{-\beta x}
    \right)^m.
\end{equation}
Because the instanton eigenvalue $\hat{\lambda}_*$ depends non-trivially on $m$, this result clearly does not factorize for $m\geq1$ moments. This is consistent with the expectation that factorization in $m$ should only occur at small $\beta$. Although we are primarily interested in large $\beta$, note that our results also capture this asymptotic factorization in the small-$\beta$ regime. Namely, at small $\beta$ the insertion of $Z(\beta)$ does not modify the integral saddle, and indeed we see from \cref{eq:effintr,eq:equieffv} that to leading order the solution for $\hat{\lambda}_*$ recedes to the spectral edge at $\hat{\lambda} = \hat{\lambda}_0$. Hence \cref{eq:zgenmom} reduces to
\begin{equation}
\label{eq:forbapp}
    \expval{ Z(\beta)^m}_* \approx 
    \left(
    e^{-\beta\hat{\lambda}_0} + (N-1) \int\displaylimits_{\mathbb{R}} dx \, \hat{\sigma}_*(x) \, e^{-\beta x}
    \right)^m \approx \left(
    N \int\displaylimits_{\mathbb{R}} dx \, \hat{\sigma}_*(x) \, e^{-\beta x}
    \right)^m, \qquad \beta\ll N
\end{equation}
where in the last approximation we used large $N$. Not only do we see factorization at small $\beta$, but also consistency with the finite-$\beta$ result that one can easily compute for $m=1$. This is the simple case in which the matrix integral only depends on the average spectral density (cf. \cref{eq:anlogexp}). A saddle-point approximation for this case would just give
\begin{equation}
\label{eq:Zm1}
    \langle Z(\beta) \rangle_* = N \int\displaylimits_\mathbb{R} dx \, \hat{\rho}_*(x) \, e^{-\beta x}, 
\end{equation}
with $\hat{\rho}_*$ the solution to \cref{eq:expcond}. Clearly \cref{eq:forbapp} agrees with \cref{eq:Zm1} for $m=1$, up to the separate treatment of the lowest eigenvalue in \cref{eq:forbapp}, or in the last approximation up to the mismatch between $N$ and $N-1$ in the equations for $\hat{\rho}_*$ and $\hat{\sigma}_*$, negligible at large $N$. Namely, note that \cref{eq:eigenwan} to leading order at large $N$ with $\beta$ kept fixed reduces to \cref{eq:expcond}.
If $\beta$ is large, however, the spectral density term in \cref{eq:zgenmom} in fact becomes small against the eigenvalue instanton contributions. In this regime $\hat{\lambda}_*$ also does not necessarily lie close to $\hat{\lambda}_0$, so the large-$\beta$ behavior of \cref{eq:zgenmom} turns out to actually be dominated by
\begin{equation}
\label{eq:mmzlarge}
     \expval{ Z(\beta)^m}_* \approx  e^{-m\beta\hat{\lambda}_* -N(\widehat{V}_*(\hat{\lambda}_*) - \widehat{V}_*(\hat{\lambda}_0))}.
\end{equation}
Clearly, this behavior would be completely missed were the eigenvalue instanton to be ignored. In particular, this has implications for any moment $m$, and thus not only for the quenched logarithm, but also for the annealed one, as we discuss in \cref{ssec:quni}.

We emphasize that the large-$N$ result for $\expval{ Z(\beta)^m}_*$ obtained here applies in full generality to any matrix integral with an arbitrary matrix potential and eigenvalue measure of the general form posed in \cref{eq:genjac}. Furthermore, notice that \cref{eq:instI} only depends on quantities like the effective potential, which are well-defined in the DSL ensembles relevant for gravity.

As explained in \cref{ssec:doubscal}, for DSL integrals the typical starting point is the leading spectral density $\hat{\rho}_*$ itself. 
We already pointed out at the end of \cref{ssec:insti} that $\hat{\rho}_*$ is actually all one needs to define the equation of motion for $\hat{\lambda}_*$, given that $\hat{\rho}_*$ uniquely determines the spectral curve. Furthermore, for the same reason, we see that $\hat{\rho}_*$ itself also defines the effective potential through \cref{eq:speyc} up to an integration constant. This constant is irrelevant, as it is cancelled out by the difference of effective potentials in \cref{eq:instI}. It thus follows that everything that is needed to define and evaluate the instanton action is $\hat{\rho}_*$, which can be taken for granted in DSL ensembles.

Although $\hat{\rho}_*$ in the DSL is generally non-normalizable due to the scaling, the correct identification with $\hat{\sigma}_*$ is unambiguous. Namely, the relevant DSL spectral density is to be obtained by scaling a normalized spectral density, thus giving a $\hat{\rho}_*$ that is non-extensive in the DSL parameter $e^{S_0}$ (cf. the left-hand side of \cref{eq:noscal}). As a result, \cref{eq:zgenmom} in the DSL becomes simply
\begin{equation}
\label{eq:zgendsl}
     \expval{ Z(\beta)^m}_* = e^{-e^{S_0} \widehat{V}_*(\hat{\lambda}_*)}
    \left(
    e^{-\beta\hat{\lambda}_*} + e^{S_0} \int\displaylimits_{\mathbb{R}} dx \, \hat{\rho}_*(x) \, e^{-\beta x}
    \right)^m,
\end{equation}
where we have used the convention that $\widehat{V}_*(0) = 0$ at the leading spectral edge.

\subsection{Quenched Universality}
\label{ssec:quni}

The crucial question that remains is whether our results for $\expval{Z(\beta)^m}_*$ succeed in implementing the replica trick in \cref{eq:reptrick}, which takes the subtle $m\to0$ limit. In other words, we still have to understand whether working with just a saddle-point approximation for simultaneously large $N$ and $\beta$ suffices to restore the non-negativity of the quenched entropy, \cref{eq:quenS}, against the arbitrary negativity of the annealed one, \cref{eq:annS}.

All of our results follow from a large-$N$ treatment while accounting for a $\beta\sim O(N)$ scaling, but no assumption was ever made about $m$. In this paper $m$ started life as a non-negative integer determining the moment $m$ of $Z(\beta)$ under consideration. But in fact, notice that beginning already in \cref{eq:Zmom}, our equations and limits all make sense for any real $m\geq0$.
Our analytic continuation of $m$ from integers to reals is thus unambiguously unique and well-defined from the onset.\footnote{\,
This surmounts the difficulties that arise in \cite{Engelhardt:2020qpv,Janssen:2021stl}: uniqueness of our continuation is unaffected by whether or not the moments of $Z(\beta)$ satisfy Carleman's condition, and the replica trick can be unambiguously applied.}
A controlled way of understanding what happens to our results as $m\to0$ is to study the equations that govern our saddle points in this limit (cf. the approach of \cite{Chandrasekaran:2022asa} in semiclassical gravity). 

The pertinent equations of motion are \cref{eq:singint,eq:specintr}, where clearly only the latter depends on $m$. Actually, both $m$ and $\beta$ appear just in \cref{eq:specintr}, and only through the effective parameter $m\beta$.
It would thus seem that the large-$\beta$ regime we are interested in is in direct conflict with the $m\to0$ replica limit (cf. the discussion at the end of \cref{ssec:qexp}). In fact, every large-$\beta$ approximation we made was perfectly consistent with $m$ being arbitrarily small. In addition, the replica trick crucially takes $m\to0$ at fixed $\beta$, so there is no ambiguity in the order of limits. Namely, $m$ becomes arbitrarily small in the replica trick, so the effective parameter $m\beta$ must be treated as small, even at large $\beta$.\footnote{\,\label{fn:mbeff}While we can take $m\beta\geq0$ all the way to zero, it is still important that in our treatment we identified a modified saddle point that involves large $\beta$. In other words, because our interest is in the large-$\beta$ regime, small $m\beta$ generally just means that $m$ is much smaller than $1/\beta$ or, equivalently, $m\beta\ll1\ll\beta\sim O(N)$.} This observation has far-reaching consequences which we elaborate on in \cref{ssec:lessons}.

As $m\to0$, we expect the saddle $\hat{\lambda}_*$ to return to the lower edge of the spectrum. Indeed, the solution to \cref{eq:specintr} in this limit is $\hat{\lambda}_* = \hat{\lambda}_0$ as can be easily seen by comparing \cref{eq:effintr,eq:equieffv}. By continuity of the spectral curve away from $\supp\hat{\sigma}_*$, for small $m\gtrsim0$ we thus expect $\hat{\lambda}_*$ to be in the neighborhood of $\hat{\lambda}_0$. The universality discussion in \cref{sssec:univ} implies that the near-edge region of the spectral curve of any matrix integral is solely determined by the ensemble measure, and independent of any choice of matrix potential. This already leads us to expect that the behavior of our large-$\beta$ saddle in the replica trick will be universal, and thus yield a universal result for the quenched logarithm. Importantly, this in particular means that the kind of non-perturbative details in matrix integrals that gravity does not have access to will be completely irrelevant in our leading-order results.

Applying the replica trick in \cref{eq:reptrick} to \cref{eq:instZbeta},
\begin{equation}
\label{eq:zbtrick}
    \left\langle \log Z(\beta) \right\rangle_* = \lim_{m\to0} \frac{d}{dm} \expval{ Z(\beta)^m}_* = -N \lim_{m\to0} \frac{d I_*(m)}{dm},
\end{equation}
where we have used that $I_*(0)=0$. The evaluation of the total derivative in $m$ is particularly simple if one recalls that $I_*$ is precisely a stationary point of the matrix action in $\hat{\lambda}$. Hence any implicit dependence on $m$ coming from the stationary value $\hat{\lambda} = \hat{\lambda}_*$ will give zero, implying that
\begin{equation}
    \frac{d I_*(m)}{dm} = \frac{\partial I_*(m)}{\partial m}.
\end{equation}
Since the only explicit dependence of $I_*(m)$ on $m$ is through the prefactor of the logarithm, the result for the quenched logarithm in \cref{eq:zbtrick} is simply
\begin{equation}
    \left\langle \log Z(\beta) \right\rangle_* = \lim_{m\to0} \log Z[\hat{\sigma}_*;\hat{\lambda}_*] = \log Z[\hat{\sigma}_*;\hat{\lambda}_0],
\end{equation}
Explicitly, using \cref{eq:Zcont} our final expression is
\begin{equation}
\label{eq:finqlog}
    \left\langle \log Z(\beta) \right\rangle_* = \log( e^{-\beta\hat{\lambda}_0} + N \int\displaylimits_{\mathbb{R}} dx \, \hat{\sigma}_*(x) \, e^{-\beta x}),
\end{equation}
where recall $\hat{\lambda}_0 = \inf \supp \hat{\sigma}_*$ and $\hat{\sigma}_*$ is the leading spectral density of the matrix ensemble. If normalizable, $\hat{\sigma}_*$ is unit-normalized, but if not (as in the DSL), $\hat{\sigma}_*$ is a spectral density with no scaling in the scale parameter $e^{S_0}$ (cf. the discussion around \cref{eq:zgendsl}). In extreme regimes of scaling $\beta$ against $N$, this quenched result behaves as\footnote{\,The extreme $\beta\gg N$ result realizes the expectations of \cite{Okuyama:2020mhl}, but using the replica trick.}
\begin{equation}
\label{eq:qlogcases}
    \left\langle \log Z(\beta) \right\rangle_* \approx 
    \begin{cases}
        \displaystyle \log N + \log\int\displaylimits_{\mathbb{R}} dx \, \hat{\sigma}_*(x) \, e^{-\beta x}, \qquad &\beta\ll N, \\
        \displaystyle -\beta\hat{\lambda}_0 + N \int\displaylimits_{\mathbb{R}} dx \, \hat{\sigma}_*(x) \, e^{-\beta (x-\hat{\lambda}_0)}
        , \qquad  &\beta \gg N.
    \end{cases}
\end{equation}
Note that the last integral is a Laplace transform for a canonically shifted spectral density, and thus tends to zero in the large-$\beta$ limit by the results of \cref{sec:fact}.
These quenched result should be contrasted with the annealed one, which corresponds to taking the logarithm of the $m=1$ moment. This case corresponds to just \cref{eq:zgenmom} with $m=1$.
At small or finite $\beta$, we already argued around \cref{eq:forbapp} that this approximately reduces to
\begin{equation}
\label{eq:wignertype}
     \log\left\langle Z(\beta) \right\rangle_* \approx \log N + \log \int\displaylimits_{\mathbb{R}} dx \, \hat{\sigma}_*(x) \, e^{-\beta x}, \qquad \beta\ll N,
\end{equation}
thereby matching the small-$\beta$ behavior of the quenched logarithm in \cref{eq:qlogcases}.
If $\beta$ is large, however, this term becomes small relative to the eigenvalue instanton, leading to
\begin{equation}
\label{eq:m1zlarge}
     \log \left\langle Z(\beta) \right\rangle_* \approx  -\beta\hat{\lambda}_* -N(\widehat{V}_*(\hat{\lambda}_*) - \widehat{V}_*(\hat{\lambda}_0)), \qquad \beta\gg N,
\end{equation}
which is drastically different from the quenched result in \cref{eq:qlogcases}, as will more explicitly be seen in concrete examples below.
This large-$\beta$ behavior of $\expval{Z(\beta)}$ also differs significantly from what one would obtain by solving the large-$N$ matrix integral without scaling $\beta$, which would just give \cref{eq:wignertype} as well. So we observe that our large-$\beta$ eigenvalue instanton is not only necessary to obtain the quenched logarithm, but it also qualitatively modifies the behavior of $\expval{Z(\beta)}$ from \cref{eq:wignertype} to \cref{eq:m1zlarge} at large $\beta$. Intuitively, this is because our instanton captures statistics of the tail of eigenvalues in the classically forbidden region, which always matter at large $\beta$.

It only remains to apply our results for the logarithm of $Z(\beta)$ to the calculation of entropies. The extreme $\beta\gg N$ is of particular interest, given the expectation that the quenched entropy must remain non-negative and that the annealed entropy must become arbitrarily negative.
For the quenched entropy, applying \cref{eq:quenS} to \cref{eq:finqlog},
\begin{equation}
    S_q(\beta) \approx N \int\displaylimits_{\mathbb{R}} dx \, \hat{\sigma}_*(x) \, e^{-\beta (x-\hat{\lambda}_0)}, \qquad \beta\gg N.
\end{equation}
As mentioned below \cref{eq:qlogcases}, this term goes to zero as $\beta\to\infty$, giving a vanishing quenched entropy in the extremal limit. This is precisely the expected result for the non-degenerate eigenvalues that the leading spectral density captures, and demonstrates the success of our saddle-point analysis in implementing the replica trick for the quenched entropy. For the annealed entropy, applying \cref{eq:annS} to \cref{eq:m1zlarge},
\begin{equation}
    S_a(\beta) \approx  -N(\widehat{V}_*(\hat{\lambda}_*) - \widehat{V}_*(\hat{\lambda}_0))
    , \qquad \beta\gg N.
\end{equation}
For well-defined matrix integrals, $\widehat{V}_*$ generally grows unbounded off $\hat{\sigma}_*$, and so does the magnitude of its gradient. Hence it follows from \cref{eq:effintr} that $S_a(\beta)\to -\infty$ as $\beta\to\infty$, as expected.
Explicit examples of matrix integral calculations of quenched and annealed entropies using our saddle-point analysis are given in \cref{sec:egs}.

\subsection{Lessons for Gravity}
\label{ssec:lessons}

In connection with JT gravity, \cite{Saad:2019lba} interpreted single-eigenvalue instantons as the analogue of Liouville theory ZZ branes \cite{Zamolodchikov:2001ah}, and determinants like that in \cref{eq:contdet} as FZZT branes \cite{Fateev:2000ik,Teschner:2000md}. Our single-eigenvalue instantons require a reconciliation of these two notions of brane effects. The key point is that the determinant in \cref{eq:beast} is not inserted here as a probe with eigenvalue $\hat{\lambda}$ fixed by hand (unlike FZZT branes), but arises naturally inside the integral with $\hat{\lambda}$ actually being dynamical (like ZZ branes). The gravitational dual of our single-eigenvalue instantons should thus correspond to branes with the energy eigenbrane boundary conditions of \cite{Blommaert:2019wfy,Goel:2020yxl}, but treated dynamically like the end-of-the-world (EoW) branes of \cite{Gao:2021uro}.

As understood in \cref{ssec:split}, what gives rise to these branes is the large-$\beta$ back-reaction of $Z(\beta)$ itself on the integral (with no additional probes), and their effect alone dominates $\expval{Z(\beta)^m}$ at sufficiently large $\beta$ to leading order.
In other words, the brane is not a non-perturbative object we add by hand to complete the theory and probe $\expval{Z(\beta)^m}$; the brane is already part of the theory and $\expval{Z(\beta)^m}$ does not even make sense at large $\beta$ without it.
This narrative translates directly to the gravity side, and says that EoW eigenbranes are necessarily part of any quantum gravity theory with a matrix dual. Once again, this is not a statement about optional non-perturbative corrections; it is a leading-order effect without which the theory does not make sense.\footnote{\,In gravity, branes are often associated to tiny doubly non-perturbative corrections of $O(e^{-e^{S_0}})$ which are negligible semiclassically \cite{Saad:2019lba}. In our regime of interest, $\beta\sim e^{S_0}$ enhances these effects to the point that they become dominant to leading order at large $e^{S_0}$. For instance, branes are responsible for turning a negatively divergent annealed entropy into a non-negative quenched entropy, which is not a tiny effect.} The inclusion of such branes would not supplement the theory with a fixed discrete spectrum, much like there is no unique discrete spectrum in a matrix integral. Rather, their inclusion is needed for consistency with a random discrete spectrum and an ensemble description.

A gravity calculation of $\expval{Z(\beta)^m}$ would correspond to the usual gravitational path integral $\mathcal{P}(Z(\beta)^m)$ where one inserts $m$ boundaries all with $Z(\beta)$ boundary conditions. The inclusion of branes in the theory gives the option for geometries to end not only on $Z(\beta)$ boundaries, but also on EoW eigenbranes. It is well-understood that the contribution of fully-connected $m$-boundary wormholes to $\mathcal{P}(Z(\beta)^m)$ becomes important at large $\beta$, but that such geometries are off-shell and do not admit a semiclassical treatment \cite{Saad:2019lba,Engelhardt:2020qpv}. In fact, not even the off-shell contributions make sense at sufficiently large $\beta$, since the topological expansion breaks down when $\beta$ becomes comparable to $e^{S_0}$.\footnote{\,The statement that connected correlators dominate can still be made precise in the matrix formalism \cite{Johnson:2020mwi}.} Our findings suggest that a new semiclassical regime arises for $\beta\sim O(e^{S_0})$, effectively capturing a resummation of the full topological expansion to all genus. In particular, the new large-$\beta$ saddle point to leading order at large $e^{S_0}$ should be matched on the gravity side by a semiclassical geometry that dominates the gravitational path integral. Of course, such a geometry must necessarily involve eigenbrane boundaries, the preferred energy of which has to be dynamically determined by extremizing the action.\footnote{\,Thinking of the spectral curve as the target space on which branes propagate, \cref{eq:specintr} says how $\beta$ modulates their classical location. This is close in spirit to the Kodaira-Spencer formulation of JT gravity by \cite{Post:2022dfi,Altland:2022xqx}.} For the $m=1$ case that more directly descends from the near-horizon geometry of a near-extremal black hole, the dominance of a brane configuration would correspond to the black hole throat ending at a finite distance, rather than becoming infinitely deep as $\beta\to\infty$.

The emergence of an effective gravitational description in terms of branes when some other perturbative topological expansion breaks down is certainly not unheard of in string theory \cite{Polchinski:1995mt}. An example which beautifully illustrates this in a duality between gravity and matrix integrals is that of \cite{Drukker:2000rr,Drukker:2005kx,Okuyama:2018aij}.
This setting involves the computation of Wilson loops in $\mathcal{N}=4$ supersymmetric Yang-Mills using AdS/CFT at large $N$. For $1/2$ BPS Wilson loops the boundary calculation can be performed by a GUE matrix integral. Holographically, these can be computed from the world-sheet of a fundamental string anchored to the loop asymptotically. When the Wilson loop is multiply wrapped with winding number $k\geq 1$, the strings may coincide and interact among themselves. 
For $k\sim O(1)$, corrections are technically hard but in principle computable by including higher genus contributions. For $k\sim O(N)$, the topological expansion does not make sense anymore, and the many coincident strings are better described in terms of the dynamics of the D3-brane they end on. The classical action of the D3-brane solution is then seen to match precisely the result of the matrix dual. This is not merely analogous to what we are arguing for: a precise match between the winding number $k$ and our inverse temperature $\beta$ can be made and, as we show in \cref{sec:gwd}, our single-eigenvalue instantons exactly match the D-brane results of \cite{Drukker:2000rr,Drukker:2005kx,Okuyama:2018aij}. This provides a highly non-trivial consistency check of our saddle-point analysis of matrix integrals at large $\beta$, and the expectation of a dual description in terms of semiclassical geometries.

In trying to reproduce our large-$\beta$ results with a purely gravitational calculation, some previous explorations of JT gravity with branes and its matrix dual are useful. The work of \cite{Blommaert:2019wfy,Okuyama:2021eju} studied how $N$ eigenvalues react to probe branes fixing $O(1)$ of them. Instead, we are interested in how a dynamical brane eigenvalue reacts to the other $N-1$, which to leading order at large $\beta$ are fixed. The half-wormholes or correlators between FZZT branes and $Z(\beta)$ operators of \cite{Blommaert:2019wfy,Okuyama:2021eju} would seem to provide the right tool if the eigenbrane energies are treated dynamically rather than fixed by hand. A purely gravitational exploration of our large-$\beta$ instantons in terms of branes is currently underway \cite{intrepid}.

The discussion above applies to $\expval{Z(\beta)^m}$ for any integer $m\geq1$, but the $m\to0$ replica trick for quenching requires extra care. That branes are indispensable to capture the large-$\beta$ behavior of the quenched entropy is already clear.\footnote{\,An approach to computing quenched entropies unrelated to the replica trick but where the relevance of branes is also manifest was pursued within the matrix integral framework by \cite{Okuyama:2021pkf}.} How to make sense of non-integer $m$ in gravitational correlators, however, is not so clear. A natural approach one may be tempted to follow consists of computing moments for integer $m\geq1$, find their limiting behavior at large $\beta$ and, if their $m$-dependence happens to be sufficiently simple, analytically continue them to $m\to0$. As we explain next, even if the analytic continuation of the large-$\beta$ moments is unique (which typically is not the case), this strategy is fundamentally flawed and bound to give wrong results.

The basic reason the above does not work was already anticipated at the end of \cref{ssec:qexp} and in \cref{fn:mbeff}: $\beta\to\infty$ and $m\to0$ are opposing limits, and the order in which they are taken does not commute. This is obvious from the saddle-point equations of motion, where $m$ and $\beta$ only appear in \cref{eq:specintr} combined into $m\beta$. However, it is much obscurer from any explicit expression for the moments of $Z(\beta)$ like \cref{eq:zgenmom}, where $m$ and $\beta$ appear separately and also implicitly in $\hat{\lambda}_*$. In other words, at the level of $\expval{Z(\beta)^m}$ it is unclear how large $\beta$ and small $m$ compete with each other, and what approximations are allowed at large $\beta$ given that the strict $m\to0$ limit must be taken first.
This hidden competition in $\expval{Z(\beta)^m}$ explains the failure of previous attempts in the literature at performing this replica trick even on the matrix side. Indeed, as \cite{Johnson:2020mwi} observed, the pursuit of this strategy leads to a pathological $m\to0$ limit even when the matrix integral is treated fully non-perturbatively. In particular, one finds that the continuation in $m$ of the large-$\beta$ approximation for $\expval{Z(\beta)^m}$ gives a power-law divergence as $m\to0$, in perfect inconsistency with the requirement that $\expval{Z(\beta)^m}\to1$ in the no-replica limit. A ``replica-scaling'' on top of the DSL was proposed by \cite{Johnson:2020mwi} in order to obtain a sensible replica limit; unfortunately, by our general arguments above, we do not expect this to work.

The question thus remains of how to make sense of $\expval{Z(\beta)^m}$ for non-integer $m$ in gravity. A promising direction is that of \cite{Chandrasekaran:2022asa}, where the semiclassical quotient construction of \cite{Lewkowycz:2013nqa} was used to make the analytic continuation in $m$ well-defined. By studying the replica trick at the level of the equations of motion, this gravitational prescription allows for a controlled treatment of the competition between $m$ and $\beta$ just like on the matrix side. As understood here though, it would be essential to include branes in the calculation of \cite{Chandrasekaran:2022asa} in order to capture the large-$\beta$ regime.

Finally, let us comment on an additional subtlety that may arise in gravity in relation with replica symmetry. In gravity, the moments $\expval{Z(\beta)^m}$
correspond to gravitational correlators involving all possible wormholes consistent with boundary conditions, connected and disconnected. On the matrix side it is simple to study these moments directly, without even having to think about breaking them down into more fundamental cumulants $\expval{Z(\beta)^m}_c$. On the gravity side, however, the quantities one naturally computes are connected correlators involving connected wormholes, which correspond precisely to $\expval{Z(\beta)^m}_c$. Hence to implement the replica trick in gravity one has to work harder, and figure out how to combine connected correlators in a way that can be analytically continued in $m$. This raises the question of which $\expval{Z(\beta)^k}_c$ contributions with $1\leq k \leq m$ dominate $\expval{Z(\beta)^m}$, and how to even make sense of this question when $m<1$. This problem was addressed by \cite{Chandrasekaran:2022asa}, where a gravitational ansatz to account for replica symmetry breaking (RSB) was also proposed.

\begin{figure}
    \centering
    \includegraphics[width=0.7\textwidth]{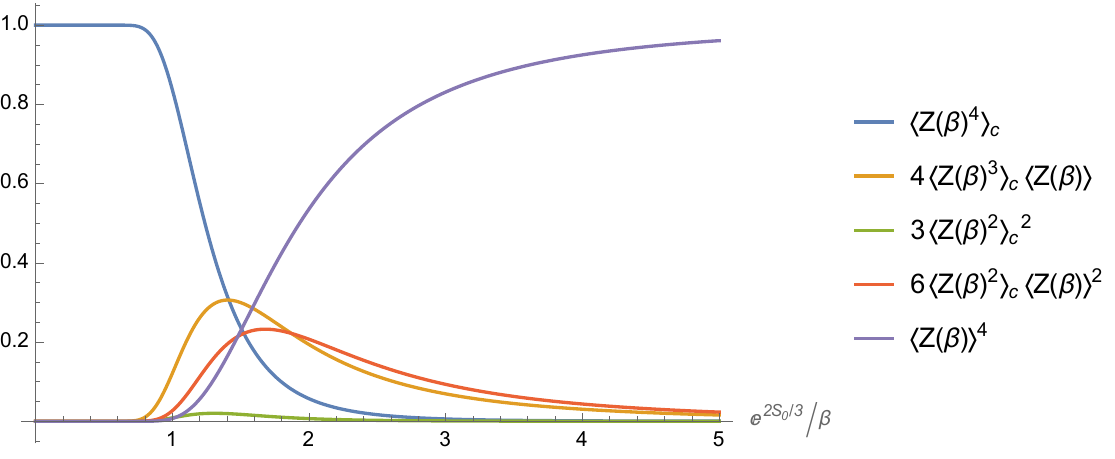}
    \caption{Illustration of the relative contribution to $\expval{Z(\beta)^m}$ from different connected correlator terms involving $\expval{Z(\beta)^k}_c$ for $1\leq k \leq m$. The plot considers $m=4$, for which $\expval{Z(\beta)^4}$ is equal to the sum of all terms appearing on the legend. The figure shows the ratio of each of these terms relative to $\expval{Z(\beta)^4}$ as a function of $\beta$ for the GUE with $e^{S_0}\sim N=10^6$. The small and large $\beta$ limits can be seen to follow \cref{eq:connbeta}. The transition between $\expval{Z(\beta)^4}$ being dominated by the replica-symmetric contributions $\expval{Z(\beta)^4}_c$ (fully connected) and $\expval{Z(\beta)}^4$ (fully disconnected) occurs at $\beta \sim e^{2S_0/3}$. This regime receives contributions of comparable order by partially-connected correlators which necessarily break replica symmetry.}
    \label{fig:cumul}
\end{figure}

Whether or not RSB plays a role in the gravitational implementation of the no-replica trick has remained a puzzle since the work of \cite{Engelhardt:2020qpv}. Our random matrix results seem to answer this in the affirmative: the $m\to0$ limit is fundamentally dominated by RSB configurations. To reach this conclusion, we may follow the strategy in \cite{Okuyama:2019xvg} of decomposing $\expval{Z(\beta)^m}$ into connected contributions $\expval{Z(\beta)^k}_c$ for each $1\leq k \leq m$ as a function of $m$ and $\beta$. The results of this exercise are illustrated in \cref{fig:cumul}.\footnote{\,The relation between moments and cumulants can be worked out from the standard generating function identity $$\left\langle e^{t Z(\beta)} \right\rangle = \exp(\sum_{k=1}^\infty \frac{t^k}{k!} \expval{ Z(\beta)^k}_c).$$} In general, for any integer $m\geq1$, one easily observes the following behavior:
\begin{equation}
\label{eq:connbeta}
    \lim_{\beta\to\infty} \expval{Z(\beta)^m} = \expval{Z(\beta)^m}_c, \qquad \lim_{\beta\to0} \expval{Z(\beta)^m} = \expval{Z(\beta)}^m.
\end{equation}
In gravitational terms, and without branes, this would mean that $\expval{Z(\beta)^m}$ is dominated by a fully connected wormhole at low temperatures, but by a fully disconnected topology at high temperatures. These two topologies are the only ones with only $Z(\beta)$ boundaries which can possibly accommodate full replica symmetry, which means that the transition between the two $\beta$ limits must involve RSB with partially connected wormholes being important (see \cref{fig:cumul}). In the large-$\beta$ regime, \cref{eq:connbeta} would suggest approximating $\expval{Z(\beta)^m}$ by $\expval{Z(\beta)^m}_c$ and analytically continuing to $m\to0$. But as we already explained, the $\beta\to\infty$ and $m\to0$ limits do not commute, and this approximation would indeed lead to pathological replica limit \cite{Okuyama:2020mhl}. On the other hand, were we to approximate $\expval{Z(\beta)^m}$ by $\expval{Z(\beta)}^m$, we would obtain an annealed answer. Since these are the only two replica-symmetric options without branes, an immediate corollary is that RSB is unavoidable in the $m\to0$ limit.
With branes, however, it is plausible that replica symmetry could be restored.
We leave an exploration of RSB and branes in the gravitational implementation of this replica trick to future work \cite{intrepid}.

\section{Explicit Examples}
\label{sec:egs}

We first do the exercise for Gaussian WD and BPS matrix ensembles without scaling, then for the soft edge of the universal Airy limit, and finally for the JT gravity matrix model in the DSL.

\subsection{Gaussian WD}
\label{sec:gwd}

For a WD ensemble, the eigenvalue measure for the random matrix is the general one in \cref{eq:genjac} with $\upnu=0$, and according to \cref{eq:WDH} the Hamiltonian is taken to be the random matrix itself. With a Gaussian potential $V(x)=\frac{1}{2}x^2$, the exact unit-normalized solution to \cref{eq:singint} for the saddle $\hat{\sigma}_*$ is just a Wigner semicircle,
\begin{equation}
\label{eq:leadgwd}
    \hat{\sigma}_*(x) = \frac{N}{N-1} \frac{1}{\pi\upbeta} \sqrt{2\upbeta \left(1-\frac{1}{N}\right) - x^2} = \frac{1}{\pi\upbeta} \sqrt{2\upbeta - x^2} + O(N^{-1}),
\end{equation}
when the radicand is non-negative, and zero otherwise. The corresponding spectral curve off the support of $\hat{\sigma}_*$ is given by
\begin{equation}
    y(x) = \sqrt{x^2 - 2\upbeta \left(1-\frac{1}{N}\right)} = \frac{1}{\upbeta} \sqrt{x^2 - 2\upbeta} + O(N^{-1}) 
\end{equation}
and the solution to the eigenvalue instanton condition, \cref{eq:specintr}, is
\begin{equation}
\label{eq:instl0}
\hat{\lambda}_* = - \sqrt{2\upbeta \left(1-\frac{1}{N}\right) + \left(\frac{m\beta}{N} \right)^2} = - \sqrt{2\upbeta + \left(\frac{m\beta}{N} \right)^2} + O(N^{-1}).
\end{equation}
As it should, note that $\hat{\lambda}_*$ lies below the support of $\hat{\sigma}_*$, and satisfies $\hat{\lambda}_*\to\inf\supp\hat{\sigma}_*$ as $m\to0$. 
Together, \cref{eq:leadgwd,eq:instl0} completely characterize our single-eigenvalue instanton. Dropping pesky factors that are $O(N^{-1})$ and thus negligible at large $N$ anyway, the leading effective potential for the spectral density in \cref{eq:leadgwd} can be written
\begin{equation}
    \widehat{V}_*(x) = \frac{\upbeta}{2}\left( 1 - x \, y(x) + 2 \log( - \frac{x}{\upbeta} - y(x)) \right).
\end{equation}
On the instanton eigenvalue from \cref{eq:instl0} this evaluates to
\begin{equation}
\label{eq:insteffv}
    \widehat{V}_*(\hat{\lambda}_*) = \upbeta \left( \frac{1-\log \upbeta/2}{2} + \kappa \sqrt{1+\kappa^2} - \sinh^{-1} \kappa\right), \qquad \kappa \equiv \frac{m\beta}{\sqrt{2\upbeta} N},
\end{equation}
where $\kappa$ is generally $O(m)$ given that both $N$ and $\beta$ scale together.
It only remains to obtain an expression for \cref{eq:Zcont} on the instanton solution,
\begin{equation}
    Z[\hat{\sigma}_*;\hat{\lambda}_*] = e^{\beta \sqrt{2\upbeta} \sqrt{1+\kappa^2}} + 2 N \frac{ I_1(\sqrt{2\upbeta} \beta)}{\sqrt{2\upbeta} \beta},
\end{equation}
where $I_1$ is the modified Bessel function of the first kind. Evaluating \cref{eq:instI}, upon a convenient rewriting we arrive at the following instanton action
\begin{equation}
\label{eq:exactmi}
    I_*(m) = - \upbeta \left(\kappa \sqrt{1+\kappa^2} + \sinh^{-1} \kappa\right)
    - \frac{m}{N} \log( 1 + 2N e^{-\beta \sqrt{2\upbeta} \sqrt{1+\kappa^2}} \frac{I_1(\sqrt{2\upbeta} \beta)}{\sqrt{2\upbeta} \beta}).
\end{equation}
The final result for \cref{eq:zgenmom} at the saddle-point level is thus
\begin{equation}
\label{eq:gwdzb}
    \expval{ Z(\beta)^m}_* = \left( 1 + 2N e^{-\beta \sqrt{2\upbeta} \sqrt{1+\kappa^2}} \frac{I_1(\sqrt{2\upbeta} \beta)}{\sqrt{2\upbeta} \beta}\right)^m \, \displaystyle e^{ N \upbeta\left(\kappa \sqrt{1+\kappa^2} + \sinh^{-1} \kappa\right)}.
\end{equation}
At large $\beta$ one may further reduce \cref{eq:gwdzb} to
\begin{equation}
\label{eq:fiol}
    \expval{ Z(\beta)}_* = \displaystyle e^{ N F(\kappa)}, \qquad F(\kappa) \equiv \upbeta \kappa \sqrt{1+\kappa^2} + \upbeta \sinh^{-1} \kappa. 
\end{equation}
For the GUE case ($\upbeta=2$) and $m=1$, this provides a remarkable consistency check of our formalism: \cref{eq:fiol} identically matches the results of \cite{Drukker:2000rr,Drukker:2005kx,Okuyama:2018aij} for $1/2$ BPS Wilson loops of $\mathcal{N}=4$ supersymmetric Yang-Mills at large $N$, 't Hooft coupling $\lambda$, and winding number $k\sim O(N)$, upon the identification $\beta=k\sqrt{\lambda}/2$. 
Here $F$ was obtained from the action of a single-eigenvalue instanton that arises at $\beta\sim O(N)$. In \cite{Drukker:2000rr,Drukker:2005kx,Okuyama:2018aij}, $F$ is the on-shell action of the D3-brane holographically dual to multiply wrapped Wilson loops at winding number $k\sim O(N)$. This is strongly suggestive that our large-$\beta$ instantons may generally be describable gravitationally in terms of dynamical branes. Furthermore, since \cref{eq:gwdzb} is also valid for $\beta\sim O(1)$, our results provide an interpolation between the effective brane description that arises at $k\sim O(N)$, and the worldsheet description in terms of weakly interacting fundamental strings that is natural for $k\sim O(1)$.

We would like to use \cref{eq:gwdzb} to compare quenched and annealed entropies.
For the former, using the replica trick from \cref{eq:reptrick} leads to a result of the universal form in \cref{eq:finqlog}, which for this Gaussian WD matrix integral reads
\begin{equation}
\label{eq:fiollog}
\begin{aligned}
    \expval{\log Z(\beta)}_* &=
    \log\left[ e^{\sqrt{2 \upbeta } \beta } + N \frac{2 I_1\left(\sqrt{2 \upbeta } \beta \right)}{\sqrt{2 \upbeta } \beta} \right] \\
    &\approx
    \begin{cases}
        \log N + \frac{\sqrt{2 \upbeta }}{N} \beta+\frac{\upbeta}{4} \beta^2, \qquad &\beta \ll N,\\
        \frac{\sqrt{2 \upbeta }}{2 N} \beta + \frac{2 N}{\sqrt{2 \pi } (2 \upbeta )^{3/4}} \beta^{-3/2}, \qquad &\beta \gg N.
    \end{cases}
\end{aligned}
\end{equation}
Using \cref{eq:quenS}, the extreme $\beta$ limits of the quenched entropy for this matrix integral are
\begin{equation}
\label{eq:sqguelims}
    S_q(\beta) \approx
    \begin{cases}
        \log N - \frac{\upbeta}{4} \beta^2,
        \qquad &\beta \ll N,\\
        \frac{5 N}{\sqrt{2 \pi } (2 \upbeta )^{3/4}} \beta^{-3/2}, \qquad &\beta \gg N.
    \end{cases}
\end{equation}
Clearly, at small $\beta$ we recover the expected maximal entropy for a system with $N$ degrees of freedom, whereas at large $\beta$ we obtain a consistently non-negative result that goes to zero in consistency with having a non-degenerate lowest eigenvalue. See \cref{fig:sngue} for an illustration.

\begin{figure}
    \centering
    \includegraphics[width=0.5\textwidth]{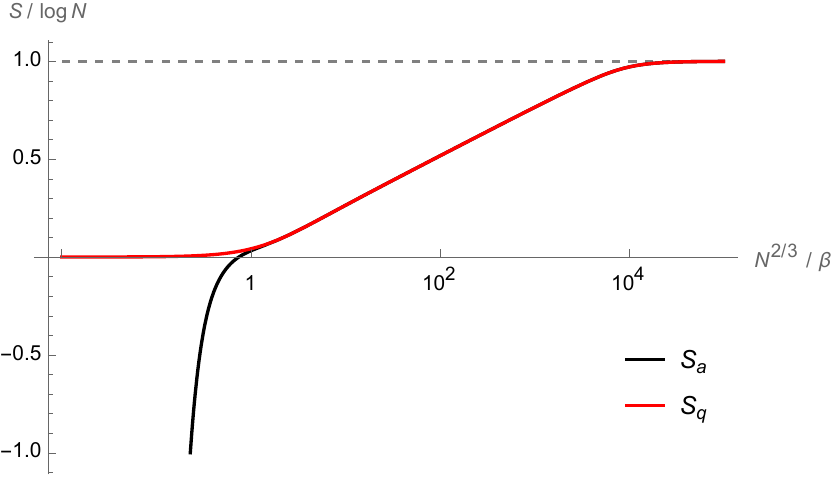}
    \caption{Comparison between annealed (black) and quenched (red) entropies for the GUE using the instanton result from \cref{eq:gwdzb}. These entropies begin to differ at $\beta \sim O(N^{2/3})$, which determines the transition to a low-temperature regime where the large-$\beta$ instanton dominates both quantities and cannot be neglected (cf. \cref{fig:aqgue}). Here $N=10^6$; in the thermodynamic limit $N\to\infty$, with an appropriate scaling $S_q(\beta)$ becomes non-differentiable at $\beta \sim O(N^{2/3})$, signaling a second-order phase transition. The extreme regimes $\beta \gg N$ and $\beta\ll N$ reproduce \cref{eq:sqguelims,eq:sague}, and are respectively consistent with \cref{thm:fact} and the statistical expectation that $e^{S}$ asymptote to the total number of degrees of freedom $N$ as $\beta\to0$.}
    \label{fig:sngue}
\end{figure}

These quenched results shall be compared to annealed ones, obtained by simply taking the logarithm of the $m=1$ expectation value in \cref{eq:gwdzb},
\begin{equation}
\label{eq:qague}
    \log \left\langle Z(\beta) \right\rangle_* = -I_*(1) \approx
    \begin{cases}
        \log N + \frac{\sqrt{2 \upbeta }}{N} \beta +\frac{\upbeta}{4} \beta^2, \qquad &\beta \ll N,\\
        \frac{\beta^2}{2 N}+\upbeta  N \log \frac{\beta}{N}, \qquad &\beta \gg N.
    \end{cases}
\end{equation}
which for the annealed entropy give the limiting behaviors
\begin{equation}
\label{eq:sague}
    S_a(\beta) \approx
    \begin{cases}
        \log N - \frac{\upbeta}{4} \beta^2,
        \qquad &\beta \ll N,\\
        -\frac{\beta^2}{2 N}+\upbeta  N \log \frac{\beta}{N}, \qquad &\beta \gg N.
    \end{cases}
\end{equation}
In consistency with the general arguments of previous sections, we see that quenched and annealed results agree at small $\beta$, but strongly disagree at large $\beta$. In particular, \cref{eq:sague} gives a negatively divergent annealed entropy as $\beta\to\infty$, exemplifying \cref{thm:fact}. This is shown in \cref{fig:sngue}.

It is also illustrative to compare these results with the na\"ive answer one would obtain by neglecting the large-$\beta$ instanton. In other words, if one simply evaluates $Z(\beta)$ on the WD semicircle saddle from \cref{eq:wdscal} (cf. using \cref{eq:Zm1}), one obtains
\begin{equation}
\label{eq:nail}
    \log \left\langle Z(\beta) \right\rangle_{\smalltext{na\"ive}} \approx
    \begin{cases}
        \log N+\frac{\upbeta}{4} \beta^2, \qquad &\beta \ll N,\\
        \sqrt{2\upbeta}\beta - \frac{3}{2} \log \beta, \qquad &\beta \gg N,
    \end{cases}
\end{equation}
which for the entropy gives
\begin{equation}
\label{eq:naiS}
    S_{\smalltext{na\"ive}}(\beta) \approx
    \begin{cases}
        \log N-\frac{\upbeta}{4} \beta^2, \qquad &\beta \ll N,\\
        - \frac{3}{2} \log \beta, \qquad &\beta \gg N,
    \end{cases}
\end{equation}
As previously pointed out, this gives the correct behavior at small $\beta$, but fails to capture the leading large-$\beta$ behavior of even the annealed entropy. In other words, our large-$\beta$ instantons are also important for annealed quantities, as we illustrate in \cref{fig:aqgue}.

\begin{figure}
    \centering
    \includegraphics[width=0.49\textwidth]{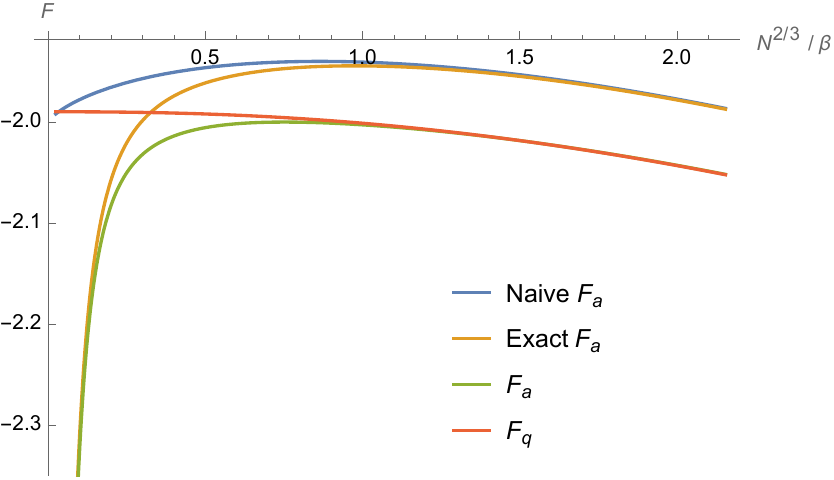}
    \hfill
    \includegraphics[width=0.49\textwidth]{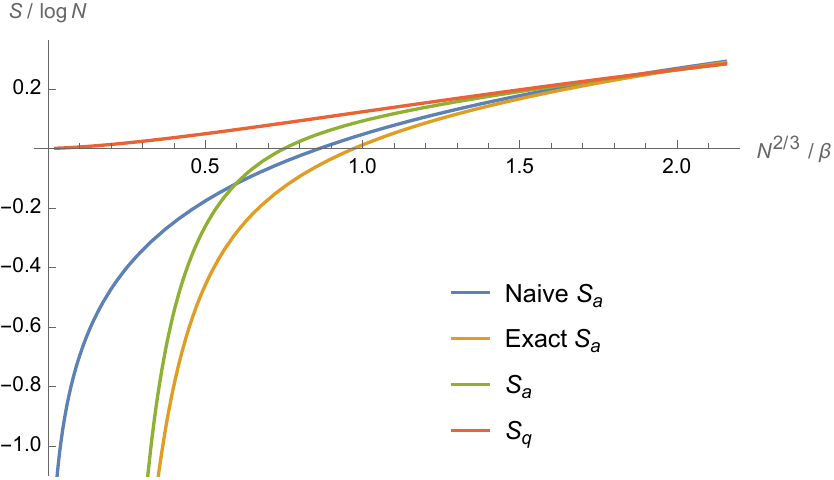}
    \caption{Free energies (left) and entropies (right) for the GUE with $N=100$. Na\"ive annealed results (blue) correspond to using the standard Wigner saddle from \cref{eq:wdscal} which does not take into account the large-$\beta$ instanton (cf. \cref{eq:nail,eq:naiS}). Exact annealed results (yellow) correspond to evaluating $\expval{Z(\beta)}$ exactly for $N=100$ using the orthogonal polynomial techniques from \cref{ssec:ortho}. Annealed results (green) correspond to using the large-$\beta$ instanton form for $\expval{Z(\beta)}_*$ from \cref{eq:gwdzb} at $m=1$ (cf. \cref{eq:qague,eq:sague}. Quenched results (red) correspond to using the large-$\beta$ instanton form for $\expval{Z(\beta)^m}_*$ from \cref{eq:gwdzb} in the $m\to0$ replica trick (cf. \cref{eq:fiollog,eq:sqguelims}). Only quenched quantities preserve physical properties such as the monotonicity of the free energy and the corresponding non-negativity of the entropy. Importantly, our instanton is seen to also be necessary for capturing the correct $\beta$ dependence of the annealed entropy at large $\beta$. In particular, although the na\"ive finite-$\beta$ saddle gives the desired qualitative negativity, only the large-$\beta$ instanton gives a quantitative approximation to the exact result at large $\beta$.}
    \label{fig:aqgue}
\end{figure}

As a matter of fact, the na\"ive result in \cref{eq:naiS} is actually the one that should be compared to the gravitational entropy in \cref{eq:genstt}. Indeed, \cref{eq:naiS,eq:genstt} exhibit the exact same logarithmic divergence at large $\beta$ for the bosonic $s=3/2$ case, whereas the leading divergence of our annealed result from \cref{eq:sague} is actually a quadratic power law. The reason the gravitational entropy behaves like \cref{eq:naiS} is simply that \cref{eq:genstt} is obtained to leading order at large $e^{S_0}$ at finite $\beta$, which generally corresponds to a lower-bounded leading spectrum with a classically forbidden region below it. By working at large $\beta\sim e^{S_0}$, our single-eigenvalue instanton is effectively accounting for corrections at higher $O(e^{-S_0})$, which capture the statistics of the tail of eigenvalues in the classically forbidden region. A gravitational calculation including $O(e^{-S_0})$ corrections would make the gravitational entropy reproduce \cref{eq:sague}.

\subsection{Gaussian BPS}
\label{ssec:gbps}

For BPS ensembles, the random matrix $M$ is $(N+\bar{\upnu})\times N$ and the Hamiltonian $H$ can be constructed out of $M$ as in \cref{eq:hcool}.\footnote{\,Once again, this is the case for $\mathcal{N}=1$ supersymmetry. See \cite{Turiaci:2023jfa} for the construction with higher supersymmetry.} The parameter $\bar{\upnu}$ determines the number of zero eigenvalues, which corresponds to the degeneracy of the ground state. The measure for the positive eigenvalues of $H$ is the general one in \cref{eq:genjac}.
The natural large-$N$ scaling for rectangular matrices corresponds to fixing their aspect ratio. Hence we let $\upnu\equiv N\nu$ and consider taking the large-$N$ limit keeping $(N+\upnu)/N = 1+\nu$ fixed. This amounts to keeping the proportion of zero eigenvalues against nonzero eigenvalues constant.
The case in which the ground-state degeneracy $\upnu$ is kept finite in the limit can be simply recovered by taking $\nu\to0$.

With a Gaussian potential for the singular values of $M$, the squared eigenvalues of $H$ feel a linear potential, so in particular $V(x)=\frac{1}{2}x$ in \cref{eq:vgapcon}.
The solution $\hat{\sigma}_*$ to \cref{eq:singint} for a support interval $\supp\hat{\sigma}_* = [a_-,a_+]$ takes the Mar\v{c}enko-Pastur form of \cref{eq:gaznu}
\begin{equation}
\label{eq:mpd}
    \hat{\sigma}_*(x) = \frac{N}{N-1} \frac{\sqrt{(a_+-x)(x-a_-)}}{2\pi\upbeta \, x}.
\end{equation}
Unit normalization of the measure requires
\begin{equation}
\label{eq:rho0daznorm}
    \sqrt{a_+} = \sqrt{a_-} + 2\sqrt{\upbeta}\sqrt{\frac{N-1}{N}},
\end{equation}
while direct evaluation of the principal value integral in \cref{eq:singint} leads to
\begin{equation}
    \fint\displaylimits_{[a_-,a_+]} dy \, \frac{\hat{\sigma}_*(y)}{x-y} = 
    \frac{N}{N-1} \frac{1}{2\upbeta} \left(1 - \frac{\sqrt{a_- a_+}}{x} \right).
\end{equation}
Matching to the potential gives a simple relation between the endpoints and the $\nu$ parameter,
\begin{equation}
    \sqrt{a_- a_+} = 2\nu.
\end{equation}
The explicit solution for the endpoints can be compactly written (cf. \cref{eq:endpoints})
\begin{equation}
    a_\pm = \upbeta \, \frac{N-1}{N} \, \left(1 \pm \sqrt{1+\frac{N}{N-1} \frac{2\nu}{\upbeta}}\right)^2= \upbeta  \left(1 \pm \sqrt{1 + \frac{2 \nu }{\upbeta }}\right)^2 + O(N^{-1}).
\end{equation}
As usual, notice that for $\nu>0$ the lower end of the $[a_-,a_+]$ interval is strictly positive, corresponding to a positive $\Delta\equiv a_->0$ gap in the spectrum between the zero eigenvalues and the non-trivial eigenvalues that $\hat{\sigma}_*$ describes.
The $\nu\to0$ limit recovers the gapless $\Delta\to0$ spectrum associated to a ground-state degeneracy $\bar{\upnu}$ that does not scale with $N$. In this case \cref{eq:mpd} degenerates into
\begin{equation}
\label{eq:degnuzero}
    \hat{\sigma}_*(x) \reprel{$\nu=0$}{=} \frac{1}{2\pi\upbeta} \sqrt{\frac{4\upbeta - x}{x}} + O(N^{-1}),
\end{equation}
which exhibits no gap and develops a pole at $x=0$. As explained in \cref{sec:linazens}, this is just the distribution that a Gaussian WD ensemble induces on squared eigenvalues (cf. \cref{eq:sqmu2})
Here we study \cref{eq:mpd} for general $\nu\geq0$, from which \cref{eq:degnuzero} is just the limiting case $\nu\to0$. The spectral curve corresponding to \cref{eq:mpd} for $x>0$ off the support of $\hat{\sigma}_*$ is
\begin{equation}
    y(x) = \frac{\sqrt{(a_+ - x) (a_- - x)}}{2 \upbeta  \, x}.
\end{equation}
The corresponding eigenvalue instanton that solves \cref{eq:specintr} is
\begin{equation}
\label{eq:instl0az}
\begin{aligned}
    \hat{\lambda}_* &= \frac{a_- + a_+ - \sqrt{(a_+ - a_-)^2+4 a_- a_+ \xi^2}}{2 \left(1-\xi ^2\right)} \\
    &= \frac{2 \left(\upbeta +\nu - \sqrt{\upbeta ^2+\nu  \left(2 \upbeta +\nu  \xi ^2\right)}\right)}{1-\xi ^2} + O(N^{-1}), \qquad \xi \equiv \frac{2m\beta}{N}.
\end{aligned}
\end{equation}
As required, note that $\hat{\lambda}_*$ lies below the support of $\hat{\sigma}_*$ but still on the allowed region of non-negative eigenvalues, i.e., $0\leq \hat{\lambda}_*\leq a_-$. In the limit $\nu\to0$ one has $\hat{\lambda}_*\to0$, consistent with the tightening of these bounds due to the spectral gap shrinking to zero size.
Integrating \cref{eq:speyc} gives the effective potential up to an unimportant constant as
\begin{equation}
    \widehat{V}_*(x) = -\upbeta\,x \,y(x)-\frac{1}{2} (a_-+a_+) \tanh ^{-1}\sqrt{\frac{a_- - x}{a_+ - x}} + \sqrt{a_- a_+} \tanh ^{-1}\sqrt{\frac{a_+ (a_- - x)}{a_- (a_+-x)}}.
\end{equation}
The evaluation of the effective potential for the instanton in \cref{eq:instl0az} does not simplify in any particular way at finite $\nu$ and $\xi$ even if $N\to\infty$, so it is not worth quoting explicitly. Furthermore, the integral in \cref{eq:Zcont} does not take a simple form either for the spectral density in \cref{eq:mpd} for general finite $\nu>0$. We nonetheless evaluate these results numerically in \cref{fig:aqbps}.

\begin{figure}
    \centering
    \includegraphics[width=0.5\textwidth]{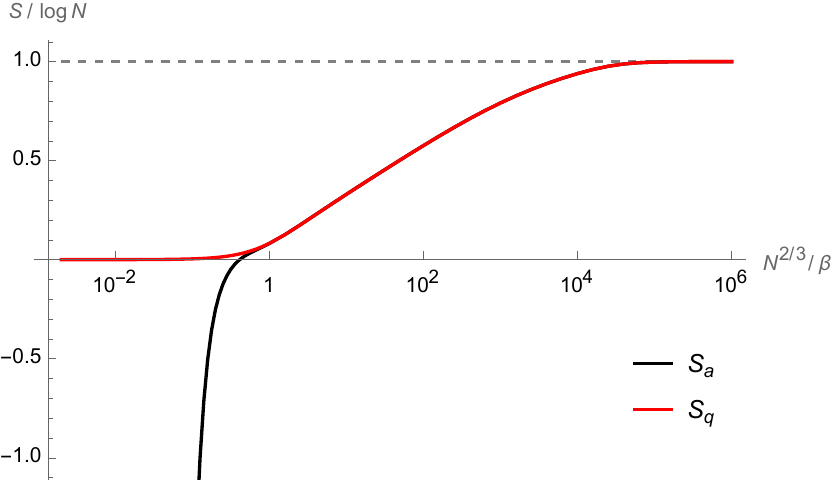}
    \caption{Comparison between annealed (black) and quenched (red) entropies for the non-zero eigenvalues of the Gaussian BPS ensemble with $\upbeta=2$ and $\nu=1$. The instanton solution from \cref{eq:mpd,eq:instl0az} is in this case evaluated numerically for the moments $\langle Z(\beta)^m \rangle_*$ setting $N=10^6$. As in the GUE case from \cref{fig:sngue}, the soft edge that the gapped BPS ensemble exhibits gives a transition to the large-$\beta$ instanton regime also at $\beta\sim O(N^{2/3})$. The extreme regimes $\beta \gg N$ and $\beta\ll N$ reproduce \cref{eq:gp2,eq:ap1}.}
    \label{fig:aqbps}
\end{figure}

The BPS ensembles are interesting both for $\nu>0$ and $\nu=0$, corresponding to a positive gap and no gap in the spectrum, respectively. These are qualitatively distinct regimes and, while $\nu=0$ can be studied as the $\nu\to0$ limit of the former, the converse is not true. Namely, it is not possible to study finite-gap $\nu>0$ results as an expansion about the gapless $\nu=0$ model. The reason is simply that the spectral density in \cref{eq:degnuzero} has support all the way to zero, and so will capture any quantity computed by expanding \cref{eq:mpd} about $\nu=0$. For instance, the allowed range for the eigenvalue instanton $\hat{\lambda}_*$ will remain to be just zero at any order in $\nu$, which is inconsistent with the existence of a finite gap in the spectrum for any finite $\nu>0$. Another way of seeing this failure is to simply note that the terms in an expansion of the spectral density in \cref{eq:mpd} about $\nu=0$ are not even integrable on the region of real support beyond second order.

Since a small-gap expansion makes no sense, we will only be interested in keeping finite $\nu>0$ or setting $\nu=0$ exactly. The latter is also better understood as a limit of the former, for otherwise at $\nu=0$ the instanton dynamics become trivial as its only allowed value is zero.
In order to obtain explicit results, we will consider the large-$\beta$ limit which, after all, is the regime of interest for us. As was the case in the WD example, our quantities of interest in this limit are also here governed by the large-$\beta$ instanton.
For the moments of $Z(\beta)$, using the large-$\beta$ result from \cref{eq:mmzlarge} and further expanding the exponent at large $\beta$ we get
\begin{equation}
\label{eq:larbm}
    \expval{Z(\beta)^m}_* = e^{N F(\xi)}, \qquad F(\xi) \equiv \frac{a_-+a_+}{2} \left( \tanh^{-1} \sqrt{\frac{a_-}{a_+}} - \frac{1}{2\xi} \right) - \nu \log \left(\frac{8\nu\xi}{a_+ - a_-} \right).
\end{equation}
This result should be compared to the one in \cref{eq:fiol} for the WD case. In particular, it would be interesting to understand if $F$ in \cref{eq:larbm} similarly matches the on-shell action of any known gravitational brane solutions.

As emphasized previously, while \cref{eq:larbm} makes sense for integer $m\geq1$, this large-$\beta$ expansion is incompatible with taking the $m\to0$ replica limit. Instead, directly applying \cref{eq:qlogcases},
\begin{equation}
\label{eq:gp1}
    \expval{\log Z(\beta)}_* = - \beta\Delta + O(Ne^{-\beta\Delta}), \qquad \beta\gg N,
\end{equation}
with $\Delta=a_-$, and the exponentially small errors coming from the spectral density integral. It is worth keeping track of these as they emphasize that already the next-to-leading order corrections at large-$\beta$ are exponentially suppressed. This contrasts with the large-$\beta$ approximation for the annealed result that follows from \cref{eq:larbm} for $m=1$, where corrections appear at all orders in $1/\beta$.
The entropy that results from quenching behaves as
\begin{equation}
\label{eq:gp2}
    S_q(\beta) = O(\beta \, e^{-\beta\Delta}), \qquad \beta \gg N,
\end{equation}
whereas the annealed entropy yields
\begin{equation}
\label{eq:ap1}
    S_a(\beta) = - \nu N \log \frac{\beta}{N} + O(1), \qquad \beta\gg N.
\end{equation}
Once again, this verifies the general expectation from \cref{sec:fact} that the annealed entropy negatively diverges as $\beta\to\infty$, whereas the quenched entropy remains non-negative.

These are the results one obtains for the continuous spectrum of positive eigenvalues in BPS ensembles. Remember though that the spectrum of such ensembles also includes a discrete zero eigenvalue with degeneracy $\upnu = \nu N$. In the above we have analyzed the dynamical eigenvalues, and the entropic properties of their spectrum excluding the highly degenerate ground state from the partition function.
However, if we want to study the full spectrum of the Hamiltonian altogether, we may want to also take into account the non-dynamical zero eigenvalues. Denoting the partition function with zero eigenvalues included by $Z^0$, its moments are simply related to those of the partition functions above by
\begin{equation}
    \expval{Z^0(\beta)^m}_* = \expval{\left(\upnu + Z(\beta)\right)^m}_*.
\end{equation}
As expected from \cref{sec:fact}, the presence of a nonzero number of discrete ground states, extensive or not in $N$, drastically changes the large-$\beta$ behavior of the annealed entropy from negatively divergent to non-negative and finite.
In particular, for the extensive case above we obtain
\begin{equation}
\label{eq:ann0}
    S_a^0(\beta) = \log \upnu + O(\beta^{-\upnu}),  \qquad \beta \gg N
\end{equation}
whereas the quenched entropy becomes\footnote{\,This form of the subleading correction assumes $\Delta>0$, i.e., an extensive degeneracy $\upnu=\nu N$ with $\nu>0$. In the gapless case $\Delta=0$, even if $\upnu>0$, the leading correction is only suppressed at large $\beta$ by a power law.}
\begin{equation}
\label{eq:que0}
    S_q^0(\beta) = \log \upnu + O(\beta \, e^{-\beta\Delta}),  \qquad \beta \gg N
\end{equation}
Hence we see that both entropies now asymptote to the same value capturing the degeneracy $\upnu$ of the ground state of the Hamiltonian. This is precisely what happens when supersymmetry protects the ground-state energy, and why gravity despite computing an annealed quantity succeeds in computing the extremal entropy of BPS black holes. Nonetheless, as is clear from \cref{eq:ann0,eq:que0}, the approach to the strict $\beta\to\infty$ value of annealed and quenched entropies is very distinct qualitatively. In particular, the quenched entropy stays exponentially close to its extremal value up to $\beta^{-1}$ scales of the order of the gap size $\Delta$. In contrast, the annealed entropy departs from its extremal value polynomially in $\beta^{-1}$ at order $\upnu$. Hence the behavior of the entropies and their derivatives at gap scales will differ significantly whether one is quenching or annealing, as illustrated in \cref{fig:gapqa}.

\begin{figure}
    \centering
    \includegraphics[width=0.48\textwidth]{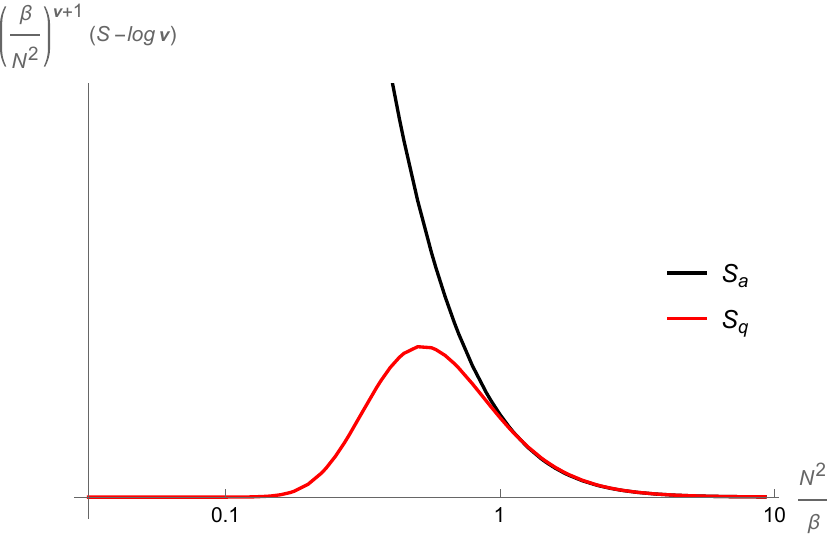}
    ~~
    \includegraphics[width=0.48\textwidth]{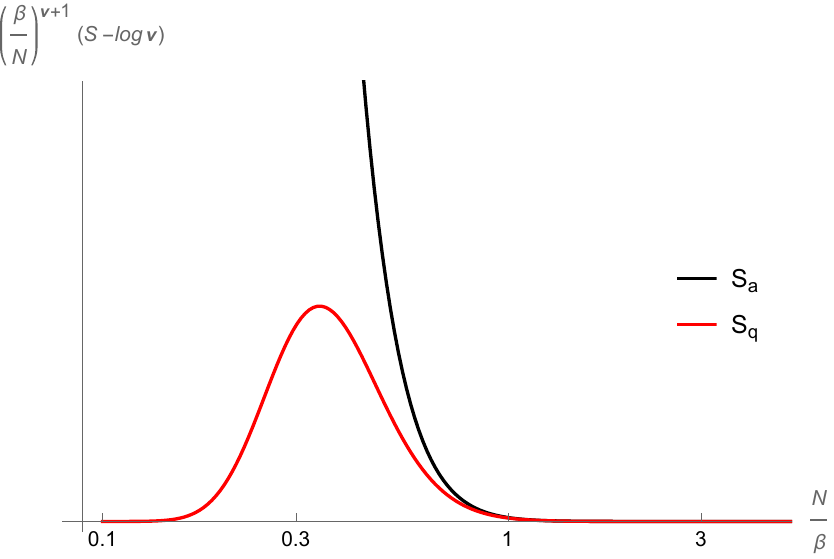}
    \caption{Entropies for Gaussian BPS ensembles with $\upbeta=2$ including the ground state with degeneracy $\upnu=\nu N$ in the spectrum. On the left $\upnu=1$, i.e., a non-extensive degeneracy giving a gapless BPS spectrum as $N\to\infty$; on the right $\nu=1$, corresponding to an extensive degeneracy $\upnu=N$ giving a gapped BPS spectrum as $N\to\infty$. In these plots $N=10^6$. In contrast with the situation in \cref{fig:aqbps}, in this case both the annealed (black) and quenched (red) entropies give $\log\upnu$ in the strict extremal limit. However, the approach to this asymptotic value is power-law for $S_a$ and exponential for $S_q$, as quoted in \cref{eq:ann0,eq:que0}, respectively. We demonstrate this discrepancy here by showing that as $\beta$ increases, $(\beta/N)^{\upnu+1}(S(\beta)-\log\upnu)$ diverges monotonically for $S_a$ but goes to zero for $S_q$. The scale of the vertical axis is irrelevant; on the horizontal axis, however, the entropies are generally observed to depart from each other at $\beta\sim O(N/\nu)$.}
    \label{fig:gapqa}
\end{figure}

The attentive reader might wonder why the gravitational entropy in \cref{eq:genstt}, despite being an annealed entropy, exhibits the exponential suppression of a gapped quenched entropy. This lucky accident occurs for the same reason explained at the end of \cref{sec:gwd}. Namely, \cref{eq:genstt} is a leading-order result at large $e^{S_0}$ and finite $\beta$, the underlying spectrum of which has a classically forbidden region $E<\Delta$. The statistics of eigenvalues in the gap region require $O(e^{-S_0})$ effects which would make the gravitational entropy in \cref{eq:genstt} actually behave like \cref{eq:ann0}, and not like \cref{eq:que0}.

To address the gapless case it is preferable to reintroduce $\upnu = N\nu$, and take the large-$N$ limit keeping $\upnu$ finite. The gap size is identically zero for $\upnu=0$ at any order in $N$, but if $\upnu>0$ then the gap size $\Delta\sim O(N^{-2})$ at large $N$, i.e., the gapless spectrum only occurs in the strict $N\to\infty$ limit. Without a gap, the spectral density is given by \cref{eq:degnuzero}, for which the integral in \cref{eq:zgenmom} is easily computable and reads
\begin{equation}
\label{eq:gapless}
    \int\displaylimits_{\mathbb{R}} dx \, \hat{\sigma}_*(x) \, e^{-\beta x} \reprel{$\nu=0$}{=} (I_0(2 \upbeta \beta)+I_1(2 \upbeta \beta )) e^{-2 \upbeta \beta}.
\end{equation}
Both annealed and quenched quantities behave equivalently at leading order at small $\beta$ and are governed by \cref{eq:gapless}. The small-$\beta$ results can be easily worked out to be
\begin{equation}
    \expval{\log Z(\beta)}_* \approx \log \expval{Z(\beta)}_* \approx \log N - \upbeta \beta + \frac{1}{2} \upbeta^2 \beta^2, \qquad \beta\ll N,
\end{equation}
for the logarithms, and
\begin{equation}
    S_q(\beta) \approx S_a(\beta) \approx \log N - \frac{1}{2} \upbeta^2 \beta^2, \qquad \beta\ll N,
\end{equation}
for the entropies.
At large $\beta$ the eigenvalue instanton once again dominates and the contributions from \cref{eq:gapless} become subleading. The results above still apply so long as $\upnu>0$, a parameter that can just be restored by resetting $\nu=\upnu/N$. In other words, for $\upnu>0$ one can just take the limit $N\to\infty$ at finite $\upnu$ in the expressions above to obtain gapless results. For quenched quantities, the $m\to0$ limit sends the instanton to the spectral edge $\hat{\lambda}_* \to \Delta$, which does not depend on $\beta$, and then the large-$N$ limit at finite $\upnu$ sends $\Delta\to0$. The upshot is a quenched logarithm of the form of \cref{eq:finqlog} with $\hat{\lambda}_0=0$ and a continuous part given by involving \cref{eq:gapless}. The resulting quenched entropy is clearly non-negative, as usual. For annealed quantities, however, $m=1$ and the instanton saddle $\hat{\lambda}_*$ remains non-trivial and $\beta$-dependent. Scaling $\beta$ with $N$, this gives rise to a regime in which the annealed logarithm becomes arbitrarily negative, which again results in an arbitrarily negative annealed entropy. This regime, however, only makes sense at nonzero $\upnu>0$, whether or not $\upnu$ scales with $N$.

The case $\upnu=0$ is strictly qualitatively different. According to \cref{eq:params}, this case corresponds to AZ ensembles with $\upalpha=1$ or BPS ensembles with $\bar{\upnu}=1$ and $\upbeta=1$ (a BPS ensemble with $\bar{\upnu}=0$ is just an AZ ensemble). In either case, the corresponding eigenvalue measure can be understood as coming from a WD ensembles upon squaring the random matrix. As such, the location of zero in the spectrum is not characterized by any particular eigenvalue, but rather by whichever eigenvalue is closer to zero in the original WD matrix. In other words, the statistics that govern the lowest eigenvalue in this case are not edge statistics, but rather bulk statistics of the eigenvalues that were squared. 
For completeness, we note that $\upnu=-1/2$ is also allowed for AZ ensembles with $\upalpha=0$. 
Our main motivation for studying BPS ensembles in this paper was to explore the gapped case, so we will not discuss these cases any further. A thorough exploration of the spectral statistics of ensembles which include $\upnu=0$ can be found in \cite{Johnson:2022wsr}.

\subsection{Airy Edge}
\label{ssec:airy}

As explained in \cref{sssec:univ}, the statistics of soft edges are universally captured by the Airy model in \cref{eq:airyf}. This includes WD and gapped BPS ensembles upon edge scaling, not only for the Gaussian cases studied in \cref{sec:gwd,ssec:gbps}, but for any matrix potential.
Recalling \cref{eq:airedge}, the leading spectral density of the Airy model is supported on $x>0$ and reads
\begin{equation}
\label{eq:airys}
    \hat{\rho}_*(x) = \frac{\sqrt{x}}{\pi}.
\end{equation}
The corresponding spectral curve along $x<0$ is simply
\begin{equation}
    y(x) = \sqrt{-x}.
\end{equation}
By \cref{eq:specintr} we obtain the instanton solution 
\begin{equation}
\label{eq:instairy}
    \hat{\lambda}_* = - \kappa^2, \qquad \kappa \equiv \frac{m\beta}{2e^{S_0}},
\end{equation}
where since $\upbeta$ can be absorbed in the zooming variable to leading order, we have simply chosen to fix $\upbeta=2$. Once again, the edge of the spectrum $\hat{\lambda}_*\to0$ is approached as $m\to0$. 
The effective potential along $x<0$ can be obtained by integrating \cref{eq:speyc}, which gives
\begin{equation}
    \widehat{V}_*(x) = \frac{4}{3} \left( - x\right)^{3/2}.
\end{equation}
For the instanton in \cref{eq:instairy} this evaluates to
\begin{equation}
    \widehat{V}_*(\hat{\lambda}_*) = \frac{4}{3} \kappa^3.
\end{equation}
The canonical partition function from \cref{eq:Zcont} yields
\begin{equation}
    Z[\hat{\sigma}_*;\hat{\lambda}_*] = e^{\beta \kappa^2} + \frac{e^{S_0}}{2\sqrt{\pi} \, \beta^{3/2}}.
\end{equation}
With a convenient rearrangement of terms, the resulting instanton action is
\begin{equation}
    I_*(m) = - \frac{2}{3} \kappa^3 - m \, e^{-S_0} \log(1 + \frac{e^{S_0 -\beta \kappa^2}}{2\sqrt{\pi} \, \beta^{3/2}} ).
\end{equation}
The final result for \cref{eq:zgendsl} in the Airy model is
\begin{equation}
\label{eq:airyzb}
    \expval{ Z(\beta)^m}_* = \left( 1 + \frac{ e^{S_0-\beta\kappa^2}}{2\sqrt{\pi} \, \beta^{3/2}} \right)^m e^{\frac{2}{3} e^{S_0} \kappa^3}.
\end{equation}
Using the replica trick from \cref{eq:reptrick} on \cref{eq:airyzb} gives the extreme limits
\begin{equation}
\label{eq:aiq}
    \expval{\log Z(\beta)}_* = 
    \log\left( 1 + \frac{e^{S_0}}{2\sqrt{\pi} \, \beta^{3/2}} \right) \approx
    \begin{cases}
        S_0 - \frac{3}{2} \log \beta + 2 \sqrt{\pi} e^{-S_0} \beta^{3/2}, \qquad &\beta \ll e^{S_0},\\
        \frac{e^{S_0}}{2\sqrt{\pi }} \beta^{-3/2}, \qquad &\beta \gg e^{S_0}.
    \end{cases}
\end{equation}
For the quenched entropy, these give
\begin{equation}
\label{eq:ais}
    S_q(\beta) \approx
    \begin{cases}
        S_0 - \frac{3}{2} \log \beta, \qquad & \beta \ll e^{S_0} \\
        \frac{5 e^{S_0}}{4\sqrt{\pi }} \beta^{-3/2}, \qquad & \beta \gg e^{S_0} \\
    \end{cases}
\end{equation}
In consistency with quenched universality, this large-$\beta$ result matches precisely that of the GUE, i.e., the Gaussian WD result for $\upbeta=2$ in \cref{eq:sqguelims}. 
In this case though, the entropy grows unbounded as $\beta\to0$, as expected from the non-normalizability of the spectral density of scaled matrix integrals (see \cref{fig:snairy}).
For the annealed entropy,
\begin{equation}
\label{eq:tred}
    S_a(\beta) \approx
    \begin{cases}
        S_0 - \frac{3}{2} \log \beta, \qquad & \beta \ll e^{S_0} \\
        - \frac{e^{-2S_0}}{6} \beta^3, \qquad & \beta \gg e^{S_0} \\
    \end{cases}
\end{equation}
Once again, as expected, small-$\beta$ behaviors match, while for large $\beta$ the annealed entropy diverges negatively. Finally, a na\"ive application of \cref{eq:Zm1} to \cref{eq:airys} yields
\begin{equation}
    S_{\smalltext{na\"ive}}(\beta) \approx
    \begin{cases}
        S_0 - \frac{3}{2} \log \beta, \qquad & \beta \ll e^{S_0} \\
         - \frac{3}{2} \log \beta, \qquad & \beta \gg e^{S_0} \\
    \end{cases}
\end{equation}
As was also observed in \cref{sec:gwd}, this na\"ive result reproduces the large-$e^{S_0}$ gravitational entropy from \cref{eq:genstt} for the bosonic $s=3/2$ case, but fails to capture the $O(e^{-S_0})$ corrections that our large-$\beta$ instanton accounts for in \cref{eq:tred}.

\begin{figure}
    \centering
    \includegraphics[width=0.5\textwidth]{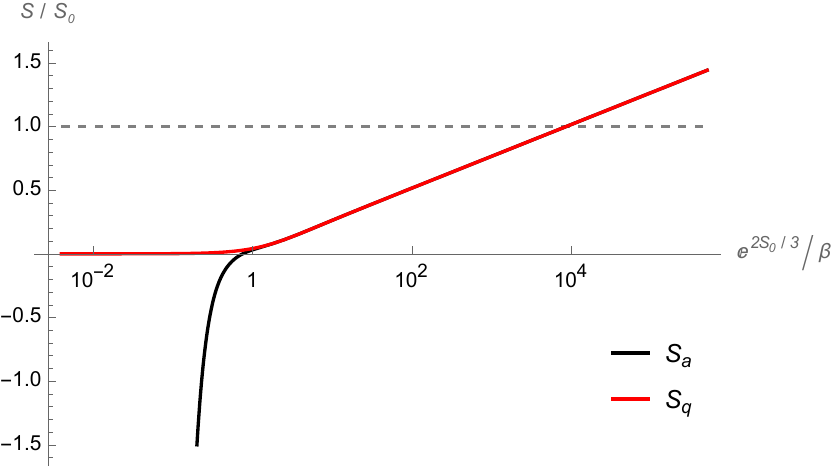}
    \caption{Comparison between annealed (black) and quenched (red) entropies for the Airy model using the instanton result from \cref{eq:airyzb}, and setting $e^{S_0}=10^6$. The transition to the instanton-dominated regime occurs at $\beta \sim e^{2S_0/3}$. Under the identification of the scaling parameter $N \sim e^{S_0}$, this is consistent with the examples in \cref{fig:sngue,fig:aqbps} given that both fall under the same soft edge universality class of the Airy model. The extreme regimes $\beta \gg N$ and $\beta\ll N$ reproduce the expectations from \cref{eq:ais,eq:tred}. In particular, the entropies grow unbounded as $\beta\to0$, as expected from the $N\to\infty$ limit of edge scalings.}
    \label{fig:snairy}
\end{figure}

\subsection{JT Gravity}
\label{ssec:jtsec}

The leading spectral density for pure JT gravity is given by \cref{eq:bos}.
With the desired scaling from \cref{eq:noscal} for DSL matrix integrals,
\begin{equation}
\label{eq:JTsigma}
    \hat{\rho}_*(x) = \frac{1}{2\pi^2} \sinh( 2\pi \sqrt{x}),
\end{equation}
where $x>0$. Near $x=0$, \cref{eq:JTsigma} reduces to the Airy soft edge from \cref{eq:airys}, as expected from universality.
The corresponding spectral curve along $x<0$ is
\begin{equation}
\label{eq:yjt}
    y(x) = \frac{1}{2\pi} \sin (2\pi \sqrt{-x}).
\end{equation}
This spectral curve is oscillatory and bounded to $2\pi y(x)\in[-1,1]$, so \cref{eq:specintr} would seem to admit either infinitely many or no instanton solutions. This is a manifestation of the fact that \cref{eq:JTsigma} alone does not actually completely define a matrix integral, but requires a non-perturbative completion \cite{Saad:2019lba,Johnson:2019eik}.
However, regardless of the non-perturbative completion, we know the only correct instanton solution for $m=0$ (or $\beta=0$) is \cref{eq:groundinf}, i.e., $\hat{\lambda}_*$ must lie at the $x=0$ edge of the spectrum. Hence independently of how one completes the model non-perturbatively, we can safely just focus on instanton solutions which are continuously connected to the edge of the spectrum as $m\to0$. In other words, when solving \cref{eq:specintr} we want to restrict our solutions to a domain connected to $x=0$ where the spectral curve $y$ is injective. Since $y$ attains its first maximum at $x_b=-1/16$, our desired domain is $x\in [x_b,0]$. The maximum here is $y(x_b) = 1/2\pi$, and thus the only possible solutions without a non-perturbative completion are
\begin{equation}
\label{eq:instjt}
    \hat{\lambda}_* = - \left(\frac{\theta}{2 \pi} \right)^2, \qquad \sin \theta \equiv \frac{\pi \, m \beta}{e^{S_0}}  \leq 1, \qquad 0\leq\theta\leq\frac{\pi}{2},
\end{equation}
where we have used $\upbeta=2$ for ordinary JT (cf. \cref{fn:ordjt}).
The parameter constraint above implies that these real instantons can only capture finite $m\geq1$ moments if $\beta$ is not too large.\footnote{\,As pointed out in \cref{fn:compsad}, it is in principle possible for the saddle points that dominate the $\hat{\lambda}$ integral in \cref{eq:semilarge} to be complex. Studying these would first require completing the JT matrix integral non-perturbatively with a valid eigenvalue contour of integration \cite{Johnson:2019eik}.}
Fortunately, this still allows one to probe $\beta \sim O(e^{S_0})$, even if there is a finite upper bound on $\beta/e^{S_0}$. Furthermore, this restriction does not affect the replica trick at all: as the limit $m\to0$ is taken, the real instanton exists for arbitrarily large $\beta$. In other words, the results we obtain for quenched quantities are valid at any $\beta$, including the limit $\beta\to\infty$.

By \cref{eq:speyc}, integrating \cref{eq:yjt} we find the leading effective potential on $x<0$ to be\footnote{\,The instanton ceases to exist at $x=-1/16$, sooner than this potential reaches its first $x=-1/4$ local maximum.}
\begin{equation}
    \label{eq:sickveff}
    \widehat{V}_*(x) = \frac{1}{2\pi ^3} \left( \sin \left(2 \pi  \sqrt{-x}\right) - 2\pi \sqrt{-x} \,\cos \left(2 \pi  \sqrt{-x}\right)
    \right).
\end{equation}
On the instanton from \cref{eq:instjt}, this evaluates to
\begin{equation}
    \label{eq:sickinst}
    \widehat{V}_*(\hat{\lambda}_*) = \frac{1}{2\pi ^3} \left( \sin\theta - \theta\,\cos\theta \right).
\end{equation}
The canonical partition function from \cref{eq:Zcont} gives
\begin{equation}
\label{eq:zjt}
    Z[\hat{\sigma}_*;\hat{\lambda}_*] = e^{\beta \left(\frac{\theta}{2\pi} \right)^2} + \frac{e^{S_0 + \pi^2/\beta}}{2\sqrt{\pi} \, \beta^{3/2}}.
\end{equation}
The second term is the familiar $1$-loop exact canonical partition function from \cref{eq:bos} that JT gravity yields on the disk. The first term is a novel contribution from the large-$\beta$ instanton. When $\theta\ll1$, corresponding to $m\beta e^{-S_0}\ll1$, this contribution behaves as
\begin{equation}
\label{eq:dnonp}
    e^{\beta \left(\frac{\theta}{2\pi} \right)^2} = e^{\gamma^2} + O(\theta^3), \qquad \gamma \equiv \frac{m  \beta^{3/2}}{2 \, e^{S_0}}
\end{equation}
This approximation clearly holds at large $S_0$ for any $\beta < O(e^{2S_0/3})$, but also for arbitrarily large $\beta$ in the $m\to0$ replica limit.
Gravitational effects that are doubly non-perturbative in $S_0$ are typically expected to give small corrections of $O(e^{-e^{S_0}})$; \cref{eq:dnonp} has a different flavor. From its form we see that in \cref{eq:zjt} the instanton contribution grows exponentially in $\gamma^2$, while the disk contribution goes to zero as $1/\gamma$. The two terms exchange dominance when $\gamma\sim O(1)$ or $\beta\sim O(e^{2S_0/3})$ at finite $m>0$. In particular, in this case the instanton contribution becomes dominant for $\beta \gtrsim O(e^{2S_0/3})$.
Hence there is a parametrically large regime of values of $\beta$ for which the instanton in \cref{eq:instjt} exists and dominates \cref{eq:zjt}, namely,
\begin{equation}
\label{eq:largebjt}
    O(e^{2S_0/3}) \lesssim \beta \lesssim O(e^{S_0}/m).
\end{equation}
Once again, the upper bound above matters for finite $m\geq1$ moments, but disappears as $m\to0$.
The instanton action that results from \cref{eq:sickinst,eq:zjt} can be written
\begin{equation}
    I_*(m) = - \frac{1}{2\pi^3} \left[ \theta\,\cos\theta - \left(1 - \frac{\theta^2}{2} \right)\,\sin\theta \right] - m \, e^{-S_0} \log(1 + \frac{e^{S_0+ \pi^2/\beta- \beta \left(\frac{\theta}{2\pi}\right)^2}}{2\sqrt{\pi} \, \beta^{3/2}} ),
\end{equation}
which leads to the following final result for the moments in JT gravity:
\begin{equation}
\label{eq:jtzm}
    \expval{ Z(\beta)^m}_* = \left( 1 + \frac{e^{S_0+ \pi^2/\beta- \beta \left(\frac{\theta}{2\pi}\right)^2}}{2\sqrt{\pi} \, \beta^{3/2}} \right)^m e^{
    \frac{e^{S_0}}{2\pi^3} \left[ \theta\,\cos\theta - \left(1 - \frac{\theta^2}{2} \right)\,\sin\theta \right]
    }.
\end{equation}
The term in the exponential above agrees with the leading $m=1$ contribution that \cite{Okuyama:2019xbv,Okuyama:2020ncd} finds in what they call the 't Hooft expansion of $\log\expval{Z(\beta)}$.
In the $m\to0$ replica limit, this JT instanton result precisely matches the Airy form of \cref{eq:aiq} at large $\beta$, as expected from \cref{ssec:quni}. As for the quenched entropy, the results for JT gravity exactly match \cref{eq:ais} in both extreme $\beta$ regimes. For integer $m\geq1$ moments, given the large-$\beta$ bracket from \cref{eq:largebjt}, it is of interest to specifically consider the $\beta\sim O(e^{2/3S_0})$ regime. The leading form of \cref{eq:jtzm} this gives is
\begin{equation}
\label{eq:airlim}
    \expval{ Z(\beta)^m}_* \approx \left(e^{\gamma^2/3} + \frac{m \, e^{-2 \gamma^2/3}}{4 \sqrt{\pi } \gamma}\right)^m, \qquad \beta\sim O(e^{2/3S_0}).
\end{equation}
where $\gamma$ was defined in \cref{eq:dnonp}. This controlled large-$\beta$ regime still allows for a successful implementation of the replica trick, i.e., \cref{eq:airlim} still yields the Airy result from \cref{eq:ais}. Intuitively, this is because $\beta\sim O(e^{2/3S_0})$ precisely corresponds to the transition between the disk and instanton dominating; had we trivialized the disk contribution, the replica trick would fail as discussed in \cref{ssec:lessons}. As for the annealed entropy, both \cref{eq:jtzm,eq:airlim} also yield consistent results which precisely match the Airy expressions we obtained in \cref{eq:tred}. The only differences with the Airy model occur at small $\beta$, a regime no longer dominated by the universality of edge statistics. This is illustrated in \cref{fig:snJT}.

\begin{figure}
    \centering
    \includegraphics[width=0.5\textwidth]{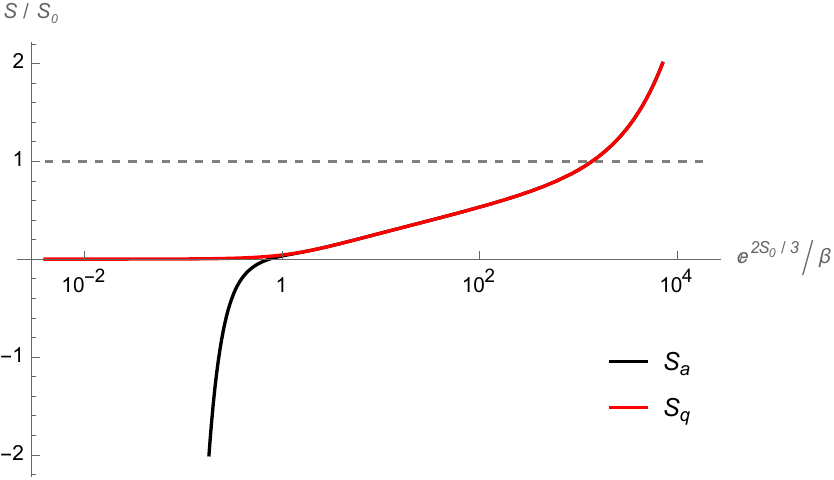}
    \caption{Comparison between annealed (black) and quenched (red) entropies for JT gravity using the instanton result from \cref{eq:jtzm}, and setting $e^{S_0}=10^6$. Both entropies match precisely the corresponding ones in the Airy model from \cref{fig:snairy} at large $\beta$, and only differ from those at small $\beta\sim O(1)$ when $S\sim O(S_0)$. At this point $S_a$ and $S_q$ are already indistinguishably uninteresting and dominated by bulk statistics.}
    \label{fig:snJT}
\end{figure}

As it turns out, $\beta\sim O(e^{2/3S_0})$ is also a controlled regime in the topological expansion of JT gravity. The first attempt at implementing the quenched replica trick in gravity made use of precisely this large-$\beta$ limit \cite{Engelhardt:2020qpv}. A difficulty \cite{Engelhardt:2020qpv} faced is that the growth in $m$ of these moments is $O(e^{m^3})$, i.e., much faster than exponential. As a result, an analytic continuation to $m\to0$ would be on general grounds highly non-unique. In our case, what makes the analytic continuation unique is the fact that \cref{eq:airlim} is a saddle-point result obtained by solving equations of motion which hold for all $m$ (cf. the strategy of \cite{Chandrasekaran:2022asa}). As a result, \cref{eq:airlim} actually applies automatically for any real $m\geq0$. Because \cref{eq:airlim} captures a transition to dominance of branes where the topological expansion should still make sense, we expect this result to provide a unique analytic continuation for the analysis of \cite{Engelhardt:2020qpv} which should succeed in implementing the replica trick.

To relate our results here to those of \cite{Blommaert:2019wfy,Okuyama:2021eju} one must treat branes dynamically rather than as probes. 
A plausible expectation is that the $\theta$-dependent terms in \cref{eq:jtzm} may be sourced by half-wormhole contributions to the gravitational path integral. Such effects are generally studied off-shell, which makes them practically inaccessible beyond $2$-dimensional models of quantum gravity like JT. However, our treatment here is strictly to leading order at large $e^{S_0}$ in terms of new saddle points that arise at large $\beta\sim O(e^{S_0})$. This strongly suggests that a semiclassical description of these half-wormholes may actually arise in the appropriate regime. In other words, this kind of non-perturbative effects would seem to admit an on-shell analysis which would easily generalize to higher-dimensional theories of gravity. A purely semiclassical calculation of these effects in quantum gravity will be pursued in \cite{intrepid}.

\section*{Acknowledgements}

It is a pleasure to thank
Netta Engelhardt,
Daniel Harlow,
Tom Hartman,
Gary Horowitz,
Luca Iliesiu,
Finn Larsen,
Samuel Leutheusser,
Henry Maxfield,
Marija Toma\v{s}evi\'{c},
Misha Usatyuk,
Nico Valdes-Meller,
Evita Verheijden, 
and Wayne Weng for stimulating discussions.
This research was supported by the Templeton Foundation via the Black Hole Initiative (award number 62286) and the MIT Center for Theoretical Physics.
The opinions expressed in this publication are those of the author and do not necessarily reflect the views of the John Templeton Foundation.

\appendix

\section{Near-Extremal Black Holes}
\label{sec:nebh}

This section gives a general overview of the current understanding of the gravitational entropy of near-extremal black holes, commenting on the results and expectations in semiclassical, perturbative, and non-perturbative temperature regimes relevant to this paper.

\subsection{Semiclassical Entropy}
\label{ssec:semiS}

The approach to black hole thermodynamics using the Euclidean gravitational path integral was pioneered at a semiclassical level by Gibbons-Hawking \cite{Gibbons:1976ue}, who using \cref{eq:slogZ} successfully reproduced the famous Bekenstein-Hawking entropy formula \cite{PhysRevD.7.2333,Hawking:1975vcx},
\begin{equation}
\label{eq:semibh}
    \mathcal{S}_{\smalltext{BH}}(\beta) \equiv \frac{A(\beta)}{4 G_{\smalltext{N}}},
\end{equation}
where $A(\beta)$ is the horizon area of the black hole at a given temperature.
Generically, this area remains non-zero even in the extremal limit $\beta\to\infty$, and grows linearly to leading order at low temperature. As a result, the general form of \cref{eq:semibh} at large $\beta$ is\footnote{\,Since we are interested in black holes with a nonzero area at zero temperature, we generally measure quantities relative to the extremal black hole radius and thus effectively treat them as dimensionless.}
\begin{equation}
\label{eq:sbhs0}
    \mathcal{S}_{\smalltext{BH}}(\beta) = S_0 + \frac{\#
}{\beta} + O(\beta^{-2}),
\end{equation}
where $S_0$ is often referred to as the extremal entropy.
Addressing the precise microscopic meaning of this entropy and relation or lack thereof to counting extremal states is one of the goals of this paper.
A standard statistical interpretation of $S_0$ as counting microstates would imply a degeneracy of the ground state given by $e^{S_0}$.
Since the extremal black hole area is typically macroscopically large in Planck units, $e^{S_0}$ would in particular be a very large number. Indeed, the result in \cref{eq:sbhs0} itself should be understood as an approximation $\mathcal{S}(\beta) \approx \mathcal{S}_{\smalltext{BH}}(\beta)$ capturing a large-$\beta$ expansion of just the classical, leading large-$S_0$ contribution to $\mathcal{Z}$ in \cref{eq:slogZ}.

The Strominger-Vafa enumeration of BPS states provided a celebrated confirmation of this result for a class of extremal black holes with supersymmetry in string theory \cite{Strominger:1996sh,David:2002wn}.
On symmetry grounds, such a large degeneracy is not surprising for black holes which preserve supersymmetry. However, that extremal black holes with no underlying symmetries could generally have such a high ground-state degeneracy of $O(e^{S_0})$ has historically been a puzzling possibility \cite{Preskill:1991tb,Teitelboim:1994az,Gibbons:1994ff,Hawking:1994ii,Hawking:1995fd,Maldacena:1998uz,Page:2000dk,Dabholkar:2006tb}.\footnote{\,This puzzle stems from conferring a microstate-counting meaning to the area of a black hole horizon, and is thus not dissimilar in spirit from modern issues interpreting this area as a von Neumann entropy in holography \cite{Engelhardt:2017aux,Engelhardt:2018kcs}.}

There is actually a simple argument for why one should be wary of the semiclassical analysis near extremality. Semiclassically, as the temperature of a black hole goes to zero, so does its heat capacity. Hence at low temperatures the emission of a typical Hawking quantum would drastically change the black hole temperature. But this would contradict the adiabaticity assumption of the semiclassical derivation of Hawking radiation \cite{Preskill:1991tb,Maldacena:1998uz,Iliesiu:2020qvm}. On dimensional grounds, this argument implies that the semiclassical result in \cref{eq:semibh} should only be trusted for $\beta\ll S_0$.

\subsection{Perturbative Corrections}
\label{ssec:quanS}

The near-extremal regime we are interested in corresponds to $\beta\gg S_0$, where the new large parameter $\beta$ implies that quantum effects that are subleading at large $S_0$ may no longer be negligible. 
Because \cref{eq:slogZ} gives the thermal entropy linearly in $\log\mathcal{Z}$, leading quantum corrections in a large-$S_0$ expansion are logarithmic in $1$-loop determinants. The computation of these has been tackled through a variety of methods for a plethora of black hole spacetimes, with and without supersymmetry, and at and near extremality \cite{Solodukhin:1994yz,Solodukhin:1994st,Sen:1995in,Mann:1997hm,Kaul:2000kf,Das:2001ic,Sen:2007qy,Banerjee:2010qc,Sen:2012dw,Sen:2012kpz,Sen:2012cj,Bhattacharyya:2012ye,Castro:2018hsc,Ghosh:2019rcj,Heydeman:2020hhw,Iliesiu:2022onk,Banerjee:2023quv,H:2023qko,Rakic:2023vhv,Kapec:2023ruw}. 
In general this turns out to be an extremely technical and laborious endeavor, with the tools at one's disposal strongly depending on the specific setting and on which contributions among those from matter fields, gravitons, or even strings, are being accounted for. Fortunately though, the functional form of these logarithmic corrections happens to be quite universal and straightforward to characterize.

With a single exception to be described shortly, all $1$-loop contributions near extremality yield logarithmic corrections to the extremal entropy $S_0$ of the form \cite{Solodukhin:1994yz,Solodukhin:1994st,Sen:1995in,Mann:1997hm,Kaul:2000kf,Das:2001ic,Sen:2007qy,Banerjee:2010qc,Sen:2012dw,Sen:2012kpz,Sen:2012cj,Bhattacharyya:2012ye,Castro:2018hsc}
\begin{equation}
\label{eq:slogc}
    \mathcal{S}_{\smalltext{log}} \equiv c_{\smalltext{log}} \log S_0.
\end{equation}
where $c_{\smalltext{log}}$ is an $O(1)$ factor that only depends on the number of massless fields (massive fields contribute at subleading orders).
This is the type of logarithmic correction generated by all non-zero modes (both bosonic and fermionic, including gravitons), and also every zero mode but one kind \cite{Banerjee:2010qc,Sen:2012dw,Sen:2012kpz,Sen:2012cj}.\footnote{\,Within the near-horizon region, zero modes correspond to diffeomorphisms (or their supersymmetric avatars) and gauge transformations which are large (i.e., non-trivial asymptotically), and thus physical. Except for scalars, a discrete family of these also arises for all types of fields and contribute to $c_{\smalltext{log}}$, once again including metric fluctuations.} The only exception is a single family of relevant metric tensor deformations which become strongly coupled at low temperatures.
These are exact zero modes of the extremal metric that have to be included, but the integral over which is formally divergent. Regulating their contribution requires flowing off the strict $\beta\to\infty$ limit, which gives as a result a temperature-dependent logarithmic correction of the form \cite{Ghosh:2019rcj,Heydeman:2020hhw,Iliesiu:2022onk,Banerjee:2023quv,H:2023qko,Rakic:2023vhv,Kapec:2023ruw,Pal:2023cgk}\footnote{\,As it turns out, these zero modes also give further contributions to the correction in \cref{eq:slogc} \cite{Banerjee:2010qc,Sen:2012dw,Sen:2012kpz,Sen:2012cj}.}
\begin{equation}
\label{eq:logbeta}
    \mathcal{S}_{\smalltext{Sch}}(\beta) \equiv -s \log \beta,
\end{equation}
where $s\geq0$ is a rational number which is found to only depend on supersymmetry considerations. In particular, $s=0$ for black holes which are BPS at extremality \cite{Heydeman:2020hhw}, but $s>0$ for any black hole which is not \cite{Iliesiu:2020qvm}. For instance, in purely bosonic theories one finds $s=3/2$. 

Altogether, combining the near-extremal results from \cref{eq:sbhs0,eq:slogc,eq:logbeta}, we obtain
\begin{equation}
\label{eq:genst}
    \mathcal{S}_{\smalltext{ext}}(\beta) = \underbrace{\mathcal{S}_{\smalltext{BH}}(\beta)}_\text{classical} + \underbrace{\mathcal{S}_{\smalltext{log}} + \mathcal{S}_{\smalltext{Sch}}(\beta)}_\text{$1$-loop} = S_0 + c_{\smalltext{log}} \log S_0 - s\log\beta + \frac{\#}{\beta},
\end{equation}
which gives the approximation $\mathcal{S}(\beta) \approx \mathcal{S}_{\smalltext{ext}}(\beta)$ at large $\beta$ of \cref{eq:slogZ} with $\mathcal{Z}$ capturing up to $1$-loop order in a large-$S_0$ expansion. In fact, as we will review in \cref{ssec:schW}, the result that gives rise to \cref{eq:logbeta} is $1$-loop exact, meaning that the $\beta$ dependence of this approximation should be reliable to all orders in perturbation theory at large $S_0$.
Non-perturbative effects of $O(e^{-S_0})$, about which the result in \cref{eq:genst} has nothing to say, will be addressed shortly.

Clearly, the corrections to the extremal entropy coming from \cref{eq:slogc} cannot possibly resolve the degeneracy puzzle raised above. Since these are temperature-independent, we will hereon absorb them in $S_0$ itself by a redefinition of its Bekenstein-Hawking value in \cref{eq:sbhs0} to $S_0 + \mathcal{S}_{\smalltext{log}}$. Hence, in the absence of symmetry principles, the only quantum correction which can possibly counter the large contribution from $S_0$ to the entropy near extremality must come from \cref{eq:logbeta}. In beautiful agreement with expectation, $\mathcal{S}_{\smalltext{Sch}}$ leaves unchanged the independently confirmed extremal entropy of BPS black holes ($s=0$), but provides the desired negative contribution to compete against $S_0$ for those which are not BPS at extremality ($s>0$).
We refer to near-extremal black holes which are BPS at extremality as near-BPS black holes.

For near-extremal black holes which are not near-BPS, remarkably not only does $\mathcal{S}_{\smalltext{Sch}}$ reduce the near-extremal entropy, but it also would make it negative at very low temperatures starting at $\beta \sim O(e^{S_0})$,\footnote{\,\label{fn:sutlelim}More precisely, this requires $\beta \sim O(e^{S_0/s})$, a distinction which will only be made when relevant.} a seemingly nonsensical result.
Different potential resolutions to this pathology have been suggested in the literature. Since for non-near-BPS black holes $\mathcal{Z}(\beta)\to0$ as $\beta\to\infty$, one option is that when $\beta\sim O(e^{S_0})$, some other object must dominate the partition function of the theory \cite{Turiaci:2023wrh,Rakic:2023vhv}.\footnote{\,It is straightforward to show that $S(\beta)\to-\infty$ as $\beta\to\infty$ indeed implies $\mathcal{Z}(\beta)\to0$ in the same limit. In the converse direction, the latter is just a necessary condition for the former. See \cref{sec:fact} for more details.} From the viewpoint of statistical physics, this is tantamount to saying that very near-extremal black holes which are not near-BPS do not exist. Alternatively, at such large $\beta\sim O(e^{S_0})$ one enters a regime where corrections which are non-perturbative in $S_0$ become important. These could potentially be able to tame the negativity induced by \cref{eq:logbeta} so as to restore a non-negative result for \cref{eq:slogZ}. Put differently, despite perturbative exactness, non-perturbative effects would constrain the range of validity of \cref{eq:genst} to $e^{S_0} \gg \beta \gg S_0$ \cite{Iliesiu:2020qvm,Iliesiu:2022onk,Rakic:2023vhv}, hence leaving room for non-perturbative effects of $O(e^{-S_0})$ to prevent a pathology in \cref{eq:slogZ}.\footnote{\,It has also been suggested that in fact one needs to account for doubly non-perturbative effects of $O(e^{-e^{S_0}})$ to restore the non-negativity of entropic quantities like this \cite{Johnson:2020mwi,Johnson:2021rsh}, a possibility which would seem reasonable from a random matrix formulation of the problem \cite{Johnson:2019eik,Johnson:2021zuo,Johnson:2022wsr}.}

However plausible, these ideas assume that the quantity $\mathcal{S}$ computed by \cref{eq:slogZ} in an effective theory of gravity ought to behave like a standard thermodynamic entropy and thus be non-negative. A central goal of this paper is to clarify that this assumption is incorrect: in effective quantum gravity $\mathcal{S}$ is in fact not the thermal entropy of any quantum system. Moreover, we show that $\mathcal{S}(\beta)$ must diverge negatively in the $\beta\to\infty$ limit for non-near-BPS black holes whenever the Euclidean gravitational path integral is used to compute $\mathcal{Z}(\beta)$. To identify what kind of quantity $\mathcal{S}$ is, we now turn to a brief review of the origin of the quantum effects that give rise to $\mathcal{S}_{\smalltext{Sch}}$ in \cref{eq:logbeta}.

\subsection{Non-perturbative Corrections}
\label{ssec:schW}

As mentioned above \cref{eq:logbeta}, the contribution $\mathcal{S}_{\smalltext{Sch}}$ to the near-extremal entropy comes from metric fluctuations of the black hole geometry which are nearly-zero modes near extremality and become exact zero modes in the extremal limit. These correspond to large diffeomorphisms supported on the throat of near-extremal black holes, a near-horizon region which is found to universally develop an AdS$_2$ factor with an enhanced $SL(2,\mathbb{R})$ symmetry in the extremal limit \cite{Kunduri:2007vf,Nayak:2018qej,Moitra:2018jqs}.
These strongly coupled gravitational dynamics in the throat are governed by a Schwarzian theory of reparameterization of the AdS$_2$ asymptotics or its supersymmetric generalizations \cite{Jensen:2016pah,Maldacena:2016upp,Engelsoy:2016xyb}, which can be argued to arise on general grounds from the emergent conformal symmetry at extremality and its breaking for $\beta<\infty$ \cite{Ghosh:2019rcj,Iliesiu:2020qvm,Heydeman:2020hhw,Iliesiu:2022onk,Boruch:2022tno,Pal:2023cgk}.
These Schwarzian theories turn out to be solvable quantum mechanical models with path integrals which are $1$-loop exact, and the computations of which have been carried in various contexts \cite{Bagrets:2016cdf,Cotler:2016fpe,Stanford:2017thb,Belokurov:2017eit,Mertens:2017mtv,Kitaev:2018wpr,Yang:2018gdb}.

Most often, a Schwarzian theory is arrived at by performing a dimensional reduction in the throat to the emergent AdS$_2$ factor.\footnote{\,
Even when the AdS$_2$ factor is warped in the full spacetime and this dimensional reductions may not seem as natural, a Schwarzian theory is still expected to govern the gravitational dynamics in the throat. For instance, for rotating black holes where the warping is caused by the breaking of spherical symmetry, this has been thoroughly studied in \cite{Maldacena:1997ih,Castro:2009jf,Castro:2018ffi,Moitra:2019bub}, and rigorously established for Kerr in \cite{Rakic:2023vhv}.}
The $2$-dimensional theory one obtains is JT gravity \cite{Jackiw:1984je,Teitelboim:1983ux} or a generalization thereof, where a dilaton field on the AdS$_2$ geometry captures fluctuations of the volume of the transverse dimensions. The non-trivial dynamics of this JT dilaton are well understood to reduce to an asymptotic boundary mode governed by a Schwarzian theory \cite{Almheiri:2014cka,Almheiri:2016fws}. As a result, JT gravity theories themselves are frequently taken as the starting point for the study of the low-energy dynamics of near-extremal black holes \cite{Almheiri:2014cka,Almheiri:2016fws,Kitaev:2018wpr,Yang:2018gdb,Sarosi:2017ykf,Nayak:2018qej,Moitra:2018jqs,Castro:2018ffi,Larsen:2018cts,Moitra:2019bub,Sachdev:2019bjn,Hong:2019tsx,Castro:2019crn,Charles:2019tiu,Mertens:2022irh}.

It is natural to treat the JT theory one obtains this way as a quantum gravitational theory in its own right. Formulated as a Euclidean gravitational path integral, such a theory involves integrating over all possible $2$-dimensional Riemannian manifolds with prescribed boundary conditions on AdS$_2$ boundaries.\footnote{\,The original Schwarzian theory would correspond to accounting for only the trivial disk topology.} The JT dilaton path integral imposes a constant negative curvature everywhere, thus making the gravity path integral run only over rigid hyperbolic spaces. These are Riemann surfaces which can be topologically classified, and whose moduli space within each topology class is finite-dimensional.
Hence the geometric part of the JT path integral is well-defined, and boils down to a sum over all topologies consistent with boundary conditions, weighted by the finite Weil–Petersson volumes of corresponding moduli spaces. In addition, each topology comes with a topological suppression in $O(e^{-S_0})$ by the Euler characteristic of the corresponding Riemann surface.
In the general settings of interest in JT gravity, one allows for spaces to have at least one asymptotic boundary. 
This way, on top of every geometric contribution, there also appear Schwarzian modes along every boundary governing the asymptotic dynamics of the JT dilaton. All such contributions are, once again, known exactly in $S_0$.
As worked out in \cite{Saad:2019lba}, the upshot is a JT gravitational path integral which is explicit at large $S_0$, 
perturbatively exact as a quantum expansion in $1/S_0$,
and non-perturbatively asymptotic as a topological expansion in powers of $e^{-S_0}$.

This remarkably powerful arena captures precisely the $O(e^{-S_0})$ effects that one needs to potentially probe non-perturbatively small temperatures with $\beta \sim O(e^{S_0})$.\footnote{\,\label{fn:fluct}Here we are referring to non-perturbative effects from topology change within the AdS$_2$ region of the original black hole spacetime near extremality. There may be non-perturbative contributions from the full-dimensional throat or even the black hole spacetime itself transitioning to higher topologies which we have no control over \cite{Iliesiu:2020qvm,Turiaci:2023wrh,Rakic:2023vhv}.} Since, the expansion at small $e^{-S_0}$ is an asymptotic series, the competition between the topological suppression and large $\beta \sim O(e^{S_0})$ has to be handled carefully. In particular, the asymptotic expansion parameter that becomes effective at large $\beta$ turns out to be the combination $\beta^s e^{-S_0}$, which ceases to be small when $\beta \sim O(e^{S_0/s})$. According to \cref{eq:genst}, this is actually also precisely the non-perturbatively scaling of $\beta$ for the onset of the negativity of $\mathcal{S}(\beta)$. A reliable analysis would thus seem impossible given that we begin to lose non-perturbative control as soon as we reach the regime of interest. In fact, the partial sums of the asymptotic series still provide tight enough bounds to show that the pathological negative behavior of $\mathcal{S}(\beta)$ persists non-perturbatively, as first shown by \cite{Engelhardt:2020qpv}. In other words, one reliably concludes that the quantity in \cref{eq:slogZ} with $\mathcal{Z}(\beta)$ computed by the JT gravitational path integral genuinely becomes negative at sufficiently low temperatures.
A central goal of this paper is to establish that this negativity occurs universally for non-near-BPS black holes and holds non-perturbatively in effective quantum gravity.
A key insight for this realization follows from the non-perturbative correspondence between JT gravity theories and random matrix ensembles \cite{Saad:2019lba}, which we make use of extensively.

\section{Proofs for Negative Entropies}
\label{sec:proof}

In this section we prove \cref{thm:fact}, as well as three additional requisite results.
The notation relevant to the results and proofs below is defined in \cref{sec:fact}.

We first establish the general region of convergence of $Z_g(z)\to0$ as $|z|\to0$ for a piecewise-continuous function $g$ with a well-defined Laplace transform $Z_g$ given by \cref{eq:lpgen}. This result is needed for understanding how a canonical partition function behaves at large $\beta$.

\begin{nlemma}
\label{lem:Ztozero}
    If $g$ is locally integrable and of exponential type such that $Z_g$ in \cref{eq:lpgen} exists, then $\displaystyle \lim_{\beta\to\infty} Z_g(\beta e^{i\alpha}) = 0$ for any $\alpha\in\left(-\frac{\pi}{2},\frac{\pi}{2}\right)$.
\end{nlemma}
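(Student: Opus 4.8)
The plan is to prove the decay along any ray in the open right half-plane by splitting the integral defining $Z_g$ at a cutoff $E^*$ chosen so that the exponential-type bound $|g(E)| \le M e^{cE}$ holds for $E > E^*$, and then handling the two pieces separately. For the tail $E \in (E^*,\infty)$, write $z = \beta e^{i\alpha}$ so that $\Re z = \beta\cos\alpha$, with $\cos\alpha > 0$ since $\alpha \in (-\pi/2,\pi/2)$. Then
\begin{equation}
\left| \int_{E^*}^\infty dE \, g(E) \, e^{-zE} \right| \le M \int_{E^*}^\infty dE \, e^{(c - \beta\cos\alpha) E},
\end{equation}
which for $\beta\cos\alpha > c$ evaluates to $\dfrac{M \, e^{(c-\beta\cos\alpha)E^*}}{\beta\cos\alpha - c}$, and this clearly tends to $0$ as $\beta\to\infty$.

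For the head $E \in (0^+,E^*]$, the function $g$ is locally integrable there (allowing an integrable singularity at $0$), so $g \in L^1(0,E^*)$. The integrand $g(E)\,e^{-zE}$ is then dominated in absolute value by $|g(E)| \in L^1(0,E^*)$ uniformly in $z$ with $\Re z \ge 0$, since $|e^{-zE}| = e^{-E\Re z} \le 1$ for $E > 0$. Moreover for each fixed $E > 0$ we have $e^{-zE} = e^{-\beta\cos\alpha\,E}\,e^{-i\beta\sin\alpha\,E} \to 0$ pointwise as $\beta\to\infty$ (again because $\cos\alpha > 0$). By the dominated convergence theorem, $\int_{0^+}^{E^*} dE \, g(E)\,e^{-zE} \to 0$ as $\beta\to\infty$. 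Adding the two contributions gives $\lim_{\beta\to\infty} Z_g(\beta e^{i\alpha}) = 0$, as claimed. One should also note that $Z_g(z)$ is well-defined for all $z$ with $\Re z > c$ by the same tail estimate together with local integrability of the head, so the statement is not vacuous.

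The argument is essentially routine; the only point requiring a little care is the behavior near $E = 0$, where $g$ may diverge. Here the hypothesis of local integrability is exactly what is needed: it guarantees $g \in L^1$ on a neighborhood of $0$, which supplies the dominating function for the dominated convergence argument on the head. No cancellation from the oscillatory factor $e^{-i\beta\sin\alpha\,E}$ is needed — pure exponential damping from $\Re z \to \infty$ suffices on both pieces — so the restriction $\alpha \in (-\pi/2,\pi/2)$ enters only through $\cos\alpha > 0$, and the proof makes transparent why the boundary rays $\alpha = \pm\pi/2$ are excluded.
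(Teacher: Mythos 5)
Your proof is correct and rests on the same ingredients as the paper's: $\cos\alpha>0$, the exponential-type bound to control large $E$, and dominated convergence to pass the pointwise decay of $e^{-zE}$ through the integral. The only cosmetic difference is that you split the integral at $E^*$ and estimate the tail directly, whereas the paper applies dominated convergence once to the whole integral using the single dominating function $|g(E)|\,e^{-\beta^*\cos\alpha\,E}$ with $\beta^*\cos\alpha\geq c$; both handle the possible integrable singularity at $E=0$ in the same way.
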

\begin{proof}
    Assuming $Z_g(\beta e^{i\alpha})$ exists, it can be written
    $$
    Z_g(\beta e^{i\alpha}) = \lim_{\epsilon\to0^+} \int_{\epsilon}^\infty dE \, g(E) \, e^{-\beta e^{i\alpha} E}.
    $$
    By assumption of exponential type, $|g(E)|\leq O(e^{cE})$ for finite $c>0$ and sufficiently large $E$. Hence one can pick a finite $\beta^*>0$ satisfying $\beta^*\cos\alpha \geq c$ such that $Z_g(\beta e^{i\alpha})$ exists for all $\beta \cos\alpha$ with $\beta>\beta^*$. Additionally, by integrability of $g$, the function $f(E)\equiv |g(E)| e^{-\beta^* \cos\alpha \, E}$ is also integrable and dominates the integrand above, i.e., $|g(E)e^{-E e^{i\alpha}}| \leq f(E)$ for all $\beta\geq\beta^*$ and $E\geq\epsilon$. The dominated convergence theorem can thus be used to take the $\beta\to\infty$ limit inside the integral,
    \begin{equation}
    \label{eq:lim2}
        \lim_{\beta\to\infty} \lim_{\epsilon\to0^+} \int_{\epsilon}^\infty dE \, g(E) \, e^{-\beta e^{i\alpha} E} \leq \int_{0^+}^\infty dE \, g(E) \, \lim_{\beta\to\infty} e^{-\beta e^{i\alpha} E} = 0.
    \end{equation}
\end{proof}

Note that this result allows for $g(E)$ to diverge as $E\to0^+$. For instance, suppose $g(E) \supset E^p$. The Laplace transform of this term exists if and only if $p>-1$, and gives $p!/\beta^{p+1}$, which indeed tends to zero as $\beta\to\infty$ for all allowed $p>-1$. As another example, if $g(E)\supset -\log E$, the Laplace transform of this term gives $(\gamma + \log\beta)/\beta$, which again goes to zero in the limit.

The result that follows is needed to specifically address the behavior of the entropy at large $\beta$.

\begin{nlemma}
\label{lem:supperexp}
    Let $f(e^x) = \partial_x \log( -\log Z_g(e^x))$. If $\lim_{x\to\infty} f(e^x) = \lambda$, then for any small $\epsilon>0$, there exists $c_\pm,\beta^*> 0$ such that for all $\beta>\beta^*$, 
    $$
    e^{- c_- \beta^{{\lambda}-\epsilon}} > Z_g(\beta) > e^{- c_+ \beta^{{\lambda}+\epsilon}}.
    $$
\end{nlemma}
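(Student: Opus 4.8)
The statement is essentially that if the logarithmic derivative of $-\log Z_g(e^x)$ converges to $\lambda$, then $-\log Z_g(\beta)$ grows like $\beta^\lambda$ up to arbitrarily small corrections in the exponent. The plan is to set $h(x) \equiv \log(-\log Z_g(e^x))$, so that $f(e^x) = h'(x)$ and the hypothesis reads $\lim_{x\to\infty} h'(x) = \lambda$. Note that by \cref{lem:Ztozero} (applied at $\alpha=0$) we have $Z_g(\beta)\to 0$ as $\beta\to\infty$, so $-\log Z_g(e^x)\to+\infty$ and $h(x)$ is eventually well-defined and real; this is what makes the reduction to $h$ legitimate. First I would fix $\epsilon>0$ and use the definition of the limit to choose $x^*$ so that $\lambda - \epsilon < h'(x) < \lambda + \epsilon$ for all $x > x^*$.

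Next I would integrate this differential inequality from $x^*$ to $x$: for all $x>x^*$,
\begin{equation}
h(x^*) + (\lambda-\epsilon)(x-x^*) < h(x) < h(x^*) + (\lambda+\epsilon)(x-x^*).
\end{equation}
Exponentiating once gives $-\log Z_g(e^x)$ sandwiched between $A_\pm e^{(\lambda\pm\epsilon)x}$ for constants $A_\pm = \exp(h(x^*) \mp (\lambda\pm\epsilon)x^*) > 0$ (the $\mp$ arises because the $e^{-x^*}$ factor absorbs into the constant — I would be careful with the bookkeeping here but it is routine). Then substituting $\beta = e^x$ and exponentiating a second time: for all $\beta > \beta^* \equiv e^{x^*}$,
\begin{equation}
e^{-A_- \beta^{\lambda-\epsilon}} > Z_g(\beta) > e^{-A_+ \beta^{\lambda+\epsilon}},
\end{equation}
which is the claim with $c_\pm \equiv A_\pm$. (One should double-check the direction of the inequalities: a larger value of $-\log Z_g$ means a smaller $Z_g$, so the upper bound on $h$ yields the lower bound on $Z_g$ and vice versa — this is why the $\lambda+\epsilon$ exponent appears in the lower bound on $Z_g$.)

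The argument is short because the hypothesis has already done the analytic work; the only genuine subtlety — and the step I would be most careful about — is the \emph{sign and constant bookkeeping} when translating between the inequality on $h'$, the inequality on $h$, and the two successive exponentiations, since a flipped inequality or a mislabeled constant would invert the whole statement. A secondary point worth a sentence is justifying that $h(x)$ is eventually finite and differentiable: finiteness follows from \cref{lem:Ztozero} as noted, and differentiability of $Z_g$ (hence of $h$ wherever $-\log Z_g>0$) follows from differentiating under the integral sign in \cref{eq:lpgen}, legitimate by the same domination-by-an-integrable-function estimate used in the proof of \cref{lem:Ztozero} (the extra factor of $E$ from $\partial_\beta e^{-\beta E}$ does not spoil integrability for $\beta$ bounded away from the abscissa of convergence). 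I would state this briefly rather than belabor it, since it is standard for Laplace transforms of exponential type.
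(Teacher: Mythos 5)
Your proof is correct and essentially identical to the paper's: both integrate the bound $\lambda-\epsilon < f(e^x) < \lambda+\epsilon$ from $x^*$ to $x$, exponentiate twice, and use \cref{lem:Ztozero} to ensure $-\log Z_g(\beta^*)>0$ so that the constants are positive; indeed your $A_\pm$ coincide with the paper's $c_\pm = -\log Z_g(\beta^*)/{\beta^*}^{\lambda\pm\epsilon}$ once the bookkeeping you flagged is fixed (the exponent of $x^*$ should carry a minus sign in both constants, not $\mp$, a slip that does not affect the final inequalities since their direction is derived correctly from the sandwich). The only cosmetic differences are that you name $h(x)=\log(-\log Z_g(e^x))$ explicitly and add a brief justification of differentiability, which the paper leaves implicit.
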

\begin{proof}
    If $\lim_{x\to\infty} f(e^x) = {\lambda}$, then for any $\epsilon>0$ there exists a large $0<x^*<\infty$ such that
    \begin{equation}
        {\lambda}-\epsilon < f(e^x) < {\lambda}+\epsilon, \quad \forall x>x^*.
    \end{equation}
    Under a semidefinite integral from $x^*$ to $x>x^*$, this gives
    \begin{equation}
        ({\lambda}-\epsilon)(x-x^*) < \log( \frac{\log Z_g(e^x)}{\log Z_g(e^{x^*})} ) < ({\lambda}+\epsilon)(x-x^*),
    \end{equation}
    Exponentiating twice and letting $\beta=e^x$, this leads to
    \begin{equation}
        \exp(\log Z_g(\beta^*) \; (\beta/\beta^*)^{{\lambda}-\epsilon}) > Z_g(\beta) > \exp(\log Z_g(\beta^*) \; (\beta/\beta^*)^{{\lambda}+\epsilon}),
    \end{equation}
    where we used that $\log Z_g(\beta)<0$ at large $\beta=\beta^*$, since $\lim_{\beta\to\infty} Z_g(\beta)=0$ by \cref{lem:Ztozero}. Letting $c_\pm \equiv -\frac{\log Z_g(\beta^*)}{{\beta^*}^{({\lambda}\pm\epsilon)}}$ and noting that $c_\pm>0$, we obtain the desired result.
\end{proof}

We now prove a mild generalization of the initial value theorem for Laplace transforms:
\begin{nlemma}
\label{nlem:inival}
    If $g$ is of exponential type and has Laplace transform $Z_g(\beta)$, then there is a constant $0<\nu\leq1$ such that $\lim_{E\to0^+} E^{1-\nu} g(E) = \hat{g}(0^+)<\infty$ exists and
    $$
        \lim_{\beta\to\infty} \beta^{\nu} Z_g(\beta) = \Gamma(\nu) \, \hat{g}(0^+).
    $$
    For $\nu=1$, this is known as the initial value theorem.
\end{nlemma}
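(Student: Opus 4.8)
\textbf{Proof proposal for \cref{nlem:inival}.} The plan is to reduce this to a Tauberian/Abelian-type argument adapted to the singular behavior of $g$ near the origin. First I would use the standing hypothesis (implicit in the statement and consistent with the setup of \cref{sec:fact}): $g$ is non-negative, locally integrable, of exponential type, and its behavior as $E\to0^+$ is power-law-like, meaning there is a unique exponent $\nu\in(0,1]$ for which the one-sided limit $\hat g(0^+)\equiv\lim_{E\to0^+}E^{1-\nu}g(E)$ exists in $[0,\infty)$ (the local integrability of $g$ forces $1-\nu<1$, i.e. $\nu>0$; and if $g$ is bounded near $0$ one simply takes $\nu=1$, recovering the classical initial value theorem). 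The existence of such a $\nu$ is exactly the kind of regularity assumed for $\rho$ throughout \cref{sec:fact} (cf. the discussion after \cref{eq:lpgen} allowing integrable singularities slower than $E^{-1}$), so I would state it as the hypothesis rather than prove it.

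The core computation is a rescaling of the Laplace integral. Write
\begin{equation}
\beta^\nu Z_g(\beta) = \beta^\nu \int_{0^+}^\infty dE\, g(E)\, e^{-\beta E} = \int_{0^+}^\infty dt\, t^{\nu-1}\, \big(t^{1-\nu} g(t/\beta)\,\beta^{1-\nu}\cdot \tfrac{1}{\beta}\big)^{-1}\!\!\cdots
\end{equation}
— more cleanly, substitute $E = t/\beta$, giving $\beta^\nu Z_g(\beta) = \int_{0^+}^\infty dt\, t^{\nu-1}\, \big[(t/\beta)^{1-\nu} g(t/\beta)\big]\, e^{-t}$. As $\beta\to\infty$, the bracket converges pointwise in $t$ to $\hat g(0^+)$ by definition of the limit. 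Then $\int_0^\infty t^{\nu-1} e^{-t}\,dt = \Gamma(\nu)$ gives the claimed value, provided the limit can be passed inside the integral. I would justify the interchange with the dominated convergence theorem: split $\int_0^\infty = \int_0^{T} + \int_T^\infty$. On $[0,T]$ one needs a dominating function; near $t=0$ the factor $t^{\nu-1}$ is integrable and $(t/\beta)^{1-\nu}g(t/\beta)$ is bounded uniformly in large $\beta$ and small $t$ (shrink a neighborhood of $0$ using the existence of $\hat g(0^+)$, and use local boundedness of $E^{1-\nu}g(E)$ on the rest, valid for $\beta$ past some $\beta^*$ so that $t/\beta$ stays in a fixed compact set where $g$ is controlled). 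On $[T,\infty)$ the exponential type bound $|g(E)|\le Me^{cE}$ combined with $e^{-t}$ gives, for $\beta$ large enough that $c/\beta < 1/2$ say, a bound of the form $C t^{\nu-1} e^{-t/2}$, which is integrable; and this tail contributes nothing in the limit. Assembling the two pieces yields the dominating function and the interchange is legitimate.

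The main obstacle I anticipate is not the convergence argument itself but pinning down the hypothesis cleanly: the statement as written ("there is a constant $0<\nu\le1$ such that $\lim_{E\to0^+}E^{1-\nu}g(E)$ exists") is only true for $g$ with genuinely power-law leading behavior at the origin — it would fail for, e.g., $g(E)\sim -\log E$ or $g(E)\sim E^{-1}(\log(1/E))^{-2}$, where no single $\nu$ works. So the honest version of the lemma carries the implicit assumption that $g$ \emph{has} such a leading exponent, which is the generic situation for the spectral densities in \cref{eq:rdc} and is what the paper needs in \cref{sec:fact}. I would therefore phrase the proof as: "Suppose $\nu\in(0,1]$ is such that the stated limit exists; then\dots", and carry out the DCT rescaling argument above. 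A secondary minor point is handling the $0^+$ lower limit and any atom of $g$ at the origin — but by construction (cf. \cref{eq:rdc}) the delta contributions are separated into $\Delta(E)$, so $g$ itself has no atom and $\int_{0^+}^\infty$ and $\int_0^\infty$ agree.
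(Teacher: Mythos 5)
Your proof is correct and is essentially the paper's own argument: substitute $\xi=\beta E$ in the Laplace integral, recognize the pointwise limit $E^{1-\nu}g(E)\to\hat g(0^+)$, and justify the interchange of limit and integral by dominated convergence using the exponential-type bound, yielding $\Gamma(\nu)\,\hat g(0^+)$. Your choice to treat the existence of $\nu$ as an explicit hypothesis is, if anything, more careful than the paper's one-line claim that it follows from local integrability (though note that $g\sim-\log E$ is not actually a counterexample to the stated lemma, since any $\nu<1$ gives $\hat g(0^+)=0$ and the conclusion then holds trivially; your $g\sim E^{-1}(\log(1/E))^{-2}$ is the genuine problematic case).
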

\begin{proof}
    For $Z_g$ to exist, $g$ must be locally integrable, which requires $\lim_{E\to0^+}g(E)E=0$. Hence there must exist $0<\nu\leq1$ such that
    \begin{equation}
        g(E) = \frac{\hat{g}(E)}{E^{1-\nu}} \qquad\text{and}\quad \lim_{E\to0^+}\hat{g}(E)=g(0^+)<\infty.
    \end{equation}
    Using this and changing variables to $\xi\equiv\beta E$, the integral form of $Z_g(\beta)$ becomes
    \begin{equation}
        Z_g(\beta) = \beta^{-\nu} \int_0^\infty d\xi\, e^{-\xi} \; \frac{\hat{g}(\xi/\beta)}{\xi^{1-\nu}},
    \end{equation}
    Hence for $g$ of exponential type, using dominated convergence as in \cref{lem:Ztozero},
    \begin{equation}
        \lim_{\beta\to\infty} \beta^{\nu} Z(\beta) = \lim_{\beta\to\infty} \int_0^\infty d\xi\, e^{-\xi} \; \frac{\hat{g}(\xi/\beta)}{\xi^{1-\nu}} = \hat{g}(0^+) \int_0^\infty d\xi\; \frac{e^{-\xi}}{\xi^{1-\nu}} = \Gamma(\nu) \, \hat{g}(0^+).
    \end{equation}
\end{proof}

Finally, the main theorem and its proof go as follows:

\thermofact*

\begin{proof}
Applying \cref{lem:Ztozero} to $\expval{Z}=\hat{Z}+Z_g$ from \cref{eq:lpgen} and recalling $E_k>0$ gives
\begin{equation}
    \lim_{\beta\to\infty} \hat{Z}(\beta) = N_0,\qquad \lim_{\beta\to\infty} Z_g(\beta) = 0,
\end{equation}
and therefore
\begin{equation}
    \lim_{\beta\to\infty} Z(\beta) = N_0.
\end{equation}
Suppose first that $N_0>0$. Then starting from \cref{eq:annS} one easily obtains
\begin{equation}
    \lim_{\beta\to\infty} S_a(\beta) = \log N_0 - \frac{1}{N_0} \lim_{\beta\to\infty} \beta
    Z_g'(\beta).
\end{equation}
Note that $Z_g'(\beta)$ is equal to the Laplace transform of $-g(E) E$. Local integrability of $g(E)$ including $E=0$ requires $\lim_{E\to0^+} g(E) E = 0$. Hence by \cref{nlem:inival}, the Laplace transform of $-g(E) E$ satisfies $\lim_{\beta\to\infty} \beta Z_g'(\beta)=0$. As a result, for $N_0>0$,
\begin{equation}
    \lim_{\beta\to\infty} S_a(\beta) = \log N_0.
\end{equation}
This already shows that, for $N_0>0$ arbitrarily small, $S_a(\beta)$ can be made arbitrarily negative as $\beta\to\infty$. It thus seems reasonable to expect that $S_a(\beta)$ will diverge negatively as $\beta\to\infty$ for $N_0=0$, but taking the limit $N_0\to0$ above would be unjustified.

Suppose now that $N_0=0$, and rewrite \cref{eq:annS} as
\begin{equation}
\label{eq:sbdec}
    S_a(\beta) = \log Z(\beta) \left( 1 - f(\beta) \right), \qquad f(e^x) = \partial_x \log( -\log Z(e^x)).
\end{equation}
Then, if we define the limit
\begin{equation}
\label{eq:nlambda}
    \lambda \equiv \lim_{\beta\to\infty} f(\beta),
\end{equation}
a sufficient condition for $S(\beta)$ to diverge negatively as $\beta\to\infty$ is $\lambda<1$.
We now show that indeed only $\lambda<1$ is possible.
Using \cref{lem:supperexp}, the limiting behavior of $f$ in \cref{eq:nlambda} would imply there exist finite $c_\pm,\beta^*>0$ such that
\begin{equation}
\label{eq:sandw}
    e^{- c_- \beta^{{\lambda}-\epsilon}} > Z_g(\beta) > e^{- c_+ \beta^{{\lambda}+\epsilon}},
\end{equation}
for any $\beta>\beta^*$ and arbitrarily small $\epsilon>0$.
For a function $g$, \cref{lem:Ztozero} implies
\begin{equation}
    \lim_{\beta\to\infty} Z_g(\beta e^{i\alpha}) = 0,
\end{equation}
for any $\alpha\in\left(-\frac{\pi}{2},\frac{\pi}{2}\right)$, which on the asymptotic form for $Z$ above this imposes $\lambda \leq 1$.\footnote{\,Intuitively, as an integral over exponentials $e^{-\beta E}$ against a non-negative measure $g(E)$, the Laplace transform cannot possibly decrease faster than exponential at large $\beta$.}
Hence it only remains to show that $\lambda=1$ cannot happen.

Because $N_0=0$, the distribution $\Delta$ in \cref{eq:rdc} has no support near $E=0$, so the canonical shift that defines $\rho$ in \cref{eq:ltdef} implies $\inf\supp g = 0$.
Recall $\lim_{E\to0^+} g(E) \equiv g(0^+) \in [0,\infty]$, with divergence allowed so long as $g$ remains integrable. Suppose $g(0^+)>0$.
By piecewise continuity, there is a sufficiently small $\epsilon>0$ such that $(0,\epsilon)\subset\supp g$ and $g_0 \equiv \min g(0,\epsilon) > 0$. Hence,
\begin{equation}
    Z_g(\beta) \geq \int_{0^+}^\epsilon dE \, g(E)\, e^{-\beta E} \geq \frac{g_0}{\beta} \left( 1-e^{-\beta  \epsilon} \right).
\end{equation}
This lower bound on $Z_g(\beta)$ is only consistent with the upper bound in \cref{eq:sandw} for $\lambda=0$.
Finally, for the sake of contradiction, suppose $\lambda=1$. By \cref{lem:supperexp}, for this to hold in the $\beta\to\infty$ limit, it must be the case that $\lim_{\beta\to\infty} e^{c\beta} Z_g(\beta)$  is finite for some $c>0$. However, this would require $g$ to have no support on $(0,c)$, for otherwise the limit would diverge. This would contradict the construction, which fixes $\inf\supp g = 0$.

Altogether, we have shown that $\lim_{\beta\to\infty} S_a(\beta)=\log N_0<0$ for $N_0>0$, and that for $N_0=0$ the limit diverges negatively. If $N_0\geq0$, then $S_a(\beta)\geq0$ for all $\beta$ by continuity and monotonicity of the Laplace transform. That for $N_0<1$ including $N_0=0$ there exists some finite $\beta^*>0$ such that $S_a(\beta)<0$ for all $\beta>\beta^*$ follows for the same reason.

\end{proof}

For the $N_0=0$ case, note that one could consider writing\footnote{\,
These two terms give the standard thermodynamic relation $S(\beta) = \beta (\expval{E}-F(\beta))$. Hence, thermodynamically the negativity of $S(\beta)$ basically means that at some temperature there is more free energy than energy itself. This would obviously be non-sense if $\rho$ were the spectral density of a single physical system.}
\begin{equation}
    S_a(\beta) = \log \expval{Z(\beta)} - \frac{\beta \expval{Z'(\beta)}}{\expval{Z(\beta)}}.
\end{equation}
As $\beta\to\infty$, the first term clearly diverges negatively.
However, $Z'(\beta)<0$ for any finite $\beta$, so in principle the second term could cancel the divergence of the first.
For this second term we know $\lim_{\beta\to\infty} Z(\beta)=0$ and also $\lim_{\beta\to\infty} \beta Z'(\beta)=0$. In particular, the latter follows from $\expval{Z}=\hat{Z}+Z_g$ by noting that all terms in $\hat{Z}'(\beta)$ decrease exponentially in $\beta$, and using \cref{nlem:inival} for $Z_g'(\beta)$ as the Laplace transform of $-Eg(E)$.
Then one would hope to apply L'H\^opital, but this turns out to be useless: both numerator and denominator continue to give zeroes at arbitrary derivative order.
Not only is the behavior of this term in the limit not obvious, but it can also be inconclusive. Namely, while in many cases one can show this ratio is finite in the limit, there are examples for which it is divergent and could indeed compete with the first term. These correspond to cases in which the parameter in \cref{eq:nlambda} satisfies $0<\lambda<1$, which occurs when $\rho$ and all of its derivatives vanish at zero (cf. $1/\rho(E)$ having an essential singularity at $E=0$). For instance, $\rho(E) \sim e^{-1/E^{s}}$ gives $\lambda = 1/(1+s^{-1})$ for $s\geq0$; functions with non-trivial derivatives at zero give $\lambda=0$.

\section{Supersymmetric Random Matrix Theory}
\label{sec:susyrmt}

This section discusses the relevant details for the construction of supersymmetric Hamiltonians that justifies our general use of \cref{eq:hcool} throughout this paper. Since the implementation of higher supersymmetry in random matrices becomes increasingly intricate, let us begin with the smallest amount of supersymmetry.

\subsection{\texorpdfstring{$\mathcal{N}=1$}{N=1}}

Random matrix theory with $\mathcal{N}=1$ supersymmetry is discussed in e.g. \cite{Fu:2016vas,Li:2017hdt,Kanazawa:2017dpd,Sun:2019yqp,Stanford:2019vob}. Its simple structure involves a single self-adjoint supercharge $Q$ and a Hamiltonian given by $H=Q^2$. The canonical construction of $Q$ out of the random matrix $M$ takes the form\footnote{\,\label{fn:n1susy}Alternatively, one can write the real supercharge $Q = \widetilde{Q} + \widetilde{Q}^\dagger$ in terms of the associated complex supercharges
$$
    \widetilde{Q} = 
    \begin{pmatrix}
         0 & M \\
         0 & 0
    \end{pmatrix},\qquad   
    \widetilde{Q}^\dagger = 
    \begin{pmatrix}
         0 & 0 \\
         M^\dagger & 0
    \end{pmatrix},
$$
which are nilpotent, $\widetilde{Q}^2 = {\widetilde{Q}^\dagger}{}^2 =0 $, and give the Hamiltonian in the anti-commutator form $H = \frac{1}{2} \{\widetilde{Q},\widetilde{Q}^\dagger\}$.}
\begin{equation}
\label{eq:superq}
    Q =
    \begin{pmatrix}
         0 & M \\
         M^\dagger & 0
    \end{pmatrix},
\end{equation}
in terms of which the $\mathcal{N}=1$ random Hamiltonian $H=Q^2$ is identically given by \cref{eq:hcool}.
The symmetry group $G(N+\bar{\upnu})\times G(N)$ naturally realizes the structure of a $\mathbb{Z}_2$-graded algebra, where the sectors transforming under $G(N+\bar{\upnu})$ and $G(N)$ may respectively be identified as bosonic and fermionic. The grading is implemented in terms of the usual fermion number operator $\mathsf{F}$ by
\begin{equation}
    (-1)^{\mathsf{F}} \equiv
    \begin{pmatrix}
         \mathds{1}_{N+\bar{\upnu}} & 0 \\
         0 & -\mathds{1}_{\bar{\upnu}}
    \end{pmatrix},
\end{equation}
which clearly satisfies the desired relations $[H,(-1)^{\mathsf{F}}]=0$ and $\{Q,(-1)^{\mathsf{F}}\}=0$. Because there are $N+\bar{\upnu}$ bosonic and $N$ fermionic states, the supersymmetric index is $\bar{\upnu}$. In addition, because $(-1)^{\mathsf{F}}$ commutes with $H$, we have the Witten index 
\begin{equation}
    \Tr ~(-1)^{\mathsf{F}} e^{-\beta H} = \bar{\upnu},
\end{equation}
at any inverse temperature $\beta$. Hence $\bar{\upnu}=0$ is associated to the breaking of all supersymmetry, whereas $\bar{\upnu}>0$ corresponds to having $\bar{\upnu}$ states that preserve supersymmetry. That these BPS states precisely capture the degenerate ground state of $H$ is easily seen as follows.

If $M$ is a square AZ matrix of size $N$, then $H$ is a square matrix of size $2N$ whose only $N$ distinct eigenvalues are those of $M$ squared. A global $2$-fold degeneracy is in this case expected given that $M^\dagger M$ and $M M^\dagger$ have identical eigenvalues. Besides these, the resulting Hamiltonian $H$ will generically exhibit no other degeneracies and will, in particular, be non-singular.
If instead $M$ is an $(N+\bar{\upnu})\times N$ rectangular BPS matrix, then $H$ still is a square matrix, but now of size $2N+\bar{\upnu}$. The Hamiltonian $H$ is then singular with at least $\bar{\upnu}>0$ zero eigenvalues. This means that generically every $H$ drawn from this ensemble deterministically has $\bar{\upnu}$ degenerate ground states all with the same zero energy, corresponding to the BPS states mentioned above.

\subsection{\texorpdfstring{$\mathcal{N}=2$}{N=2}}

Although $\mathcal{N}=1$ random matrices can clearly accommodate BPS states, recall that the $\mathcal{N}=1$ super-Schwarzian spectrum from \cref{eq:susynon} has none. It is thus of interest to also understand the $\mathcal{N}=2$ matrix construction for the supermultiplet spectra of the $\mathcal{N}=2$ super-Schwarzian theories in \cref{eq:susy2}, which do exhibit BPS states. The presentation here will be concise, and we refer the reader to \cite{Turiaci:2023jfa} for more details on the relevant structure of $\mathcal{N}=2$ supersymmetry.

The $\mathcal{N}=2$ superalgebra has a Hermitian conjugate pair of nilpotent generators, the supercharges $Q$ and $Q^\dagger$, and a Hamiltonian $H= \{Q,Q^\dagger\}$ they commute with (cf. \cref{fn:n1susy} for $\mathcal{N}=1$). The algebra also has an R-symmetry given by a $U(1)$ outer automorphism group.
There is no unique $\mathcal{N}=2$ theory, but a two-parameter family thereof.
A general $\mathcal{N}=2$ theory can be specified by an odd integer $\hat{q}>0$ and, to account for a possible anomaly, by a constant $\delta\in[0,1)$.
The R-charges of $Q$ and $Q^\dagger$ are respectively $\pm\hat{q}+\delta$,\footnote{\,Exchanging $Q \leftrightarrow Q^\dagger$ one can always make $Q$ have positive R-charge, which is why one can assume $\hat{q}>0$.} and states have R-charge $k\in \mathbb{Z}+\delta$.
The Hilbert space is a direct sum $\mathcal{H}=\bigoplus_k\mathcal{H}_k$, where $\mathcal{H}_k$ consists of all states $\psi_k$ with R-charge $k$. Correspondingly, the supercharge $Q = \sum_k Q_k$ with the restriction $Q_k : \mathcal{H}_k \to \mathcal{H}_{k+\hat{q}}$, and similarly for $Q^\dagger$ with $Q_{k}^\dagger : \mathcal{H}_{k+\hat{q}} \to \mathcal{H}_{k}$.
The supersymmetry condition $Q^2={Q^\dagger}^2=0$ then implies $Q_{k+\hat{q}} Q_k=Q_{k}^\dagger Q_{k+\hat{q}}^\dagger=0$ for all $k$.

The algebra decomposes into two types of irreducible supermultiplets. Any state $\psi_k$ satisfying $Q\psi_k=Q^\dagger\psi_k=0$ forms a singlet by itself; these are BPS states. Any pair of states $(\psi_k,\psi_{k+\hat{q}})$ obeying $Q\psi_k = \lambda \psi_{k+\hat{q}}$ and $Q^\dagger \psi_{k+\hat{q}} = \lambda^* \psi_{k}$ with $\lambda,\lambda^*\neq0$ forms a doublet; these are non-BPS states.
All BPS states have zero energy since $H\psi_k=0$, whereas non-BPS states come in degenerate pairs with energy $H\psi_k = H\psi_{k+\hat{q}} = \lambda^*\lambda > 0$. Because doublets always involve consecutive R-charge values $(k,k+\hat{q})$, a convenient parameter turns out to be their average R-charge $q_k\equiv k + \frac{\hat{q}}{2}$. In terms of $q_k$, a $(k,k+\hat{q})$ doublet involves R-charges $q_k \mp \frac{\hat{q}}{2}$. All irreducible supermultiplets contain disjoint sets of states, so we can decompose $\mathcal{H}_k$ into
\begin{equation}
    \mathcal{H}_k = \mathcal{H}_k^- \oplus \mathcal{H}_k^0 \oplus \mathcal{H}_k^+,
\end{equation}
where $\mathcal{H}_k^0$ consists of all BPS states in $\mathcal{H}_k$, $\mathcal{H}_k^-$ of all non-BPS states in $\mathcal{H}_k$ forming $(k-\hat{q},k)$ doublets with states in $\mathcal{H}_{k-\hat{q}}$, and $\mathcal{H}_k^+$ of all non-BPS states in $\mathcal{H}_k$ forming $(k,k+\hat{q})$ doublets with states in $\mathcal{H}_{k+\hat{q}}$. By definition, a state is in $\mathcal{H}_k^+$ if and only if it has a doublet pair in $\mathcal{H}_{k+\hat{q}}^-$, so clearly $N_k^{\pm}\equiv\dim \mathcal{H}_k^\pm$ satisfy $N_k^{+}=N_{k+\hat{q}}^{-}$. 
This leads to the following reorganization of the Hilbert space:
\begin{equation}
    \mathcal{H} = \bigoplus_k \, \mathcal{H}_k^0 \oplus \mathcal{H}^\leftrightarrow_{q_k}, \qquad \mathcal{H}^\leftrightarrow_{q_k} \equiv \mathcal{H}_k^+ \oplus \mathcal{H}_{k+\hat{q}}^-.
\end{equation}
Here we have combined all non-BPS states forming $(k,k+\hat{q})$ doublets into $\mathcal{H}^\leftrightarrow_{q_k}$, which itself forms a (reducible) supermultiplet that we will refer to as a $q_k$-multiplet. Since $\mathcal{H}^\leftrightarrow_{q_k}$ contains an even number of states, we let $2N^\leftrightarrow_{q} \equiv \dim \mathcal{H}^\leftrightarrow_{q}$ and note that $N^\leftrightarrow_{q_k} = N_k^{+}=N_{k+\hat{q}}^{-}$.
Dropping the $k$ from $q_k$ we may sometimes use $q$ as a standalone parameter. The definition of $q_k$ makes $q\in\mathbb{Z}+\delta-\frac{1}{2}$, since $\hat{q}$ is always odd. 
Each of the Hilbert subspaces $\mathcal{H}_k^0$ and $\mathcal{H}_{q}^{\leftrightarrow}$ defines a supermultiplet which turns out to be statistically independent of the rest. In particular, there is a fixed number of zero-energy BPS states in each $\mathcal{H}_k^0$, and there is an independent probability distribution governing the spectrum of energies of non-BPS states in each $\mathcal{H}_{q}^{\leftrightarrow}$.
Interestingly though, the spectral statistics of a $q$-multiplet do happen to be sensitive to the total number of BPS and certain non-BPS states in $\mathcal{H}^0_{q \mp \hat{q}/2}$.\footnote{\,\label{fn:bpsindep}That the spectra of non-BPS states in adjacent spaces $\mathcal{H}_{q}^{\leftrightarrow}$ and $\mathcal{H}_{q+\hat{q}}^{\leftrightarrow}$ may depend on shared numerical parameters does not correlate them statistically. Each spectrum still follows an independent probability distribution, regardless of whether these distributions happen to have parameters in common (cf. iid random variables).} Because these supermultiplets have statistically independent spectra, they are the right target for a matrix ensemble description.
Indeed, what \cref{eq:susy2} describes is the spectral density of a $q$-multiplet, including a potential contribution from BPS states in $\mathcal{H}^0_{q\pm\hat{q}/2}$.\footnote{\,Note however that generically at least one of $\mathcal{H}^0_{q\pm\hat{q}/2}$ is expected to be empty, for otherwise BPS states from each could be perturbed into non-BPS states forming doublets in $\mathcal{H}_{q}^{\leftrightarrow}$ \cite{Turiaci:2023jfa}.}

To understand these spectra consider first expanding the Hamiltonian as
\begin{equation}
\label{eq:hqk}
    H = \sum_k Q_k Q_k^\dagger + Q_k^\dagger Q_k.
\end{equation}
We want to isolate the operators that act on and thus determine the energies of each $q_k$-multiplet. Since states in $\mathcal{H}_{q_k}$ only have R-charge $k$ or $k+\hat{q}$, let $\psi_k$ and $\psi_{k+\hat{q}}$ respectively denote any two of them. Those $\psi_k\in\mathcal{H}_k$ can be acted upon by $Q_k$ and $Q_{k-\hat{q}}^\dagger$, whereas those $\psi_{k+\hat{q}}\in\mathcal{H}_{k+\hat{q}}$ can be acted upon by $Q_{k+\hat{q}}$ and $Q_{k}^\dagger$. The BPS states are annihilated by any supercharge operator that acts on them, whereas the non-BPS states all come in doublets satisfying $Q_k\psi_k = \lambda\psi_{k+\hat{q}}$ and $Q_k^\dagger \psi_{k+\hat{q}} = \lambda^* \psi_{k}$. But by the supersymmetry conditions $Q_{k+\hat{q}} Q_k = Q_{k-\hat{q}}^\dagger Q_k^\dagger = 0$ the latter are also annihilated as $Q_{k+\hat{q}} \psi_{k+\hat{q}} = Q_{k-\hat{q}}^\dagger \psi_k = 0$. Hence in fact the only supercharge operators that act non-trivially on $\mathcal{H}_{q_k}$ are $Q_k$ and $Q_{k}^\dagger$. This means one can single out from \cref{eq:hqk} the term
\begin{equation}
\label{eq:hkplet}
    H_{q_k} \equiv  Q_k Q_k^\dagger + Q_k^\dagger Q_k = 
    \begin{pmatrix}
         (QQ^\dagger)_{q+\hat{q}/2} & 0 \\
         0 & (Q^\dagger Q)_{q-\hat{q}/2}
    \end{pmatrix}.
\end{equation}
as it solely determines the spectrum of the $q_k$-multiplet. In the second expression we have simply made use of the fact that $H_{q}$ acts just on the direct sum $\mathcal{H}_{q-\hat{q}/2}\oplus \mathcal{H}_{q+\hat{q}/2}$ to make \cref{eq:hkplet} suggestively similar to \cref{eq:hcool}.

Since $H_{q_k}$ only depends on $Q_k$ and its Hermitian conjugate, one has to understand the structure of $Q_k$ operators.
Recall each of these is a linear map $Q_k : \mathcal{H}_k \to \mathcal{H}_{k+\hat{q}}$, so letting $N_k \equiv \dim \mathcal{H}_k$ one can represent $Q_k$ as an $N_{k+\hat{q}} \times N_k$ matrix.
Comparing \cref{eq:hkplet} to the $\mathcal{N}=1$ matrix Hamiltonian from \cref{eq:hcool}, one would like to identify $Q_k$ with an $N_{k+\hat{q}} \times N_k$ random matrix $M$ in an AZ or BPS ensemble. Unfortunately this is not immediately possible given the supersymmetry constraints $Q_{k}Q_{k-\hat{q}}=0$ that these must satisfy. Nonetheless, by analyzing the interplay between the Jacobian measure of a matrix integral over $Q_k$ and a functional constraint imposing $Q_{k}Q_{k-\hat{q}}=0$, \cite{Turiaci:2023jfa} showed that $Q_{k}$ continues to be describable by an independent AZ or BPS random matrix on the orthogonal complement of the image of $Q_{k-\hat{q}}$. However, their focus on short sequences of R-charge values suggested that the effective random sub-matrix of $Q_k$ would only be rectangular when $\mathcal{H}_k\oplus\mathcal{H}_{k+\hat{q}}$ contained BPS states, and would be a square matrix otherwise. As we discuss in \cref{sec:gaps}, this would explain the formation of spectral gaps on $q$-multiplets when $\mathcal{H}_{q\pm\hat{q}/2}$ contains BPS states, but would not explain why the onset of the spectrum keeps shifting to higher energies when there no longer are any BPS states nearby. In what follows we depart from the analysis of \cite{Turiaci:2023jfa} to address this issue.

The strategy will be to analyze an ordered sequence of R-charge values, providing a random matrix representation for each $Q_k$ that consistently implements the supersymmetry conditions $Q^2=0$. As we will see, the random matrix description that arises for the spectrum of an arbitrary $q$-multiplet becomes manifestly intrinsic to the structure of the Hilbert spaces $\mathcal{H}_{q\pm\hat{q}/2}$ relevant to the supermultiplet, irrespective of the order followed in the construction.

Suppose a sequence of R-charge values begins at $k$, i.e., $\mathcal{H}_k\neq\varnothing$ but $Q_{k-\hat{q}} = 0$. In general $\mathcal{H}_k$ may contain BPS states in $\mathcal{H}_k^0$ and non-BPS states in $\mathcal{H}_k^+$, but clearly $\mathcal{H}_k^- =\varnothing$, so $N_k^-=0$.
The first non-trivial supercharge operator $Q_k : \mathcal{H}_k \to \mathcal{H}_{k+\hat{q}}$ may be represented by an $N_{k+\hat{q}} \times N_k$ matrix. Since $Q_{k-\hat{q}}$ is trivial, there are actually no constraints on $Q_k$ from $Q_k Q_{k-\hat{q}} = 0$. Hence $Q_k$ should be identified with a random AZ or BPS matrix of size $N_{k+\hat{q}} \times N_k$. Interestingly, the cases $N_{k+\hat{q}} > N_k$ and $N_{k+\hat{q}} \leq N_k$ turn out to be qualitatively distinct.

Suppose first that $N_{k+\hat{q}} \leq N_k$. Then generically $\rank Q_k = N_{k+\hat{q}}$ and $\ker Q_k = N_k - N_{k+\hat{q}}$. Because $Q_{k-\hat{q}}^\dagger=0$, any state in $\mathcal{H}_k$ annihilated by $Q_k$ is automatically a BPS state, giving
\begin{equation}
    N_k^0 = N_k - N_{k+\hat{q}},
\end{equation}
which could also be just zero.
Since $N_k^-=0$, the remaining states in $\mathcal{H}_k$ must be forming doublets with states in $\mathcal{H}_{k+\hat{q}}$, so $N_{k}^+ = N_{k+\hat{q}}$. But by the doublet relation between $\mathcal{H}_k^+$ and $\mathcal{H}_{k+\hat{q}}^-$ we know $N_{k}^+ = N_{k+\hat{q}}^- = N_{q_k}^{\leftrightarrow}$, which in particular implies $N_{k+\hat{q}} = N_{q_k}^{\leftrightarrow}$. In other words, $\mathcal{H}_{k+\hat{q}} = \mathcal{H}_{q_k}^\leftrightarrow$ solely consists of non-BPS states in $(k,k+\hat{q})$ doublets. As a result, $\mathcal{H}_{k+\hat{q}}$ contains neither BPS states nor non-BPS states that could possibly continue the R-charge sequence beyond $k+\hat{q}$. Having terminated with $H = \mathcal{H}_{k}\oplus \mathcal{H}_{k+\hat{q}}$, we simply have supercharge $Q=Q_k$ and Hamiltonian $H=H_{q_k}$ as in \cref{eq:hkplet}. This simple case reproduces the $\mathcal{N}=1$ setting with a random AZ or BPS matrix $M$ of dimensions $(N+\bar{\upnu})\times N$ having $N=N_{q_k}^{\leftrightarrow}$, $\bar{\upnu} = N_k^0$, and random supercharge $Q=M^\dagger$.

Suppose now that $N_{k+\hat{q}} > N_k$. Then generically $\rank Q_k = N_{k}$ and the kernel of $Q_k$ in $\mathcal{H}_k$ is trivial. Hence there are no BPS states in $\mathcal{H}_k$ and since both $N_k^-=N_k^0=0$ we have $N_k = N_{q_k}^\leftrightarrow$, i.e., every state in $\mathcal{H}_k$ is in a non-BPS doublet with some state in $\mathcal{H}_{k+\hat{q}}$. While the content of $\mathcal{H}_{k+\hat{q}}$ remains to be determined, let us emphasize that at this point $Q_k$ already is rectangular even though there are no BPS states in $\mathcal{H}_k$. The random $(N+\bar{\upnu})\times N$ matrix $M$ here has $N=N_{q_k}^{\leftrightarrow}$, $\bar{\upnu} = N_{k+\hat{q}} - N_{q_k}^{\leftrightarrow}$, and is identified with $Q_k = M$. The corresponding $q_k$-multiplet spectrum is described by the Hamiltonian $H_{q_k}$ given by the right-most expression in \cref{eq:hkplet} with $Q=M$. Crucially, even though the matrix $M M^\dagger$ is singular with $\bar{\upnu}$ degenerate zero eigenvalues, these now have nothing to do with BPS states. In fact, this $\bar{\upnu}$-dimensional kernel of $(M M^\dagger)_{k+\hat{q}}$ simply corresponds to all those states in $\mathcal{H}_{k+\hat{q}}$ which are not in $\mathcal{H}_{q_k}^\leftrightarrow$. Namely, this term corresponds to $Q_k Q_k^\dagger$, which annihilates any state in $\mathcal{H}_{k+\hat{q}}$ that is not in a $(k,k+\hat{q})$ doublet, whether or not it is BPS. We thus learn that the statistics of this purely non-BPS $q_k$-multiplet do actually require a rectangular random matrix with $\bar{\upnu}>0$ from a BPS ensemble for their description. In addition, the spectrum of non-BPS states will behave just like that of a supermultiplet with $\bar{\upnu}$ BPS states with zero energy, even though in this case there actually are no such states at zero.

Continuing the sequence to $\mathcal{H}_{k+\hat{q}}$, we know $\dim H_{k+\hat{q}}^- = N_{q_k}^{\leftrightarrow}$, and that the remaining $N_{k+\hat{q}}-N_k$ states in $\mathcal{H}_{k+\hat{q}}$ must be either BPS or non-BPS in doublets with states in $\mathcal{H}_{k+2\hat{q}}$. To address these, we must now study $Q_{k+\hat{q}} : \mathcal{H}_{k+\hat{q}} \to \mathcal{H}_{k+2\hat{q}}$. Although $Q_{k+\hat{q}}$ can be represented by an $N_{k+2\hat{q}} \times N_{k+\hat{q}}$ matrix as usual, we now have to take into account the supersymmetry constraint $Q_{k+\hat{q}} Q_k = 0$. Because $\rank Q_k = N_{q_k}^{\leftrightarrow}$, this projects out a subspace of codimension $N_{q_k}^{\leftrightarrow}$ in $\mathcal{H}_{k+\hat{q}}$ from the domain of $Q_{k+\hat{q}}$. In other words, $Q_{k+\hat{q}}$ effectively becomes representable by a sub-matrix of dimensions $N_{k+2\hat{q}} \times (N_{k+\hat{q}} - N_{q_k}^{\leftrightarrow})$. This reduced matrix is actually now unconstrained and thus identifiable with a random AZ or BPS matrix. The situation here essentially reduces to our starting point at the beginning of the sequence on $\mathcal{H}_k$. More explicitly, note that the subspace of $\mathcal{H}_{k+\hat{q}}$ that $Q_{k+\hat{q}} Q_k = 0$ removes corresponds precisely to the non-BPS states in the $q_k$-multiplet $\mathcal{H}_{q_k}^{\leftrightarrow}$. Hence the reduced random matrix for $Q_{k+\hat{q}}$ is representing a map acting only on $\mathcal{H}_{k+\hat{q}}^0\oplus\mathcal{H}_{k+\hat{q}}^+\subset \mathcal{H}_{k+\hat{q}}$. This is just like the situation we encountered when starting the sequence at R-charge $k$, where obviously $\mathcal{H}_k^-=\varnothing$ and we had $\mathcal{H}_k = \mathcal{H}_k^0 \oplus \mathcal{H}_k^+$.
Hence at this point one can simply iterate the logic we followed thus far, but starting from $\mathcal{H}_{k+\hat{q}}$ and an operator $Q_{k+\hat{q}}$ restricted to an effective domain of codimension $N_{q_k}^{\leftrightarrow}$.

Above we chose to analyze a sequence of R-charges starting from some smallest value $k$, iteratively imposing supersymmetry constraints of the form $Q_{k+\hat{q}} Q_k=0$. However, this order was arbitrary and we could have equally started from some largest value $k+\hat{q}$, and iteratively imposed supersymmetry constraints of the form $ Q_{k-\hat{q}}^\dagger Q_k^\dagger = 0$. In fact, the pattern that arises for an arbitrary $q_k$-multiplet is clear, and can be easily described solely in terms of the number of BPS states and structure of non-BPS doublets in $\mathcal{H}_k\oplus \mathcal{H}_{k+\hat{q}}$. Explicitly, the spectrum of the $q_k$-multiplet is captured by $Q_{k} : \mathcal{H}_{k} \to \mathcal{H}_{k+\hat{q}}$, which before imposing any supersymmetry conditions $Q^2=0$ is represented by an $N_{k+\hat{q}}\times N_k$ matrix with $N_k=N_k^- + N_k^0 + N_k^+$ and $N_{k+\hat{q}}=N_{k+\hat{q}}^- + N_{k+\hat{q}}^0 + N_{k+\hat{q}}^+$, where $N_k^+=N_{k+\hat{q}}^-\equiv N_{q_k}^\leftrightarrow $ is the number of non-BPS forming $(k,k+\hat{q})$ doublets, and at most one of $N_k^0$ or $N_{k+\hat{q}}^0$ is non-zero and counts the number of BPS states of the corresponding R-charge. The supersymmetry conditions $Q_{k+\hat{q}}Q_k = Q_{k}Q_{k-\hat{q}} = 0$ restrict the rank of $Q_k$ to $N_{q_k}^\leftrightarrow$, and thus its resulting AZ or BPS random matrix representation has $N=N_{q_k}^\leftrightarrow$. As for the $\bar{\upnu}$ parameter, if there are BPS states in $\mathcal{H}_k\oplus\mathcal{H}_{k+\hat{q}}$ then $\bar{\upnu} = \max\{N_k^0,N_{k+\hat{q}}^0\}$, whereas if there are no BPS states $N_k^0=N_{k+\hat{q}}=0$ and then $\bar{\upnu}=\min\{N_k^-,N_{k+\hat{q}}^+\}=\min\{N_{q_k-\hat{q}}^\leftrightarrow,N_{q_k+\hat{q}}^\leftrightarrow\}$.\footnote{\,The transition between these two cases is consistent. Namely, as we found, if $N_{k+\hat{q}}^0>0$ then necessarily $N_{k+\hat{q}}^+=0$ because the R-charge sequence is upper bounded by $k+\hat{q}$, so in the limit $N_{k+\hat{q}}^0\to0$ the BPS and non-BPS definitions of $\bar{\upnu}$ agree. Similarly, $N_{k}^0>0$ implies $N_k^-=0$ and thus in the limit $N_{k}^0\to0$ there is agreement at $\bar{\upnu}=0$. This is simply because the transition between the two cases corresponds to the rectangular aspect ratio of $Q_k$ crossing $1$.} In the BPS case, if $N_k^0>0$ the R-charge sequence terminates at the smallest R-charge $k$, and if $N_{k+\hat{q}}^0>0$ it terminates at the largest R-charge $k+\hat{q}$. In the non-BPS case the sequence need not terminate, unless it turns out $N_{k-\hat{q}}=0$ or $N_{k+2\hat{q}}=0$. In either case, we see that the $q_k$-multiplet spectrum is indeed described by \cref{eq:hcool} with the random matrix $M$ drawn from a BPS ensemble with $\bar{\upnu}>0$, or possibly an AZ ensemble if $\bar{\upnu}=0$. 

Let us note in passing that if the number of states in $q$-multiplets $2N_{q}^\leftrightarrow$ e.g. monotonically increases with the R-charge, then we generally find that $\bar{\upnu}_q = N_{q-\hat{q}}^\leftrightarrow$ for the $q$-multiplet. 
Hence if $N_{q}^\leftrightarrow$ grows say quadratically in $q$, we would similarly expect a monotonic growth of $\bar{\upnu}_q \propto q^2$.
Even if there are no BPS states involved, this intuitively means that the spectrum of the non-BPS states in the $q$-multiplet with R-charge $q-\hat{q}/2$ feels the absence of those states with the same R-charge which actually fell into the $(q-\hat{q})$-multiplet.

\subsection{\texorpdfstring{$\mathcal{N}=4$}{N=4}}

For $\mathcal{N}=4$ supersymmetry, the algebra contains four Hermitian supercharges $\{Q_I\}_{I=1}^4$ which transform as vectors of $SO(4)$ and give the Hamiltonian as
\begin{equation}
\label{eq:n4ham}
    \{Q_I,Q_J\}=2 H \delta_{IJ}.
\end{equation}
In this case the R-symmetry is $SU(2)$ and the Hilbert space can be organized into supermultiplets labelled by a half-integer spin $J$, which is what the spectral density in \cref{eq:rq4} describes. Once again, multiplets at different values of $J$ are expected to be statistically independent and describable by separate ensembles. Although the specific random matrix construction would now be technically more involved (see \cite{Turiaci:2023jfa}), the conclusion seems to be similar to that for the $\mathcal{N}=1,2$ cases: the spectrum of different supermultiplets would be captured by \cref{eq:hcool} with $M$ a random matrix. In this case, however, $M$ should always be drawn from a BPS ensemble where we expect $\bar{\upnu}>0$ to be related to the varying number of non-BPS states associated to supermultiplets of different spin.

\section{Random Matrix Tools}

This section reviews some standard tools for the analysis of random matrix integrals.

\subsection{The Resolvent}

The resolvent formalism is particularly suited to the study of spectral properties of operators, including but not limited to matrices. Given an operator $U$, the resolvent operator is
\begin{equation}
    \sR(z) \equiv (z\mathds{1}-U)^{-1}.
\end{equation}
Let $\rho$ denote the spectral density of $U$. The support $\supp\rho$ defines the spectrum of $U$. The resolvent $\sR$ is only well-defined on $\mathbb{C}\smallsetminus \supp\rho$, an open set called the resolvent set. Taking the trace of $\sR$ leads to\footnote{\,In the particular case in which the spectrum of $U$ is a discrete set, $\rho(u)=\sum_{i} \delta(u-u_i)$ and
the resolvent reduces to a sum over eigenvalues, $$
    R(u) = \sum_i \frac{1}{u-u_i}.
$$Note that if the spectrum is a finite set, $\rho$ integrates to the total number of eigenvalues.}
\begin{equation}
\label{eq:restr}
    R(z) \equiv \Tr \sR(z) = \int_{\mathbb{R}} du \, \frac{\rho(u)}{z-u}, \qquad z\in\mathbb{C}\smallsetminus \supp\rho.
\end{equation}
Henceforth, $R$ will be referred to simply as the resolvent. Assume the spectrum of $U$ is real, and define $\rho$ on all $\mathbb{R}$ by letting $\rho(u)=0$ for any $u\in\mathbb{R}\smallsetminus \supp\rho$. The integral representation in \cref{eq:restr} may be recognized as the Stieltjes transform of a non-negative measure of density $\rho$.\footnote{\,In solving matrix integrals, oftentimes resolvent methods involving $\sR$ are not needed, and it suffices to make use of the analyticity properties of the Stieltjes transform of the spectral density (see e.g. the derivation of the loop equations \cite{Migdal:1983qrz,Eynard:2004mh,Stanford:2019vob}). Nonetheless, following conventions, we will continue to refer to $R$ as the resolvent.}
Importantly, as defined away from $\supp\rho$, note that $R$ is an analytic function.
For a general point $z$ in the resolvent set, \cref{eq:restr} can be evaluated as a contour integral closing around the pole at $z$.\footnote{\,\label{ft:formalR}The contribution from the contour closing off the real axis may not vanish at infinity if $\rho$ is not sufficiently sparse or if it is in fact not a normalizable density (cf. the DSL). In such cases the definition of $R(z)$ is only formal.} Near the real axis, letting $z=x+i\epsilon$, separating real and imaginary parts of $R$, and taking the limit $\epsilon\to0$, one obtains a representation of the resolvent in terms of the Sokhotski–Plemelj theorem\footnote{\,Even when the definition of $R$ is only formal (see \cref{ft:formalR}), its imaginary part can still be evaluated if the limit $\epsilon\to0$ is taken before sending the contour to infinity. The real part of the resolvent may diverge either way. Its imaginary part diverges for any finite $\epsilon$, but is well-defined if the $\epsilon\to0$ limit is taken first.}
\begin{equation}
\label{eq:Rplem}
    \lim_{\epsilon\to0^+} R(x\pm i\epsilon) = \fint_\mathbb{R} du \, \frac{\rho(u)}{x-u} \mp i \pi \rho(x),
\end{equation}
where $\fint$ denotes a principal value integral. The inverse to the Stieltjes transform immediately follows from this expression. Explicitly, the spectral density $\rho$ can be extracted from the imaginary part of \cref{eq:Rplem} or, equivalently, from the discontinuity of $R$ across the real axis,
\begin{equation}
\label{eq:resima2}
    \rho( x ) = \pm \frac{1}{\pi} \lim_{\epsilon\to0^+} \, \Im \, R(x \mp i\epsilon) = \frac{i}{2\pi} \lim_{\epsilon\to0^+} \, ( R(x+i\epsilon) - R(x-i\epsilon) ).
\end{equation}
It is in this way that the resolvent of an operator precisely captures its spectral density.

In the context of random matrix theory, the operators of interest are functions of the random matrix.
Oftentimes though, one is just interested in the spectral properties of the matrices in the ensemble, and thus studies the resolvent for the random matrix itself. However, more generally, $R$ stands for the resolvent of any matrix operator $U$, and $\rho$ is the spectral density of that operator. 
Since $U$ may be any non-trivial functional of the random matrix, $\rho$ may generally be very different from the spectral density of the random matrix.
For example, the construction of Hamiltonians for AZ ensembles in \cref{ssec:hamran} makes the matrix operator of interest quadratic in the actual random matrices, and the leading spectrum of these turns out to exhibit a hard edge which would not be present otherwise (see \cref{sssec:zoom,ssec:ortho,sec:linazens}).

Given that the resolvent of an operator encodes its spectral density and can be inverted to obtain it, both can be used to extract statistical properties of the ensemble. In particular, correlation functions of spectral densities and resolvents are related by
\begin{equation}
\label{eq:RRrhorho}
    \langle R(z_1) \cdots R(z_n) \rangle = \int_{\mathbb{R}} \left[ \prod_{i=1}^n \frac{du_i}{z_i - u_i} \right]  \langle \rho(u_1)\cdots  \rho(u_n) \rangle, \qquad \forall z_i\in\mathbb{C}\smallsetminus \supp\rho.
\end{equation}
These resolvent correlation functions are particularly useful for their role in the loop equations that allow for solving matrix integrals perturbatively in a $1/N$ expansion \cite{Stanford:2019vob}.\footnote{\,In analogy with functional integrals, the loop equations can be understood as Schwinger-Dyson equations for resolvent correlation functions $\langle R(z_1) \cdots R(z_n) \rangle$ under rigid shifts of the integrated eigenvalues. Namely, since the eigenvalues being integrated over are dummy variables, this correlator is left invariant under the change of variables $\lambda_{k} \to \lambda_{k} + \epsilon$ with constant $\epsilon$, for any eigenvalue $\lambda_k$. For an infinitesimal shift, demanding invariance to linear order in $\epsilon$ leads to the loop equations (see \cite{Stanford:2019vob} for details).} While we will not be using this machinery here, it will still be very important for us to study leading order results at large $N$. Expanding both sides of \cref{eq:RRrhorho} and taking the large-$N$ limit for $n=1$ gives
\begin{equation}
\label{eq:leadR0}
    R_*(z) = \int_{\mathbb{R}}  du \, \frac{\hat{\rho}_*(u)}{z - u}, \qquad z\in\mathbb{C}\smallsetminus\supp\hat{\rho}_*,
\end{equation}
where \cref{eq:rhoslim} is the leading average spectral density, unit-normalized by absorbing factors of $N$ on both sides.
These large-$N$ objects correspond precisely to the continuum saddle-point results studied in \cref{ssec:contl}. In particular, by \cref{eq:expcond}, the spectral density $\hat{\rho}_*$ for the Hamiltonians of interest for WD (energies linear in $x$) and AZ/BPS (energies quadratic in $x$) ensembles obeys the equilibrium condition
\begin{equation}
\label{eq:alleom}
    \fint_{\mathbb{R}} dy \frac{\hat{\rho}_*(y)}{x - y} - \frac{V_\upnu'(x)}{\upbeta} = 0, \qquad \forall x\in\supp\hat{\rho}_*.
\end{equation}
Using this, for the leading spectral density the Sokhotski–Plemelj representation of the resolvent in \cref{eq:Rplem} can be written
\begin{equation}
\label{eq:allRplemV}
    \lim_{\epsilon\to0^+} R_*(x\pm i\epsilon) = \frac{V_\upnu'(x)}{\upbeta} \mp i \pi \hat{\rho}_*( x ), \qquad x\in\supp\hat{\rho}_*.
\end{equation}
In turn, this provides a discontinuity relation for the resolvent in terms of the matrix potential,
\begin{equation}
    V_\upnu'(x) = \frac{\upbeta}{2} \lim_{\epsilon\to0^+} \left( R_*(x+ i\epsilon) + R_*(x- i\epsilon) \right) \qquad x\in\supp\hat{\rho}_*,
\end{equation}
These important relations allow one to formulate the study of the spectrum of matrix integrals as a Riemann-Hilbert problem, which we turn to in \cref{sec:gensol}.

\subsection{The Spectral Curve}
\label{ssec:speccur}

The analytic continuation of the left-hand side of \cref{eq:alleom} off the support of the spectral density $\hat{\rho}_*$ defines a particularly important object known as the spectral curve,
\begin{equation}
\label{eq:ydef}
    y(z) \equiv R_*(z) - \frac{V_\upnu'(z)}{\upbeta}, \qquad z\in\mathbb{C}\smallsetminus\supp\hat{\rho}_*.
\end{equation}
Recalling that \cref{eq:alleom} comes from \cref{eq:deffV0}, $y$ is related to the leading effective potential by
\begin{equation}
\label{eq:vdevy}
    \widehat{V}_*'(z) = -\upbeta y(z), \qquad z\in\mathbb{C}\smallsetminus\supp\hat{\rho}_*.
\end{equation}
Integrating gives the leading effective potential off the support of $\hat{\rho}_*$ as
\begin{equation}
\label{eq:vyd}
    \widehat{V}_*(z) = -\upbeta \int^{z} dw \, y(w), \qquad z \in \mathbb{C} \smallsetminus \supp\hat{\rho}_*,
\end{equation}
up to a constant which may be fixed by making $\widehat{V}_*(z)$ vanish as $z\to\supp\hat{\rho}_*$ (cf. \cref{eq:deffV0}).

The spectral curve is a useful object due to its analyticity properties. 
Since the matrix potential $V$ is typically analytic, it does not contribute any poles to $y$. In WD ensembles $\upnu=0$ so $V_\upnu$ is also analytic, and $y$ just inherits the branch cut of $R_*$ along the real support of $\hat{\rho}_*$. For the AZ/BPS ensembles we are actually interested in the spectrum of squared eigenvalues, which because the integral measure $dx=\frac{d(x^2)}{2x}$ induces a $1/x$ pole on $y(x)$ at the origin (cf. \cref{eq:murho2}). In addition, if $\upnu=\nu N \sim O(N)$ as in the BPS ensembles, a logarithmic divergence $\nu \log x$ survives the large-$N$ limit in $V_\upnu$, giving a total $1/x^2$ pole to $y(x)$. This does not happen for AZ ensembles since $\upnu\sim O(1)$ and this pole is ignored at infinite $N$.

Generally $V_\upnu$ does not have any branch cuts, so applying \cref{eq:allRplemV} to \cref{eq:ydef} we have
\begin{equation}
\label{eq:scrho}
    \lim_{\epsilon\to0^{\pm}} y(x + i\epsilon) = \mp i \pi \hat{\rho}_*(x), \qquad x\in\supp\hat{\rho}_*,
\end{equation}
i.e., the discontinuity of $y$ across the support of $\hat{\rho}_*$ precisely captures $\hat{\rho}_*$ (cf. \cref{eq:resima2}).  Because \cref{eq:scrho} corresponds to a square-root branch cut,\footnote{\,If $\supp\hat{\rho}_*$ consists of multiple connected components (cuts), $y$ will have as many branch cuts.} it follows from the above that for WD ensembles $y(x)^2$ defines an analytic function everywhere. For AZ ensembles we can still obtain an analytic function by considering the combination $y(x)^2 x$. As for BPS ensembles, there actually remains a $1/x$ pole at $x=0$ for $y(x)^2 x$. One could obviously consider instead $y(x)^2x^2$, but this turns out to not be necessary. The logarithmic divergence of $V_\upnu$ as $x\to0^+$ means the support of $\hat{\rho}_*$ cannot possibly extend all the way to $x=0$. Since this BPS pole is off the the support of $\hat{\rho}_*$, for BPS ensembles we will still be interested in the analyticity properties of $y(x)^2 x$.

In general, we see that $y(x)^2$ is an analytic function except for possible poles at $x=0$, and thus the spectral curve itself, $y(z)$, defines a double cover of the complex plane branched over the support of $\hat{\rho}_*$.
An important corollary of this is that the value $y$ attains off the support of $\hat{\rho}_*$ is given by the analytic continuation of $\hat{\rho}_*$ itself (cf. \cref{eq:rhotoy}). Correspondingly, by \cref{eq:vyd}, the effective potential for $\hat{\rho}_*$ is given by an integral of the analytic continuation of $\hat{\rho}_*$ itself.

Let us end this section by quoting the spectral curve for each of our ensembles of interest with a Gaussian potential. For WD ensembles, \cref{eq:wdscal} gives
\begin{equation}
\label{eq:wdy}
    y(z) \reprel{GWD}{=}  \frac{1}{\upbeta} \sqrt{z^2 - 2\upbeta}, \qquad z\in \mathbb{C} \smallsetminus [-\sqrt{2\upbeta},\sqrt{2\upbeta}],
\end{equation}
which indeed is analytic everywhere it is defined. For AZ ensembles, \cref{eq:singaz} gives
\begin{equation}
\label{eq:azy}
    y(z) \reprel{GAZ}{=} \frac{1}{2\upbeta} \sqrt{\frac{z - 4\upbeta}{z}}, \qquad z\in \mathbb{C} \smallsetminus [0,4\upbeta].
\end{equation}
Since $z$ is now referring to squared eigenvalues, this reproduces the expected $1/x$ pole at $x=0$. Finally, for BPS ensembles, \cref{eq:gaznu} gives
\begin{equation}
\label{eq:gaznuy}
    y(z) \reprel{\!GBPS\!}{=} \frac{\sqrt{(a_+ - z)(a_- - z)}}{2\upbeta z}, \qquad z\in\mathbb{C}\smallsetminus[a_-,a_+],
\end{equation}
with the cut endpoints given by \cref{eq:endpoints}. Once again, since $z$ refers to squared eigenvalues, this objects exhibits the expected $1/x^2$ pole at $x=0$. However, as pointed out above, this pole indeed lies off the $[a_-,a_+]$ support of $\hat{\rho}_*$.

\subsection{General Equilibrium Measure}
\label{sec:gensol}

This section shows how to solve \cref{eq:expcond} to obtain an explicit expression for the equilibrium measure $\hat{\rho}_*$ that minimizes the action $\widehat{I}$ in \cref{eq:contIm} for an arbitrary matrix potential $V$.

According to \cite{Muskhelishvili1977SingularIE,Pastur1996}, \cref{eq:alleom} is a singular integral equation which has a unique solution under fairly general assumptions which have been implicit throughout.\footnote{\,The assumptions for uniqueness are: the potential $V$ is real, bounded-below, and grows such that $V(x)>\upbeta\log|x|$ for $|x|\geq L$ for some $L<\infty$; $\hat{\rho}_*$ is a non-negative, unit-normalized density; $-\int dx\,dy\,\hat{\rho}_*(x)\hat{\rho}_*(y) \log|x-y|<\infty$; $\int dy\,\hat{\rho}_*(y) \log|x-y| - V(x)<\infty$. These conditions are either implicitly true by construction (e.g. the first one is required for the matrix integral to exist), or needed for the variational problem of minimizing $\widehat{I}$ to be well defined.}
Furthermore, the unique equilibrium measure $\hat{\rho}_*$ that attains the infimum of $\widehat{I}$ can be shown to be absolutely continuous and supported on a finite union of intervals,
\begin{equation}
    \label{eq:support}
    \Sigma\equiv \supp\hat{\rho}_* = \bigcup_{j=1}^q [a_j,b_j],    
\end{equation}
where $a_j,b_j\in\mathbb{R}$ are ordered such that $a_j<b_j<a_{j+1}<b_{j+1}$ for all $j=1,\dots q-1$\footnote{\,For more details, see Proposition 4.1.2 of \cite{harnad-2011}, Theorem 1.38 of \cite{DEIFT1998388}, or Theorem 11.2.1 of \cite{pastur2011eigenvalue}.}. By \cref{eq:scrho}, we know that across each of these intervals, the spectral curve $y$ has a sign-change discontinuity, which must take the form of a square-root branch cut. For the WD ensembles, this same branch cut structure can be reproduced by the square root of the following function:
\begin{equation}
\label{eq:sigmawd}
    \eta(z) \reprel{WD}{\equiv} \prod_{j=1}^q (z-a_j)(z-b_j).
\end{equation}
For the AZ ensembles, we know $y(x)^2$ additionally has a $1/x$ pole at the origin (cf. the discussion below \cref{eq:scrho}), so for these we instead define
\begin{equation}
\label{eq:sigmaaz}
    \eta(z) \reprel{AZ}{\equiv} \frac{1}{z^2} \prod_{j=1}^q (z-a_j)(z-b_j).
\end{equation}
In writing this expression, we are using the fact that these ensembles typically have $a_1=0$. Hence one needs a factor of $1/z^2$ in order for $\eta$ to develop the right $1/z$ pole. For BPS ensembles the potential diverges at $x=0$, thus effectively repelling eigenvalues away from the origin towards positive values. This effect forces the endpoint closest to $x=0$ to obey $a_1>0$ strictly.
As a result, to cancel the non-analyticities of $y$ one can still just use \cref{eq:sigmaaz}: a nonzero $a_1$ here implies the desired $1/z^2$ pole for $\eta$. As a result for BPS ensembles \cref{{eq:sigmaaz}} actually exhibits a $1/z^2$ pole, precisely matching the behavior of $y(z)$ as well. Hence \cref{eq:sigmaaz} successfully reproduces the pole structure of the spectral curve of both AZ and BPS ensembles.

These $\eta$ functions can thus be used to cancel out any non-analyticity of $y$. Defining the square root of $\eta$ with its branch cuts along $\Sigma$, the combination $y(z)/\sqrt{\eta(z)}$ no longer has branch cuts or poles anywhere, and is in particular analytic everywhere. The upshot is that for a contour $\mathcal{C}$ around $\Sigma$, the integral
\begin{equation}
\label{eq:analys}
    \frac{1}{2\pi i}\oint_\mathcal{C} \frac{dw}{w-z} \frac{y(w)}{\sqrt{\eta(w)}} = 0,
\end{equation}
for any point $z\in\mathbb{C}$ not enclosed by $\mathcal{C}$. By analyticity of the integrand, the contour $\mathcal{C}$ can be deformed to surround $\Sigma$ arbitrarily tightly, which means \cref{eq:analys} in fact holds for any $z\in\mathbb{C}\smallsetminus\Sigma$.
This result can be applied to \cref{eq:ydef} in order to eliminate the dependence on the spectral curve, and obtain a direct relation between the resolvent and the matrix potential. To do so, consider using Cauchy's integral formula to write
\begin{equation}
    \frac{R_*(z)}{\sqrt{\eta(z)}} = \frac{1}{2\pi i} \oint_{\gamma_z} \frac{dw}{w-z} \frac{R_*(w)}{\sqrt{\eta(w)}}, \qquad z\in\mathbb{C}\smallsetminus\Sigma,
\end{equation}
for $\gamma_z$ a counterclockwise contour around a small neighborhood of $z$. Since $R_*(z)$ clearly goes as $1/z$ at infinity according to \cref{eq:leadR0}, the integrand above is guaranteed to go to zero faster than $1/z$ at infinity.
Together with the analyticity of this expression away from $\Sigma$, this means we can deform $\gamma_z$ past infinity without picking up any contributions. Retracting the contour back from infinity we obtain a contour that goes around $\Sigma$ with clockwise orientation. This can be matched onto the contour $\mathcal{C}$ used in \cref{eq:analys}, leading to
\begin{equation}
    \frac{R_*(z)}{\sqrt{\eta(z)}} = - \frac{1}{2\pi i} \oint_{\mathcal{C}} \frac{dw}{w-z} \frac{R_*(w)}{\sqrt{\eta(w)}}, \qquad z\in\mathbb{C}\smallsetminus\Sigma,
\end{equation}
where we are sticking to $\mathcal{C}$ running counterclockwise. Substituting \cref{eq:ydef} for $R_*$ on the right-hand side above and using \cref{eq:analys}, we arrive at\footnote{\,This is referred to as a Tricomi relation in \cite{Eynard:2015aea}, and as a dispersion relation in \cite{Stanford:2019vob}.}
\begin{equation}
\label{eq:R0V}
    R_*(z) = - \frac{1}{\upbeta} \sqrt{\eta(z)} \, h(z), \quad h(z) \equiv \frac{1}{2\pi i } \oint_{\mathcal{C}} \frac{dw}{w-z} \frac{V_\upnu'(w)}{\sqrt{\eta(w)}}, \qquad z\in\mathbb{C}\smallsetminus\Sigma.
\end{equation}
Using \cref{eq:ydef} allows to express the spectral curve explicitly in terms of the matrix potential,
\begin{equation}
\label{eq:yVwi}
    y(z) = -\frac{V_\upnu'(z)}{\upbeta} - \frac{1}{\upbeta} \sqrt{\eta(z)} \, h(z), \qquad x \in \mathbb{C} \smallsetminus \Sigma.
\end{equation}
By \cref{eq:scrho}, to obtain $\hat{\rho}_*$ it just remains to understand the limit in which $z$ approaches $\Sigma$. The non-trivial aspect is the evaluation of the contour integral for $h$ in \cref{eq:R0V}.
Firstly, flattening the contour around $\Sigma$, one easily arrives at
\begin{equation}
\label{eq:hreal}
     h(z) = \frac{1}{\pi} \int_{\Sigma} \frac{dy}{y-z} \frac{V_\upnu'(y)}{\sqrt[+]{\eta(y)}}, \qquad z\in\mathbb{C}\smallsetminus\Sigma,
\end{equation}
where it has been convenient to introduce the following branch cut prescription:
\begin{equation}
\label{eq:bcutapp}
    \sqrt[+]{\eta(x)} \equiv \lim_{\epsilon\to0^+} \Im \sqrt{\eta(x+i\epsilon)}.
\end{equation}
Taking $z$ towards $\Sigma$, we obtain a Sokhotski–Plemelj representation of $h$ (cf. \cref{eq:Rplem},
\begin{equation}
\label{eq:hplem}
    \lim_{\epsilon\to0^+} h(x\pm i\epsilon) = \frac{1}{\pi} \fint_{\Sigma} dy \, \frac{1}{y-x} \frac{V_\upnu'(y)}{\sqrt[+]{\eta(y)}} \pm i \frac{V_\upnu'(x)}{\sqrt[+]{\eta(x)}}, \qquad x\in\Sigma.
\end{equation}
Equipped with this result, we can finally obtain an explicit expression for the equilibrium measure. Taking the limit of $z$ towards $\Sigma$ in \cref{eq:yVwi} and using \cref{eq:scrho},
\begin{equation}
\label{eq:rho0final}
    \hat{\rho}_*(x) = \frac{\sqrt[+]{\eta(x)}}{\upbeta \, \pi^2} \fint_\Sigma \frac{dy}{y-x} \frac{V_\upnu'(y)}{\sqrt[+]{\eta(y)}}, \qquad x\in \Sigma,
\end{equation}
where recall the support $\Sigma=\supp\hat{\rho}_*$ takes the form of \cref{eq:support}.\footnote{\,A common alternative way of rewriting this final result utilizes the identity
$$
    \fint_\Sigma \frac{dy}{\sqrt[+]{\eta(y)}}  \frac{1}{y-x}  = 0, \qquad x\in\Sigma,
$$
which can be used to turn \cref{eq:rho0final} into \cite{pastur2011eigenvalue}
$$
    \hat{\rho}_*(x) = \frac{1}{\upbeta \, \pi} \sqrt[+]{\eta(x)} \, Q(x), \quad Q(x) \equiv \frac{1}{\pi} \fint_\Sigma  \frac{dy}{\sqrt[+]{\eta(y)}} \frac{V'(x)-V'(y)}{x-y}, \qquad x\in \Sigma.
$$
}
For instance, one can easily verify the expressions for WD, AZ, and BPS ensembles with Gaussian potentials respectively in \cref{eq:wdscal,eq:singaz,eq:gaznu} by applying \cref{eq:rho0final} to the single-cut case $q=1$ using the appropriate function from \cref{eq:sigmawd,eq:sigmaaz}.
The determination of the location of the $a_j,b_j$ endpoints of the support of $\hat{\rho}_*$ from \cref{eq:support}, as well as the correct number of cuts $q$, however, has to be worked out separately. 

Essentially, what \cref{eq:rho0final} provides is a solution to the Euler-Lagrange equation for the matrix integral action, which is guaranteed to give a local minimum. The solution to the extremization problem that is unique is the infimum of this action, which requires finding the appropriate values of $q$ and the $a_j,b_j$ endpoints for a global minimum \cite{harnad-2011}.
For a general matrix potential $V$, one may wonder how to even determine the number of intervals $q$ the measure $\hat{\rho}_*$ will be supported on, and their specific endpoints. In general $q$ is not known, and one has to try different values.\footnote{\,A simple case is when the potential is a convex function, for which it is easy to show that $\Sigma$ consists of a single finite interval and thus $q=1$ (see e.g. Theorem 11.2.3 of \cite{pastur2011eigenvalue}).} Given a choice of $q$, one attempts to find the endpoints of the intervals as follows. Expanding \cref{eq:R0V} at large $|z|$, and using the fact that the resolvent $R(z)$ goes as $1/z$ at large $|z|$, one obtains the following relations,
\begin{equation}
\label{eq:endpcons}
    \fint_{\Sigma} dx \, \frac{x^l \, V'(x)}{\sqrt[+]{\eta(x)}} = \upbeta\pi \, \delta_{l\tilde{q}}, \qquad l=0,\dots,\tilde{q},
\end{equation}
where $\tilde{q}=q$ for the WD ensembles and $\tilde{q}=q-1$ for the AZ/BPS ones. This difference between ensembles comes from the fact that at large $|z|$ \cref{eq:sigmawd} gives $\eta(z) = z^q + O(|z|^{q-1})$ for the former, whereas \cref{eq:sigmaaz} gives $\eta(z) = z^{q-1} + O(|z|^{q-2})$ for the latter. Additionally, for the AZ ensembles we know that $a_1=0$ from non-negativity, and for the BPS ensembles $a_1>0$ implies \cref{eq:endpcons} also vanishes for $l=-1$ due to the enhanced pole. Altogether, this means that in all cases we have a system of $q+1$ equations for the $2q$ endpoints of the intervals that make up $\Sigma$. 

Another $q-1$ equations are needed to fully fix these. Consider the effective potential from \cref{eq:vyd} along the real axis. By the equations of motion we know $\widehat{V}_*$ must vanish at the $a_j,b_j$ endpoints of the support intervals. In its indefinite integral expression from \cref{eq:vdevy} this means $\widehat{V}_*$ must attain the same constant value on all of them. This gives us the desired additional $q-1$ equations as
\begin{equation}
\label{eq:endp2}
    0 = \widehat{V}_*(b_j) - \widehat{V}_*(a_{j+1}) = \upbeta \int_{b_j}^{a_{j+1}} dx \, y(x), \qquad j=1,\dots,q-1.
\end{equation}
Altogether we obtained $2q$ conditions for the $2q$ variables determining the endpoints of $\supp\hat{\rho}_*$. These can be shown to be linearly independent (cf. remark 11.2.6 from \cite{pastur2011eigenvalue}), thus completing the characterization of the unique equilibrium measure $\hat{\rho}_*$ for a general matrix integral.

\subsection{Orthogonal Polynomials}
\label{ssec:ortho}

A standard treatment in random matrix theory proceeds by factoring the matrix potential term into the eigenvalue determinant and then performing elementary row operations so as to build a suitably useful sequence of orthogonal polynomials. The upshot of such manipulations is a repacking of the joint PDF of eigenvalues into determinants,
\begin{equation}
\label{eq:kernel}
    p_n(x_1,\dots,x_n) = \frac{1}{N!} \det \{ K_N(x_i,x_j)\}_{i,j=1}^N.
\end{equation}
Here $K_N$ is a symmetric integral kernel given by
\begin{equation}
\label{eq:cdkernel}
    K_N(x,y) = \sum_{k=0}^{N-1} \varphi_k(x)\varphi_k(y),    
\end{equation}
with $\{\varphi_k\}_{k=0}^{N-1}$ a family of polynomials orthonormalizing the matrix integral,
\begin{equation}
    \int dx \, \varphi_j(x) \varphi_k(x) = \delta_{jk}.
\end{equation}
This kernel has the following important properties:
\begin{equation}
\label{eq:speck}
    \int du \, K_N(x,u) \, K_N(u,y) = K_N(x,y), \qquad K_N(x,x) = \rho(x).
\end{equation}
The Christoffel-Darboux formula reduces the series in \cref{eq:cdkernel} to
\begin{equation}
    K_N(x,y) = c_N \frac{\varphi_N(x)\varphi_{N-1}(y)-\varphi_N(y)\varphi_{N-1}(x)}{x-y},
\end{equation}
where the $N$-dependent constant can be extracted from
\begin{equation}
\label{eq:cdcn}
    c_N \equiv \lim_{x\to\infty} \frac{x\, \varphi_{N-1}(x)}{\varphi_{N}(x)}.
\end{equation}

For the GUE, the matrix integral involves the measure in \cref{eq:genjac} with $\upbeta=2$ and $\upnu=0$, and the matrix potential is $V(x)=\frac{1}{2}x^2$. The $\{\varphi_k\}$ polynomials in this case are built so as to orthonormalize the sequence $\{x^k\, e^{-\frac{1}{2} x^2} \}$ over $\mathbb{R}$. This is accomplished by 
\begin{equation}
    \varphi_k(x) = \frac{1}{(2 \pi)^{1/4} \sqrt{k!}} \, h_k(x)\, e^{-\frac{x^2}{4}}, \qquad h_k(x) \equiv (-1)^k e^{\frac{x^2}{2}} \frac{d^k}{dx^k} e^{-\frac{x^2}{2}},
\end{equation}
where $h_k$ are standard Hermite polynomials. For these the constant in \cref{eq:cdcn} is $c_N=\sqrt{N}$, and the spectral density can be readily computed using \cref{eq:speck}. The Wigner semicircle follows from the following scaling limit:
\begin{equation}
\label{eq:classgue}
    \hat{\rho}_*(x) \equiv \lim_{N\to\infty} \frac{1}{\sqrt{N}} \rho\left(\sqrt{N} x\right) = \frac{1}{2\pi} \sqrt{4-x^2}, \qquad -2<x<2.
\end{equation}
In this paper, this is actually the canonical scaling that our conventional factor of $N$ in front of the matrix potential in \cref{eq:matrixZ} implements.\footnote{\,Correspondingly, note how in this section we have chosen the $\{\varphi_k\}$ polynomials to orthonormalize the sequence $\{x^k\, e^{-\frac{1}{2} x^2} \}$, not the sequence $\{x^k\, e^{-\frac{N}{2} x^2} \}$.} The Airy spectral density is obtained by the following scaling limit which zooms on the lower edge of the spectrum \cite{Tracy:1992kc}:
\begin{equation}
    \hat{\rho}^{\infty}(x) \equiv \lim_{N\to\infty} \frac{1}{N^{1/6}} \rho\left( -2\sqrt{N} + \frac{x}{N^{1/6}}\right) = \text{Ai}'(-x)^2 + x \, \text{Ai}(-x)^2.
\end{equation}
The exact spectral density of the Airy model is defined to be
\begin{equation}
    \hat{\rho}^{\smalltext{Airy}}(x) \equiv e^{-S_0/3} \hat{\rho}^{\infty}(e^{2S_0/3}x),
\end{equation}
where $e^{S_0}$ captures the eigenvalue density. Its leading spectral density at large $S_0$ is
\begin{equation}
    \hat{\rho}_*^{\smalltext{Airy}}(x) \equiv \lim_{S_0\to\infty} \hat{\rho}^{\smalltext{Airy}}(x) = \frac{\sqrt{x}}{\pi}, \qquad x>0,
\end{equation}
thereby reducing to the behavior of \cref{eq:classgue} near the $x=-2$ edge of the spectrum. Because $\hat{\rho}_*^{\smalltext{Airy}}$ goes to zero at its edge, the Airy spectrum is said to exhibit a soft edge.

A similar exercise can be done for the AZ/BPS ensembles. In particular, consider the measure in \cref{eq:genjac} applied to squared eigenvalues for $\upbeta=2$ and any $\upnu$. With a Gaussian potential the squared eigenvalues experience a linear potential of the form $V(x)=\frac{1}{2}x$. For this integral, the $\{\varphi_k\}$ polynomials are built so as to orthonormalize the sequence $\{x^k\, x^\upnu e^{-\frac{1}{2} x} \}$ over $\mathbb{R}^+$. This is accomplished by 
\begin{equation}
    \varphi_k(x) = \sqrt{\frac{k!}{2^{\nu +1} (k+\nu )!}} \, L_k^{\nu }(x/2) \, x^{\upnu /2} \, e^{-\frac{x}{4}}, \qquad
    L_k^{\nu }(x) \equiv \frac{x^{-\upnu} e^{x}}{k!} \frac{d^k}{dx^k} \left(x^{k+\upnu} e^{-x}\right),
\end{equation}
where $L_k^\upnu$ are generalized Laguerre polynomials. For this reason, these ensembles are often referred to as Laguerre unitary ensembles. These polynomials give $c_N = -\sqrt{2N(N+\upnu)}$ for the constant in \cref{eq:cdcn}. A singular Mar\v{c}enko-Pastur law is obtained in the following scaling limit:
\begin{equation}
\label{eq:azscal}
    \hat{\rho}_*(x) = \lim_{N\to\infty} \rho(Nx) = \frac{1}{4 \pi } \sqrt{\frac{8-x}{x}}, \qquad 0<x<8.
\end{equation}
The divergent behavior of this distribution as $x\to0$ is commonly referred as a hard edge, in contrast with the soft edge of \cref{eq:classgue}. Zooming onto this hard edge gives rise to Bessel spectral densities \cite{Tracy:1993xj},
\begin{equation}
     \hat{\rho}_{\upnu}^{\infty}(x) \equiv \lim_{N\to\infty} \frac{1}{N} \rho\left( \frac{x}{2N} \right) = \frac{1}{2} \left(J_{\upnu }\left(\sqrt{x}\right){}^2-J_{\upnu +1}\left(\sqrt{x}\right) J_{\upnu -1}\left(\sqrt{x}\right)\right).
\end{equation}
The exact spectral density of the Bessel model is defined to be
\begin{equation}
    \hat{\rho}_{\upnu}^{\smalltext{Bessel}}(x) \equiv e^{S_0} \hat{\rho}_{\upnu}^{\infty}(e^{2S_0} x),
\end{equation}
whose behavior at large $S_0$ yields the leading spectral density
\begin{equation}
    \hat{\rho}_*^{\smalltext{Bessel}}(x) = \lim_{S_0\to\infty} \hat{\rho}_{\upnu}^{\smalltext{Bessel}}(x) = \frac{1}{\pi\sqrt{x}}, \qquad x>0.
\end{equation}
An alternative standard scaling limit of the Laguerre ensembles involves making the parameter $\upnu$ extensive in $N$ by letting $\upnu\equiv \nu N$ and keeping $\nu \sim O(1)$. Doing so, the analogous limit to \cref{eq:azscal} leads in this case to a non-singular Mar\v{c}enko-Pastur law,
\begin{equation}
    {\hat{\rho}_\nu}{}_*(x) = \lim_{N\to\infty} \rho_{\nu}(Nx) = \frac{\sqrt{(a_+-x)(x-a_-)}}{4 \pi  x}, \qquad a_-<x<a_+,
\end{equation}
where the endpoints of the leading spectrum are given by
\begin{equation}
    a_\pm \equiv 2 \left(1-\sqrt{1+\nu}\right)^2.
\end{equation}
The previous case in \cref{eq:azscal} is recovered in the limit $\nu\to0$ in which $a_-\to0$ and the pole at $x=0$ becomes again accessible. However, for finite $\nu>0$, the distribution ${\hat{\rho}_\nu}{}_*$ has now a characteristic gap $(0,a_-)$ with no support. Additionally, $x=a_-$ now becomes a soft edge of the same $\sqrt{x}$ form as in the Airy model.

\subsection{WD/AZ Ensemble Relation}
\label{sec:linazens}

In \cref{ssec:ensemmess} we replaced the natural AZ eigenvalues $\lambda_i$ by their squares $\omega_i\equiv\lambda_i^2$ at the level of the measure. This allowed us to obtain a general measure in \cref{eq:genjac} whose form captures all WD and AZ ensembles together. However, we had to keep in mind that for the WD ensembles the eigenvalues $\lambda_i$ could extend over all $\mathbb{R}$, whereas for AZ ensembles the new $\omega_i$ variables had to be restricted to being non-negative. Additionally, at the level of the potential this meant that e.g. a quadratic term in $\lambda_i$ would just give a linear term in $\omega_i$. Here we make this process more transparent by working out the AZ matrix integral in its original eigenvalues $\lambda_i$ which extend over all $\mathbb{R}$ and are subject to the original AZ measure in \cref{eq:AZmeasure}.

Firstly, let us quote that for the WD ensembles the equilibrium condition is simply
\begin{equation}
\label{eq:wdexp}
    \fint\displaylimits_{\mathbb{R}} dy \, \frac{\hat{\rho}(y)}{x-y} - \frac{V'(x)}{\upbeta} \reprel{WD}{=} 0, \qquad x\in\supp\hat{\rho}_*,
\end{equation}
where we just set $\upnu=0$ in \cref{eq:deffV0,eq:dveffr}, which followed from extremizing the effective potential in \cref{eq:effwd}. On the other hand, using \cref{eq:AZmeasure}, the continuum effective potential for the AZ ensemble reads
\begin{equation}
\label{eq:azveff}
    \widehat{V}[\hat{\rho};x] \reprel{AZ}{=} V_\upalpha(x) - \int_{\mathbb{R}} dy \, \hat{\rho}(y) \log \, |x^2-y^2|^\upbeta,
\end{equation}
where $V_\upalpha$ uses the notation from \cref{eq:redef}, and thus \cref{eq:deffV0} gives the equilibrium condition
\begin{equation}
\label{eq:azeom}
    \fint_{\mathbb{R}} dy \frac{\hat{\rho}_*(y)}{x^2 - y^2} - \frac{V_\upalpha'(x)}{2x\upbeta} \reprel{AZ}{=} 0, \qquad \forall x\in\supp\hat{\rho}_*.
\end{equation}
This is more closely related to \cref{eq:wdexp} than it may seem at first glance. Using the identity
\begin{equation}
    \frac{1}{x^2-y^2} = \frac{1}{2x} \left( \frac{1}{x-y} + \frac{1}{x+y} \right)
\end{equation}
one easily rearranges \cref{eq:azeom} into
\begin{equation}
\label{eq:azeom2}
    \fint_{\mathbb{R}} dy \left( \frac{\hat{\rho}_*(y)+\hat{\rho}_*(-y)}{x-y} \right)
    - \frac{V_\upalpha'(x)}{\upbeta} \reprel{AZ}{=} 0, \qquad \forall x\in\supp\hat{\rho}_*.
\end{equation}
Comparing this to \cref{eq:wdexp}, the following result is immediate: if $\hat{\rho}^{\smalltext{WD}\scalebox{0.5}{$(\upbeta)$}}_*$ is the leading spectral density for a WD ensemble with parameter $\upbeta$ and matrix potential $V_\upalpha$, then the leading spectral density for the AZ ensemble with the same potential and parameter $\upbeta$ must obey\footnote{\,The factors of $2$ come from multiplying \cref{eq:azeom2} through by $1/2$ for a consistent normalization of all densities.}
\begin{equation}
\label{eq:azwdrel}
    \hat{\rho}^{\smalltext{AZ}\scalebox{0.5}{$(\upbeta)$}}_*(x) + \hat{\rho}^{\smalltext{AZ}\scalebox{0.5}{$(\upbeta)$}}_*(-x) = 2\hat{\rho}^{\smalltext{WD}\scalebox{0.5}{$(2\upbeta)$}}_*(x).
\end{equation}
For instance, if $\upalpha\sim O(1)$ in the large-$N$ limit, then $V_\upalpha\to V$ and e.g. for a Gaussian potential $V(x)=\frac{1}{2}x^2$ the solution to \cref{eq:azeom} is
\begin{equation}
\label{eq:azrho0Gauss}
    \hat{\rho}_*(x) \reprel{GAZ}{=}  \frac{1}{2\pi \upbeta} \sqrt{4\upbeta - x^2}, \qquad - 2\sqrt{\upbeta} < x < 2\sqrt{\upbeta},
\end{equation}
which recalling \cref{eq:wdscal} verifies our finding in \cref{eq:azwdrel}. In this sense, we thus see that for general matrix potentials, WD and AZ/BPS ensembles are in fact rather similar at leading order, particularly if $\upalpha/N\to0$ in the large-$N$ limit.
What actually makes them quite different in the context of random Hamiltonians, whether or not $\upalpha$ is extensive in $N$, is the fact that for AZ ensembles we are interested in the spectrum of squared eigenvalues (see \cref{ssec:hamran}).

Denote the leading spectral density of the Hamiltonian $H$ for an AZ ensemble by $\hat{\mu}_*$, and that of the random AZ matrix by $\hat{\rho}_*$, as above. By the construction of $H$ in \cref{ssec:hamran}, the eigenvalues of the former $\chi$ are related to those of the latter $x$ simply by $\chi=x^2$.
Clearly $\hat{\mu}_*(\chi) = 0$ for any $\chi<0$. To get an expression for $\mu$ on the non-negative real axis, we demand that for any even function $f$, the expectation values computed with respect to either density agree. In particular, this must be true for a constant function,
\begin{equation}
\int_{-\infty}^\infty dx \, \hat{\rho}_*(x) = \int_0^{\infty} \frac{d(x^2)}{2x} \, (\hat{\rho}_*(x)+\hat{\rho}_*(-x)),
\end{equation}
from which one reads off\footnote{\,If the matrix potential $V$ happens to be an even function, then by \cref{eq:azeom} it follows that $\hat{\rho}_*(-x)=\hat{\rho}_*(x)$, which simplifies this to $\hat{\mu}_*(x^2) = \frac{\hat{\rho}_*(x)}{x}$.}
\begin{equation}
\label{eq:murho2}
    \hat{\mu}_*(x^2) = \frac{\hat{\rho}_*(x)+\hat{\rho}_*(-x)}{2x}, \qquad x>0.
\end{equation}
Hence, on general grounds, the spectral density $\hat{\mu}_*(\chi)$ of the square of an operator will exhibit a $1/\sqrt{\chi}$ singularity near $\chi=0^+$, and a discontinuity at the origin since $\hat{\mu}_*(\chi)=0$ for $\chi<0$. For e.g. the Gaussian case in \cref{eq:azrho0Gauss}, the result in \cref{eq:murho2} yields (cf. \cref{eq:singaz})
\begin{equation}
\label{eq:sqmu2}
    \hat{\mu}_*(\chi) \reprel{GAZ}{=} \frac{1}{2\pi \upbeta} \sqrt{\frac{4\upbeta - \chi}{\chi}}, \qquad 0<\chi < 4\upbeta.
\end{equation}
The procedure above shows rather explicitly how the leading spectral densities of WD and AZ/BPS ensembles are directly related in terms of their original eigenvalue variables. This is particularly relevant for all the classical AZ ensembles where $\alpha$ is a fixed constant and therefore the potential $V_\upalpha$ appearing in \cref{eq:azeom} at large $N$ reduces to the same potential $V$ that would appear in the WD equilibrium condition. Additionally, we see that for general AZ ensembles, the hard edge that arises when $\upalpha/N\to0$ at large $N$ is essentially a consequence of squaring eigenvalues, rather than an intrinsic feature of the AZ ensembles. In other words, the original eigenvalues of an AZ matrix integral exhibit no hard edges, just like a WD one.

%~~~~~~~~~~~~~~~~~~~~~~~~~~~~~~~~~~~~~~~~~~~~~~~~~~~~~~~~~~~~~~~~~~~~~
\addcontentsline{toc}{section}{References}
\bibliographystyle{JHEP}
\bibliography{references.bib}
%~~~~~~~~~~~~~~~~~~~~~~~~~~~~~~~~~~~~~~~~~~~~~~~~~~~~~~~~~~~~~~~~~~~~~

\end{document}